\newcommand{\powerset}[1]{\mathcal{P}(#1)} 
\newcommand{\LFPA}[2]{\LFP_{#1,#2}}
\newcommand{\nf}{\text{strict normal form}\xspace}
\newcommand{\logic}[1]{\textup{\small #1}\xspace}
\renewcommand{\vec}[1]{\boldsymbol{#1}}
\newcommand{\cA}{\mathcal{A}}
\newcommand{\cC}{\mathcal{C}}
\newcommand{\cG}{\mathcal{G}}
\newcommand{\cM}{\mathcal{M}}
\newcommand{\cT}{\mathcal{T}}
\newcommand{\cU}{\mathcal{U}}
\newcommand{\N}{\mathbb{N}}
\newcommand{\fA}{\mathfrak{A}}
\newcommand{\fB}{\mathfrak{B}}
\newcommand{\fU}{\mathfrak{U}}
\newcommand{\fM}{\mathfrak{M}}
\newcommand{\set}[1]{\left\{{#1}\right\}}
\newcommand{\sset}[1]{\{{#1}\}}
\newcommand{\size}[1]{\lvert#1\rvert}
\newcommand{\width}[1]{\operatorname{width}(#1)} 
\newcommand{\restrict}[2]{ #1\restriction_{#2} }
\renewcommand{\L}{\mathrm{L}}
\newcommand{\R}{\mathrm{R}}
\newcommand{\free}[1]{\operatorname{free}(#1)} 
\newcommand{\dom}[1]{\operatorname{dom}(#1)} 
\newcommand{\gn}{\textup{GN}\xspace}
\newcommand{\gnfo}{\logic{GNF}}
\newcommand{\gso}{\logic{GSO}}
\newcommand{\mso}{\logic{MSO}}
\newcommand{\gnf}{\gnfo}
\newcommand{\gf}{\logic{GF}}
\newcommand{\lfp}{\logic{LFP}}
\newcommand{\unfo}{\logic{UNF}}
\newcommand{\fo}{\logic{FO}}
\newcommand{\unf}{\unfo}
\newcommand{\Lmu}{\textup{L}_{\mu}}
\newcommand{\gnfpup}{\logic{GNFP-UP}}
\newcommand{\gfp}{\logic{GFP}}
\newcommand{\LFP}{\operatorname{\bf lfp}}
\newcommand{\gnfp}{\logic{GNFP}}
\newcommand{\unfp}{\logic{UNFP}}
\newcommand{\unfpk}{\unfp^k}
\newcommand{\gnfk}{\gnf^k}
\newcommand{\unfk}{\unf^k}
\newcommand{\lfpl}{\logic{LFP}}
\newcommand{\complexity}[1]{\textsc{#1}}
\newcommand\ptime{\complexity{PTime}\xspace}
\newcommand\exptime{\complexity{ExpTime}\xspace}
\newcommand\twoexptime{\complexity{2-ExpTime}\xspace}
\newcommand{\twoexp}{\twoexptime}
\newcommand{\dmodality}[1]{\langle#1\rangle}
\newcommand{\bmodality}[1]{[ #1 ]}
\newcommand{\valuation}{\ell}
\newcommand{\codom}[1]{\operatorname{rng}(#1)} 
\newcommand{\decode}[1]{\mathfrak{D}(#1)} 
\newcommand{\tree}{\cT}
\newcommand{\cL}{\mathcal{L}}
\newcommand{\gnfpk}{\text{$\gnfp^k$}\xspace}
\renewcommand{\varphi}{\phi}
\newcommand{\kw}[1]{\textsc{#1}}
\newcommand{\atypesig}[3]{\kw{AT}_{#2,#3}(#1)} 
\newcommand{\ExactLabel}[1]{\kw{ExactLabel}(#1)} 
\newcommand{\GNLabel}[1]{\kw{GNLabel}(#1)} 
\newcommand{\sigmag}{\sigma_g}
\newcommand{\gnkinvar}{\logic{GN}^k}
\newcommand{\sgnkinvar}{\logic{BGN}^k}
\newcommand{\gnlinvar}{\logic{GN}^l}
\newcommand{\gnninvar}{\logic{GN}^n}
\newcommand{\ginvar}{\logic{G}}
\newcommand{\bunravelk}[2]{\cU_{\textup{BGN}^k[#2]}(#1)} 
\newcommand{\unravelk}[2]{\cU_{\textup{GN}^k[#2]}(#1)} 
\newcommand{\unravelm}[2]{\cU_{\textup{GN}^m[#2]}(#1)} 
\newcommand{\sunravelk}[2]{\cU^{\textup{plump}}_{\textup{BGN}^k[#2]}(#1)} 
\newcommand{\gunravel}[2]{\cU_{\textup{G}[#2]}(#1)} 
\newcommand{\sigcode}[2]{\Sigma^{\text{code}}_{#1,#2}}
\newcommand{\dloglite}{\logic{DATALOG-LITE}}
\newcommand{\unravellingcode}{\cU_{\logictarget}(\fB)} 
\newcommand{\lemmafwd}{\hyperref[lemma:forward-gso]{Lemma~Fwd}\xspace}
\newcommand{\lemmagfpbwd}[1]{\hyperref[lemma:backwards-gfp]{#1}\xspace}
\newcommand{\lemmagnfpbwd}[1]{\hyperref[lemma:backwards-gnfp]{#1}\xspace}
\newcommand{\logictarget}{\cL_1}
\newcommand{\sigtarget}{\sigma'}
\newcommand{\sigoriginal}{\sigma}
\newcommand{\Bekic}{Beki\v{c}\xspace}
\newcommand{\convertnf}[1]{\operatorname{convert}(#1)} 
\newcommand{\indices}[1]{\textup{names}(#1)} 
\newcommand{\tuple}[1]{\langle#1\rangle}
\newcommand{\dup}{\uparrow}
\newcommand{\dstay}{0}
\newcommand{\ddown}{\downarrow}
\newcommand{\Pri}{\mathrm{Pri}}
\newcommand{\Dir}{\mathrm{Dir}}
\newcommand{\gameonstart}[3]{\cG(#1, #2, #3)}
\newcommand{\Sigmap}{\Sigma_p}
\newcommand{\Sigmaa}{\Sigma_a}
\newcommand{\QE}{Q_{E}}
\newcommand{\QA}{Q_{A}}
\newcommand{\propsguardedneg}[1]{\kw{GNProps}(#1)} 
\newcommand{\propsneg}[1]{\kw{NProps}(#1)} 
\newcommand{\bagextend}[1]{\kw{BagLabels}(#1)} 
\newcommand{\IntNodeLabels}{\kw{InterfaceLabels}}
\newcommand{\BagNodeLabels}{\kw{BagLabels}}
\newcommand{\EdgeLabels}{\kw{Edges}}
\newcommand{\EdgeLabelsRestrict}[1]{\kw{Edges}(#1)} 
\newcommand{\NodeLabels}{\kw{NodeLabels}}
\newcommand{\tforward}[1]{ {#1}^{\rightarrow} }
\newcommand{\tbackward}[1]{ {#1}^{\leftarrow} }
\newcommand{\struct}[1]{\mathfrak{D}(#1)} 
\newcommand{\readymode}{\text{select}}
\newcommand{\waitmode}{\text{move}}
\newcommand{\views}{\kw{Views}}
\newcommand{\view}{I}
\newcommand{\strategy}{\zeta}
\newcommand{\correct}[1]{\textup{correct}(#1)} 
\newcommand{\correctr}[1]{\textup{correct}_{r}(#1)} 
\newcommand{\consistencyform}{\phi_{\textup{consistent}}}
\newcommand{\elem}[1]{\operatorname{elem}(#1)} 
\newcommand{\guardg}[2]{\operatorname{guard}^{\sigmag}_{#1}(#2)} 
\newcommand{\GuardDom}[2]{\operatorname{guard-dom}(#1,#2)} 
\newcommand{\GuardRng}[2]{\operatorname{guard-rng}(#1,#2)} 
\newcommand{\phiL}{\phi_{\L}}
\newcommand{\phiR}{\phi_{\R}}
\newcommand{\cform}[2]{\text{consistent}_{#1,#2}}
\newcommand{\sigR}{\sigma_{\R}}
\newcommand{\fG}{\mathfrak{G}}
\newcommand{\myparagraph}[1]{\subsubsection*{#1}}
\newcommand{\ltrue}{\top}
\newcommand{\lfalse}{\bot}
\keywords{Guarded logics, bisimulation, characterizations, uniform interpolation, automata}
\begin{document}

\title{Definability and Interpolation \texorpdfstring{\\}{}within Decidable Fixpoint Logics\rsuper*}
\titlecomment{{\lsuper*}This is the journal version of material appearing in ICALP~2017~\cite{icalp17},
and extending work in LICS~2015~\cite{lics15-gnfpi}.}

\author[M. Benedikt]{Michael Benedikt\rsuper{a}}
\address{\lsuper{a}University of Oxford}
\email{michael.benedikt@cs.ox.ac.uk}

\author[P. Bourhis]{Pierre Bourhis\rsuper{b}}
\address{\lsuper{b}CNRS CRIStAL UMR 9189, University of Lille, INRIA Lille}
\email{pierre.bourhis@univ-lille.fr}

\author[M. Vanden Boom]{Michael Vanden Boom\rsuper{a}}


\begin{abstract}
We look at characterizing which formulas are expressible
in rich decidable logics such as guarded fixpoint logic,
unary negation fixpoint logic, and guarded negation fixpoint logic.
We consider semantic characterizations
of definability, as well as effective characterizations. Our algorithms revolve
around a finer analysis of the tree-model property and a refinement of
the method of moving back and forth between relational logics and logics over trees.
\end{abstract}


\maketitle

\section{Introduction}%
\label{sec:intro}

A major line of research in computational logic has focused
on obtaining extremely expressive decidable logics.
The guarded fragment  ($\gf$)~\cite{gforig},  the unary negation fragment ($\unf$)~\cite{unf},
and the guarded negation fragment ($\gnf$)~\cite{gnfj} are rich decidable
fragments of first-order logic.
Each of these has extensions with a fixpoint
operator that retain decidability: $\gfp$~\cite{gfp},  $\unfp$~\cite{unf},
and $\gnfp$~\cite{gnfj} respectively. In each case the argument for satisfiability relies on
``moving to trees''. This involves showing that the logic possesses the tree-like model property: whenever there is a satisfying
model for a formula, it can be taken to be of tree-width that can be effectively computed from the formula. Such models
can be coded by trees, thus reducing satisfiability of the logic to satisfiability of a corresponding formula  over
trees, which can be decided using automata-theoretic techniques. This method has been applied
for decades (e.g.~\cite{robustly, GradelHO02}).

A  question is how to recognize formulas in these logics, and more generally how to distinguish the properties
of the formulas in one logic from another.
Clearly if we start with a formula in an undecidable logic,
such as first-order logic or least fixed point logic ($\lfp$), we have no possibility for effectively recognizing any non-trivial
property. But we could still hope for an insightful semantic characterization of
the subset that falls within the decidable logic. One well-known example
of this is van Benthem's theorem~\cite{vbbook} characterizing modal logic within first-order logic:
a first-order sentence is equivalent to a modal logic sentence exactly when it is bisimulation invariant.
For fixpoint logics, an analogous characterization is the Janin-Walukiewicz theorem~\cite{janinw},
stating that the modal $\mu$-calculus ($\Lmu$) captures the bisimulation-invariant fragment of monadic second-order logic ($\mso$).
If we start in one decidable logic and look to characterize another decidable logic,  we could also hope for
a characterization that is effective. For example, Otto~\cite{ottoeliminating} showed that if we start
with a formula of $\Lmu$, we can determine whether it can be expressed in modal logic.

In this work we will investigate both kinds of characterizations.
We will begin with $\gfp$. Gr\"adel, Hirsch, and Otto~\cite{GradelHO02}
have already provided a characterization
of $\gfp$-definability within  a very rich logic extending $\mso$ called guarded second-order logic ($\gso$). The characterization
is   exactly analogous to the van Benthem and Janin-Walukiewicz results mentioned above: $\gfp$ captures
the ``guarded bisimulation-invariant'' fragment of $\gso$. The characterization
makes use of a refinement of the method used for decidability
of these logics, which moves \emph{back and forth between relational structures and trees}:
\begin{enumerate}
\item define a \emph{forward mapping} taking a formula $\phi_0$ in the larger logic $\cL_0$ (e.g.~$\gso$ invariant under guarded bisimulation) over
relational structures to a formula $\phi'_0$ that describes the trees that code structures satisfying $\phi_0$; and
\item define a \emph{backward mapping} based on the invariance
going back to some $\phi_1$ in the restricted logic $\cL_1$ (e.g.~$\gfp$).
\end{enumerate}
The method is shown in Figure~\ref{fig:bandfsem}.

Our first main theorem is an effective version
of the above result: if we start with a
formula in certain richer
decidable fixpoint logics, such as $\gnfp$,
we can decide whether the formula is in $\gfp$. At the same time we provide a refinement
of~\cite{GradelHO02}  which accounts for two signatures, the one allowed for arbitrary relations and the one
allowed for ``guard relations'' that play a key
role in the syntax of all guarded logics.
We extend this result to deciding membership in  the ``$k$-width fragment'', $\gnfp^k$; roughly speaking this consists of formulas
built up from guarded components and positive existential formulas with at most $k$ variables. We provide a semantic
characterization of this fragment within $\gso$, as the fragment closed under the corresponding notion
of bisimulation (essentially, the $\gn^k$-bisimulation of~\cite{gnfj}). As with $\gfp$, we show that
the characterization can be made effective, provided that one starts with a formula in certain larger  decidable logics.
The proof also gives an effective characterization for the $k$-width fragment of $\unfp$.

These effective characterizations also rely on a back-and-forth method. The revised method is shown schematically
in Figure~\ref{fig:bandfeffsem}. We apply a forward mapping to
move from a formula $\phi_0$ in a larger logic $\cL_0$ (e.g.~$\gnfpk$) on relational structures to a formula $\phi'_0$ on tree encodings.
But then we can apply a \emph{different backward mapping}, tuned towards the smaller logic $\logictarget$ (e.g.~$\gfp$)
and the special properties of its tree-like models.
The backward mapping of a tree property $\phi'_0$ is always a formula $\phi_1$ in the smaller logic $\logictarget$.
But it is no longer guaranteed to be ``correct'' unconditionally---i.e.~to always characterize structures whose
codes satisfy $\phi'_0$. Still, we show that \emph{if} the original formula $\phi_0$ is definable in the
smaller logic $\logictarget$,
 then the backward mapping applied to the forward mapping   gives
such a definition. Since we can check the equivalence of two sentences in our logic effectively,
this property suffices to get decidability of definability.

\begin{figure}
\centering
\begin{subfigure}{.45\textwidth}
\centering
\begin{tikzpicture}[]
\node[align=center,font=\small\bf,text width=2cm,text badly centered] (aa) at (0,.75) {Relational \\ structures};
\node[align=center,font=\small\bf,text width=2cm,text badly centered] (bb) at (2.5,.75) {Coded \\ structures};
\node[align=center,font=\small] (a) at (0,-.25) {$\phi_0 \in \cL_0$};
\node[align=center,font=\small] (bb) at (2.5,-.25) {$\phi'_0 \in \Lmu$ };
\node[align=center,font=\small] (d) at (0,-1.25) {$\phi_1 \in \logictarget$};
\path[->] (a) edge node[font=\footnotesize,above] {(1)} (bb);
\path[->] (bb) edge node[font=\footnotesize,below] {(2)} (d);
\node[align=center,font=\small] (e) at (.5,-2) {\phantom{Test $\phi_0 \leftrightarrow \phi_1$}};
\end{tikzpicture}
\caption{Semantic Characterization}%
\label{fig:bandfsem}
\end{subfigure}
\qquad
\begin{subfigure}{.45\textwidth}
\centering
\begin{tikzpicture}[]
\node[align=center,font=\small\bf,text width=2cm,text badly centered] (aa) at (0,.75) {Relational \\ structures};
\node[align=center,font=\small\bf,text width=2cm,text badly centered] (bb) at (2.5,.75) {Coded \\ structures};
\node[align=center,font=\small] (a) at (0,-.25) {$\phi_0 \in \cL_0$};
\node[align=center,font=\small] (bb) at (2.5,-.25) {$\phi'_0 \in \Lmu$ };
\node[align=center,font=\small] (d) at (0,-1.25) {$\phi_1 \in \logictarget$};
\path[->] (a) edge node[font=\footnotesize,above] {(1)} (bb);
\path[->] (bb) edge node[font=\footnotesize,below] {(2)} (d);
\node[align=center,font=\small] (e) at (.5,-2) {Test $\phi_0 \leftrightarrow \phi_1$};
\end{tikzpicture}
\caption{Effective Characterization}%
\label{fig:bandfeffsem}
\end{subfigure}
\caption{Using forward and backward mappings for characterizations}%
\label{fig:both}
\end{figure}


The technique above has a few inefficiencies; first, it
translates  forward to sentences in a rich logic on trees, for which
analysis is non-elementary. Secondly,
it implicitly moves between relational  structures and
tree structures \emph{twice}: once to construct the formula $\phi'_0$, and a second
time to check that $\phi_0$ is equivalent to $\phi_1$, which in turn
requires first forming
a formula over trees $\phi'_1$ via a  forward mapping
and then checking its equivalence with $\phi'_0$.
 We show that in some cases we can perform an optimized version
of the process,
 allowing us to get tight bounds on the equivalence problem.

We show that our results ``restrict'' to fragments of these guarded logics, including
 their first-order fragments.
In particular, our results
give effective characterizations of $\gf$ definability.
They can be thus seen as a generalization of well-known effective characterizations
of the conjunctive existential formulas in $\gf$, the \emph{acyclic queries}. We show that
we can apply our techniques to the problem of transforming conjunctive formulas
to a well-known efficiently-evaluable form (acyclic formulas) relative
to $\gf$ theories.  These results complement previous results
on query evaluation with constraints from~\cite{acyclicmodpods, acyclicmodlics}.

This refined back-and-forth method can be tuned in a number of ways, allowing us to control
the signature as well as the sublogic. We show that this machinery can be adapted to
give an approximation of the formula $\phi_0$ within the logic $\logictarget$,
which is a kind of \emph{uniform interpolant}.


\subsection*{Related work}%
\label{sec:related}
The immediate inspiration for our work are characterizations of
definability in the guarded fragment within first-order logic~\cite{gforig},
 and characterization of definability in guarded fixpoint logic within guarded
second-order logic~\cite{GradelHO02}. Neither of these characterizations can be effective,
since the larger logics in question are too expressive.

Identifying  formulas in definable sublogics has been studied extensively
in the context of regular word and tree languages (\cite{place,placesegoufinfo2}), and
the corresponding characterizations are effective. These techniques
do not lift easily to the setting of relational languages, even those with tree-like models,
since one would require decidability over infinite trees, and the few results there (e.g.~\cite{bplaceinftree}) do not
map back to natural logics over decodings.
Although we know of no work on effectively identifying  formulas definable in a  fixpoint logic,
there are a number of works on identifying sufficient conditions for a decidable fixpoint logic
formula to be convertible into a formula without recursion (e.g.~\cite{ottoeliminating, boundedness}).

Our work is also inspired by prior automata-theoretic  approaches to uniform interpolation.
The key result here is D'Agostino and Hollenberg's~\cite{interpolationmucalc}, which
shows uniform interpolation for the modal $\mu$-calculus. We make use of this
result in our proofs. Craig interpolation for
guarded logics has been considered in the past (e.g.~\cite{effectiveinterp}), but
we know of no other work considering uniform interpolation for logics on arbitrary
arity signatures.


\subsection*{Organization}
Section~\ref{sec:prelims} defines the logics we study in this paper and reviews
their properties.
It also introduces tree encodings, bisimulation games, and unravelling constructions that will be the basis for several of our proofs.
It concludes with a description of automata that can operate on the tree codes.
We would encourage readers to consult this section as needed, particularly the section on automata.

Section~\ref{sec:high} presents an overview of the back-and-forth technique, and how it can be used to answer definability questions.
Section~\ref{sec:gfp} presents characterization results for $\gfp$, which provides a first example of the technique in the action.
Section~\ref{sec:gnfp} extends this technique to $\gnfpk$ and $\unfpk$.
Section~\ref{sec:interp} presents applications of the technique to interpolation, while
Section~\ref{sec:conc} gives conclusions.


\section{Preliminaries}\label{sec:prelims}

We work with finite relational signatures $\sigma$.
We use $\vec{x},\vec{y},\dots$
(respectively, $\vec{X},\vec{Y},\dots$)
to denote vectors of first-order
(respectively, second-order)
variables.
For a formula $\phi$,
we write $\free{\phi}$ to denote the free first-order variables of~$\phi$,
and write $\phi(\vec{x})$ to indicate that these free variables are among $\vec{x}$.
If we want to emphasize that there are also free second-order variables $\vec{X}$,
we write $\phi(\vec{x},\vec{X})$.
We often use $\alpha$ to denote atomic formulas,
and if we write $\alpha(\vec{x})$
then we assume that the free first-order variables in $\alpha$ are precisely $\vec{x}$.
The \emph{width} of $\phi$, denoted $\width{\phi}$, is the maximum number of free variables
in any subformula of $\phi$,
and the width of a signature $\sigma$ is the maximum arity of its relations.

\subsection{Guardedness}\label{sec:guardedness}

An atomic formula $\alpha$ is a \emph{guard} for variables $\vec{x}$ if $\alpha$ uses every variable in $\vec{x}$. We say $\alpha$ is a guard for a formula $\phi$ if it is a guard for the free variables in $\phi$. This means $\free{\alpha} \supseteq \free{\phi}$. Guards can take the form $\top$ (if $\phi$ is a sentence) or the form $x=x$ (if $\phi$ has one free variable~$x$). A \emph{strict guard} for a formula $\phi$ is a guard such that the free variables of $\alpha$ are identical to the free variables in $\phi$; that is $\free{\alpha} = \free{\phi}$. For example, $Rxy$ could serve as a strict guard for $\exists z . (R y z \wedge R z x)$.

We can also talk about guardedness within a structure $\fA$.
Any set of elements of size at most 1 is considered to be both guarded and strictly guarded.
Otherwise, we say a set $U$ of elements in the domain of $\fA$ is \emph{guarded} in $\fA$ if there is some atom $\alpha(\vec{a})$ such that every element in $U$ appears in $\vec{a}$. In the special case that this atom uses precisely the elements in~$U$ and no more, then we say $U$ is \emph{strictly guarded} in $\fA$.

If we want to emphasize that the guards come from a certain signature $\sigmag$, then we will say $\sigmag$-guarded or strictly $\sigmag$-guarded.

\subsection{Basics of guarded logics}
The \emph{Guarded Negation Fragment} of \fo~\cite{gnfj} (denoted $\gnf$) is built up inductively
according to the grammar
$\phi ::= R \, \vec x ~ | ~ \exists x . \phi ~ |
~ \phi \vee \phi ~ | ~ \phi \wedge \phi ~ | ~ \alpha(\vec x) \wedge \neg \phi(\vec x)$
where $R$ is either a relation symbol or the equality relation,
and $\alpha$ is a guard for $\phi$.
If we restrict $\alpha$ to be an equality, then each negated formula can be rewritten to use
at most one free variable; this is the \emph{Unary Negation Fragment},
$\unf$~\cite{unf}.
$\gnf$ is also related to the \emph{Guarded Fragment}~\cite{gforig} ($\gf$),
typically defined via the grammar
$\phi ::= R \, \vec x ~ | ~ \exists \vec x . \big( \alpha(\vec x \vec y) \wedge \phi(\vec x \vec y) \big) ~ |
~ \phi \vee \phi ~ | ~ \phi \wedge \phi ~ | ~ \neg \phi(\vec x)$
where $R$ is either a relation symbol or the equality relation,
and $\alpha$ is a guard for $\phi$.
Here it is the quantification that is guarded, rather than negation.
\gnf subsumes \gf sentences and \unf formulas.
\gnf also subsumes \gf formulas in which the free variables are guarded.

The fixpoint extensions of these logics
(denoted $\gnfp$, $\unfp$, and $\gfp$)
extend the base logic
with formulas
$[\LFPA{X}{\vec{x}} . \alpha(\vec{x}) \wedge \phi(\vec{x},X,\vec{Y}) ](\vec{x})$
where (i) $\alpha(\vec{x})$ is a guard for $\vec{x}$,
(ii) $X$ only appears positively in~$\phi$,
(iii) second-order variables like $X$ cannot be used as guards.
Some alternative (but equi-expressive) ways to define the fixpoint extension are discussed in~\cite{BaranyBC13};
in all of the definitions, the important feature is that tuples in the fixpoint are guarded
by an atom in the original signature.
In \unfp,
there is an additional requirement that only unary or 0-ary predicates can be defined
using the fixpoint operators.
$\gnfp$ subsumes both $\gfp$ sentences and \unfp formulas.
These logics are all contained in \lfpl, the fixpoint extension of \fo.

In this work we will be interested in varying the signatures considered, and
in distinguishing more finely which relations can be used in guards.
If we want to emphasize the relational signature $\sigma$ being used,
then we will write, e.g., $\gnfp[\sigma]$.
For $\sigmag \subseteq \sigma$,
we let $\gnfp[\sigma,\sigmag]$ denote the logic built up as in $\gnfp$
but allowing only equality or relations $R \in \sigma$ at the atomic step and
only guards $\alpha$ using equality or relations $R \in \sigmag$.
We define $\gfp[\sigma,\sigmag]$ similarly.
Note that $\unfp[\sigma]$ is equivalent to $\gnfp[\sigma,\emptyset]$,
since if the only guards are equality guards,
then the formula can be rewritten to use only unary negation and monadic fixpoints.

\myparagraph{Fixpoint semantics and notation}

We briefly review the semantics of the fixpoint operator.
Take some $\alpha(\vec{x}) \wedge \phi(\vec{x},X,\vec{Y})$ where $X$ appears only positively.
Then it induces a monotone operator
$U \mapsto \mathcal{O}^{\fA,\vec{V}}_{\phi}(U) := \set{ \vec{a} : \fA, U, \vec{V} \models \alpha(\vec{a}) \wedge \phi(\vec{a},X,\vec{Y}) }$
on every structure $\fA$ with valuation $\vec{V}$ for $\vec{Y}$,
and this operator has a unique least fixpoint.

One way to obtain this least fixpoint is based on fixpoint approximants.
Given some ordinal~$\beta$,
the \emph{fixpoint approximant} $\phi^{\beta}(\fA,\vec{V})$ of $\phi$ on $\fA,\vec{V}$ is defined such that
\begin{align*}
\phi^0(\fA,\vec{V}) &:= \emptyset \\
\phi^{\beta+1}(\fA,\vec{V}) &:= \mathcal{O}^{\fA,\vec{V}}_{\phi}(\phi^{\beta}(\fA,\vec{V})) \\
\phi^{\beta}(\fA,\vec{V}) &:= \bigcup_{\beta' < \beta} \phi^{\beta'}(\fA,\vec{V})
\quad \text{where $\beta$ is a limit ordinal.}
\end{align*}
We let $\phi^{\infty}(\fA,\vec{V}) := \bigcup_{\beta} \phi^{\beta}(\fA,\vec{V})$
denote the least fixpoint based on this operation.
Thus, $[\LFPA{X}{\vec{x}} . \alpha(\vec{x}) \wedge \phi(\vec{x},X,\vec{Y})]$
defines a new predicate named $X$ of arity $\size{\vec{x}}$,
and $\fA,\vec{V},\vec{a} \models [\LFPA{X}{\vec{x}} . \alpha(\vec{x}) \wedge \phi(\vec{x},X,\vec{Y}) ](\vec{x})$
iff $\vec{a} \in \phi^{\infty}(\fA,\vec{V})$.
If $\vec{V}$ is empty or understood in context, we just write $\phi^{\infty}(\fA)$.

It is often convenient to allow \emph{simultaneous fixpoints} (also known as \emph{vectorial fixpoints}).
These are fixpoints of the form $[\LFPA{X_i}{\vec{x}_i} . S](\vec{x})$
where $S$ is a system of equations
\[
\begin{cases}
X_1, \vec{x}_1 := \alpha_1(\vec{x}_1) \wedge \phi_1(\vec{x}_1,X_1,\dots,X_j,\vec{Y}) \\
\vdots \\
X_j, \vec{x}_j := \alpha_j(\vec{x}_j) \wedge \phi_j(\vec{x}_j,X_1,\dots,X_j,\vec{Y})
\end{cases}
\]
where $X_1,\dots,X_j$ occur only positively.
Such a system can be viewed as defining a monotone operation on a vector of $j$ valuations,
where the $i$-th component in the vector is the set of tuples satisfying $X_i$ (i.e.~the $i$-th component is the valuation for $X_i$).
The formula $[\LFPA{X_i}{\vec{x}_i} . S](\vec{x})$
expresses that $\vec{x}$ is a tuple in the $i$-th component of the least fixpoint
defined by this operation.
Simultaneous fixpoints can be eliminated in favor of traditional fixpoints
using what is known as the \Bekic principle~\cite{ArnoldN01}.
This can be done using a recursive procedure that eliminates a component of the simultaneous fixpoint by in-lining this formula in the other expressions.
This in-lining process preserves any guardedness properties of the fixpoints,
so we can allow simultaneous fixpoints in $\gnfp$, $\unfp$, and $\gfp$ without changing the expressivity of these logics.

\myparagraph{Expressivity}
These guarded fixpoint logics are expressive:
the $\mu$-calculus (see Section~\ref{sec:transitionsys}) is contained in each of these logics, and so are many description logics~\cite{dlhandbook}.
$\gf$, and hence all of the logics defined previously,
can express many standard integrity constraints on database tables, such
as \emph{inclusion dependencies}, first-order sentences of the form
$\forall \vec x.\left[ R(\vec x) \rightarrow \exists \vec y. \left( S(\vec x) \right) \right]$.
Furthermore, every positive existential formula is expressible in $\unf$ and $\gnf$.
More specifically, $\unf$ and $\gnf$ can express conjunctive queries and unions of conjunctive queries. A \emph{conjunctive query} (CQ) is a formula of the form
$
\exists x_1 \ldots x_j. ~ ( \bigwedge_{i \leq n} A_i )
$
where each $A_i$ is an atomic formula, and a \emph{Union of Conjunctive Queries} (UCQ)
 is a disjunction of CQs. A \emph{Boolean CQ} is
a CQ that is a sentence.
Further, since $\unf$ and $\gnf$ are  closed under boolean combinations of sentences,
they  can express that a Boolean CQ $Q_2$ is implied
by  another CQ $Q_1$ conjoined with a set of sentences $\Sigma$ of
the logic.
This allows satisfiability of these guarded logics to be utilized
to solve implication problems of interest in database theory, such
as the \emph{certain answer problem}. For more details on these
applications, the reader can check~\cite{bgo,bco}. The fixpoint extensions such as $\gfp$ and $\gnfp$
allow one to express many queries involving reachability; some examples
of this can be found in Section~\ref{sec:failure}.

Nevertheless, these logics are decidable and have nice model theoretic properties. In particular
satisfiability and finite satisfiability
is $\twoexptime$-complete for $\gnf$ and $\gnfp$~\cite{gnf}. The same holds for $\unfp$ and $\gfp$~\cite{unf,gfp}.


\myparagraph{Normal form}
It is often helpful to consider the formulas
in a normal form. \emph{Strict normal form} $\gnfp[\sigma, \sigmag]$ formulas
can be generated using the following grammar:
\begin{align*}
\phi &::= \bigvee_i \exists \vec{x}_i . \Big( \bigwedge_j \psi_{ij} \Big) \\
\psi &::= \ltrue ~ | ~ \lfalse ~ | ~ R \, \vec{x} ~ | ~ X \, \vec{x} ~ | ~ \alpha(\vec{x}) \wedge \phi(\vec{x}) ~ | ~ \alpha(\vec x) \wedge \neg \phi(\vec x) ~ | ~
[\LFPA{X}{\vec{x}} . \alpha(\vec{x}) \wedge \phi(\vec{x}, X,\vec{Y})](\vec{x})
\end{align*}
where $R$ is either a relation symbol in $\sigma$ or the equality relation, and $\alpha$ is a strict $\sigmag$-guard for $\phi$.

We will sometimes refer to a formula like $\bigvee_i \exists \vec{x}_i .  ( \bigwedge_j \psi_{ij} )$
as a \emph{UCQ-shaped formula},
and each disjunct $\exists \vec{x}_i . ( \bigwedge_j \psi_{ij} )$ as a \emph{CQ-shaped formula}.
If $\vec{x}_i$ is non-empty, then we say $\exists \vec{x}_i . ( \bigwedge_j \psi_{ij} )$ is a \emph{CQ-shaped formula with projection}.
Note that UCQ-shaped and CQ-shaped formulas generalize UCQs and CQs, respectively.
The normal form captures the notion that formulas in the logic are built
up by combining the usual guarded logic constructors and UCQ-constructors.

Every $\gnfp$ formula
can be converted to this normal form.

\begin{prop}\label{prop:nf}
Let $\theta$ be a formula in $\gnfp[\sigma,\sigmag]$.
We can construct an equivalent formula $\convertnf{\theta} \in \gnfp[\sigma,\sigmag]$ in $\nf$
such that $\size{\convertnf{\theta}} \leq 2^{f(\size{\theta})}$
and $\width{\convertnf{\theta}} \leq \size{\theta}$,
where $f$ is a polynomial function independent of $\theta$.
\end{prop}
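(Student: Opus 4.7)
The plan is to proceed by structural induction on $\theta$, producing at each step an equivalent strict normal form formula with the claimed bounds. The easier cases are direct. Atomic subformulas $R\vec{x}$, fixpoint-predicate occurrences $X\vec{x}$, and equalities are already valid $\psi$-atoms packaged as trivial one-disjunct UCQs. Guarded positive $\alpha(\vec{x})\wedge\phi(\vec{x})$ and guarded negated $\alpha(\vec{x})\wedge\neg\phi(\vec{x})$ reuse the recursively normalized $\phi$ directly as a single $\psi$-atom, preserving $\free{\alpha}=\free{\phi}$ because normalization never alters the free variables of a subformula. A fixpoint $[\LFPA{X}{\vec{x}}.\alpha(\vec{x})\wedge\phi(\vec{x},X,\vec{Y})](\vec{x})$ recurses on $\phi$; here I must verify that the normalization preserves positivity of $X$, which holds because every rewriting we apply (distributing $\wedge$ over $\vee$, pulling existentials to the front, merging UCQ lists) moves $X$-occurrences strictly downward without passing them under a negation.

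The substantive work is in reshaping disjunction, conjunction, and existential quantification into UCQ form. Disjunction concatenates the disjunct lists; an existential $\exists x$ prepends $x$ to the prefix of each CQ. Binary conjunction is the crucial case: given $\bigvee_i\exists\vec{x}_i.\bigwedge_j\psi_{ij}$ and $\bigvee_{i'}\exists\vec{x}'_{i'}.\bigwedge_{j'}\psi'_{i'j'}$, I would distribute $\wedge$ over $\vee$ and then, after $\alpha$-renaming to avoid variable capture, merge each pair of existential prefixes and conjunction lists into a single CQ-shaped disjunct. Simultaneous fixpoints, if present, can be eliminated up front by the \Bekic principle before the induction begins, so they do not require a separate case.

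The main obstacle is confirming that the size blow-up is single exponential rather than double. The point is that normalization only reshapes the \emph{outer} boolean structure into UCQ form while the nested $\psi$-atoms, each containing an already-normalized inner formula, are treated as opaque. For the outer shape, a direct analysis along the lines of classical DNF conversion shows that a subformula of size $n$ yields at most $2^{n}$ top-level disjuncts, each of polynomial size in $n$. The nested normal forms are produced at syntactically distinct subformulas and combined only by wrapping them as $\psi$-atoms, so they do not compound the exponential; solving the resulting recurrence yields a single polynomial $f$ with $\size{\convertnf{\theta}}\leq 2^{f(\size{\theta})}$. The width bound $\width{\convertnf{\theta}}\leq\size{\theta}$ is then immediate: normalization introduces no fresh variables—it only renames existing ones to avoid capture—so the free-variable count of any subformula of the output is bounded by the total number of variables occurring in $\theta$, itself at most $\size{\theta}$.
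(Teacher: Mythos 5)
Your overall strategy---structural induction, merging existential prefixes, distributing conjunction over the UCQ disjuncts with $\alpha$-renaming, and the single-exponential size analysis that treats nested $\psi$-atoms as opaque---is essentially the paper's proof, and those parts are fine. But there is one genuine gap, and it sits at the only non-obvious step of the argument: the guarded-negation case. You write that $\alpha(\vec x)\wedge\neg\phi(\vec x)$ can ``reuse the recursively normalized $\phi$ directly as a single $\psi$-atom, preserving $\free{\alpha}=\free{\phi}$ because normalization never alters the free variables of a subformula.'' This presupposes that $\free{\alpha}=\free{\phi}$ already holds in the input. It need not: the $\gnfp$ grammar only requires $\free{\alpha}\supseteq\free{\phi}$, whereas \nf{} demands a \emph{strict} guard with $\free{\alpha}=\free{\phi}$. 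When the inclusion is proper, your construction outputs $\alpha\wedge\neg\convertnf{\phi}$ with $\free{\alpha}\supsetneq\free{\convertnf{\phi}}$, which is not a legal $\psi$-production of the normal form, so the claimed conclusion $\convertnf{\theta}\in\nf$ fails.

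The repair is small but it is a specific idea you would need to supply: pad the negated subformula with the guard itself before recursing, i.e.\ set $\convertnf{\alpha\wedge\neg\psi}:=\alpha\wedge\neg(\convertnf{\alpha\wedge\psi})$. Since $\alpha\wedge\psi$ has free variables exactly $\free{\alpha}$, the guard becomes strict, equivalence is preserved (under $\alpha$ the two negations coincide), and positivity of fixpoint variables and the size/width bounds are unaffected. With that one modification your proof matches the paper's.
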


\begin{proof}
We proceed by induction on $\theta$.
The output $\convertnf{\theta}$ is a UCQ-shaped formula in $\nf$, with the same free variables as $\theta$.
\begin{itemize}
\item If $\theta$ is atomic or is an equality, then $\convertnf{\theta} := \theta$.
\item Suppose $\theta = \alpha \wedge \neg \psi$ where $\alpha$ is a $\sigmag$-guard for $\free{\psi}$.
Then $\convertnf{\theta} := \alpha \wedge \neg (\convertnf{\alpha \wedge \psi})$.
Note that the resulting formula is strictly $\sigmag$-guarded.

\item Suppose $\theta = \exists y . (\psi)$.
If $\convertnf{\psi}$ is a UCQ-shaped formula of the form $\bigvee_{i} \exists \vec{z}_i . ( \bigwedge_j \psi_{ij} )$,
then $\convertnf{\theta} := \bigvee_i \exists y \vec{z}_i . (\bigwedge_j \psi_{ij})$.

\item Suppose $\theta = [\LFPA{Y}{\vec{y}} . \alpha(\vec{y}) \wedge \psi(\vec{y})](\vec{x})$.
Then $\convertnf{\theta} := [\LFPA{Y}{\vec{y}} . \alpha(\vec{y}) \wedge \convertnf{\psi(\vec{y})}](\vec{x})$.

\item Suppose $\theta = \psi_1 \vee \psi_2$.
Then $\convertnf{\theta}$ is the UCQ-shaped formula $\convertnf{\psi_1} \vee \convertnf{\psi_2}$.

\item Suppose $\theta = \psi_1 \wedge \psi_2$.
Assume that $\convertnf{\psi_1} = \bigvee_{i} \exists \vec{x}_i . \chi_i$
and $\convertnf{\psi_2} = \bigvee_{i'} \exists \vec{x}'_{i'} . \chi'_{i'}$.
Then
$\convertnf{\theta} := \bigvee_{i,i'} \exists \vec{y}_i \vec{y}'_{i'} . (\chi_i[\vec{y}_i / \vec{x}_i] \wedge \chi'_{i'}[\vec{y}'_{i'} / \vec{x}'_{i'}])$
where the variables in every $\vec{y}_i$ and $\vec{y}'_{i'}$ are fresh.
\end{itemize}

\noindent
It is straightforward to check that
the new formula $\convertnf{\theta}$ is of size at most
$2^{f(k)}$ for $k = \size{\theta}$ and $f$ some polynomial function independent of $\theta$.
Moreover, the number of free variable names needed in any subformula is at most $k$,
so $\width{\convertnf{\theta}} \leq k$, and hence $\convertnf{\theta} \in \gnfpk[\sigma,\sigmag]$.
\end{proof}

Later,
we will need another version of this conversion process that preserves the width,
assuming the input satisfies some additional properties
(this is not needed until the proof of Lemma~\ref{lemma:gnfp-backward}).
We say a formula starting with a block of existential quantifiers
is \emph{strictly $\sigmag$-answer-guarded} if
it is of the form $\exists \vec{y} . \big(\alpha(\vec{x}) \wedge \chi(\vec{x},\vec{y}) \big)$.

\begin{prop}\label{prop:nfwidth}
Let $\theta$ be a formula in $\gnfp[\sigma,\sigmag]$
such that any subformula starting with an existential quantifier
and not directly below another existential quantifier is strictly $\sigmag$-answer-guarded
and any negation is strictly $\sigmag$-guarded.

Then we can construct an equivalent formula $\convertnf{\theta} \in \gnfp[\sigma,\sigmag]$ that is in $\nf$
and satisfies $\size{\convertnf{\theta}} \leq 2^{f(\size{\theta})}$
and $\width{\convertnf{\theta}} = \width{\theta}$,
where $f$ is a polynomial function independent of $\theta$.
\end{prop}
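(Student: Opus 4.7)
The plan is to revisit the inductive construction of $\convertnf{\theta}$ from Proposition~\ref{prop:nf} and adjust only the existential and conjunction cases, so that no new block of quantifiers is ever introduced when merging two subformulas. The reason Proposition~\ref{prop:nf} loses control of the width lies exactly in that merging: combining $\bigvee_i \exists \vec{x}_i . \chi_i$ with $\bigvee_{i'} \exists \vec{x}'_{i'} . \chi'_{i'}$ produces $\bigvee_{i,i'} \exists \vec{y}_i \vec{y}'_{i'} . (\chi_i \wedge \chi'_{i'})$, whose inner conjunction has $|\free{\theta}| + |\vec{y}_i| + |\vec{y}'_{i'}|$ free variables and can easily exceed $\width{\theta}$. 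To avoid this, I would strengthen the inductive invariant: for every $\theta$ meeting the hypotheses, $\convertnf{\theta}$ is a UCQ-shaped formula $\bigvee_i \bigwedge_j \psi_{ij}$ whose outermost existential blocks are \emph{empty}. Any quantification from the input is instead encapsulated inside a single $\psi$-formula of the guarded form $\alpha(\vec{x}) \wedge \phi(\vec{x})$, where $\phi$ is still a UCQ but may carry its own existentials.

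The existential case is where this encapsulation happens. Suppose $\theta = \exists \vec{y} . \psi'$ is not directly below another existential. The hypothesis forces $\psi' = \alpha(\vec{x}) \wedge \chi(\vec{x}, \vec{y})$ with $\alpha$ strictly $\sigmag$-guarding $\vec{x} = \free{\theta}$. Since $\vec{y}$ does not occur in $\alpha$, I rewrite $\theta$ as $\alpha(\vec{x}) \wedge \exists \vec{y} . \chi(\vec{x}, \vec{y})$. Recursing on $\chi$ (whose hypotheses are inherited from $\theta$) yields a UCQ $\convertnf{\chi}$; placing $\exists \vec{y}$ in front and distributing over the outer disjunction gives a $\phi$-formula $\exists \vec{y} . \convertnf{\chi}$ with free variables exactly $\vec{x}$. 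I then set
\[
\convertnf{\theta} := \alpha(\vec{x}) \wedge \bigl(\exists \vec{y} . \convertnf{\chi}\bigr),
\]
a single strictly $\sigmag$-guarded $\psi$-formula wrapped as the lone conjunct of the lone (existential-free) disjunct of $\convertnf{\theta}$. This preserves the invariant; width-wise, the disjunct bodies inside $\exists \vec{y} . \convertnf{\chi}$ have free variables contained in $\vec{x} \cup \vec{y}$, which already appears as the free-variable set of a subformula of $\theta$, and everything deeper inside has width at most $\width{\chi} \leq \width{\theta}$ by the induction hypothesis.

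In the conjunction case $\theta = \psi_1 \wedge \psi_2$ the invariant on the recursive outputs makes the merge trivial. Each $\convertnf{\psi_k}$ has the form $\bigvee_{i_k} \bigwedge_j \psi^{(k)}_{i_k j}$ with empty outer existentials, so I take
\[
\convertnf{\theta} := \bigvee_{i_1, i_2} \Bigl(\bigwedge_j \psi^{(1)}_{i_1 j} \,\wedge\, \bigwedge_j \psi^{(2)}_{i_2 j}\Bigr).
\]
No renaming or new quantifier is needed, so each disjunct body has free variables contained in $\free{\psi_1 \wedge \psi_2}$, bounding its width by $\width{\theta}$. The remaining cases (atoms and equalities, strictly guarded negation, disjunction, fixpoint) are handled exactly as in Proposition~\ref{prop:nf}; each trivially maintains the invariant and the width bound. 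In the negation case, the strict $\sigmag$-guardedness assumption is precisely what guarantees that the recursive input $\alpha \wedge \psi$ has the same free variables as $\psi$, so the inductive width bound descends correctly.

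The main technical point to verify is that every recursive call lands on a formula still satisfying both parts of the hypothesis, but this follows because the two constraints (answer-guardedness of existentials and strict $\sigmag$-guardedness of negations) are preserved under passage to subformulas as well as under the negation-case recursion on $\alpha \wedge \psi$, which introduces no new existentials. The size bound is inherited from Proposition~\ref{prop:nf}: the only source of super-polynomial blowup is the distribution step in the conjunction case, which is single-exponential, so $\size{\convertnf{\theta}} \leq 2^{f(\size{\theta})}$ for some polynomial $f$.
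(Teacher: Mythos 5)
Your proposal is correct and follows essentially the same route as the paper's proof: a structural induction in which the strict $\sigmag$-answer-guard of each maximal existential block is used to encapsulate that block inside a guarded conjunct of the form $\alpha(\vec{x}) \wedge \phi(\vec{x})$, so that the conjunction case never has to merge two quantifier prefixes over fresh variables. The only difference is bookkeeping: you perform this encapsulation eagerly in the existential case (maintaining the invariant that the outer UCQ has empty quantifier blocks), whereas the paper keeps the blocks at the outer level and wraps each CQ-shaped disjunct with projection in its answer-guard only when a conjunction is formed; both invariants yield the same width preservation for the same reason.
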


\begin{proof}
We assume that each subformula in $\theta$
that starts with an existential quantifier
and is not directly below another existential quantifier is a strictly $\sigmag$-answer-guarded formula,
and every negation is strictly $\sigmag$-guarded.
We proceed by induction on the structure of the formula~$\theta$,
ensuring that the output $\convertnf{\theta}$ is a
UCQ-shaped formula in \nf,
with the same free variables as $\theta$,
and where every CQ-shaped formula with projection (i.e.~every CQ-shaped formula that uses  existential quantification) is strictly $\sigmag$-answer-guarded.

\begin{itemize}
\item If $\theta$ is atomic or is an equality, then $\convertnf{\theta} := \theta$.
\item Suppose $\theta = \alpha \wedge \neg \psi$ where $\alpha$ is a strict $\sigmag$-guard for $\free{\psi}$.
Then $\convertnf{\theta} := \alpha \wedge \neg \convertnf{\psi}$
which is strictly $\sigmag$-answer guarded since $\free{\psi} = \free{\convertnf{\psi}}$.

\item Suppose $\theta = \exists \vec{y} . ( \beta(\vec{x}) \wedge \psi )$,
a strictly $\sigmag$-answer-guarded formula starting with a block of existential quantifiers.
Let $\convertnf{\psi}$ be the UCQ-shaped formula $\bigvee_{i} \exists \vec{z}_i . (\bigwedge_j \psi_{ij})$,
and let $\vec{x}_i$ and $\vec{y}_i$ be the subset of $\vec{x}$ and $\vec{y}$ used in $\bigwedge_j \psi_{ij}$.
Then $\convertnf{\theta} := \bigvee_i \exists \vec{y}_i \vec{z}_i . (\alpha_i \wedge \bigwedge_j \psi_{ij})$
where
$\alpha_i$ is the strict $\sigmag$-guard for $\vec{x}_i$ in $\bigwedge_j \psi_{ij}$ if $\vec{z}_i$ is non-empty,
and $\alpha_i = \beta(\vec{x})$ if $\vec{z}_i$ is empty
(since we need to add a $\sigmag$-guard to ensure strict $\sigmag$-answer-guardedness for this new CQ-shaped formula with projection).
Note that this process does not increase the width.

\item Suppose $\theta = [\LFPA{Y}{\vec{y}} . \alpha(\vec{y}) \wedge \psi(\vec{y})](\vec{x})$.
Then we have $\convertnf{\theta} := [\LFPA{Y}{\vec{y}} . \alpha(\vec{y}) \wedge \convertnf{\psi(\vec{y})}](\vec{x})$.

\item Suppose $\theta = \psi_1 \vee \psi_2$.
Then $\convertnf{\theta} := \convertnf{\psi_1} \vee \convertnf{\psi_2}$.

\item Suppose $\theta = \psi_1 \wedge \psi_2$.
Let $\convertnf{\psi_1} = \bigvee_{i} \chi_i$
and $\convertnf{\psi_2} = \bigvee_{i'} \chi'_{i'}$.
Let $\alpha_i$ be the strict $\sigmag$-answer-guard for $\chi_i$ if $\chi_i$ is a CQ-shaped formula with projection, and $\top$ otherwise.
Similarly for $\alpha'_{i'}$.
Then
$\convertnf{\theta} := \bigvee_{i} \bigvee_{i'} ((\alpha_i \wedge \chi_i) \wedge (\alpha'_{i'} \wedge \chi'_{i'}))$.
The outer level UCQ now only has CQ-shaped formulas without projection of the form
$(\alpha_i \wedge \chi_i) \wedge (\alpha'_{i'} \wedge \chi'_{i'})$.\qedhere
\end{itemize}
\end{proof}

\myparagraph{Second-order logic}
\emph{Guarded second-order logic}
over a signature~$\sigma$ (denoted $\gso[\sigma]$)
is a fragment of second-order logic
in which second-order quantification is interpreted only over guarded relations,
i.e.\ over relations where every tuple in the relation is guarded by some predicate from $\sigma$.
We refer the interested reader to~\cite{GradelHO02}
for more background and some equivalent definitions of this logic.
The logics $\unfp$, $\gnfp$, and $\gfp$ can all be translated into $\gso$.

\begin{prop}\label{prop:gnfp-to-gso}
Given $\phi \in \gnfp[\sigma]$,
we can construct an equivalent $\phi' \in \gso[\sigma]$.
\end{prop}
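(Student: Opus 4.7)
The plan is to proceed by induction on the structure of $\phi$, mapping each $\gnfp$ construct to its natural $\gso$ counterpart, with the only substantive case being the fixpoint operator which is handled via the classical second-order characterization of least fixed points as intersections of prefixed points.

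For the base cases and Boolean/first-order combinators the translation is essentially the identity: an atomic formula $R\vec{x}$ or equality translates to itself; conjunctions, disjunctions, first-order existential quantification, and (guarded) negation translate by recursively translating subformulas and reusing the same connective. None of these steps interact with second-order quantification, so they stay inside $\gso[\sigma]$.

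The only interesting case is $\theta(\vec{x}) = [\LFPA{X}{\vec{x}} . \alpha(\vec{x}) \wedge \phi(\vec{x},X,\vec{Y})](\vec{x})$. Let $\phi^{*}$ denote the inductively obtained $\gso[\sigma]$-translation of $\phi$. I would translate $\theta(\vec{x})$ to
\[
\theta^{*}(\vec{x}) \;:=\; \forall X.\;\bigl(\forall \vec{x}.\, (\alpha(\vec{x}) \wedge \phi^{*}(\vec{x},X,\vec{Y})) \to X\vec{x}\bigr) \;\to\; X\vec{x},
\]
where $X$ has arity $|\vec{x}|$ and the outer $\forall X$ is the $\gso$ quantifier, ranging by definition only over relations all of whose tuples are guarded by some predicate of $\sigma$. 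Simultaneous fixpoints are handled either by applying the \Bekic principle (which is noted above not to change expressivity) or, more directly, by replacing the single $X$ with a tuple of second-order variables $X_1,\dots,X_j$ and conjoining the prefixed-point conditions for each $S_i$; either option stays within $\gso$.

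The key justification (and the only point that needs care) is that restricting the outer quantifier to guarded relations yields exactly the intended least fixed point. On the one hand, every tuple produced by a stage of the operator $\mathcal{O}^{\fA,\vec{V}}_{\phi}$ satisfies $\alpha$, which is an atomic formula over $\sigma$; by transfinite induction every approximant $\phi^{\beta}(\fA,\vec{V})$ is $\sigma$-guarded, and hence $\phi^{\infty}(\fA,\vec{V})$ is itself a guarded relation. Thus $\phi^{\infty}(\fA,\vec{V})$ lies in the range of the $\gso$ quantifier and is a prefixed point, so $\theta^{*}$ holds of every tuple in it. Conversely, any guarded prefixed point $X$ contains $\phi^{\infty}(\fA,\vec{V})$ by the standard induction on approximants, so if $\theta^{*}(\vec{a})$ holds then in particular $\vec{a} \in \phi^{\infty}(\fA,\vec{V})$, witnessed by taking $X = \phi^{\infty}(\fA,\vec{V})$. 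The main (mild) obstacle is simply articulating this guardedness invariant of the approximants cleanly; once that is observed the equivalence of $\theta$ and $\theta^{*}$, and hence of $\phi$ and $\phi^{*}$, follows by a routine induction.
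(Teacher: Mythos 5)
Your proposal is correct and follows essentially the same route as the paper's proof: a structural induction that is the identity on all connectives except the least fixpoint, which is rendered as the guarded-second-order statement that $\vec{x}$ belongs to every guarded prefixed point. The paper leaves the justification that guarded quantification suffices implicit, whereas you spell out the guardedness invariant of the approximants; the only cosmetic issue is the reuse of $\vec{x}$ as both the free tuple and the inner bound tuple, which the paper avoids by writing the free variables as $\vec{y}$.
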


\begin{proof}
The translation is straightforward.
The interesting case is for the least fixpoint.
If $\phi(\vec{y}) = [\LFPA{X}{\vec{x}} . \alpha(\vec{x}) \wedge \psi(X,\vec{x})](\vec{y})$
then
\[
\phi'(\vec{y}) :=
\forall X . [ (\forall \vec{x} . ((\alpha(\vec{x}) \wedge \psi'(X,\vec{x})) \rightarrow X \vec{x}))
\rightarrow X \vec{y} ]
\]
where second-order quantifiers range over guarded relations.
\end{proof}

\subsection{Transition systems and their logics}\label{sec:transitionsys}
A special kind of signature is a \emph{transition system signature} $\Sigma$ consisting of a finite set of
 unary predicates (corresponding to a set of propositions)
and binary predicates (corresponding to a set of actions).
A structure for such a signature is a \emph{transition system}.
Trees allowing both edge labels and node labels have a natural interpretation as transition systems.

We will be interested in two logics over transition system signatures.
One is \emph{monadic second-order logic} (denoted \mso)---where second-order quantification is only over unary relations.
\mso is contained in \gso, because unary relations are trivially guarded.
While  {\mso} and \gso can be interpreted  over arbitrary signatures, there are logics like \emph{modal logic}
that have syntax specific to transition system signatures.
Another is the \emph{modal $\mu$-calculus} (denoted $\Lmu$), an extension of modal logic with fixpoints.
Given a transition system signature $\Sigma$,
formulas $\phi \in \Lmu[\Sigma]$ can be generated using the grammar
$\phi ::= P ~ | ~ X ~ | ~ \phi \wedge \phi ~ | ~ \neg \phi ~ | ~  \dmodality{\rho}{\phi} ~|~ \mu X . \phi$
where $P$ is a unary relation in $\Sigma$ and $\rho$ is a binary relation in $\Sigma$.
The formulas $\mu X . \phi$ are required to use the variable $X$ only positively in $\phi$.
The meaning of a $\mu$-calculus formula can be seen by the following straightforward conversion into $\lfp$:
\begin{prop}
For every $\phi \in \Lmu[\Sigma]$ there is a formula $\psi(x) \in \lfp[\Sigma]$ such that
for every transition system $\fM$ over $\Sigma$ and every node $v$ in $\fM$,
the formula $\phi$ holds at $v$ in $\fM$ iff
$\fM, v \models \psi(x)$.
\end{prop}

\begin{proof}
We proceed by induction on the structure of $\phi$.
We let $\psi'$ denote the inductive translation of $\phi'$.

\begin{itemize}
\item If $\phi = P$, then $\psi(x) = P(x)$.
\item If $\phi = X$, then $\psi(x) = X(x)$.
\item The translation commutes with $\vee$, $\wedge$, and $\neg$.
\item If $\phi = \dmodality{\rho}{\phi'}$, then $\psi(x) = \exists y . \big( \rho(x,y) \wedge \psi'(y) \big)$.
\item If $\phi = \bmodality{\rho}{\phi'}$, then $\psi(x) = \forall y . \big(\rho(x,y) \rightarrow \psi'(y) \big)$.
\item If $\phi = \mu X . \phi'$, then $\psi(x) = [\LFPA{X}{z} . \psi'(z)](x)$.\qedhere
\end{itemize}
\end{proof}

\noindent
We say a $\Lmu$-formula $\phi$ holds from a position $v$ in a transition system $\fM$ if
$\fM, v \models \psi(x)$, where $\psi(x)$ is the $\lfp$ formula given by the previous proposition. We say that an $\Lmu$-formula holds in a tree iff it holds at the root of the tree.

It is well-known that $\Lmu$ can also be translated into {\mso}~\cite{ArnoldN01}.

\subsection{Tree-like model property and tree codes}\label{sec:treecodes}

The guarded logics that we consider in this paper exhibit interesting model theoretic properties. For example, $\gnf$ (and hence $\unf$ and $\gf$) has the finite-model property~\cite{gnf}:
if $\phi$ is satisfiable, then $\phi$ is satisfiable in a finite structure.
This finite model property does not hold for the fixpoint extensions of these logics.
In this paper we will be concerned only with equivalence over all structures, not just finite structures.

Our work relies heavily on a different model theoretic property, called the tree-like model property. We review now what it means for a relational structure to be ``tree-like''.
Roughly speaking, these are structures that can be decomposed into a tree form.
Formally, a \emph{tree decomposition} of a structure $\fM$ consists of a tree $(V,E)$ and a function $\lambda$
assigning to each vertex $v \in V$ a subset $\lambda(v)$ of elements in the domain of $\fM$,
so that the following hold:
\begin{itemize}
\item
For each atom $R c_1 \ldots c_n$ that holds in $\fM$, there is a $v$ such that $\lambda(v)$
includes each element of $c_1 \ldots c_n$.
\item For each domain element $e$ in the domain of $\fM$, the set of nodes
\[
\{ v \in V : e \in \lambda(v) \}
\]
is a connected subset of the tree. In other words, for any two vertices $v_1, v_2$ such that $e \in \lambda(v_1)$ and $e \in \lambda(v_2)$, there is a path between $v_1$ and $v_2$ such that $e \in \lambda(u)$ for every node $u$ on this path.
\end{itemize}
The \emph{width} of a decomposition is one less than the maximum size of $\lambda(v)$
over any element $v \in V$. The subsets $\lambda(v)$ of $\fM$ are called \emph{bags}
of the decomposition, so structures of tree-width $k-1$ have bags of size at most $k$.

$\gnfp$ (and hence $\unfp$ and $\gfp$) has the \emph{tree-like model property}~\cite{gnf}:
if $\phi$ is satisfiable, then $\phi$ is satisfiable
over structures with tree decompositions of some bounded tree-width. In fact satisfiable $\gnfpk$ formulas have
satisfying structures of tree-width $k-1$. Satisfiable $\gfp$ sentences have an even stronger property: each bag in the decomposition describes a guarded set of elements, so the width of the tree decomposition is bounded by the maximum arity of the relations.

It is well-known that structures of tree-width $k-1$ can be encoded by
labelled trees over an alphabet that depends only on the signature $\sigma$ of the structure
and $k$.
Our encoding scheme will make use
of trees with both node and edge labels, i.e.~trees over a transition system signature
$\sigcode{\sigma}{k}$.
Each node in a tree code represents atomic information over at most $k$ elements,
so the signature $\sigcode{\sigma}{k}$ includes unary predicates to indicate
the number of elements represented at that node,
and the atomic relations that hold of those elements.
The signature includes binary predicates that
indicate the overlap and relationship
between the names of elements encoded in neighboring nodes of the tree.
Formally,  $\sigcode{\sigma}{k}$ contains the following relations:
\begin{itemize}
\item There are unary relations $D_n \in \sigcode{\sigma}{k}$ for $n \in \set{0,\dots,k}$,
to indicate the number of elements represented at each node. We call these \emph{domain predicates} since they are used to specify the number of domain elements encoded at a given node.
\item For every relation $R \in \sigma$ of arity $n$
and every sequence $\vec{i} = i_1 \dots i_n$ over $\set{1,\dots,k}$,
there is a unary relation $R_{\vec{i}} \in \sigcode{\sigma}{k}$
to indicate that the tuple of elements coded by $\vec{i}$ is a tuple of elements in $R$.
For example, if $T$ is a ternary relation in $\sigma$
and $a_i$ is the element coded by name $i$ in some node,
then
$T_{3,1,3}$ indicates that $T(a_3,a_1,a_3)$ holds.
\item For every partial 1--1 map $\rho$ from $\set{1,\dots,k}$ to $\set{1,\dots,k}$,
there is a binary relation $E_{\rho} \in \sigcode{\sigma}{k}$ to indicate the relationship
between the names of elements in neighboring nodes.
For example,
if $(u,v) \in E_{\rho}$ and $\rho(3) = 1$, then
the element with name $3$ in $u$ is the same as the element with name $1$ in $v$.
\end{itemize}

\noindent
For a unary relation $R_{\vec{i}}$, we write $\indices{R_{\vec{i}}}$
to denote the set of elements from $\set{1,\dots,k}$ appearing in $\vec{i}$.
We will refer to the elements of $\set{1, \ldots , k}$
as \emph{indices} or \emph{names}.

For nodes $u,v$ in a $\sigcode{\sigma}{k}$-tree $\tree$ and names $i,j$,
we will say $(u,i)$ is \emph{equivalent} to $(v,j)$ if
there is a simple undirected path $u = u_1 u_2 \dots u_n = v$ in $\tree$,
and $\rho_1, \dots, \rho_{n-1}$ such that
$(u_i,u_{i+1}) \in E_{\rho_i}^{\cT}$ or $(u_{i+1},u_i) \in E_{\rho_i^{-1}}^{\cT}$,
and $(\rho_{n-1} \circ \dots \circ \rho_1)(i) = j$.
In words, the $i$-th element in node $u$
corresponds to the $j$-th element in node $v$,
based on the composition of edge labels (or their inverses) on the simple path between $u$ and $v$.
We write $[u,i]$ for the equivalence class based on this equivalence relation.

\label{encodedguardedness} 
Given some subsignature $\sigmag \subseteq \sigma$
and some set of indices $I \subseteq \set{1,\dots,k}$,
we say that $R_{\vec{i}} \in \sigcode{\sigma}{k}$ is a \emph{$\sigmag$-guard} for $I$ if
$\indices{R_{\vec{i}}} \supseteq I$ and $R \in \sigmag$.
Likewise, $R_{\vec{i}} \in \sigcode{\sigma}{k}$ is a \emph{strict $\sigmag$-guard} for $I$ if
$\indices{R_{\vec{i}}} = I$ and $R \in \sigmag$.
Given a set $\tau$ of unary relations from $\sigcode{\sigma}{k}$
we say \emph{$I$ is $\sigmag$-guarded in $\tau$}
if $\size{I} \leq 1$ or there is some $R_{\vec{i}} \in \tau$ that is a $\sigmag$-guard for $I$.
Similarly, we say \emph{$I$ is strictly $\sigmag$-guarded in $\tau$}
if $\size{I} \leq 1$ or there is some $R_{\vec{i}} \in \tau$ that is a strict $\sigmag$-guard for $I$.
These definitions are analogous to the definitions of guardedness and strict guardedness that were given in Section~\ref{sec:guardedness}, but adapted to encodings. For example, if $I$ is strictly $\sigmag$-guarded in $\tau$, then the set $\tau$ is encoding some relation that would strictly guard the elements encoded by $I$.

Given some $\sigcode{\sigma}{k}$-tree $\cT$,
we say $\cT$ is \emph{consistent}
if it satisfies certain natural conditions
that ensure that the tree actually corresponds to a code of some tree decomposition
of a $\sigma$-structure:
\begin{enumerate}
\item there is exactly one domain predicate $D_i$
that holds at each node, and the root $v_0$ is in $D_0^{\cT}$;
\item edge labels respect the domain predicates:
if $u \in D_m^{\cT}$, $v \in D_n^{\cT}$, and $(u,v) \in E_\rho^{\cT}$,
then $\dom{\rho} \subseteq \set{1,\dots,m}$
and $\codom{\rho} \subseteq \set{1,\dots,n}$;
\item node labels respect the domain predicates:
if $v \in D_n^{\cT}$ and $v \in R_{\vec{i}}^{\cT}$,
then $\indices{R_{\vec{i}}} \subseteq \set{ 1, \dots, n}$;
\item neighboring node labels agree on shared names:
if $u \in R_{\vec{i}}^{\cT}$, $(u,v) \in E_\rho^{\cT}$,
and $\indices{R_{\vec{i}}} \subseteq \dom{\rho}$,
then $v \in R_{\rho(\vec{i})}^{\cT}$;
similarly, if $v \in R_{\vec{i}}^{\cT}$, $(u,v) \in E_\rho^{\cT}$,
and $\indices{R_{\vec{i}}} \subseteq \codom{\rho}$,
then $u \in R_{\rho^{-1}(\vec{i})}^{\cT}$;
\end{enumerate}
where $P^{\cT}$ denotes the interpretation of relation $P$ in $\cT$.

It is now easy to verify the fact mentioned at the beginning of this subsection: tree decompositions
of every $\sigma$-structure of tree-width $k-1$
can be encoded in consistent $\sigcode{\sigma}{k}$-trees.

The next step is to describe how
a consistent $\sigcode{\sigma}{k}$-tree can be decoded to an actual $\sigma$-structure.
The \emph{decoding} of $\cT$ is the
$\sigma$-structure $\struct{\cT}$
where the universe is the set
\[\set{ [v,i] : \text{$v \in \dom{\cT}$ and $i \in \set{1,\dots,k}$} }\]
and a tuple $([v_1,i_1],\dots,[v_r,i_r])$ is in
$R^{\struct{\cT}}$
iff there is some node $w \in \dom{\cT}$ such that $w \in R_{j_1 \dots j_r}$
and $[w,j_m] = [v_m,i_m]$ for all $m \in \set{1,\dots,r}$.

Finally, we introduce some notation related to $\sigcode{\sigma}{k}$.
We often use $\tau$ to denote a node label, and $\tau(v)$ to denote the label at some node $v$ in a tree.
We write $\EdgeLabels$ for the set of functions $\rho$ such that the binary predicate $E_\rho$ is in $\sigcode{\sigma}{k}$.
We write $\NodeLabels$ for the set of \emph{internally consistent} node labels,
i.e.~the set consisting of sets of unary predicates from $\sigcode{\sigma}{k}$
that satisfy properties (1) and (3) in the definition of consistency above.


\subsection{Bisimulations and unravellings}\label{sec:bisim}
The logic $\Lmu$ over transition system signatures
lies within {\mso}. Similarly the guarded logics $\gfp$, $\unfp$, and $\gnfp$
all lie within \gso and apply to
arbitrary-arity signatures.
It is easy to
see that these containments are proper.
 In each case, what distinguishes the smaller logic from the larger
is \emph{invariance} under certain equivalences called
\emph{bisimulations}, each of which is defined by a certain player
having a winning strategy in a two-player infinite game played between players
Spoiler and Duplicator.

For $\Lmu$, the appropriate game is the classical \emph{bisimulation game} between transition systems $\fA$ and $\fB$: the definition of the game and the basic results about it can be found  in~\cite{freedom}.
It is straightforward to check that $\Lmu[\Sigma]$-formulas
are $\Sigma$-bisimulation-invariant, i.e. $\Lmu[\Sigma]$-formulas cannot distinguish between $\Sigma$-bisimilar transition systems.
We will make use of a stronger result of  Janin and Walukiewicz~\cite{janinw} that
the $\mu$-calculus is the bisimulation-invariant fragment of {\mso}
(we state it here for trees because of how we use this later):
A class of trees is definable in $\Lmu[\Sigma]$ iff it is definable in $\mso[\Sigma]$ and
closed under $\Sigma$-bisimulation within the class of all $\Sigma$-trees.
The proof of this result is effective in the following sense:
 given an $\mso[\Sigma]$ sentence $\phi$ it is possible to
construct a $\mu$-calculus formula $\phi'$ such that,
if $\phi$ is bisimulation-invariant, $\phi'$ holds from the root of a tree  $\tree$
iff $\tree$ satisfies $\phi$.

We now describe a generalization of these games
between structures $\fA$ and $\fB$ over a signature $\sigoriginal$ with arbitrary arity relations, parameterized
by some subsignature $\sigtarget$ of the structures.
Each position in the game is a partial $\sigtarget$ homomorphism $h$ from $\fA$ to $\fB$, or vice versa.
The \emph{active structure} in position $h$ is the structure containing the domain of $h$.
The game starts from the empty partial map from $\fA$ to $\fB$.
In each round of the game, Spoiler chooses between one of the following moves:
\begin{itemize}
\item \textit{Extend:}
Spoiler chooses some set $X$ of elements in the active structure such that $X \supseteq \dom{h}$,
and Duplicator must then choose $h'$ extending $h$ (i.e.\ such that $h(c) = h'(c)$ for all $c \in \dom{h}$)
such that $h'$ is a partial $\sigtarget$ homomorphism;
Duplicator loses if this is not possible.
Otherwise, the game proceeds from the position $h'$.
\item \textit{Switch:}
Spoiler chooses to switch active structure.
If $h$ is not a partial $\sigtarget$ isomorphism, then Duplicator loses.
Otherwise, the game proceeds from the position $h^{-1}$.
\item \textit{Collapse:}
Spoiler selects some $X \subseteq \dom{h}$
and the game continues from position $\restrict{h}{X}$.
\end{itemize}
Duplicator wins if she can continue to play indefinitely.

We will consider several variants of this game. For $k \in \N$ and $\sigmag \subseteq \sigtarget$:
\begin{itemize}
\item {\bf $k$-width guarded negation bisimulation game}:
The $\gnkinvar[\sigtarget,\sigmag]$-game
is the version of the game where the domain of every position $h$ is of size at most $k$,
and Spoiler can only make a switch move at $h$ if $\dom{h}$ is strictly $\sigmag$-guarded
in the active structure.
\item {\bf block $k$-width guarded negation bisimulation game}:
The $\sgnkinvar[\sigtarget,\sigmag]$-game
is like the $\gnkinvar[\sigtarget,\sigmag]$-game, but additionally
Spoiler is required to alternate between extend/switch moves and moves where he collapses to a strictly $\sigmag$-guarded set.
We call it the ``block'' game since Spoiler must select all of the new extension elements in a single block,
rather than as a series of small extensions.
The key property is that the game alternates between positions with
a strictly $\sigmag$-guarded domain, and positions of size at most~$k$.
The restriction mimics the alternation between formulas of width $k$
and strictly $\sigmag$-guarded formulas within normalized $\gnfpk$ formulas.
\item {\bf guarded bisimulation game}:
The $\ginvar[\sigtarget,\sigmag]$-game
is the version of the game
where the domain of every position must be strictly $\sigmag$-guarded in the active structure.
Note that in such a game, every position $h$ satisfies $\size{\dom{h}} \leq \width{\sigma_g}$.
\end{itemize}
We say $\fA$ and $\fB$ are \emph{$\gnkinvar[\sigtarget,\sigmag]$-bisimilar}
if Duplicator has a winning strategy in the $\gnkinvar[\sigtarget,\sigmag]$-game
starting from the empty position.
We say a sentence $\phi$ is \emph{$\gnkinvar[\sigtarget,\sigmag]$-invariant}
if for any pair of $\gnkinvar[\sigtarget,\sigmag]$-bisimilar $\sigtarget$-structures,
$\fA \models \phi$ iff $\fB \models \phi$.
A logic $\cL$ is $\gnkinvar[\sigtarget,\sigmag]$-invariant if
every sentence in $\cL$ is $\gnkinvar[\sigtarget,\sigmag]$-invariant.
When the guard signature is the entire signature,
we will write, e.g., $\gnkinvar[\sigtarget]$ instead of
$\gnkinvar[\sigtarget,\sigtarget]$. We similarly talk about \emph{$\ginvar[\sigtarget,\sigmag]$-invariance}
where we replace  the $\gnkinvar[\sigtarget,\sigmag]$-game by the $\ginvar[\sigtarget,\sigmag]$-game.

It is known that the  bisimulation games characterize certain fragments of $\fo$:
$\gf[\sigtarget]$ is the $\ginvar[\sigtarget]$-invariant fragment of $\fo[\sigtarget]$~\textup{\cite{gforig}}
and $\gnfk[\sigtarget]$ can be characterized as either the $\sgnkinvar[\sigtarget]$-invariant or
the $\gnkinvar[\sigtarget]$-invariant fragment of $\fo[\sigtarget]$~\textup{\cite{gnf}}.
Likewise, for fixpoint logics and fragments of $\gso$,
$\gfp[\sigtarget]$ is the $\ginvar[\sigtarget]$-invariant fragment of $\gso[\sigtarget]$~\textup{\cite{GradelHO02}}, while
$\unfpk[\sigtarget]$ is the $\sgnkinvar[\sigtarget,\emptyset]$-invariant fragment of $\gso[\sigtarget]$ \textup{\cite{lics15-gnfpi}}.

In this paper,
we will prove a corresponding characterization for $\gnfpk[\sigtarget]$
in terms of $\sgnkinvar[\sigtarget]$-invariance:
$\gnfpk[\sigtarget]$ is the $\sgnkinvar[\sigtarget]$-invariant fragment of $\gso[\sigtarget]$
(in fact we will refine this to also talk about the guard signature; see Theorem~\ref{thm:gnfpk-characterization}).
Note that for fixpoint logics, $\gnkinvar[\sigtarget]$-invariance is strictly weaker than
$\sgnkinvar[\sigtarget]$-invariance. For example,~\cite{gnfpup} gives another decidable logic within $\gso$ which is $\gnkinvar[\sigtarget]$-invariant but not $\sgnkinvar[\sigtarget]$ invariant.

\myparagraph{Unravellings}\label{unravellings}
Given a $\sigoriginal$-structure $\fA$ and $k \in \N$
and $\sigmag \subseteq \sigtarget \subseteq \sigoriginal$,
we would like to construct a structure
that is $\gnkinvar[\sigtarget,\sigmag]$-bisimilar to $\fA$
but has a tree-decomposition of bounded tree-width.
A standard construction achieves this,
called the $\gnkinvar[\sigtarget,\sigmag]$-unravelling of $\fA$.
Let $\Pi_k$ be the set of finite sequences
of the form $Y_0 Y_1 \dots Y_m$
such that
$Y_0 = \emptyset$ and
$Y_i$ is a set of elements from $\fA$ of size at most $k$.
Each such sequence can be seen as the projection to $\fA$
of a play in the $\gnkinvar[\sigtarget,\sigmag]$-bisimulation game
between $\fA$ and some other structure.
For $Y$ a set of elements from $\fA$,
let $\atypesig{Y}{\fA}{\sigtarget}$
be the set of atoms that hold of the elements in $Y$:
$\set{ R(a_1,\dots,a_l) : \text{$R \in \sigtarget$, $\set{a_1,\dots, a_l} \subseteq Y$, $\fA \models R(a_1,\dots, a_l)$}}$.
Now define a $\sigcode{\sigtarget}{k}$-tree $\unravelk{\fA}{\sigtarget,\sigmag}$
where each node corresponds to a sequence in $\Pi_k$,
and the sequences are arranged in prefix order.
The node label of every $v = Y_0 \dots Y_{m-1} Y_m$ is an encoding of $\atypesig{Y_m}{\fA}{\sigtarget}$,
and the edge label between its parent $u$ and $v$
indicates the relationship between the shared elements $Y_{m-1} \cap Y_{m}$ encoded in $u$ and $v$.
We define $\decode{\unravelk{\fA}{\sigtarget,\sigmag}}$ to be the $\gnkinvar[\sigtarget,\sigmag]$-unravelling of $\fA$.
By restricting the set $\Pi_k$ to reflect the possible moves in the games,
we can define unravellings based on the other bisimulation games in a similar fashion.
We summarize the  two unravellings that will be most relevant later on:
\begin{itemize}
\item {\bf block $k$-width guarded negation unravelling}:
The $\sgnkinvar[\sigtarget,\sigmag]$-unravelling is denoted $\decode{\bunravelk{\fA}{\sigtarget,\sigmag}}$.
Its encoding $\bunravelk{\fA}{\sigtarget,\sigmag}$
is obtained by considering only sequences $Y_0 \dots Y_m \in \Pi_k$
such that
for all \emph{even} $i$, $Y_{i-1} \supseteq Y_i \subseteq Y_{i+1}$ and $Y_i$ is strictly $\sigmag$-guarded in $\fA$.
The tree $\bunravelk{\fA}{\sigtarget,\sigmag}$ is consistent and is called a \emph{$\sigmag$-guarded-interface tree}\label{guardedinterfacetree} since
it alternates between \emph{interface nodes}\label{interfacenodes} with strictly~$\sigmag$-\nolinebreak guarded domains---these correspond to collapse moves in the game---and
\emph{bag nodes} with domain of size at most $k$ that are not necessarily $\sigmag$-guarded.
\item {\bf guarded unravelling}:
The $\ginvar[\sigtarget,\sigmag]$-unravelling is denoted $\decode{\gunravel{\fA}{\sigtarget,\sigmag}}$
and its encoding $\gunravel{\fA}{\sigtarget,\sigmag}$
is obtained by considering only sequences $Y_0 \dots Y_m \in \Pi_k$
such that for all $i$,
$Y_i$ is strictly $\sigmag$-guarded in $\fA$.
The tree $\gunravel{\fA}{\sigtarget,\sigmag}$ is consistent and is called a \emph{$\sigmag$-guarded} tree
since the domain of every node in the tree is strictly $\sigmag$-guarded.
\end{itemize}

\noindent
It is straightforward to check that:
\begin{prop}%
\label{prop:unravelling}
Let $\fA$ be a $\sigma$-structure, and let $k \in \N$ and $\sigmag \subseteq \sigtarget \subseteq \sigma$. Then
\begin{itemize}
\item $\fA$ is $\sgnkinvar[\sigtarget,\sigmag]$-bisimilar to $\decode{\bunravelk{\fA}{\sigtarget,\sigmag}}$, the block $k$-width guarded negation unravelling;
\item $\fA$ is $\ginvar[\sigtarget,\sigmag]$-bisimilar to the guarded unravelling $\decode{\gunravel{\fA}{\sigtarget,\sigmag}}$.
\end{itemize}
\end{prop}
Because these unravellings have tree codes of some bounded tree-width,
this implies that these guarded logics have tree-like models.
The structural differences in the tree decompositions
will be exploited for our definability decision procedures.

Note that it is also possible (and more standard) to define the bisimulation games
and unravellings by replacing every occurrence of ``strictly guarded'' by ``guarded''.
The games would still preserve the corresponding logic, and the analog of Proposition~\ref{prop:unravelling}
would still hold. Further, our forward mapping results, saying that we can translate
 a formula in the logic into a formula running over the encoding of the unravelling
(see e.g. Lemma~\ref{lemma:forward-gso}), would still hold. Our use of strict guards will come into
play only in simplifying  the definition of the backward mappings (e.g. Lemma~\ref{lemma:backwards-gnfp}).


\subsection{Automata}\label{sec:automata}

We will make use of automata on trees
for the optimized decision procedures in Section~\ref{sec:gfp}.
We suggest that readers skip this section until it is needed there.

Our goal in this section is to define two automaton models
that can function on trees that have unbounded (possibly infinite) branching degree.
This is because the tree codes derived from the unravellings described earlier may have this unbounded branching.
We describe these automata below, but will assume familiarity with standard automata theory
over infinite structures (see, e.g.,~\cite{Thomas97}).

Fix a transition system signature $\Sigma$ consisting of
unary relations $\Sigmap$ and binary relations $\Sigmaa$
(for the node labels and edge labels, respectively).

A \emph{2-way alternating $\mu$-automaton} $\cA$
is a tuple $\tuple{\Sigma,\QE,\QA,q_0,\delta,\Omega}$
where
$Q := \QE \cup \QA$ is a finite set of states partitioned into states $\QE$
controlled by Eve and states $\QA$ controlled by Adam,
and $q_0 \in Q$ is the initial state.
The transition function has the form
\[
\delta : Q \times \powerset{\Sigmap} \to \powerset{ \Dir \times \Sigmaa \times Q }
\]
where $\Dir = \set{\dup,\dstay,\ddown}$ is the set of possible directions
(up $\dup$, stay $\dstay$, down $\ddown$).
The acceptance condition is a parity condition specified by $\Omega : Q \to \Pri$,
which maps each state to a priority in a finite set of priorities $\Pri$.

Let $\tree$ be a tree over $\Sigma$, and let $\tree(v)$ denote the set of unary propositions in $\Sigmap$ that hold at $v$.

The notion of acceptance of $\tree$ by $\cA$ starting at node $v_0 \in \dom{\cT}$
is defined in terms of a game $\gameonstart{\cA}{\cT}{v_0}$.
The arena is $Q \times \dom{\cT}$,
and the initial position is $(q_0,v_0)$.
From a position $(q,v)$ with $q \in \QE$ (respectively, $q \in \QA$),
Eve (respectively Adam) selects $(d,a,r) \in \delta(q,\cT(v))$,
and an $a$-neighbor $w$ of $v$ in direction $d$
(note if $d = \dstay$, then $v$ is considered the only option, and we sometimes write just $(\dstay,r)$ instead of $(\dstay,a,r)$).
The game continues from position $(r,w)$.

A \emph{play} in $\gameonstart{\cA}{\cT}{v_0}$
is a sequence $(q_0,v_0), (q_1,v_1), (q_2,v_2),\dots$
of moves in the game.
Such a \emph{play is winning} for Eve if the \emph{parity condition} is satisfied:
the maximum priority that occurs infinitely often in $\Omega(q_0) , \Omega(q_1), \dots$
is even.

A \emph{strategy} for one of the players is a function that
returns the next choice for that player given the history of the play.
If the function depends only on the current position (rather than the full history),
then it is \emph{positional}.
Choosing a strategy for both players fixes a play in $\gameonstart{\cA}{\cT}{v_0}$.
A play $\pi$ is \emph{compatible} with a strategy $\strategy$
if there is a strategy for the other player such that $\strategy$ and $\strategy'$
yield $\pi$.
A \emph{strategy is winning} for Eve if every play compatible with it is winning.

We write $L_{v_0}(\cA)$ for the set of trees $\cT$
such that Eve has a winning strategy in $\gameonstart{\cA}{\cT}{v_0}$.
If $v_0$ is the root of $\cT$,
then we just write $L(\cA)$ to denote the \emph{language} of $\cA$.

The \emph{dual} of a 2-way alternating $\mu$-automaton $\cA$
is the automaton $\cA'$ obtained from $\cA$ by
switching $\QA$ and $\QE$,
and incrementing each priority by 1
(i.e. $\Omega'(q) := \Omega(q) + 1$).
This has the effect of switching the roles of the two players,
so the resulting automaton accepts the complement of $L(\cA)$.

These 2-way alternating $\mu$-automata
are essentially the same as the automata used in~\cite{gfp};
we use slightly different notation here and allow directions stay, up, and down,
rather than just stay and `move to neighbor'.

We are also interested in a type of automaton
on trees with arbitrary branching
that operates in a 1-way, nondeterministic fashion.
These automata were introduced by Janin-Walukiewicz~\cite{janinw,JaninW96};
we follow the presentation given in~\cite{interpolationmucalc}.
A \emph{$\mu$-automaton} $\cM$ is a tuple
$\tuple{\Sigmap,\Sigmaa,Q,q_0,\delta,\Omega}$.
where the transition function now has the form
\[
\delta : Q \times \powerset{\Sigmap} \to \powerset{\powerset{\Sigmaa \times Q}} .
\]
Again, the acceptance condition is a parity condition
specified by $\Omega$.
As before, we define acceptance of $\cT$ from a node $v_0 \in \dom{\cT}$
based on a game $\gameonstart{\cA}{\cT}{v_0}$.
The arena is $Q \times \dom{\cT}$,
and the initial position is $(q_0,v_0)$.
From a position $(q,v)$,
Eve selects some $S \in \delta(q,\cT(v))$,
and a marking of every successor of $v$ with a set of states
such that (i) for all $(a,r) \in S$, there is some $a$-successor whose marking includes $r$,
and (ii) for all $a$-successors $w$ of $v$, if $r$ is in the marking of $w$,
then there is some $(a,r) \in S$.
Adam then selects some successor $w$ of $v$
and a state $r$ in the marking of $w$ chosen by Eve,
and the game continues from position $(r,w)$.
A winning play and strategy is defined as above.

\myparagraph{Properties of $\mu$-automata}

These automata
are bisimulation invariant on trees.

\begin{propC}[\cite{janinw}]\label{prop:mu-automata-bisim}
Let $\cA$ be a 2-way alternating $\mu$-automaton or a $\mu$-automaton.
For all trees $\cT$,
if $\cT \in L(\cA)$ and $\cT'$ is bisimilar to $\cT$,
then $\cT' \in L(\cA)$.
\end{propC}

These automata models also have nice closure properties.

\begin{prop}\label{prop:2waymuaut-closure}
2-way alternating $\mu$-automata are closed under:
\begin{itemize}
\item Intersection:
Let $\cA_1$ and $\cA_2$ be 2-way alternating $\mu$-automata.
Then we can construct a 2-way alternating $\mu$-automaton $\cA$
such that
$L(\cA) = L(\cA_1) \cap L(\cA_2)$,
and the size of $\cA$ is linear in $\size{\cA_1} + \size{\cA_2}$.
\item Union:
Let $\cA_1$ and $\cA_2$ be 2-way alternating $\mu$-automata.
Then we can construct a 2-way alternating $\mu$-automaton $\cA$
such that
$L(\cA) = L(\cA_1) \cup L(\cA_2)$,
and the size of $\cA$ is linear in $\size{\cA_1} + \size{\cA_2}$.
\item Complement:
Let $\cA$ be a 2-way alternating $\mu$-automaton.
Then we can construct a 2-way alternating $\mu$-automaton $\cA'$
of size at most $\size{\cA}$ such that $L(\cA')$ is the complement of $L(\cA)$.
\end{itemize}
\end{prop}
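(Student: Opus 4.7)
The three constructions are standard closure properties for alternating parity automata, adapted to the 2-way setting described in the paper; my plan is to invoke essentially the textbook constructions while being careful that every move in this automaton model carries both a direction and an edge label.

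\emph{Complement.} This is immediate from the dual construction already described in the text right after the definition of 2-way alternating $\mu$-automata. Given $\cA$, define $\cA'$ by swapping $\QE$ and $\QA$ and setting $\Omega'(q) = \Omega(q) + 1$. Any position of the game $\gameonstart{\cA}{\cT}{v_0}$ is a position of $\gameonstart{\cA'}{\cT}{v_0}$ with the roles of the players exchanged, and incrementing all priorities by $1$ flips the parity of the maximum priority seen infinitely often. By determinacy of parity games, Adam has a winning strategy in $\gameonstart{\cA}{\cT}{v_0}$ iff Eve has a winning strategy in $\gameonstart{\cA'}{\cT}{v_0}$. Hence $L(\cA')$ is the complement of $L(\cA)$, and $\size{\cA'} = \size{\cA}$.

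\emph{Union and intersection.} For both I would add a single fresh initial state $q_0$ that fires once at the root and dispatches to one of the two sub-automata using the stay direction. More precisely, assume $\cA_i = \tuple{\Sigmap,\Sigmaa,\QE^i,\QA^i,q_0^i,\delta^i,\Omega^i}$ for $i = 1,2$ with disjoint state sets. For union, put $q_0 \in \QE$; for intersection, put $q_0 \in \QA$. In either case set
\[
\delta(q_0, \tau) \;=\; \set{(\dstay, q_0^1),\, (\dstay, q_0^2)}
\quad \text{for every } \tau \subseteq \Sigmap,
\]
and on all other states use $\delta^1$ and $\delta^2$. Extend $\Omega$ by $\Omega(q_0) = 0$ (any fixed value works, since $q_0$ is visited only once). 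The resulting automaton has size linear in $\size{\cA_1}+\size{\cA_2}$. In the union case, Eve wins $\gameonstart{\cA}{\cT}{v_0}$ iff she can choose some $i \in \set{1,2}$ and win $\gameonstart{\cA_i}{\cT}{v_0}$ from the resulting position; in the intersection case she must be able to win no matter which $i$ Adam selects. Since after the first stay-move the play coincides with a play of $\cA_i$, this gives $L(\cA) = L(\cA_1) \cup L(\cA_2)$ and $L(\cA) = L(\cA_1) \cap L(\cA_2)$ respectively.

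\textbf{Expected obstacle.} None of the three steps is really hard; the only points that need attention are bookkeeping. First, the transition function has codomain $\powerset{\Dir \times \Sigmaa \times Q}$, so I need to double-check that using only $(\dstay, q_0^i)$ transitions (which the text explicitly permits as a shorthand for any triple $(\dstay, a, q_0^i)$) is legitimate at the new initial state regardless of the node label $\tau$. Second, for complement I have to verify that the parity game is determined and that the priority shift really dualizes the winning condition; this is completely standard for parity games on countable arenas and requires no additional hypotheses. No additional priorities or states beyond the ones above are needed, so the size bounds in the statement are achieved directly.
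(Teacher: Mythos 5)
Your proposal is correct and matches the paper's own proof: the paper likewise complements by dualizing (swapping $\QE$ and $\QA$ and incrementing priorities), and handles union and intersection by taking the disjoint union of the two automata with a fresh initial state that dispatches via a stay-move, controlled by Eve for union and by Adam for intersection. The extra bookkeeping you flag (the $(\dstay,r)$ shorthand and determinacy of parity games) is handled implicitly in the paper and poses no problem.
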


\begin{proof}
These are standard constructions for alternating automata.

For intersection,
we can just take the disjoint union of the two automata,
and create a new initial state $q_0$ controlled by Adam
with moves to stay in the same position and go to state $q_0^{\cA_1}$,
or stay in the same position and go to state $q_0^{\cA_2}$.
Depending on this initial choice, the automaton then simulates either $\cA_1$ or $\cA_2$.
The construction for the union is similar,
but the initial choice is given to Eve, rather than Adam.

For complement, we use the dual automaton,
which requires
switching $\QA$ and $\QE$, and incrementing the priority mapping by 1.
\end{proof}

It is straightforward to construct a 2-way alternating $\mu$-automaton that is equivalent to a given $\mu$-automaton.
Moreover, it is known that
$\mu$-automata, 2-way alternating $\mu$-automata and the $\mu$-calculus
are equivalent over trees (this follows from~\cite{janinw}).

\begin{thmC}[\cite{janinw}]\label{thm:muaut}
Given $\phi \in \Lmu[\Sigma]$,
we can construct a $\mu$-automaton
$\cA$ such that $L(\cA)$ is the set of $\Sigma$-trees such that $\tree \models \phi$.

Likewise, given a $\mu$-automaton or 2-way alternating $\mu$-automaton $\cA$ over signature $\Sigma$,
we can construct $\phi \in \Lmu[\Sigma]$
such that $L(\cA)$ is the set of $\Sigma$-trees such that $\tree \models \phi$.
\end{thmC}



\section{Decidability via back-and-forth method and equivalence}%
\label{sec:high}

We now describe the main components of our approach, and explain how they fit together.

\subsection{Forward mapping}
The first component is a \emph{forward mapping},
translating an input $\gso$ formula $\phi$
to a formula over tree codes, holding on
precisely the codes that correspond to tree-like models of~$\phi$.
We start in the most general way we can,
with $\gso$ formulas $\phi$ that are $\gnlinvar$-invariant for some~$l$.
In this case, we can define a forward mapping that produces a $\mu$-calculus
formula that holds in a tree code iff $\phi$ holds in its decoding.

\begin{lem}[Fwd, adapted from~\cite{GradelHO02}]\label{lemma:forward-gso}
Given a $\gnlinvar[\sigoriginal]$-invariant sentence $\phi \in \gso[\sigoriginal]$
and given some $n \geq \max\set{\width{\sigoriginal},l}$,
we can construct
$\phi^\mu \in \Lmu[\sigcode{\sigoriginal}{n}]$ such that
for all consistent $\sigcode{\sigoriginal}{n}$-trees $\tree$,
$\tree \models \phi^\mu$ iff
$\decode{\tree} \models \phi$.
\end{lem}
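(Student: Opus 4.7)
\medskip

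\noindent\textbf{Proof plan.} The plan is to proceed in three stages: translate $\phi$ to an $\mso$ formula $\phi^*$ on tree codes, show that $\phi^*$ is invariant under standard (transition-system) bisimulation on $\sigcode{\sigoriginal}{K}$-trees where $K := \max\set{k,l}$, and then invoke the Janin--Walukiewicz theorem to extract $\phi^\mu \in \Lmu[\sigcode{\sigoriginal}{K}]$.

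For the first stage, I would exploit $k \geq \width{\sigoriginal}$: every $\sigoriginal$-atom involves at most $K$ elements, so any atomic fact, and indeed any guarded tuple, lives inside one bag of the code. A first-order variable $x$ of $\phi$ is translated as a pair $(v_x, i_x)$ with $v_x$ an $\mso$ node variable and $i_x \in \set{1,\dots,K}$ a fixed index (branching over the finitely many choices). Equality becomes the tree-code equivalence relation $[v,i] = [v',i']$, which is $\mso$-definable as a reachability condition along edges whose $E_\rho$-labels must compose appropriately. An atomic fact $R(x_1,\dots,x_r)$ is translated as ``there is a node $w$ at which the classes $[v_{x_1},i_{x_1}],\dots,[v_{x_r},i_{x_r}]$ are jointly witnessed by indices $j_1,\dots,j_r$ with $w \in R_{j_1\dots j_r}$.'' Because $\gso$ second-order quantification ranges only over relations all of whose tuples are guarded by a $\sigoriginal$-atom, each tuple of such a relation lives inside a single bag, so $X$ can be encoded by a finite tuple of $\mso$ set variables over nodes (one for each possible index-sequence of arity $\textup{ar}(X)$ over $\set{1,\dots,K}$). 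This yields $\phi^*$ with $\cT \models \phi^*$ iff $\decode{\cT} \models \phi$ on all consistent $\sigcode{\sigoriginal}{K}$-trees $\cT$.

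For the second stage, suppose $\cT_1$ and $\cT_2$ are two consistent $\sigcode{\sigoriginal}{K}$-trees that are standard-bisimilar. I would build a Duplicator strategy for the $\gnlinvar[\sigoriginal]$-game on $\decode{\cT_1},\decode{\cT_2}$, maintaining as invariant that every position $h$ in the game is ``realized'' by a pair of nodes $w_1 \in \cT_1$, $w_2 \in \cT_2$ with $w_1$ bisimilar to $w_2$ and with $h$ induced by the common node label (which encodes the atomic type over the at most $l \le K$ elements in the current game position). When Spoiler extends, switches, or collapses, Duplicator uses the tree bisimulation to move in $\cT_1$ (resp.\ $\cT_2$) to a neighbouring node whose bag contains the new elements, then uses the bisimulation partner in the other tree to respond; the fact $K \geq l$ ensures every game window fits into a single bag. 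Hence bisimilar consistent tree codes have $\gnlinvar[\sigoriginal]$-bisimilar decodings, so by hypothesis $\phi^*$ cannot distinguish them.

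For the third stage, I would note that consistency (conditions (1)--(4) on $\sigcode{\sigoriginal}{K}$-trees) is expressible by an $\Lmu$ sentence $\chi_{\text{cons}}$, since each clause is a local, universally quantified check at a node and its neighbours; moreover $\chi_{\text{cons}}$ itself is bisimulation-invariant, so the class of consistent trees is closed under bisimulation. Setting $\phi^{**} := \chi_{\text{cons}} \wedge \phi^*$, the second stage shows $\phi^{**}$ is bisimulation-invariant on all $\sigcode{\sigoriginal}{K}$-trees, and $\phi^{**}$ is definable in $\mso[\sigcode{\sigoriginal}{K}]$. The Janin--Walukiewicz theorem, stated earlier in the paper, then yields an effectively computable $\phi^\mu \in \Lmu[\sigcode{\sigoriginal}{K}]$ equivalent to $\phi^{**}$ on all trees; on consistent trees, $\phi^\mu$ agrees with $\phi^*$, which is the desired conclusion.

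\medskip

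\noindent\textbf{Main obstacle.} The technical heart is the second stage: turning a standard bisimulation between tree codes into a $\gnlinvar[\sigoriginal]$-bisimulation between decodings. One must carefully track, for each game position $h$, a pair of witnessing bag-nodes whose node labels realize $h$, and show that whenever Spoiler extends to new elements or collapses to a strictly $\sigmag$-guarded subset, Duplicator can travel in the tree code to rebuild an appropriate witnessing bag. This is precisely the step that forces $K \geq l$ and that uses $\gnlinvar$-invariance of the source sentence as the bridge to (standard) $\mu$-calculus-definability on tree codes.
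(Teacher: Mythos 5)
Your proposal is correct and follows essentially the same route as the paper: an inductive $\mso$ translation on tree codes (the paper encodes a first-order variable as a tuple of set variables $(Z^x_i)_i$ rather than a node--index pair, but this is cosmetic), the observation that bisimilar consistent codes have $\gnlinvar[\sigoriginal]$-bisimilar decodings so that the translated formula is bisimulation-invariant on consistent trees, conjunction with the consistency formula, and an appeal to Janin--Walukiewicz. No substantive differences.
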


Note that when moving to trees, we must specify the size of the bags (i.e.~the tree width of the corresponding tree decompositions). The $n$ in the lemma can be seen as the desired size of the bags in the tree codes.
To prove Lemma~\ref{lemma:forward-gso}, we use an inductive translation that produces a formula in $\mso$,
and then apply the
Janin-Walukiewicz Theorem~\cite{janinw} to convert this to the required formula in $\Lmu$.
 Applying this conversion from $\mso$ to $\Lmu$
 requires that the trees are (at least) $\sigcode{\sigoriginal}{l}$-bisimilar, which is why we must use $n \geq \max\set{\width{\sigoriginal},l}$ for the size of the bags in the tree codes.

This inductive translation must deal with formulas with free variables,
and hence must use codes that include valuations for these variables.
A valuation for a first-order variable $x$
can be encoded by a valuation of second-order variables
$\tforward{x} = {(Z^x_i)}_{i \in \set{1,\dots,n}}$.
The set $Z^x_i$ consists of the nodes~$v$ in the tree code
where the $i$-th element in $v$ corresponds to the element
identified by $x$.
Likewise, a valuation for a second-order variable~$X$
corresponding to an $r$-ary $\sigoriginal$-guarded relation
(a relation that only includes tuples guarded in~$\sigoriginal$)
can be encoded by a sequence of second-order variables
$\tforward{X} = {(Z^X_{\vec{i}})}_{\vec{i} \in \set{1,\dots,n}^r}$.
The set $Z^{X}_{\vec{i}}$ consists of the nodes $v$ in the tree code
where the tuple of elements coded by $\vec{i}$ in $v$ are in the valuation for $X$.

It is straightforward to construct the following auxiliary formulas
that check whether a tree is consistent,
and whether some tuple of second-order variables
actually encodes a valuation for a first-order variable or
a $\sigoriginal$-guarded relation
in the way we have just described.

\begin{lem}\label{lemma:correct}
Given $\sigma$ and $n$, we can construct the following $\mso[\sigcode{\sigoriginal}{n}]$ formulas:
\begin{itemize}
\item a formula $\consistencyform$ such that for all
$\sigcode{\sigoriginal}{n}$-trees $\tree$,
$\tree \models \consistencyform$
iff
$\tree$ is a consistent $\sigcode{\sigoriginal}{n}$-tree.

\item a formula $\correct{\tforward{x}}$ such that
for all consistent $\sigcode{\sigoriginal}{n}$-trees $\tree$
and for all $\tforward{j} = {(J_i)}_{i \in \set{1,\dots,n}}$,
$\tree \models \correct{\tforward{j}}$
iff
there is some element $a$ in $\decode{\tree}$ such that
for all $i$, we have $J_{i} = \set{ v \in \tree : [v, i] = a}$.

\item  a formula $\correctr{\tforward{X}}$ such that
for all consistent $\sigcode{\sigoriginal}{n}$-trees $\tree$
and for all $\tforward{J} = {(J_{\vec{i}})}_{\vec{i} \in \set{1,\dots,n}^r}$,
$\tree \models \correctr{\tforward{J}}$
iff
there is some $\sigoriginal$-guarded relation~$J$ of arity $r$ on $\decode{\tree}$ and
for all $\vec{i} = i_1 \dots i_r$, $J_{\vec{i}} = \set{ v \in \tree : ([v, i_1],\dots,[v, i_r]) \in J}$.
\end{itemize}
\end{lem}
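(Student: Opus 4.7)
The plan is to obtain each of the three formulas by directly translating the natural semantic characterization of the property into $\mso$ over the tree signature $\sigcode{\sigoriginal}{n}$. The construction relies on two standard $\mso$ idioms on trees: quantification over sets of nodes, and $\mso$-definability of connectedness and of reachability/path properties.

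For $\consistencyform$, each of the four numbered consistency requirements from the definition in Section~\ref{sec:treecodes} is \emph{local}: conditions (1) and (3) concern a single node, while (2) and (4) concern a node together with one neighbor. Each is a universally quantified first-order statement over $\sigcode{\sigoriginal}{n}$, and $\consistencyform$ is their conjunction. For $\correct{\tforward{x}}$, I would assert that the sets $J_{1},\dots,J_{n}$ are exactly the sets of nodes where some common element $a$ of $\decode{\tree}$ is represented under each of the $n$ possible names. This reduces to three $\mso$-expressible conditions: (i) at each node at most one $J_{i}$ contains the node, since distinct names in a single node always code distinct elements; (ii) \emph{edge closure}, stating that whenever $v\in J_{i}$, $(v,w)\in E_{\rho}^{\cT}$, and $\rho(i)=j$, then $w\in J_{j}$, together with the symmetric statement for $(w,v)\in E_{\rho}^{\cT}$; and (iii) the union $\bigcup_{i}J_{i}$ is nonempty and forms a connected subtree of $\tree$, which together with (i) and (ii) precisely characterizes a single equivalence class $[v,i]$. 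Condition (iii) uses the standard $\mso$ formula saying that a subset $Y$ of a tree is connected iff $Y\neq\emptyset$ and no $\emptyset\subsetneq Z\subsetneq Y$ is disjoint from its $Y$-complement in the edge relation.

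For $\correctr{\tforward{X}}$, the construction is analogous. Edge closure becomes: for each $\vec{i}$, each $\rho\in\EdgeLabels$ with $\indices{\vec{i}}\subseteq\dom{\rho}$, and each $v\in J_{\vec{i}}$, every $\rho$-neighbor $w$ of $v$ satisfies $w\in J_{\rho(\vec{i})}$, and symmetrically. In addition, $\sigoriginal$-guardedness must be enforced: for every $v$ and $\vec{i}$ with $v\in J_{\vec{i}}$, there exists a node $w$ and some $R_{\vec{j}}$ with $R\in\sigoriginal$ holding at $w$ such that for each $i_{\ell}\in\indices{\vec{i}}$ there is some $j_{m}\in\indices{\vec{j}}$ with $[v,i_{\ell}]=[w,j_{m}]$. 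The only nontrivial ingredient is expressing the equivalence $[v,i]=[w,j]$ on tree positions; but this is $\mso$-definable by an existential second-order guess of a marking of the unique simple $\tree$-path from $v$ to $w$ that tracks, at each intermediate node, the index currently active, together with a verification that the successive edge labels (or their inverses) compose to send $i$ to $j$.

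The main obstacle is thus packaged into two $\mso$ subroutines on trees: definability of connectedness of a subset, and definability of the equivalence $\sim$ on tree positions. Both are standard $\mso$ patterns. Given these, each of the three formulas is assembled as a finite Boolean combination indexed by the finitely many indices, index tuples, and edge-label functions appearing in $\sigcode{\sigoriginal}{n}$, making the construction effective.
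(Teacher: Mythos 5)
The paper itself gives no explicit proof of this lemma (it is simply asserted to be a straightforward construction), so there is nothing to compare line by line; your overall strategy --- purely local first-order conditions for $\consistencyform$, plus the standard MSO connectedness and path-marking idioms for the other two formulas --- is clearly the intended one. The consistency formula and the relation formula $\correctr{\tforward{X}}$ are fine as you describe them. For the latter, note why edge closure together with guardedness really does suffice without any extra connectedness clause: if two nodes represent the same (componentwise-equivalent) tuple, then each component's set of representing nodes is a connected subtree containing both endpoints, so every node on the unique simple path between them represents all components and every intermediate edge label carries the relevant indices in its domain; hence closure along single edges already propagates to closure under the full equivalence.

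Your characterization for $\correct{\tforward{x}}$, however, has a genuine hole: conditions (i)--(iii) are necessary but not sufficient. Your edge-closure condition (ii) only fires when $i \in \dom{\rho}$, and condition (iii) asks for connectedness of the \emph{node set} $\bigcup_i J_i$, which does not prevent two distinct equivalence classes from being glued together across an edge whose label shares neither. Concretely, take a consistent two-node tree $v_0, v_1$ with $(v_0,v_1) \in E_\rho^{\cT}$ for $\rho$ the empty partial map (or any $\rho$ with $1 \notin \dom{\rho}$), and set $J_1 = \{v_0, v_1\}$ and $J_i = \emptyset$ otherwise: then (i) holds, (ii) is vacuous, and $\bigcup_i J_i$ is nonempty and connected, yet $[v_0,1] \neq [v_1,1]$, so there is no single element $a$ of $\decode{\tree}$ with $J_1 = \set{v : [v,1]=a}$. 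The repair is easy and stays within MSO: replace (ii)--(iii) by the requirement that the set of \emph{positions} $\set{(v,i) : v \in J_i}$ is nonempty and connected under index-tracking adjacency --- equivalently, that for every edge $(v,w) \in E_\rho^{\cT}$ with $v \in J_i$ and $w \in J_j$ one has $i \in \dom{\rho}$ and $\rho(i) = j$ (and symmetrically for the reverse orientation), together with your forward closure. With that correction the construction goes through.
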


\noindent
Using these auxiliary formulas,
we can perform the forward translation to $\mso[\sigcode{\sigoriginal}{n}]$.

\begin{lem}\label{lemma:forward}
Let $\psi$ be a formula in $\gso[\sigoriginal]$
with free first-order variables among $x_1,\dots,x_n$,
and free second-order variables among $X_1,\dots,X_m$.
We can construct a formula
\[ \tforward{\psi}(\tforward{x_1},\dots,\tforward{x_n},\tforward{X_1},\dots,\tforward{X_m}) \in \mso[\sigcode{\sigoriginal}{n}] \]
such that
for all consistent $\sigcode{\sigoriginal}{n}$-trees $\tree$,
for all elements $a_1, \dots, a_n$ in $\decode{\tree}$ encoded by $\tforward{j_1},\dots,\tforward{j_n}$
and for all sets of $\sigma$-guarded relations $J_1,\dots,J_m$ encoded by $\tforward{J_1},\dots,\tforward{J_m}$,
\begin{align*}
\decode{\tree},a_1,\dots,a_n,J_1,\dots,J_m \models \psi \quad \text{ iff } \quad
\tree,\tforward{j_1},\dots,\tforward{j_n},\tforward{J_1},\dots,\tforward{J_m} \models \tforward{\psi}.
\end{align*}
\end{lem}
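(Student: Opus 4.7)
The plan is structural induction on $\psi$, producing at each step an $\mso[\sigcode{\sigoriginal}{n}]$ formula $\tforward{\psi}$ whose free second-order variables are exactly the encodings $\tforward{x_1}, \ldots, \tforward{x_n}, \tforward{X_1}, \ldots, \tforward{X_m}$. We implicitly assume $n$ is at least $\width{\psi}$ and at least $\width{\sigoriginal}$, so that every tuple referred to in $\psi$ --- and every guarded tuple witnessing a second-order quantifier --- fits among the $n$ names available at a single bag of the tree code. The inductive invariant is precisely the biconditional stated in the lemma.

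For atoms we unwind the definition of $\decode{\tree}$. A relation atom $R(y_{l_1}, \ldots, y_{l_r})$ with $R \in \sigoriginal$ translates to $\exists v . \bigvee_{i_1, \ldots, i_r} \bigl( R_{i_1 \cdots i_r}(v) \wedge \bigwedge_{j} Z^{y_{l_j}}_{i_j}(v) \bigr)$; an atom $X(y_{l_1}, \ldots, y_{l_r})$ for a free second-order variable $X$ is analogous, with $Z^X_{i_1 \cdots i_r}$ in place of $R_{i_1 \cdots i_r}$; and an equality $y_1 = y_2$ translates to $\exists v \exists i . (Z^{y_1}_i(v) \wedge Z^{y_2}_i(v))$, using consistency of $\tree$ to propagate equality through the equivalence classes $[v,i]$. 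Boolean connectives and negation commute with the translation. For a first-order quantifier $\exists y . \psi'$, we introduce a fresh tuple $\tforward{y} = (Z^y_i)_{i \in \{1, \ldots, n\}}$ of second-order variables and set $\tforward{\psi} := \exists \tforward{y} . (\correct{\tforward{y}} \wedge \tforward{\psi'})$. For a guarded second-order quantifier $\exists X . \psi'$ over $r$-ary $\sigoriginal$-guarded relations, we analogously set $\tforward{\psi} := \exists \tforward{X} . (\correctr{\tforward{X}} \wedge \tforward{\psi'})$, with $\tforward{X} = (Z^X_{\vec{i}})_{\vec{i} \in \{1, \ldots, n\}^r}$.

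The main technical check, and hence the crux of the induction, is the second-order case: one must verify that encodings $\tforward{J}$ satisfying $\correctr{\tforward{J}}$ stand in bijective correspondence with $\sigoriginal$-guarded $r$-ary relations on $\decode{\tree}$. This relies on the fact that every $\sigoriginal$-guarded tuple of $\decode{\tree}$ is realized simultaneously at some single bag of the tree code (because $n \geq \width{\sigoriginal}$), so an arbitrary $\sigoriginal$-guarded relation can be read off from, and correspondingly encoded by, some valid $\tforward{J}$; the predicates $\correctr{}$ and $\correct{}$ supplied by Lemma~\ref{lemma:correct} are tailored precisely to capture the ``well-formed encoding'' side of this correspondence. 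Granted this bijection (and its first-order analogue), each inductive case reduces to a routine unfolding of the semantics of $\gso$ against that of $\decode{\tree}$, and the induction closes.
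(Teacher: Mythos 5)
Your proposal is correct and follows essentially the same route as the paper: structural induction with the same atom clauses, commuting Boolean cases, and first-/second-order quantifiers handled by guessing fresh encoding tuples guarded by $\correct{\cdot}$ and $\correctr{\cdot}$ from Lemma~\ref{lemma:correct}. The only difference is cosmetic — you render equality as an existential overlap of the two encodings while the paper asserts the encodings are identical; both are equivalent for correct encodings — and your explicit remarks on $n \geq \width{\sigoriginal}$ and the bijection between correct encodings and guarded relations just spell out what the paper leaves as "straightforward."
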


\begin{proof}
\newcommand{\single}[1]{\textup{single}(#1)} 
\newcommand{\match}[2]{\textup{match}_{#1}(#2)} 
The proof is by induction on the structure of $\psi$.
\begin{itemize}
\item Assume $\psi = R x_{i_1} \dots x_{i_r}$.
Then
\[\tforward{\psi} := \exists z . \Bigg( \bigvee_{\rho}
\bigg( z \in R_{\rho(i_1) \dots \rho(i_r)} \wedge \bigwedge_{i \in \set{i_1,\dots,i_r}} z \in Z^{x_{i}}_{\rho(i)}  \bigg) \Bigg)\]
where $\rho$ ranges over maps from $\set{1,\dots,r}$ to $\set{1,\dots,n}$.
This expresses that there is some node in the coded structure
where $R$ holds for elements coded by $\rho(i_1) \dots \rho(i_r)$,
and these elements are precisely $x_{i_1} \dots x_{i_r}$.

Similarly for $\psi = X x_{i_1} \dots x_{i_r}$.
\item Assume $\psi = (x_{i_1} = x_{i_2})$.
Then
\[\tforward{\psi} := \forall z . \Bigg(  \bigwedge_{j \in \set{1,\dots,n}}
\bigg( z \in Z^{x_{i_1}}_j \leftrightarrow z \in Z^{x_{i_2}}_j  \bigg)\Bigg).\]
This expresses that the valuations for the variables $x_{i_1}$ and $x_{i_2}$ are identical.
\item The translation commutes with $\vee$, $\wedge$, and $\neg$.
\item Assume $\psi = \exists x . (\chi$).
Then \[ \tforward{\psi} := \exists \tforward{x} .  \left(\correct{\tforward{x}} \wedge \tforward{\chi} \right) . \]
\item Assume $\psi = \exists X . (\chi)$
for $X$ an $r$-ary relation.
Then \[ \tforward{\psi} := \exists \tforward{X} . \left(\correctr{\tforward{X}} \wedge \tforward{\chi} \right) . \]
\end{itemize}
The proof of correctness is straightforward.
\end{proof}

We now return to the proof of Lemma~\ref{lemma:forward-gso}. Recall that we have a sentence $\phi \in \gso[\sigoriginal]$ that is $\gnlinvar[\sigoriginal]$-invariant and $n \geq \max\set{\width{\sigoriginal},l}$.

We can apply Lemma~\ref{lemma:forward} to produce a sentence $\tforward{\phi} \in \mso[\sigcode{\sigoriginal}{n}]$ such that for all
consistent $\sigcode{\sigoriginal}{n}$-trees $\tree$,
$\tree \models \tforward{\phi}$ iff $\decode{\tree} \models \phi$. This is the property required in Lemma~\ref{lemma:forward-gso}, but the sentence $\tforward{\phi}$ is in $\mso$ rather than $\Lmu$.

Consider the sentence $\tforward{\phi} \wedge \consistencyform$. We claim that this is $\sigcode{\sigoriginal}{n}$-bisimulation invariant.
Let $\tree$ and $\tree'$ be $\sigcode{\sigoriginal}{n}$-bisimilar.
If there is any inconsistency in one tree, then bisimilarity implies that the other is also inconsistent, and $\tforward{\phi} \wedge \consistencyform$ does not hold in either tree.
In the case that $\tree$ and $\tree'$ are both consistent, then their bisimilarity implies that their decodings are both $\gnninvar[\sigoriginal]$-bisimilar and hence $\gnlinvar[\sigoriginal]$-bisimilar (since we have ensured that $n \geq l$). But this $\gnlinvar[\sigoriginal]$-invariance and the property in Lemma~\ref{lemma:forward}, imply that they agree on $\tforward{\phi}$.
Therefore, $\tree$ and $\tree'$ agree on $\tforward{\phi} \wedge \consistencyform$ which is enough to conclude that
$\tforward{\phi} \wedge \consistencyform$ is $\sigcode{\sigoriginal}{n}$-bisimulation invariant.

This means that we can apply the Janin-Walukiewicz Theorem to $\tforward{\phi} \wedge \consistencyform$ to produce an equivalent formula $\phi^\mu \in \Lmu[\sigcode{\sigoriginal}{n}]$ for the forward mapping.
This completes the proof of Lemma~\ref{lemma:forward-gso}.

\subsection{Backward mapping}
The second component will depend on our target sublogic $\logictarget$. It requires
an operation (not necessarily effective)
taking a $\sigoriginal$-structure $\fB$ to a tree structure $\unravellingcode$
such that $\decode{\unravellingcode}$ agrees with $\fB$ on all $\logictarget$ sentences. Informally,
$\unravellingcode$ will be the encoding of some unravelling of $\fB$ appropriate for $\logictarget$, perhaps
with additional properties.
A \emph{backward mapping for $\logictarget$} takes
$\phi'_0 \in \Lmu$ describing tree codes to
 a sentence $\phi_1 \in \logictarget$  such that:
for all $\sigoriginal$-structures~$\fB$,
$\fB \models \phi_1$ iff
 $\unravellingcode \models \phi'_0$.

The formula $\phi_1$ will depend on simplifying the formula
$\phi'_0$ based on the fact that one is working on an unravelling.
For $\logictarget=\gfp[\sigtarget, \sigmag]$ over subsignatures $\sigtarget, \sigmag$ of the original signature $\sigoriginal$,
$\unravellingcode$ will be the appropriate guarded unravelling;
 we will see that results of~\cite{GradelHO02} can easily be refined to give the backward mapping formula $\phi_1$
in  $\gfp[\sigtarget, \sigmag]$.
For $\gnfpk$, providing both the appropriate  unravelling and the formula in the backward mappings will require more work.

\subsection{Definability problem}

The \emph{$\logictarget$ definability problem for logic $\cL$} asks:
given some input sentence $\phi \in \cL$,
is there some $\psi \in \logictarget$
such that
$\phi$ and $\psi$ are logically equivalent?

The forward and backward method of Figure~\ref{fig:bandfeffsem} gives us a generic approach to this problem.
The algorithm consists of applying the forward mapping to get
$\phi'_0$, applying the  backward  mapping to $\phi'_0$ and obtaining the formula component of the mapping,  $\phi_1$,
and then checking if $\phi_1$  is equivalent to $\phi_0$.
We claim $\phi_0$ is $\logictarget$ definable iff $\phi_0$ and $\phi_1$ are equivalent.
If $\phi_0$ and $\phi_1$ are logically equivalent then $\phi_0$ is clearly $\logictarget$ definable using $\phi_1$.
In the other direction, suppose
 that  $\phi_0$ is $\logictarget$-definable.  Fix $\fB$, and let $\unravellingcode$ be given by the backward mapping.
 Then
\begin{align*}
\fB \models \phi_0
&\Leftrightarrow
\decode{\unravellingcode} \models \phi_0 \text{ since $\phi_0$ is equivalent to an $\logictarget$ sentence and} \\
&\qquad\qquad\qquad\qquad\qquad\qquad \text{$\decode{\unravellingcode}$ agrees with $\fB$ on $\logictarget$ sentences} \\
&\Leftrightarrow
\unravellingcode \models \phi'_0 \text{ by \lemmafwd } \\
&\Leftrightarrow
\fB \models \phi_1 \text{ by Backward Mapping for $\logictarget$}
\end{align*}
Hence, $\phi_0$ and $\phi_1$ are logically equivalent, as required.
Thus, we get the following general decidability result:

\begin{prop}%
\label{prop:effectivedefhigh}
Let $\logictarget$ be a subset of $\gnlinvar[\sigoriginal]$-invariant $\gso[\sigoriginal]$
such that we have an effective backward mapping for $\logictarget$.
Then
the $\logictarget$ definability problem is decidable
for $\gnlinvar[\sigoriginal]$-invariant $\gso[\sigoriginal]$.
\end{prop}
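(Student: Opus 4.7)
The plan is to follow exactly the algorithmic schema depicted in Figure~\ref{fig:bandfeffsem}. Given an input sentence $\phi_0$ that is $\gnlinvar[\sigoriginal]$-invariant $\gso[\sigoriginal]$, the procedure proceeds in three steps: (i)~apply \lemmafwd\ to obtain $\phi'_0 \in \Lmu$ over the tree-code signature for some sufficiently large width parameter $k \geq \max\{\width{\sigoriginal},l\}$; (ii)~apply the given effective backward mapping for $\logictarget$ to $\phi'_0$ to produce a candidate formula $\phi_1 \in \logictarget$; and (iii)~test whether $\phi_0 \leftrightarrow \phi_1$ holds over all $\sigoriginal$-structures, answering ``yes'' to $\logictarget$-definability exactly when the test succeeds.

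Correctness reduces to the biconditional that $\phi_0$ is $\logictarget$-definable if and only if $\phi_0 \equiv \phi_1$. The ``if'' direction is immediate because $\phi_1 \in \logictarget$. For ``only if'', the central argument is already the chain of equivalences laid out in the text immediately preceding the proposition: assuming some $\logictarget$-definition exists, one fixes $\fB$ and passes from $\fB \models \phi_0$ to $\decode{\unravellingcode} \models \phi_0$ (since $\decode{\unravellingcode}$ agrees with $\fB$ on $\logictarget$-sentences), then to $\unravellingcode \models \phi'_0$ by \lemmafwd, and finally to $\fB \models \phi_1$ by the defining property of the backward mapping. The plan is to isolate this computation as the key lemma supporting the decision procedure.

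The main issue to address is the decidability of the equivalence test in step~(iii). Although equivalence of arbitrary $\gso$ sentences is undecidable, both $\phi_0$ and $\phi_1$ are $\gnlinvar[\sigoriginal]$-invariant ($\phi_0$ by hypothesis and $\phi_1$ because it lies in $\logictarget$, which by assumption is contained in the class of such invariant sentences). I can therefore apply \lemmafwd\ to each and reduce the test to checking equivalence of two $\Lmu$ sentences over consistent tree codes, namely $\phi_0^\mu \wedge \consistencyform$ against $\phi_1^\mu \wedge \consistencyform$; this equivalence is decidable by standard $\mu$-calculus/automata results (Theorem~\ref{thm:muaut}).

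The main obstacle is not the soundness argument itself, which is essentially spelled out above, but ensuring that the backward mapping genuinely returns a formula in $\logictarget$, so that $\phi_1$ inherits the requisite $\gnlinvar$-invariance and the reduction to $\Lmu$-equivalence goes through. At the level of this generic proposition this is guaranteed by hypothesis, so the proof amounts to assembling the three ingredients; the real work, and the place where the technique will be nontrivial, will be in the instantiations (for $\gfp$, $\gnfpk$, $\unfpk$) where the effective backward mapping for $\logictarget$ must actually be constructed.
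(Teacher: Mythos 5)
Your proposal is correct and follows essentially the same route as the paper: forward mapping, backward mapping, and the same chain of equivalences showing that $\phi_0$ is $\logictarget$-definable iff $\phi_0 \equiv \phi_1$. You additionally spell out how the equivalence test in step~(iii) is decided (a second forward mapping of both sentences and an $\Lmu$/automata equivalence check over consistent tree codes), which the paper only sketches in its introduction but clearly intends in the same way.
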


Above, we mean that there is an algorithm that decides $\logictarget$ definability for any
input $\gso[\sigoriginal]$ sentence that is  $\gnlinvar[\sigoriginal]$-invariant, with the output
being arbitrary otherwise.
The decidability is relying on the fact that logical equivalence is decidable for
$\gnlinvar[\sigoriginal]$-invariant $\gso[\sigoriginal]$: this follows from performing the forward mapping
and checking equivalence of the corresponding sentences on the encodings.
The approach above gives a definability test in the usual sense for inputs in $\gnfp[\sigoriginal]$, since
these are all $\gnlinvar[\sigoriginal]$-invariant for some $l$. In particular
we will see that we can test whether a $\gnfp^l[\sigoriginal]$ sentence is in $\gfp[\sigtarget]$ or in $\gnfpk[\sigtarget]$ for some subsignature $\sigtarget$ of $\sigoriginal$.
But there are larger $\gnlinvar$-invariant logics (e.g.~the $\gnfpup$ logic in~\cite{gnfpup}), so we can actually apply Proposition~\ref{prop:effectivedefhigh} to decide $\gfp$ or $\gnfp$ definability starting with inputs in these more expressive logics as well.


\section{Identifying \texorpdfstring{$\gfp$}{GFP} definable sentences}\label{sec:gfp}

\subsection{Decidability of \texorpdfstring{$\gfp$}{GFP}-definability}

For $\gfp$,
we can instantiate the high-level algorithm by giving a backward mapping.
The backward mapping starts with a $\mu$-calculus formula describing tree codes with some bag size $m$, and produces a $\gfp$-formula describing relational structures. This backward mapping is tuned to a particular subsignature $\sigma'$ of the original signature $\sigma$, with guards taken from $\sigmag \subseteq \sigma'$. Throughout this section, we assume that $\sigmag \subseteq \sigtarget \subseteq \sigoriginal$.

\begin{lem}[$\gfp$-Bwd, adapted from~\cite{GradelHO02}]\label{lemma:backwards-gfp}
Given $\phi^\mu \in \Lmu[\sigcode{\sigoriginal}{m}]$
and $\sigmag \subseteq \sigtarget \subseteq \sigoriginal$,
$\phi^{\mu}$ can be translated into $\psi \in \gfp[\sigtarget,\sigmag]$
such that for all $\sigoriginal$-structures~$\fB$,
$\fB \models \psi$ iff $\gunravel{\fB}{\sigtarget,\sigmag} \models \phi^\mu$.
\end{lem}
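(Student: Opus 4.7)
The plan is to translate $\phi^\mu$ inductively into a $\gfp[\sigtarget,\sigmag]$ formula $\psi$, exploiting the natural correspondence between nodes of $\gunravel{\fB}{\sigtarget,\sigmag}$ and strictly $\sigmag$-guarded tuples in $\fB$. For each subformula $\chi$ of $\phi^\mu$ I would define by mutual recursion a formula $T[\chi](\vec{x}) \in \gfp[\sigtarget,\sigmag]$ whose free variables name the tuple at the ``current'' tree node. The invariant to maintain by induction is: for every $\sigoriginal$-structure $\fB$ and every node $v$ of $\gunravel{\fB}{\sigtarget,\sigmag}$ whose associated strictly $\sigmag$-guarded tuple in $\fB$ is $\vec{b}$, we have $(\gunravel{\fB}{\sigtarget,\sigmag}, v) \models \chi$ iff $\fB \models T[\chi](\vec{b})$. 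Then $\psi := T[\phi^\mu]$ evaluated at the empty tuple (matching the root of the unravelling).

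The base and boolean cases fall out of the encoding: a unary proposition $R_{\vec{i}}$ translates to the $\sigtarget$-atom $R(x_{i_1},\ldots,x_{i_r})$ because that is exactly what a node label $R_{\vec{i}}$ encodes; a proposition $D_n$ translates to a boolean combination of equalities among the $x_j$ recording the effective arity; and boolean connectives commute with $T$. The interesting case is the modality: $\dmodality{E_\rho}\chi$ on the tree corresponds to moving to a child whose named elements overlap those of the current node via $\rho$, so I would set
\[
T[\dmodality{E_\rho}\chi](\vec{x}) \;:=\; \exists \vec{y}.\bigl(\alpha_{\sigmag}(\vec{y}) \wedge \bigwedge_{i \in \dom{\rho}} x_{\rho(i)} = y_i \wedge T[\chi](\vec{y})\bigr),
\]
where $\alpha_{\sigmag}(\vec{y})$ is a disjunction of $\sigmag$-atoms witnessing strict $\sigmag$-guardedness of $\vec{y}$; this is precisely the form of guarded quantification allowed by $\gfp[\sigtarget,\sigmag]$. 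The dual modality translates via negation (unrestricted in $\gf$). For the fixpoint $\mu X.\chi$ I would produce
\[
T[\mu X.\chi](\vec{x}) \;:=\; [\LFPA{X^T}{\vec{x}}.\, \alpha_{\sigmag}(\vec{x}) \wedge T[\chi](\vec{x})](\vec{x}),
\]
translating each occurrence of $X$ inside $\chi$ to an atomic use of $X^T$; positivity of $X$ in $\Lmu$ transfers to positivity of $X^T$. Correctness follows by routine induction, with the fixpoint case handled by matching approximants stage-by-stage: the $\beta$-th approximant of the $\mu$-calculus fixpoint at node $v$ coincides with the $\beta$-th approximant of the $\gfp$ fixpoint at the guarded tuple for $v$.

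The main obstacle is that nodes of $\gunravel{\fB}{\sigtarget,\sigmag}$ come in varying arities (from $0$ up to $\width{\sigmag}$), so the arity and shape of $\vec{x}$ depends on $v$, and distinct occurrences of the same $\mu$-calculus variable $X$ may appear in contexts with different tuple shapes. I would resolve this by either padding all tuples up to the uniform arity $\width{\sigmag}$ and tracking the effective arity with equalities, or, more uniformly, introducing a simultaneous vector of $\gfp$ predicates $(X^T_\tau)_\tau$ indexed by the possible tuple shapes $\tau$ and using the \Bekic principle to fold this system back into ordinary $\gfp[\sigtarget,\sigmag]$ fixpoints. Throughout, one must verify that every quantifier and fixpoint is introduced with a guard drawn from $\sigmag$ only, so that the output really lies in $\gfp[\sigtarget,\sigmag]$ and not a larger guarded logic.
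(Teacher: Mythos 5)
Your proposal is correct and follows essentially the same route as the paper: an inductive translation in which modalities become $\sigmag$-guarded existential quantifications, fixpoints become $\sigmag$-guarded simultaneous fixpoints over a family of predicates indexed by the possible tuple shapes and guards (folded back via the \Bekic principle), and correctness is proved by matching fixpoint approximants stage by stage, using the fact that every node of the guarded unravelling carries a strictly $\sigmag$-guarded tuple. The only cosmetic point is that a $\gf$ quantifier must be guarded by a single atom, so the disjunction inside your $\alpha_{\sigmag}(\vec{y})$ has to be pulled outside the existential quantifier, exactly as the paper does by disjoining over guards $\alpha$ and target arities $j$.
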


As with the forward mapping, the translation proceeds by induction on the structure of the formula $\phi^\mu$.
Since each node in  $\gunravel{\fB}{\sigtarget,\sigmag}$ is strictly $\sigmag$-guarded,
each node is based on at most $\width{\sigmag}$ elements from $\fB$.
To deal with this, the translation of some formula~$\theta$ actually generates a family of formulas: for each $0 \leq k \leq \width{\sigmag}$,  a formula $\tbackward{\theta}_k$ with $k$ free first-order variables is produced such that it correctly captures the meaning of the $\mu$-calculus formula from a node of $\gunravel{\fB}{\sigtarget,\sigmag}$ that represents exactly $k$ elements from $\fB$.
The desired sentence $\psi$ for Lemma~\ref{lemma:backwards-gfp} is $\tbackward{(\phi^\mu)}_0$, since the root of $\gunravel{\fB}{\sigtarget,\sigmag}$ has an empty domain.

For the purposes of the induction, we must also deal with formulas with free second-order variables.
For each fixpoint variable $X$,
each $1 \leq j \leq \width{\sigmag}$,
and each $P \in \sigcode{\sigmag}{\width{\sigmag}}$ with $\indices{P} = \set{1,\dots,j}$,
we introduce
a second-order variable $X_{j,P}$ to represent
nodes of size $j$ whose indices
are strictly $\sigmag$-guarded by $P$ (please refer to the definitions on page~\pageref{encodedguardedness}).
The relation $X_{j,P}$ is a $j$-ary relation.
In order to handle nodes with empty domain or domain of size 1 that are trivially $\sigmag$-guarded,
we also introduce $X_{0,\top}$ and $X_{1,\top}$.
We define $\tbackward{X}$ to be the set of these second-order variables based on~$X$.

Fix some $\sigoriginal$-structure $\fB$ and $\gunravel{\fB}{\sigtarget,\sigmag}$.
We write $\elem{v}$ to denote the ordered tuple of elements
from $\fB$ represented at $v$ in $\gunravel{\fB}{\sigtarget,\sigmag}$.
A set $V$ of nodes in $\gunravel{\fB}{\sigtarget,\sigmag}$ is a \emph{bisimulation-invariant valuation for a free variable $X$}
if it satisfies the following property:
if it contains a node then it contains every node
that is the root of a bisimilar subtree.
We write $\tbackward{V}$ for its representation in $\fB$.
Specifically, $\tbackward{V}$ consists of valuations $V_{j,P}$
for each $X_{j,P}$ in $\tbackward{X}$,
where
\[
V_{j,P} =
\{ \elem{v} : v \in V, \size{\elem{v}} = j, \text{ and the label }\tau \text{ at }v\text{ is strictly }
\sigmag\text{-guarded by }P\} .
\]
We also set $V_{0,\top}$ to $\top$ (respectively, $\bot$)
if $J$ contains all nodes with empty domain (respectively, if $J$ contains no nodes with empty domain),
and $V_{1,\top} = \{ \elem{v} : v \in V, \size{\elem{v}} = 1 \}$.
\newcommand{\kprime}{k}
\begin{lem}%
\label{lemma:gfpbackfree}
Let $\varphi \in \Lmu[\sigcode{\sigoriginal}{m}]$
with free second-order variables $\vec{X}$,
and let $\sigmag \subseteq \sigtarget \subseteq \sigoriginal$.
For each $0 \leq \kprime \leq \width{\sigmag}$,
we can construct a $\gfp[\sigtarget,\sigmag]$-formula
$\tbackward{\varphi}_{\kprime}(x_{1},\dots,x_{\kprime},\tbackward{\vec{X}})$
such that
for all $\sigoriginal$-structures $\fB$,
for all bisimulation-invariant valuations $\vec{V}$ of $\vec{X}$,
and for all nodes $v$ in $\gunravel{\fB}{\sigtarget,\sigmag}$
with $\size{\elem{v}} = \kprime$,
\begin{align*}
&\text{$\fB,\elem{v},\tbackward{\vec{V}} \models \tbackward{\varphi}_{\kprime}$ \quad iff \quad
$\gunravel{\fB}{\sigtarget,\sigmag},v,\vec{V} \models \varphi$} .
\end{align*}
\end{lem}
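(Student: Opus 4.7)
The plan is structural induction on $\varphi$, constructing $\tbackward{\varphi}_n$ uniformly for each $n \leq k$. The base cases are straightforward: the domain predicate $D_j$ translates to $\top$ or $\bot$ according to whether $n = j$; a relational unary predicate $R_{\vec{i}}$ with $\indices{R_{\vec{i}}} \subseteq \set{1,\dots,n}$ translates to the atom $R(x_{i_1},\dots,x_{i_r})$ (and to $\bot$ when some index exceeds $n$); and a free second-order variable $X$ at size $n$ becomes the finite disjunction $\bigvee_P X_{n,P}(x_1,\dots,x_n)$ over strict $\sigmag$-guards $P$ of $\set{1,\dots,n}$, reflecting the stratification of the encoded valuation $\tbackward{V}$. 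Boolean connectives commute with the translation.

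The interesting case is the modality $\varphi = \dmodality{E_\rho}\psi$. In $\gunravel{\fB}{\sigtarget,\sigmag}$, an $E_\rho$-successor of $v$ is a child whose bag $\elem{w}$ is a strictly $\sigmag$-guarded tuple $\vec{b}$ in $\fB$ overlapping $\elem{v}$ as prescribed by $\rho$. So for each potential successor-size $n'$ and each atom $\alpha(y_1,\dots,y_{n'})$ with predicate in $\sigmag$ strictly guarding $\set{1,\dots,n'}$, I take
\[ \exists y_1 \dots y_{n'} .\, \bigl( \alpha(\vec{y}) \wedge \bigwedge_{i \in \dom{\rho}} x_i = y_{\rho(i)} \wedge \tbackward{\psi}_{n'}(\vec{y},\tbackward{\vec{X}}) \bigr) , \]
and define $\tbackward{\varphi}_n$ as the disjunction of these formulas over $n'$ and $\alpha$. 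Eliminating the equalities by substituting $y_{\rho(i)} \mapsto x_i$ makes $\alpha$ into a legitimate $\gf$-style guard: its variables now coincide with $\set{x_i : i \in \dom{\rho}} \cup \set{y_j : j \notin \codom{\rho}}$, which is precisely the set of free variables of the quantifier body after substitution.

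For the least fixpoint $\varphi = \mu X . \psi$, I form a simultaneous guarded least fixpoint over all relation symbols $X_{j,P}$ for $j \leq k$ and $P$ a strict $\sigmag$-guard of $\set{1,\dots,j}$ (with degenerate $X_{0,\top}$ and $X_{1,\top}$ guarded by equality). The equation for $X_{j,P}$ is $\alpha_P(\vec{x}) \wedge \tbackward{\psi}_j(\vec{x}, \tbackward{X}, \tbackward{\vec{Y}})$, where $\alpha_P$ is the atom corresponding to $P$. Each equation is properly $\sigmag$-guarded, so the system defines an object in $\gfp[\sigtarget,\sigmag]$ via the \Bekic principle. Correctness is by ordinal-approximant induction: each $\beta$-th approximant of $\mu X . \psi$ on the unravelling is bisimulation-invariant, hence is encoded faithfully by the $\beta$-th approximant of the simultaneous fixpoint in $\fB$, with the inductive hypothesis on $\psi$ propagating the correspondence through each successor stage and unions through each limit stage.

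The main obstacle will be the bookkeeping around bisimulation-invariant valuations: one must verify that every intermediate valuation (free or fixpoint) that arises during the induction is bisimulation-invariant on $\gunravel{\fB}{\sigtarget,\sigmag}$, so that its encoding $\tbackward{V}$ is well-defined, and that the bijection between bisimulation-invariant valuations on the unravelling and their $\fB$-encodings is preserved under every syntactic construct. In particular, one must show that the existential quantifier in the modality case ranges precisely over the $E_\rho$-children of $v$ in the unravelling; this uses the defining property of the guarded unravelling, according to which the children of $v$ are indexed exactly by the strictly $\sigmag$-guarded sets in $\fB$ appended to the play ending at $v$, together with all admissible edge labels.
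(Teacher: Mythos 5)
Your proposal is correct and follows essentially the same route as the paper: structural induction with the same base cases, the same stratification of each second-order variable into guarded components $X_{j,P}$, the same guarded-existential translation of the diamond modality (the paper keeps the equalities $x_i = y_{\rho(i)}$ explicitly rather than substituting them away, which is equivalent), and the same simultaneous $\sigmag$-guarded fixpoint system resolved via the \Bekic principle, with correctness by induction on ordinal approximants using bisimulation-invariance of the intermediate valuations. The paper likewise only sketches the correctness argument, deferring the detailed version to the analogous but harder $\gnfpk$ backward mapping.
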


\begin{proof}[Proof sketch]
We proceed by induction on the structure of $\varphi$.

\begin{itemize}
\item If $\varphi = D_n$, then $\tbackward{\varphi}_{\kprime}$
is $\top$ if $\kprime = n$ and $\bot$ otherwise.
\item If $\phi = R_{i_1 \dots i_l}$ such that $R \notin \sigtarget$
or $\set{i_1,\dots, i_l} \not\subseteq \set{1, \dots, \kprime}$,
then $\tbackward{\phi}_{\kprime} := \bot$.
Otherwise
$\tbackward{\phi}_{\kprime} := R \, x_{i_1} \dots x_{i_l}$.
\item If $\varphi = X$,
then $\tbackward{\phi}_{\kprime} := \bigvee_{\alpha} (\alpha(x_1,\dots,x_{\kprime}) \wedge X_{\kprime,P} \, x_1 \dots x_{\kprime})$
where $\alpha$ ranges over atomic formulas that are strict $\sigmag$-guards for $\set{ x_1, \dots, x_{\kprime}}$,
and $P$ is the encoding of $\alpha$.
\item The translation commutes with $\vee$ and $\wedge$ and $\neg$
for each~$\kprime$.
\item If $\varphi = \dmodality{\rho} \chi$
with $\dom{\rho} = \set{i_1,\dots,i_l} \not\subseteq \set{1,\dots,{\kprime}}$,
then $\tbackward{\phi}_{\kprime} := \bot$.
Otherwise $\tbackward{\phi}_{\kprime}$ is
\[
\bigvee_{l \leq j \leq \width{\sigmag}} \bigvee_{\alpha} \exists y_1 \dots y_j .
	\left( \alpha(y_1,\dots,y_j) \wedge \tbackward{\chi}_j(y_1,\dots,y_j) \wedge \textstyle \bigwedge_{i \in \dom{\rho}} x_i = y_{\rho(i)} \right) \]
where $\alpha$ ranges over atomic formulas that are strict $\sigmag$-guards for $y_1,\dots,y_j$.
\item Finally, if $\varphi = \mu Y . \chi$ then
$\tbackward{\varphi}_{\kprime}$ is \[\bigvee_\alpha \left(\alpha(x_1,\dots,x_{\kprime}) \wedge [\LFPA{Y_{\kprime,P}}{y_1,\dots,y_{\kprime}} . S_{\mu Y.\chi}](x_1,\dots,x_{\kprime}) \right)\]
where $\alpha$ ranges over atomic formulas that are strict $\sigmag$-guards for $y_1,\dots,y_j$,
the relation $P$ is the encoding of $\alpha$,
and $S_{\mu Y . \chi}$
is a system
consisting of equations
\[Y_{j,P}, y_1 \dots, y_j := \tbackward{P}_j(y_1,\dots,y_j) \wedge \tbackward{\chi}_{j}(y_1,\dots,y_j) \]
for each $Y_{j,P}$ in $\tbackward{Y}$.
\end{itemize}

\noindent
The formulas produced by this translation are in $\gfp[\sigtarget,\sigmag]$;
in particular, note the $\sigmag$-guarded existential quantification in the diamond modality translation,
and the $\sigmag$-guarded fixpoints in the fixpoint translation
(we use simultaneous fixpoints here, but these can be eliminated if required).
The correctness of this translation comes from the fact that
every node in $\gunravel{\fB}{\sigtarget,\sigmag}$
represents elements that are strictly $\sigmag$-guarded.
Hence, the translation of a diamond modality that expresses the existence of a neighboring node in the tree
translates into a $\sigmag$-guarded existential quantification.
Likewise, the fixpoint formulas that are defining a set of nodes in the tree
can be translated into fixpoint formulas defining sets of tuples that are all $\sigmag$-guarded.
We omit the formal proof of correctness,
since it is similar to the more complicated proof of correctness for Lemma~\ref{lemma:backwards-gnfp}
that we will give later.
\end{proof}

As mentioned earlier, the desired formula $\psi$ for Lemma~\ref{lemma:backwards-gfp} is $\tbackward{(\phi^\mu)}_0$ obtained
using Lemma~\ref{lemma:gfpbackfree}.

Plugging Lemma~\ref{lemma:backwards-gfp} into our high-level algorithm, with $\gunravel{\fB}{\sigtarget,\sigmag}$
as $\unravellingcode$, we get decidability
of the  $\gfp$-definability problem:

\begin{thm}%
\label{thm:effectivedefgfp}
The $\gfp[\sigtarget,\sigmag]$ definability problem is decidable
for $\gnlinvar[\sigoriginal]$-invariant $\gso[\sigoriginal]$
where $l \geq \width{\sigoriginal}$
and $\sigmag \subseteq \sigtarget \subseteq \sigoriginal$.
\end{thm}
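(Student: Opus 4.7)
The plan is to instantiate the general definability schema from Proposition~\ref{prop:effectivedefhigh} by taking $\logictarget = \gfp[\sigtarget,\sigmag]$ with the guarded unravelling $\gunravel{\fB}{\sigtarget,\sigmag}$ serving as the $\unravellingcode$ component. The point of departure is that $\decode{\gunravel{\fB}{\sigtarget,\sigmag}}$ is $\ginvar[\sigtarget,\sigmag]$-bisimilar to $\fB$ (a standard property of the guarded unravelling, recorded in Section~\ref{sec:bisim}) and that $\gfp[\sigtarget,\sigmag]$ is $\ginvar[\sigtarget,\sigmag]$-invariant by the Gr\"adel--Hirsch--Otto theorem. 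So $\fB$ and the decoded unravelling agree on all $\gfp[\sigtarget,\sigmag]$ sentences, as required to apply the general framework. Lemma~\ref{lemma:backwards-gfp} then supplies precisely the effective backward mapping that the framework demands.

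Concretely, the algorithm takes a $\gnkinvar[\sigoriginal]$-invariant input $\phi_0 \in \gso[\sigoriginal]$ and proceeds in three stages: (i) apply \lemmafwd (with parameters $k \geq \width{\sigoriginal}$ and $l = k$, so the resulting code width is $k$) to obtain $\phi^\mu \in \Lmu[\sigcode{\sigoriginal}{k}]$; (ii) apply \lemmagfpbwd{Lemma~\ref{lemma:backwards-gfp}} to $\phi^\mu$ with the chosen $\sigmag \subseteq \sigtarget \subseteq \sigoriginal$ to get $\psi \in \gfp[\sigtarget,\sigmag]$; (iii) test whether $\phi_0 \equiv \psi$, declaring $\phi_0$ to be $\gfp[\sigtarget,\sigmag]$-definable (by $\psi$) exactly when this holds. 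Correctness of this equivalence as the criterion for definability is precisely the content of Proposition~\ref{prop:effectivedefhigh}, so the only thing left to justify is that step (iii) is effective.

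For the equivalence test, I first note that $\psi \in \gfp[\sigtarget,\sigmag]$ is itself $\gnkinvar[\sigoriginal]$-invariant: any Duplicator winning strategy in the $\gnkinvar[\sigoriginal]$-game restricts to one in the more constrained $\ginvar[\sigtarget,\sigmag]$-game (Spoiler has strictly fewer allowed moves there), and $\psi$ is preserved under $\ginvar[\sigtarget,\sigmag]$-bisimulation. Hence both $\phi_0$ and $\psi$ lie in $\gnkinvar[\sigoriginal]$-invariant $\gso[\sigoriginal]$, and I can apply \lemmafwd once more to $\psi$ to obtain $\psi^\mu \in \Lmu[\sigcode{\sigoriginal}{k}]$ with the same correctness guarantee. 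It then suffices to check $\phi^\mu \equiv \psi^\mu$ as $\Lmu$-sentences, which is decidable by standard automata-theoretic methods on trees (Theorem~\ref{thm:muaut} and Proposition~\ref{prop:2waymuaut-closure}).

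The only delicate point is verifying that $\Lmu$-equivalence of $\phi^\mu$ and $\psi^\mu$ is indeed equivalent to $\gso$-equivalence of $\phi_0$ and $\psi$ on all $\sigoriginal$-structures. One direction is immediate from the forward-mapping property. For the other direction, every $\sigoriginal$-structure $\fA$ admits a $\gnkinvar[\sigoriginal]$-unravelling with a consistent $\sigcode{\sigoriginal}{k}$-tree code; using the $\gnkinvar[\sigoriginal]$-invariance of both $\phi_0$ and $\psi$ together with \lemmafwd, agreement of $\phi^\mu$ and $\psi^\mu$ on consistent trees lifts to agreement of $\phi_0$ and $\psi$ on $\fA$. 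This bookkeeping, together with confirming that the widths and guard signatures align across the forward and backward translations, is the main (essentially routine) obstacle; all the conceptual work has already been done in Proposition~\ref{prop:effectivedefhigh} and Lemma~\ref{lemma:backwards-gfp}.
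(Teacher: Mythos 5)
Your proposal is correct and follows essentially the same route as the paper: instantiate the generic back-and-forth scheme of Proposition~\ref{prop:effectivedefhigh} with $\logictarget = \gfp[\sigtarget,\sigmag]$, using the guarded unravelling as the required unravelling and Lemma~\ref{lemma:backwards-gfp} as the effective backward mapping. The extra details you supply (that $\gfp[\sigtarget,\sigmag]$ sentences are $\gnkinvar[\sigoriginal]$-invariant because Spoiler's moves in the guarded game are a subset of those in the $\gnkinvar[\sigoriginal]$-game, and that the final equivalence test reduces to decidable $\Lmu$-equivalence over tree codes) are exactly the bookkeeping the paper leaves implicit, and they are carried out correctly.
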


\subsection{Isolating the complexity of \texorpdfstring{$\gfp$}{GFP}-definability}

We now see if we can get a more efficient $\gfp$-definability test, with
the goal of obtaining a tight bound on the complexity of this problem.

There are two sources of inefficiency in the high-level algorithm.
First, the forward mapping is non-elementary since we pass through
$\mso$ on the way to a $\mu$-calculus formula.
Second, testing equivalence of the original sentence
with the sentence produced by the forward and backward mappings na\"ively
would cause an additional blow-up: we would apply a forward mapping again
in order to produce tree automata, and then
check their  equivalence using an $\exptime$ algorithm.

For the special case of input in $\gnfp$,
we can avoid these inefficiencies and obtain an optimal complexity bound.

\begin{thm}%
\label{thm:twoexpdefgfp}
The $\gfp[\sigtarget,\sigmag]$ definability problem is
$\twoexptime$-complete
for input in $\gnfp[\sigoriginal]$ where $\sigmag \subseteq \sigtarget \subseteq \sigoriginal$.
\end{thm}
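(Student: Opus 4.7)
The plan is to carry out both phases of the back-and-forth strategy of Figure~\ref{fig:bandfeffsem} entirely at the level of 2-way alternating $\mu$-automata, thereby sidestepping the two sources of inefficiency mentioned above: the non-elementary detour through \mso in the forward mapping, and the need to re-encode into trees in the equivalence test. Starting from $\phi_0 \in \gnfp[\sigoriginal]$, I would first use the automata-theoretic techniques already available for $\gnfp$ (which avoid the passage through $\mso$ used in \lemmafwd) to construct, in single-exponential time, a 2-way alternating $\mu$-automaton $\cA_0$ over $\sigcode{\sigoriginal}{n}$-trees for a suitable $n$, whose language is exactly the set of consistent tree codes $\cT$ with $\decode{\cT} \models \phi_0$. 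This plays the role of the forward mapping without ever producing an intermediate \mso sentence.

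Rather than extracting the $\gfp$ formula $\phi_1$ promised by \lemmagfpbwd{$\gfp$-Bwd} explicitly (which, via Theorem~\ref{thm:muaut}, would incur an additional exponential blow-up on top of $|\cA_0|$ and wreck the overall bound), I would build a second 2-way alternating $\mu$-automaton $\cA_1$ of size polynomial in $\cA_0$ that accepts exactly those $\cT$ such that $\gunravel{\decode{\cT}}{\sigtarget,\sigmag}$, viewed as a tree code, is accepted by $\cA_0$. Since the guarded unravelling just restricts attention to strictly $\sigmag$-guarded bags and strings them together along sequences $Y_0 \dots Y_m \in \Pi_k$ as recalled in Section~\ref{sec:bisim}, $\cA_1$ can be obtained by pairing $\cA_0$ with a small controller that simulates this unravelling on-the-fly using up/down/stay transitions. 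By the correctness argument underlying Proposition~\ref{prop:effectivedefhigh}, a structure $\fB$ satisfies the $\gfp$ backward-mapping $\phi_1$ iff its encoding is accepted by $\cA_1$, so $\phi_0$ is $\gfp[\sigtarget,\sigmag]$-definable iff $L(\cA_0) = L(\cA_1)$. Using complementation and intersection from Proposition~\ref{prop:2waymuaut-closure} together with the standard emptiness check for alternating parity automata on trees, this language equivalence can be decided in time single-exponential in $|\cA_0| + |\cA_1|$, which is doubly exponential in $|\phi_0|$.

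For the matching $\twoexptime$ lower bound I would reduce from unsatisfiability in $\gfp[\sigtarget]$, which is $\twoexptime$-hard. Fix once and for all a small $\gnfp$ sentence $\chi$ over a signature $\sigma_{\mathrm{fresh}}$ disjoint from $\sigoriginal$ that is satisfiable but not $\gfp[\sigtarget,\sigmag]$-definable; such a $\chi$ can be exhibited by taking two $\ginvar[\sigtarget,\sigmag]$-bisimilar $\sigma_{\mathrm{fresh}}$-structures that disagree on $\chi$, witnessing failure of $\ginvar[\sigtarget,\sigmag]$-invariance. Given $\phi \in \gfp[\sigtarget]$, set $\phi_0 := \phi \wedge \chi$, which lies in $\gnfp[\sigoriginal \cup \sigma_{\mathrm{fresh}}]$. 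If $\phi$ is unsatisfiable then $\phi_0$ is equivalent to $\bot \in \gfp$; otherwise, combining any model of $\phi$ with the two bisimilar witnesses for $\chi$ produces a pair of $\ginvar$-bisimilar models disagreeing on $\phi_0$, so $\phi_0$ fails to be $\gfp$-definable.

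The main obstacle is the second step: one must carefully design the controller so that $\cA_1$ is polynomial in $\cA_0$ while faithfully mirroring the inductive backward translation of Lemma~\ref{lemma:gfpbackfree} (in particular its $\sigmag$-guarded quantifier and fixpoint reconstructions), and one must verify that equivalence on the class of codes-of-structures suffices, i.e.\ that inputs violating $\consistencyform$ can be handled cleanly by conjoining the consistency automaton from Lemma~\ref{lemma:correct}. Once this is in place, all other ingredients are standard.
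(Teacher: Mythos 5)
Your proposal matches the paper's proof in all essentials: the upper bound goes directly from the $\gnfp$ input to a 2-way alternating $\mu$-automaton (avoiding \mso), composes it with a ``view'' automaton simulating the run on the guarded unravelling, and tests language equivalence relative to the consistency automaton, while the lower bound conjoins the input with a fixed non-$\gfp$-definable sentence over a disjoint signature and argues via guarded bisimulation. The only (harmless) quantitative slip is that the view automaton is not polynomial in $\cA_0$ but incurs an extra factor of $2^{k+1}$ for $k=\width{\sigtarget}$, which still keeps the state count singly exponential and the overall procedure in \twoexptime.
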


The proof of Theorem~\ref{thm:twoexpdefgfp} will require some extra
machinery.
The main idea behind the optimized procedure
is to directly use automata throughout the process.
First, for input $\phi$ in $\gnfp$ it is known from~\cite{boundedness} that
there is a forward mapping
directly producing a tree automaton $\cA_\phi$
with exponentially-many states
that accepts a consistent tree $\tree$ iff $\decode{\tree} \models \phi$;
$\cA_\phi$ accepts exactly the consistent trees that satisfy the formula $\phi^\mu$ from Lemma~\ref{lemma:forward-gso}.
This direct construction avoids passing through $\mso$,
and can be done in $\twoexptime$.
We can then construct an automaton $\cA'_\phi$ from $\cA_\phi$
that accepts a tree $\tree$
iff $\gunravel{\decode{\tree}}{\sigtarget,\sigmag}$ is accepted by $\cA_\phi$;
we call this the \emph{$\ginvar[\sigtarget,\sigmag]$-view automaton},
since it mimics the view of $\cA_\phi$ running on the guarded unravelling of $\decode{\tree}$.
This can be seen as an automaton that
represents the composition of the backward mapping
with the forward mapping.
With these constructions in place,
we have the following improved algorithm
to test definability of $\phi$ in $\gfp$:
construct $\cA_\phi$ from $\phi$,
construct $\cA'_\phi$ from $\cA_\phi$,
and test equivalence of $\cA_\phi$ and $\cA'_\phi$ over consistent trees.
Note that with this improved procedure
it is not necessary to actually construct the backward mapping,
or to pass forward to trees for a second time in order to test equivalence.
Overall, the procedure can be shown to run in $\twoexptime$.
A reduction from $\gfp$-satisfiability testing, which
is known to be $\twoexptime$-hard, yields the lower bound.

\myparagraph{Upper bound}
We now give more details of the upper bound
in Theorem~\ref{thm:twoexpdefgfp}.
As mentioned earlier, there is an improved forward mapping
from
formulas  in $\gnfp[\sigoriginal]$
directly to automata,
without passing through $\mso$.
It is known from prior work
how to do this in $\twoexptime$:

\begin{lem}[$\gnfp$-Fwd Automaton,~\cite{boundedness}]\label{lemma:gnfp-fwd-aut}
Given $\phi \in \gnfp^l[\sigoriginal]$ and given some $m \geq \max\set{l,\width{\sigoriginal}}$,
we can construct in \twoexptime
a 2-way alternating $\mu$-automaton $\cA_\phi$
such that
for all consistent $\sigcode{\sigoriginal}{m}$-trees $\tree$,
$\tree \in L(\cA_\phi)$ iff $\decode{\tree} \models \phi$.

The number of states of $\cA_\phi$ is exponential in $\size{\phi}$,
and the number of priorities is linear in~$\size{\phi}$.
\end{lem}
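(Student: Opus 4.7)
The plan is to construct $\cA_\phi$ inductively from the subformulas of a normalized version of $\phi$, with states that record which subformula is currently being checked and an injection specifying how its free variables are bound to tree indices at the current node.

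First I would apply Proposition~\ref{prop:nf} to put $\phi$ into strict normal form $\convertnf{\phi}$ of size at most $2^{f(\size{\phi})}$ and width at most $\size{\phi}$. Strict normal form is convenient because every negation sits under an atomic $\sigmag$-guard whose free variables match those of the negated formula, existentials sit on top of clearly delineated CQ-shaped pieces, and each subformula carries an identified free-variable tuple --- exactly the shape needed to give a local transition rule per constructor. The state set is then the collection of pairs $(\psi, h)$ where $\psi$ is a subformula of $\convertnf{\phi}$ and $h$ is an injection from $\free{\psi}$ into $\set{1,\dots,m}$, with intended meaning: ``starting from node $v$, check $\psi$ under the assignment $x_i \mapsto [v, h(x_i)]$''. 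Since $\width{\convertnf{\phi}} \leq \size{\phi}$, there are at most $m^{\size{\phi}}$ injections per subformula, giving at most $2^{p(\size{\phi})}$ states in total; enumerating these pairs and computing transitions locally runs in \twoexptime.

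For the inductive cases of $\delta$: atoms $R\,\vec{x}$ become local checks of $R_{h(\vec{x})}$; conjunctions and disjunctions are handled by the usual Adam/Eve splits; strictly $\sigmag$-guarded negations $\alpha \wedge \neg \psi$ are handled by verifying $\alpha$ locally and then transitioning into the \emph{dual} subautomaton for $\psi$ on the same state (Proposition~\ref{prop:2waymuaut-closure}); for an existentially guarded block $\exists \vec{y}.\alpha(\vec{x}\vec{y}) \wedge \psi$, Eve uses the $\dup$ and $\ddown$ moves to navigate along $E_\rho$-edges to a node where all of $\vec{x}\vec{y}$ can be jointly encoded, transporting $h$ along the way by composition with $\rho$, before guessing indices for $\vec{y}$, locally verifying $\alpha$, and continuing into $\psi$; finally, for a fixpoint $[\LFPA{X}{\vec{x}}.\alpha(\vec{x})\wedge\psi](\vec{x})$, I introduce fresh states for $\psi$ and for the fixpoint variable $X$, and assign them odd priorities chosen to respect the nesting of fixpoints, so that each $\LFP$ in $\phi$ contributes one priority level and the total number of priorities stays linear in $\size{\phi}$.

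The main obstacle is the existential case: the witness node may lie arbitrarily far from the current node $v$, and the injection $h$ has to be transported consistently along each $E_\rho$-edge so that the equivalence class $[v,h(x_i)]$ assigned to each tracked variable is preserved as the automaton moves. Correctness is proved by induction on $\psi$, showing that Eve wins from $(\psi,h)$ at $v$ on a consistent $\tree$ iff $\decode{\tree} \models \psi$ under the corresponding assignment; the consistency axioms on $\tree$ ensure that composing the edge permutations faithfully implements the equivalence classes $[v,i]$ of the decoded structure. This yields the automaton directly from the syntax of $\phi$, bypassing the non-elementary detour through \mso that was used in Lemma~\ref{lemma:forward-gso}.
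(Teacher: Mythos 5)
The paper does not actually prove this lemma; it imports it wholesale from \cite{boundedness}. Judged on its own terms, your construction has a genuine gap in the existential case, and it is exactly the step where all of the difficulty of the known construction lives. You treat an existential block as $\exists \vec{y}.\,\alpha(\vec{x}\vec{y}) \wedge \psi$ and have Eve ``navigate to a node where all of $\vec{x}\vec{y}$ can be jointly encoded.'' That is the right move for \gf, where quantification is guarded and the tree-decomposition property guarantees that the guarded tuple lives in a single bag. But in \gnfp \nf the existential blocks are \emph{unguarded} CQ-shaped formulas $\exists \vec{x}_i . \bigwedge_j \psi_{ij}$: each conjunct $\psi_{ij}$ is individually answer-guarded, so the variables of a single conjunct co-occur in some node, but the full tuple of quantified variables need not co-occur in any node of an arbitrary consistent $\sigcode{\sigoriginal}{m}$-tree (consider $\exists x y z.\, R(x,y)\wedge R(y,z)$ over a long path decomposed into bags of two elements). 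The lemma is stated for all consistent trees, not just unravellings in which every set of at most $m$ elements is realized in some bag, so the witness node you want Eve to walk to simply may not exist. The correct construction must instead evaluate the CQ distributively: guess a partial placement of the quantified variables at the current node, split the remaining conjuncts into connected components relative to the variables still shared with the current bag, and dispatch each component to a neighbouring subtree, dropping variables as the walk proceeds. This bookkeeping over subsets and partial assignments of the at most $l$ variables is precisely the source of the exponential state blow-up claimed in the statement; without it the construction does not go through.

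A secondary point: normalizing with Proposition~\ref{prop:nf} only bounds $\width{\convertnf{\phi}}$ by $\size{\phi}$, which can exceed $m$; your states index injections of free-variable tuples into $\set{1,\dots,m}$, so you need the width-$l$ guarantee that comes from $\phi$ already being in $\gnfp^l$ (or from the width-preserving conversion of Proposition~\ref{prop:nfwidth}), not the generic normalization. The remaining cases --- local atom checks, Adam/Eve splits for conjunction and disjunction, dualizing the subautomaton under a strictly guarded negation, and one priority level per fixpoint alternation --- are fine and consistent with the cited construction.
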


It is straightforward to construct a 2-way alternating $\mu$-automaton that checks
whether a $\sigcode{\sigtarget}{m}$-tree is consistent.
This is also known from prior work, e.g.~\cite{boundedness}.

\begin{lem}[Consistency Automaton]\label{lemma:consistency-aut}
We can construct in \twoexptime
a 2-way alternating $\mu$-automaton $\cC$
such that
for all $\sigcode{\sigoriginal}{m}$-trees $\tree$,
$\tree \in L(\cA_\phi)$ iff $\tree$ is consistent.

The number of states of $\cA_\phi$ is exponential in $\size{\phi}$,
and the number of priorities is linear in~$\size{\phi}$.
\end{lem}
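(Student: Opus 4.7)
The plan is to exploit the fact that all four defining conditions of consistency from Section~\ref{sec:treecodes} are \emph{local}: conditions (1) and (3) depend only on the unary node label, while conditions (2) and (4) depend only on a node's label together with one incident edge label and the adjacent node's label. Hence a 2-way alternating automaton can verify consistency as a safety property, universally propagating a single ``check'' state to every node of the tree and performing the constant-depth local tests there.

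Concretely, I would build $\cC$ with a single Adam-controlled state $q_{\text{ch}}$ (plus auxiliary Adam states used to fan out the individual sub-checks). At the root, $q_{\text{ch}}$ first verifies that the label contains $D_0$, and at every node it branches universally to: (i) verify that exactly one $D_n$ belongs to the label and that every $R_{\vec i}$ present has $\indices{R_{\vec i}} \subseteq \{1,\dots,n\}$ (conditions (1) and (3)); (ii) for each edge label $E_\rho$ present to the parent or to some child, check that $\dom\rho$ and $\codom\rho$ are included in the respective $D_n$'s by using the $\dup$/$\ddown$ moves in combination with a $\dstay$ move to re-read the source label (condition (2)); (iii) for each $R_{\vec i}$ in the label and each edge with label $E_\rho$ whose domain/codomain covers $\indices{R_{\vec i}}$, move via that edge to a universally-checked state which verifies the corresponding $R_{\rho(\vec i)}$ (or $R_{\rho^{-1}(\vec i)}$) is present in the neighbor's label (condition (4)). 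Finally, $q_{\text{ch}}$ propagates itself to every child via any edge, ensuring every node is eventually inspected.

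Because every rejecting behavior of $\cC$ is witnessed by a finite run reaching a ``reject'' sink, while every accepting behavior merely continues the universal exploration, a two-priority parity condition suffices (all relevant states at the same even priority, reject-sinks at odd priority). The number of states is bounded polynomially in the size of $\sigcode{\sigoriginal}{m}$, i.e.\ exponential in $m$ (owing to the $R_{\vec i}$ and $E_\rho$ enumerations), which matches the stated exponential bound in $\size{\phi}$ provided $m \leq \size{\phi}$; constructing the transition function requires enumerating triples of labels and partial maps, which is doable in \twoexptime.

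The only delicate point is condition (2), where the check naturally involves \emph{both} endpoints of an edge; in a 2-way alternating model this is handled by combining a $\dneighbor$ move with a $\dstay$ re-reading of the source label (or, alternatively, by pushing the check at the source side before the move). Apart from this bookkeeping, the construction is routine, and the automaton has no genuine fixpoint alternation, so the priority count is constant (and trivially linear in $\size{\phi}$).
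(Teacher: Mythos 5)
Your construction is correct and is exactly the routine safety-automaton argument the paper has in mind: the paper gives no proof of this lemma, merely asserting it is straightforward and citing prior work (\cite{boundedness}), and your universal exploration with constant-depth local checks of conditions (1)--(4), a trivial two-priority parity condition, and state/alphabet bookkeeping exponential in $m$ fills in precisely those omitted details. The only caveats are cosmetic: the lemma statement itself should refer to $L(\cC)$ rather than $L(\cA_\phi)$, and the "linear in $\size{\phi}$" priority bound is, as you note, trivially satisfied by a constant.
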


As mentioned above, we can then
construct a $\ginvar[\sigtarget,\sigmag]$-view automaton,
which can be seen as the composition of the backward mapping with the forward mapping.
This results in an additional blow-up of the state set by a factor of $2^{k+1}$ (for $k = \width{\sigtarget}$)
but no further increase in size.

\begin{lem}[$\gfp$-View Automaton]%
\label{lemma:gfp-view-automaton}
Let $\sigmag \subseteq \sigtarget \subseteq \sigoriginal$.
Given a 2-way alternating $\mu$-automaton $\cA$ over $\sigcode{\sigoriginal}{m}$-trees
with $m \geq \width{\sigtarget}$,
we can construct a $\ginvar[\sigtarget,\sigmag]$-view automaton $\cA'$
such that
$\tree \in L(\cA')$ iff $\gunravel{\decode{\tree}}{\sigtarget,\sigmag} \in L(\cA)$.
The view automaton can be constructed in time
polynomial in the size of $\cA$
and exponential in $k = \width{\sigtarget}$.
The number of states increases by a factor of $2^{k+1}$
and the number of priorities remains the same.
\end{lem}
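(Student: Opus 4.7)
The plan is a product construction: $\cA'$ runs on $\tree$ while simulating $\cA$ on the guarded unravelling $\gunravel{\decode{\tree}}{\sigtarget,\sigmag}$. Let $k = \width{\sigtarget}$. The state set of $\cA'$ is $Q' := Q \times \powerset{\{1,\dots,k\}}$ together with one fresh initial state $q'_0$, giving exactly the claimed factor of $2^{k+1}$ blow-up in the number of states. The intended invariant is that a position $((q,S),v)$ of $\cA'$ on $\tree$ represents $\cA$ being in state $q$ at some unravelling node whose strictly $\sigmag$-guarded domain is the $S$-indexed tuple $\{[v,i] : i \in S\}$ at $v$. The fresh initial state $q'_0$ handles the root of the unravelling, whose domain is empty and which has no witnessing node label in $\tree$; from $q'_0$ at the root of $\tree$, $\cA'$ $\dstay$-transitions to $(q_0,\emptyset)$. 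Priorities are inherited from $\cA$ via $\Omega'((q,S)) := \Omega(q)$, so the number of priorities is preserved.

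First, I would specify the effective node label that $\cA'$ at a position $((q,S),v)$ should feed to $\cA$. This label encodes the $\sigtarget$-atomic type of the tuple indexed by $S$ at $v$ together with the $D_{\size{S}}$ marker, and by the consistency of $\tree$ it is readable from $v$'s unary labels restricted to indices in $S$; the position is rejected unless $S$ is strictly $\sigmag$-guarded at $v$ via one of $v$'s $R_{\vec{i}}$-labels with $R \in \sigmag$ and $\indices{\vec{i}} = S$. Given this induced label $\tau_{v,S}$, $\cA'$ mimics $\delta(q,\tau_{v,S})$ move by move. A $\dstay$-move of $\cA$ to a new state $r$ in the unravelling becomes a $\dstay$-move of $\cA'$ to $((r,S),v)$. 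For a move of $\cA$ along an unravelling edge with label $\rho$, $\cA'$ must relocate within $\tree$ to a pair $(v',S')$ such that $\{[v',i] : i \in S'\}$ is strictly $\sigmag$-guarded at $v'$ and its overlap with $\{[v,i] : i \in S\}$ is exactly as prescribed by $\rho$.

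The main obstacle is this last type of move: the required witness node $v'$ need not be adjacent to $v$ in $\tree$, and a priori a tree automaton only takes single-edge steps. My plan to handle this without enlarging the state space further is to let $\cA'$'s transition function, starting from $((q,S),v)$, non-deterministically step along tree edges $E_{\rho'}$ while updating $S$ in lock-step with the partial map $\rho'$ attached to each edge; along the way the $q$-component is frozen, and navigation terminates when a node $v'$ carrying the needed strictly $\sigmag$-guarded label is reached, at which point $\cA'$ commits to the successor state $r$ prescribed by $\cA$. The tree-decomposition property guarantees that the shared elements of the two guarded sets (those in $\dom{\rho}$) live in a connected subtree containing both $v$ and $v'$, so the navigation is always possible when the move is legal in the unravelling. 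Priorities on the navigation transitions are chosen so that an infinite navigation phase is losing for Eve, which is the delicate point of the construction.

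For correctness, I would set up a back-and-forth between winning strategies for Eve in $\gameon{\cA}{\gunravel{\decode{\tree}}{\sigtarget,\sigmag}}$ and in $\gameon{\cA'}{\tree}$: each unravelling node is a sequence ending in some strictly $\sigmag$-guarded set $Y$, which determines a canonical pair $(v,S)$ by choosing any $\tree$-witness of $Y$; conversely, a play of $\gameon{\cA'}{\tree}$ traces out a sequence of pairs $(v_i,S_i)$ which induces a sequence of strictly $\sigmag$-guarded sets in $\decode{\tree}$, and by prefixing with the empty set this yields a branch in the unravelling. The invariant above makes these translations strategy-preserving and parity-preserving. The complexity bound is immediate from the state-set calculation: the transition function of $\cA'$ can be described in space polynomial in $\size{\cA}$ and exponential in $k$, and the priority set is unchanged.
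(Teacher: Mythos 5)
Your construction is essentially the paper's: a product of $\cA$ with a ``current strictly $\sigmag$-guarded view'', alternating between a phase that fires one transition of $\cA$ on the label restricted to the view and a navigation phase that walks through $\tree$ with the $\cA$-state frozen until a node witnessing the next strictly $\sigmag$-guarded set is reached. Two details are off, and one of them breaks correctness. First, the view component must range over subsets of $\set{1,\dots,m}$ of size at most $k$, not over $\powerset{\set{1,\dots,k}}$: a strictly $\sigmag$-guarded tuple at a node of $\tree$ may be indexed by arbitrary names in $\set{1,\dots,m}$. Relatedly, the factor of $2^{k+1}$ in the paper comes from pairing the views with a two-valued mode flag ($\readymode$/$\waitmode$), not from one fresh initial state.

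Second, and more seriously, declaring every infinite navigation phase losing for Eve is wrong at Adam-controlled states. The mode flag does not change which player owns a state, so the navigation issued from a state $q \in \QA$ is driven by Adam; if that phase were losing for Eve, Adam could simply navigate forever and win, and $\cA'$ would reject trees it must accept. The paper resolves this by giving move-mode states built from $\QE$ priority $1$ and those built from $\QA$ priority $0$, so that indefinite delay is losing for whichever player controls the phase. With these two repairs your strategy-transfer argument goes through and coincides with the paper's proof.
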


\begin{proof}
We need to design $\cA'$ so that when it is run on a  consistent $\sigcode{\sigoriginal}{m}$-tree $\tree$,
it mimics the run of $\cA$ on $\gunravel{\decode{\tree}}{\sigtarget,\sigmag}$.
Before we do this, it is helpful to recall
what these tree codes look like,
and what their relationship is.

In $\tree$,
each node represents at most $m$ elements
of $\decode{\tree}$,
and these elements are not necessarily guarded.
On the other hand, each node in $\gunravel{\decode{\tree}}{\sigtarget,\sigmag}$
represents a strictly $\sigmag$-guarded set of elements
of size at most $k = \width{\sigtarget}$.
But there is a strong relationship between
$\tree$ and $\gunravel{\decode{\tree}}{\sigtarget,\sigmag}$:
each node in~$\gunravel{\decode{\tree}}{\sigtarget,\sigmag}$ is
a copy of a strictly $\sigmag$-guarded subset of $\decode{\tree}$, and hence
can be identified with a strictly $\sigmag$-guarded subset of elements. Since
every atom must be represented in at least one node of the tree decomposition, these elements
must occur together in
a \emph{single node} of $\tree$.

The construction of $\cA'$ from $\cA$ reflects this.
We augment each state of $\cA$ to also include
the current \emph{strictly $\sigmag$-guarded view},
which is just some strictly $\sigmag$-guarded subset
of the at most $m$ elements
represented in the current node.
Each strictly $\sigmag$-guarded subset is of size at most $k$.
The view automaton $\cA'$ simulates $\cA$
as if it could only see the strictly $\sigmag$-guarded view of the label.

For each single move of the original automaton,
we allow the view automaton to make a finite (but unbounded) series
of moves before selecting the next strictly $\sigmag$-guarded view.
Why is this?
Observe that each single move of $\cA$ on some guarded unravelling
$\gunravel{\decode{\tree}}{\sigtarget,\sigmag}$
leads to a neighboring node that is based on a new strictly $\sigmag$-guarded set of elements from $\decode{\tree}$.
Although these elements must be represented in a
single node in $\tree$ (since they are guarded),
the node in $\tree$ that represents this new guarded set could be far away from
node that represents the current guarded view.
Hence, we allow the view automaton to navigate to a node in $\tree$
representing this next strictly $\sigmag$-guarded view
before continuing the simulation.

\medskip

We give more details on the construction. It may be helpful at this stage to refer to Section~\ref{sec:automata} for a brief introduction to the $\mu$-automata used here.

Let $\cA = \tuple{\sigcode{\sigoriginal}{m},\QE,\QA,q_0,\delta,\Omega}$,
where $\QE$ and $\QA$ represent existential and universal states,
respectively, $q_0$ and $\delta$ are the initial state and transition
function, while $\Omega$ is a priority function used to define
the acceptance condition.
We construct the $\ginvar[\sigtarget,\sigmag]$-view automaton
$\cA' = \tuple{\sigcode{\sigoriginal}{m},\QE',\QA',q_0',\delta',\Omega'}$ as follows.

Let $\views$ consist of subsets of $\set{1,\dots,m}$ of size at most $k = \width{\sigtarget}$.
Then let $\QE' := \QE \times \views \times \set{\readymode,\waitmode}$ and
$\QA' :=  \QA \times \views \times \set{\readymode,\waitmode}$,
with initial state $q'_0 := (q_0,\emptyset,\readymode)$.

Let $\tau$ denote a node label in a $\sigcode{\sigoriginal}{m}$-tree.
For $\view \in \views$, let $\restrict{\tau}{\sigtarget,\view}$ denote the restriction of the label $\tau$
to the indices in $\view$ and $\sigcode{\sigtarget}{m}$.

In $\readymode$ mode:
$\delta'((q,\view,\readymode),\tau)$ is
the set of moves of the form
$(\dstay,(q',\view',\waitmode))$
such that
$(d,\rho',q') \in \delta(q,\restrict{\tau}{\sigtarget,\view})$
and $\view' = \dom{\rho'}$.
In other words,
the automaton selects the next state in the simulation of $\cA$
based on its view
and then switches to $\waitmode$ mode.

In $\waitmode$ mode:
$\delta'((q,\view,\waitmode),\tau)$ is
the set consisting of \begin{itemize}
\item $(\dup,\rho,(q,\rho(\view),\waitmode))$
for each $\rho \in \EdgeLabels$ with $\dom{\rho} \supseteq \view$,
\item $(\ddown,\rho,(q,\rho(\view),\waitmode))$
for each $\rho \in \EdgeLabels$ with $\dom{\rho} \supseteq \view$,
and
\item
$(\dstay,(q,\view',\readymode))$ for each
$\view' \supseteq \view$ that is strictly $\sigmag$-guarded in $\tau$.
\end{itemize}
Thus, in $\waitmode$ mode,
the automaton can either move to a neighboring node that
contains $\view$ (renamed according to some $\rho$)
and stay in $\waitmode$ mode,
or it can expand to a new strictly $\sigmag$-guarded view and switch to $\readymode$ mode
in order to continue the simulation.

The priority function $\Omega'$ is defined such that states in $\readymode$ mode inherit
the priority of the underlying state from $\cA$.
In $\waitmode$ mode,
$\Omega'((q,\view,\waitmode))$ is $1$ if $q \in \QE$,
and $0$ if $q \in \QA$;
this ensures that the controlling player cannot cheat by forever delaying the next step in the simulation.
\end{proof}

We can use these automata for an improved
\twoexptime decision procedure.
Suppose the input is $\phi \in \gnfp^l[\sigoriginal]$.
Let $m = \max\set{l,\width{\sigtarget}}$.
We start by constructing $\cA_\phi$ and $\cC$ using Lemmas~\ref{lemma:gnfp-fwd-aut}~and~\ref{lemma:consistency-aut},
and then construct the view automaton $\cA'_\phi$ from $\cA_\phi$ using Lemma~\ref{lemma:gfp-view-automaton}.
This can all be done in \twoexptime.

We claim that $\cA_\phi$ is equivalent to $\cA'_\phi$ for $\sigcode{\sigoriginal}{m}$-trees in $L(\cC)$
iff
$\phi$ is $\gfp[\sigtarget,\sigmag]$-definable.
First, suppose $\cA_\phi$ and $\cA'_\phi$ are equivalent with respect to tree codes in $L(\cC)$.
Let $\psi \in \gfp[\sigtarget,\sigmag]$ be the formula obtained by converting $\cA_\phi$ to an equivalent
$\mu$-calculus formula using Theorem~\ref{thm:muaut}, and then applying \lemmagfpbwd{Lemma~$\gfp$-Bwd}.
We show that $\phi$ is $\gfp[\sigtarget,\sigmag]$-definable using this sentence $\psi$.
For all $\sigoriginal$-structures $\fB$:
\begin{align*}
&\fB \models \phi \\
\Leftrightarrow \quad
&\decode{\unravelm{\fB}{\sigoriginal}} \models \phi \tag{by $\gnlinvar[\sigoriginal]$-invariance of $\phi$} \\
\Leftrightarrow \quad
&\unravelm{\fB}{\sigoriginal} \in L(\cA_\phi) \tag{by Lemma~$\gnfp$-Fwd Automaton} \\
\Leftrightarrow \quad
&\unravelm{\fB}{\sigoriginal} \in L(\cA'_\phi) \tag{by language equivalence} \\
\Leftrightarrow \quad
&\gunravel{\decode{\unravelm{\fB}{\sigoriginal}}}{\sigtarget,\sigmag} \in L(\cA_\phi) \tag{by Lemma~$\gfp$-View Automaton} \\
\Leftrightarrow \quad
&\decode{\unravelm{\fB}{\sigoriginal}} \models \psi  \tag{by \lemmagfpbwd{Lemma~$\gfp$-Bwd}}\\
\Leftrightarrow \quad
&\fB \models \psi \tag{by $\gnlinvar[\sigoriginal]$-invariance of $\psi$.}
\end{align*}
Hence, $\phi$ and $\psi$ are logically equivalent, so
$\psi$ witnesses the $\gfp[\sigtarget,\sigmag]$-definability of $\phi$.

In the other direction, suppose that $\phi$ is $\gfp[\sigtarget,\sigmag]$-definable.
We must show that $\cA_\phi$ and $\cA'_\phi$ are equivalent
with respect to $\sigcode{\sigoriginal}{m}$-trees in $L(\cC)$.
For all $\sigcode{\sigoriginal}{m}$-trees $\tree$ in $L(\cC)$:
\begin{align*}
&\tree \in L(\cA_\phi) \\
\Leftrightarrow \quad
&\decode{\tree} \models \phi \tag{by Lemma~$\gnfp$-Fwd Automaton} \\
\Leftrightarrow \quad
&\decode{\gunravel{\decode{\tree}}{\sigtarget,\sigmag}} \models \phi \tag{by $\gfp$-definability of $\phi$} \\
\Leftrightarrow \quad
&\gunravel{\decode{\tree}}{\sigtarget,\sigmag} \in L(\cA_\phi) \tag{by Lemma $\gnfp$-Fwd Automaton} \\
\Leftrightarrow \quad
&\tree \in L(\cA'_\phi) \tag{by Lemma~$\gfp$-View Automaton}.
\end{align*}
Hence, we have shown the equivalence of $\cA_\phi$ and $\cA'_\phi$
with respect to $\sigcode{\sigtarget}{m}$-trees in $L(\cC)$,
which concludes the proof of correctness.

It remains to show that testing equivalence of $\cA_\phi$ and $\cA'_\phi$ with respect to trees in $L(\cC)$
can be done
in $\twoexptime$.
Using standard constructions from automata theory, we can construct 2-way alternating $\mu$-automata recognizing $L(\cC) \cap L(\cA_\phi) \cap \overline{L(\cA'_\phi)}$
and $L(\cC) \cap L(\cA'_\phi) \cap \overline{L(\cA_\phi)}$, with only a constant blow-up in size thanks to the use of alternating automata.
It then suffices to test emptiness of the language
accepted by the automaton
 for $L(\cC) \cap L(\cA_\phi) \cap \overline{L(\cA'_\phi)}$
and the automaton for $L(\cC) \cap L(\cA'_\phi) \cap \overline{L(\cA_\phi)}$.
This can be done in time exponential in the number of states and number of priorities (see~\cite{GradelHO02,Vardi98}).
Overall, this means that the decision procedure is in \twoexptime as claimed.
This completes the upper bound portion of
Theorem~\ref{thm:twoexpdefgfp}.

\myparagraph{Lower bound}
Theorem~\ref{thm:effectivedefgfp} also states $\twoexp$ hardness of the $\gfp$-definability problem.
This is a straightforward reduction from satisfiability of $\gfp[\sigoriginal]$ sentences, which
is known to be $\twoexp$-hard~\cite{gfp}.

Fix a sentence $\phi_0$ over a signature $\sigma_0$ that is in $\gnfpk$
over its signature but not in $\gfp$.
Given $\phi \in \gfp[\sigoriginal]$ our reduction produces
$\phi \wedge \phi_0$, where we first modify the signatures so that $\sigoriginal$
is disjoint from $\sigma_0$.

We claim that $\phi \wedge \phi_0$ is definable in $\gfp[\sigma]$ iff
$\phi$ is unsatisfiable.

Clearly if $\phi$ is unsatisfiable $\phi \wedge \phi_0$ is definable in $\gfp[\sigoriginal]$.
In the other direction, suppose for the sake of contradiction that $\phi \wedge \phi_0$ is in $\gfp[\sigoriginal]$
but $\phi$ holds in a model $\fA$.
Then $\phi \wedge \phi_0$ is
$\ginvar[\sigoriginal]$-invariant.

We claim that $\phi_0$ is $\ginvar[\sigma_0]$-invariant.
Consider $\sigma_0$ structures $\fA_1$ and $\fA_2$ that are $\ginvar[\sigma_0]$-bisimilar,
and where $\fA_1$ satisfies $\phi_0$. We can assume $\fA_1 \cup \fA_2$ has a domain that is disjoint from that of $\fA$,
since taking an isomorphic copy does not change either the guarded bisimilarity or the truth of $\phi_0$.
Form the $\sigoriginal \cup \sigma_0$ structures $\fA'_1$ and $\fA'_2$
by interpreting the $\sigoriginal$ relations as in $\fA$.
We can extend the guarded bisimulation of $\fA_1$ and $\fA_2$ over $\sigma_0$ by the identity
mapping for elements in $\fA$, and this clearly gives a guarded bisimulation
of $\fA'_1$ and $\fA'_2$ over $\sigoriginal \cup \sigma_0$. $\fA'_1$ satisfies $\phi_0 \wedge \phi$, so
$\fA'_2$ satisfies $\phi_0 \wedge \phi$ as well, hence
$\fA_2$ satisfies $\phi_0$, completing the proof that $\phi_0$ is
$\ginvar[\sigma_0]$-invariant.

Since $\phi_0$ is $\ginvar[\sigma_0]$-invariant,
$\phi_0$ is definable in $\gfp[\sigma]$ by~\cite{GradelHO02}, a contradiction since $\phi_0$ was chosen to be outside of $\gfp$.



\subsection{Further applications of the machinery}

Our decidability results give us a corollary on definability
in fragments of $\fo$ when the input is in $\fo$:

\begin{cor}%
\label{cor:gfdef}
The $\gf[\sigtarget,\sigmag]$ definability problem is  decidable
for $\gnlinvar[\sigoriginal]$-invariant $\fo[\sigoriginal]$ where
$l \geq \width{\sigoriginal}$
and $\sigmag \subseteq \sigtarget \subseteq \sigoriginal$.

In the special case that the input is in $\gnf[\sigma]$,
the $\gf[\sigtarget,\sigmag]$ definability problem is $\twoexptime$-complete.
\end{cor}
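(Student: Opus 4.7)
The plan is to reduce this to the $\gfp$-definability results already proved, by showing that within $\fo$ the notions of $\gf$-definability and $\gfp$-definability coincide for the relevant inputs. Concretely, I would first establish the following key claim: if $\phi \in \fo[\sigoriginal]$ is $\gnlinvar[\sigoriginal]$-invariant, then $\phi$ is equivalent to some sentence in $\gf[\sigtarget,\sigmag]$ iff $\phi$ is equivalent to some sentence in $\gfp[\sigtarget,\sigmag]$. The forward direction is immediate from $\gf[\sigtarget,\sigmag] \subseteq \gfp[\sigtarget,\sigmag]$.

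For the reverse direction, suppose $\phi$ is logically equivalent to some $\psi \in \gfp[\sigtarget,\sigmag]$. Every $\gfp[\sigtarget,\sigmag]$ sentence is $\ginvar[\sigtarget,\sigmag]$-invariant (as stated in the bisimulation characterizations reviewed in Section~\ref{sec:bisim}), so $\phi$ is a $\ginvar[\sigtarget,\sigmag]$-invariant $\fo$ sentence. Invoking the signature-refined Andréka-van Benthem-Németi theorem, $\phi$ is then equivalent to a $\gf[\sigtarget,\sigmag]$ sentence. Once this equivalence of definability notions is in hand, Theorem~\ref{thm:effectivedefgfp} gives an algorithm for $\gfp[\sigtarget,\sigmag]$-definability on all $\gnlinvar[\sigoriginal]$-invariant $\gso[\sigoriginal]$ inputs, and in particular on the $\fo[\sigoriginal]$ inputs in the statement, which proves decidability of the $\gf[\sigtarget,\sigmag]$ definability problem.

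For the complexity part, the upper bound for $\gnf[\sigma]$ input is immediate: Theorem~\ref{thm:twoexpdefgfp} gives a $\twoexptime$ procedure for $\gfp[\sigtarget,\sigmag]$-definability on $\gnfp[\sigoriginal]$ input, and by the above equivalence this coincides with $\gf[\sigtarget,\sigmag]$-definability on $\gnf[\sigoriginal]$ input. For the matching lower bound, I would mimic the hardness reduction used at the end of Section~\ref{sec:gfp}: fix a sentence $\phi_0$ over a signature $\sigma_0$ that is in $\gnf[\sigma_0]$ but not equivalent to any $\gf$ sentence (some suitable $\unf$ sentence witnessing non-$\ginvar$-invariance will do), and given $\phi \in \gf[\sigoriginal]$ over a disjoint signature, output $\phi \wedge \phi_0$. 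The same guarded-bisimulation lifting argument from the proof of Theorem~\ref{thm:twoexpdefgfp}—extending a $\sigma_0$-guarded bisimulation by the identity on the $\sigoriginal$-part—shows $\phi \wedge \phi_0$ is $\gf[\sigtarget,\sigmag]$-definable iff $\phi$ is unsatisfiable, and $\gf[\sigoriginal]$-satisfiability is known to be $\twoexptime$-hard. The one point that needs verification, and which I expect to be the only genuine subtlety, is the signature-refined form of Andréka-van Benthem-Németi used above, which should go through by the standard guarded-bisimulation proof with the guard signature carefully tracked.
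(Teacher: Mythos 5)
Your proposal is correct and follows essentially the same route as the paper: both reduce $\gf[\sigtarget,\sigmag]$-definability to $\gfp[\sigtarget,\sigmag]$-definability via the observation that a $\gfp[\sigtarget,\sigmag]$-definable $\fo$ sentence is $\ginvar[\sigtarget,\sigmag]$-invariant and hence, by the signature-refined Andr\'eka--van Benthem--N\'emeti characterization, already $\gf[\sigtarget,\sigmag]$-definable, and both then invoke Theorems~\ref{thm:effectivedefgfp} and~\ref{thm:twoexpdefgfp} together with the satisfiability reduction for the lower bound. The paper likewise treats the two-signature refinement of the van Benthem-style theorem as a straightforward adaptation of the argument in the cited work, so your flagged subtlety matches its treatment.
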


\begin{proof}
It was known from~\cite{gforig} that if
$\phi$ is in $\fo[\sigma]$ and is guarded bisimulation invariant with respect to $\sigma$, then
it is in $\gf[\sigma]$. By a straightforward refinement of the argument in~\cite{gforig},
 we see that if $\phi$ is in $\fo[\sigma]$ and is $\ginvar[\sigtarget,\sigmag]$-invariant,
then it is in $\gf[\sigtarget, \sigmag]$.

Hence, given an input formula $\phi$, we just use the algorithm of
Theorem~\ref{thm:twoexpdefgfp} to see if $\phi$ is in $\gfp[\sigtarget, \sigmag]$.
If it is, then we conclude that $\phi$ is actually in $\gf[\sigtarget,\sigmag]$.

In the special case that the input is in $\gnf[\sigma]$,
we can use Theorem~\ref{thm:twoexpdefgfp} to get the $\twoexptime$ upper bound.
The lower bound follows from a standard
reduction from satisfiability (see the proof of the lower bound
in Theorem~\ref{thm:twoexpdefgfp}).
\end{proof}

Note that in this work we are characterizing sublogics within fragments of fixpoint
logics and within fragments of first-order logic. We do not deal with
identifying first-order definable formulas within a fixpoint logic, as in~\cite{bowboundedness,bco,boundedness}.

We can also apply our theorem to answer some questions about \emph{conjunctive queries (CQs)}: formulas
built up from relational atoms via $\wedge$ and $\exists$.
When the input $\phi$ to our definability algorithm is a CQ,
$\phi$ can be written as a $\gf$ sentence exactly when it is \emph{acyclic}:
roughly speaking, this means it can be built up from guarded existential quantification
(see~\cite{robbers}).
Transforming a query to an acyclic one could be quite relevant
in practice, since acyclic queries can be evaluated in linear time~\cite{acycliclin}.
There are well-known methods for deciding whether a CQ $\phi$ is acyclic, and recently
these have been extended to the problem of determining whether $\phi$ is acyclic
for all structures satisfying a set of constraints (e.g., Guarded Universal
Horn constraints~\cite{acyclicmodpods} or
Functional Dependencies~\cite{acyclicmodlics}). Using Corollary~\ref{cor:gfdef} above we can
get an analogous result for arbitrary constraints in the guarded fragment:

\begin{thm}%
\label{thm:acyclic}
Given a finite set of $\gf$ sentences $\Sigma$ and a CQ sentence $Q$, we can decide
whether  there is a union of acyclic CQs $Q'$ equivalent to $Q$ for all structures satisfying $\Sigma$.
The problem is $\twoexp$-complete.
\end{thm}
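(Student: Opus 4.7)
The plan is to reduce the question to the $\gf$-definability test of Corollary~\ref{cor:gfdef}. Fix the common signature $\sigoriginal$ of $\Sigma$ and $Q$, and form the sentence $\phi_0 := \bigwedge \Sigma \to Q$. Since $\gf$ is closed under negation, $\neg \bigwedge \Sigma \in \gf \subseteq \gnf$; and $Q$ is a CQ, hence positive existential and in $\unf \subseteq \gnf$. Thus $\phi_0 = \neg\bigwedge \Sigma \vee Q \in \gnf[\sigoriginal]$, so Corollary~\ref{cor:gfdef} decides $\gf[\sigoriginal]$-definability of $\phi_0$ in $\twoexp$. The $\twoexp$ algorithm for the theorem is then: invoke this test on $\phi_0$ and return its answer.

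Correctness reduces to the claim that $\phi_0$ is $\gf$-definable iff $Q$ is equivalent modulo $\Sigma$ to a union of acyclic CQs. The forward direction is immediate: every acyclic CQ is expressible in $\gf$ \cite{robbers}, so if $\Sigma \models Q \leftrightarrow Q'$ for some UCQ of acyclic CQs $Q'$, then $\phi_0 \equiv \bigwedge \Sigma \to Q'$, which is a $\gf$ sentence.

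The main obstacle is the converse. Suppose $\phi_0$ is equivalent to some $\chi \in \gf$; on models of $\Sigma$ we then have $\chi \equiv Q$. The difficulty is that $\chi$ need not be positive, whereas a UCQ of acyclic CQs is. My plan is to exploit monotonicity: since $Q$ is a CQ it is preserved under homomorphisms, so $\chi$ is preserved under homomorphisms among models of $\Sigma$. A Lyndon-style preservation theorem for $\gf$ modulo a $\gf$ theory then yields a \emph{positive} $\gf$ sentence $Q'$ with $\Sigma \models \chi \leftrightarrow Q'$, and every positive $\gf$ sentence is (equivalent to) a UCQ of acyclic CQs. I would prove this preservation result by refining the backward mapping of Lemma~\ref{lemma:backwards-gfp}: by tracking the polarity of occurrences of each relation symbol through the inductive translation, one can arrange that the output is positive in a given symbol whenever the $\mu$-calculus input is, and combine this with the semantic hypothesis that $\chi$ is homomorphism-preserved on models of $\Sigma$ to discard negative occurrences.

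For the matching $\twoexp$ lower bound, I would reduce from $\gf$-satisfiability, known to be $\twoexp$-hard~\cite{gfp}, in the style of the lower bound for Theorem~\ref{thm:twoexpdefgfp}. Fix once and for all a CQ $Q_0$ over a signature $\sigma_0$ with no UCQ-of-acyclic-CQs rewriting (e.g.\ a triangle query). Given an input $\gf[\sigma_1]$ sentence $\psi$ with $\sigma_0 \cap \sigma_1 = \emptyset$, set $\Sigma := \{\psi\}$ and $Q := Q_0$. A disjoint-union argument, completely analogous to the one used for Theorem~\ref{thm:twoexpdefgfp}, shows that $Q$ admits a UCQ-of-acyclic-CQs rewriting modulo $\Sigma$ iff $\psi$ is unsatisfiable, yielding the hardness.
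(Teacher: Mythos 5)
Your reduction to the $\gf$-definability test is essentially the paper's (the paper tests $\gf$-definability of $\bigwedge\Sigma \wedge Q$ rather than $\bigwedge\Sigma \to Q$, but either works and both lie in $\gnf$), the forward direction of your correctness claim is fine, and your lower-bound sketch matches the paper's intent. The genuine gap is in the converse: from ``$Q$ is equivalent modulo $\Sigma$ to some $\chi \in \gf$'' you must get ``$Q$ is equivalent modulo $\Sigma$ to a union of acyclic CQs'', and the route you propose --- a Lyndon-style preservation theorem for $\gf$ relative to a $\gf$ theory, proved by tracking polarities through the backward mapping --- is neither established nor clearly workable. Two concrete problems. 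First, Lyndon positivity is not what you need: a $\gf$ sentence in which every relation symbol occurs positively need not be existential (e.g.\ $\forall x\,(x=x \to U x)$ is positive in $U$ but is not homomorphism-preserved and is not a UCQ), so ``every positive $\gf$ sentence is a UCQ of acyclic CQs'' fails under the Lyndon reading; what is actually required is an existential-positive form, i.e.\ a \emph{relativized homomorphism-preservation} theorem whose output moreover stays inside $\gf$ --- and that last requirement is precisely what buys acyclicity and is the hard part. Second, ``combine polarity tracking with the semantic hypothesis to discard negative occurrences'' is the entire content of such a theorem; nothing in Lemma~\ref{lemma:backwards-gfp} as stated licenses deleting a negatively occurring subformula merely because the input happens to be monotone on models of $\Sigma$.

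The paper closes this gap without any preservation theorem, via the treeification lemma of \cite{bgo}: for every CQ $Q$ there is a union of acyclic CQs $Q'$ such that $Q' \models Q$ and, for every $\gf$ sentence $\chi$, $\chi \models Q'$ iff $\chi \models Q$. If $Q \equiv \phi$ modulo $\Sigma$ with $\phi \in \gf$, then $\bigwedge\Sigma \wedge \phi \models Q$, hence $\bigwedge\Sigma \wedge \phi \models Q'$, so $Q \models Q'$ modulo $\Sigma$; together with the unconditional $Q' \models Q$ this gives $Q \equiv Q'$ modulo $\Sigma$. You should replace your preservation argument with this lemma (or prove the relativized preservation theorem in full, which would be a separate and nontrivial contribution). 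For the lower bound, note only that the disjoint-union step needs the same care about preserving satisfaction of the $\gf$ theory on the combined structure as the corresponding argument in the proof of Theorem~\ref{thm:twoexpdefgfp}.
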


For the purposes of this proof, an acyclic CQ is one built
up from atomic relations by conjunction and guarded existential quantification alone.
\cite{robbers} showed that this is equivalent to the more usual definitions, 
via the associated graph being  chordal and conformal,  or the associated graph being  tree decomposable.
First, we need the following basic result:

\begin{clm} Let $\Sigma$ be a finite set of $\gf$ sentences,
$Q$ a CQ, and suppose there is a $\gf$ sentence $\phi$ such
that $Q$ is equivalent to  $\phi$ for all structures satisfying
$\Sigma$. Then there is a union of  acyclic CQs $Q'$ such that $Q$ is equivalent
to $Q'$ for all structures satisfying $\Sigma$.
\end{clm}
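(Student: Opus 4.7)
My plan is to exploit guarded bisimulation invariance of both $\Sigma$ and $\phi$ to move each model of $\Sigma \wedge Q$ to a tree-like witness of $Q$, and then extract an acyclic CQ from that tree-like witness.

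Fix a model $\fB \models \Sigma \wedge Q$, and let $\fB' := \decode{\gunravel{\fB}{\sigoriginal,\sigoriginal}}$ be its guarded unravelling as defined in Section~\ref{sec:bisim}. Since $\Sigma \cup \{\phi\} \subseteq \gf[\sigoriginal]$ is guarded-bisimulation invariant and $\fB'$ is guarded-bisimilar to $\fB$, we have $\fB' \models \Sigma \wedge \phi$, and combining this with $\Sigma \models \phi \leftrightarrow Q$ gives $\fB' \models Q$. Hence there is a homomorphism $h : Q \to \fB'$.

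To $h$ I would associate an acyclic CQ $Q_h$ using the tree decomposition of $\fB'$ inherent in the unravelling (one bag per guarded set, every atomic fact contained in a single bag). Let $T_h$ be the minimal subtree of this decomposition whose bags collectively cover the image $h(Q)$, and let $\fB'_{T_h}$ be the induced substructure of $\fB'$ on the elements in those bags, with all atoms from these bags. Because $T_h$ restricts to a guarded tree decomposition of $\fB'_{T_h}$ in which each atomic fact lies in a single bag, the canonical CQ $Q_h$ of $\fB'_{T_h}$ (existentially quantifying its elements and conjoining its atoms) can be rewritten using only guarded existential quantification, so $Q_h$ is acyclic in the sense of the paper. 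Two facts follow: (i) $Q_h \models Q$ logically, since $h$ factors through $\fB'_{T_h}$ and hence any homomorphism from $\fB'_{T_h}$ composes with $h$ to give a homomorphism from $Q$; and (ii) $\fB \models Q_h$, because the unravelling projection $\pi : \fB' \to \fB$ restricts to a homomorphism $\fB'_{T_h} \to \fB$. Having established (i) and (ii), I would let $Q'$ be the disjunction, taken up to logical equivalence, of all acyclic CQs $Q_i$ over $\sigoriginal$ of some bounded size that satisfy $Q_i \models Q$. Then $Q' \models Q$ is immediate, and the construction above shows that every model of $\Sigma \wedge Q$ satisfies some $Q_i$, giving $\Sigma \models Q \leftrightarrow Q'$.

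The principal obstacle is ensuring that the disjunction defining $Q'$ is \emph{finite}, i.e.\ that the relevant $Q_h$'s can be chosen from a bounded set. A naive $h$ can place the image elements in widely separated parts of $\fB'$, making $T_h$ arbitrarily large. The key step I would need to justify carefully is that, for the purpose of witnessing the $\gf$-invariant consequence ``some acyclic strengthening of $Q$ holds'', it suffices to take $h$ whose image lies in a subtree of size bounded by a function of $|Q|$ and $\width{\sigoriginal}$: this can be done by folding $h$ along repetitions of guarded types in $\fB'$ (there are only boundedly many such types realized, so any long branch of $T_h$ contains a repetition which can be short-circuited while preserving the atoms needed for $h$). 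Once such a bound is in place, the set of acyclic CQs of that size over $\sigoriginal$ is finite up to equivalence, and the resulting $Q'$ is a genuine union of acyclic CQs.
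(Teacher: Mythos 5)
Your route is genuinely different from the paper's. The paper black-boxes the entire construction by citing the treeification lemma of \cite{bgo}: for every CQ $Q$ there is a finite union of acyclic CQs $Q'$ with $Q' \models Q$ and such that a $\gf$ sentence implies $Q'$ iff it implies $Q$; applying the second property to $\Sigma \wedge \phi$ finishes the claim in two lines. You instead re-derive that lemma from scratch: unravel each model of $\Sigma \wedge Q$, use guarded-bisimulation invariance of $\Sigma \wedge \phi$ to transfer $Q$ to the unravelling, and read off an acyclic witness from the tree decomposition. The skeleton is sound --- your facts (i) and (ii) are correct, the projection of the guarded unravelling back to $\fB$ is indeed a homomorphism, and the role you assign to $\phi$ (transferring $Q$ across the unravelling) is exactly where guardedness of $\phi$ is used. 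What your approach buys is self-containedness; what it costs is that you must prove the finiteness of the disjunction, which is precisely the nontrivial content of the cited lemma.

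That finiteness step, as you sketch it, has a real problem. You propose to shorten a long branch of $T_h$ by finding two nodes of the same guarded type and short-circuiting. But identifying two nodes because they have the same \emph{abstract} type does not preserve property (ii): the two nodes may project to different guarded sets of $\fB$, and after the identification there is no longer an evident homomorphism from the folded structure into $\fB$. (There are unboundedly many guarded sets in $\fB$, so you cannot force a repetition of the \emph{concrete} projected set along a long branch.) The repair is to contract rather than fold: on a connector path of $T_h$ whose interior contains no witnessing bag, the two endpoint bags already intersect in exactly the set of elements persisting along the whole path, so you may simply delete the interior bags, keep only the atoms recorded in the $O(\size{Q})$ witnessing bags (not the full induced substructure, which could contain atoms recorded elsewhere and break acyclicity), and make the endpoints adjacent; the connectedness condition of the decomposition survives, the inclusion into $\fB'$ (hence the projection to $\fB$) survives, and each remaining bag is guarded by an atom it contains. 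This yields acyclic CQs of size bounded by a polynomial in $\size{Q}$ and $\width{\sigma}$, which is the bound you need. With that substitution your argument goes through; as written, the key step is not merely unjustified but would fail.
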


Note that in the case $\Sigma$ is empty, this states that a CQ is in $\gf$ iff
it is a union of acyclic queries. If a CQ is equivalent to a disjunction of CQs, then it is equivalent to one of its disjuncts~\cite{sagivyan},
thus in the case that $\Sigma$ is empty (or more generally, when $\Sigma$ is universal Horn) we can strengthen the conclusion
to be that $Q'$  is a single acyclic CQ\@.
Although the characterization in the claim
is probably well-known, we provide a proof:

\begin{proof}[Proof of Claim]
We apply the ``treeification lemma'' of~\cite{bgo}, which states that
for every CQ sentence $Q$ we have a union of acyclic queries $Q'$
such that:

\begin{itemize}
\item $Q'$ implies $Q$
\item for every $\chi$ in $\gf$: $\chi$ implies $Q'$ if and only if $\chi$ implies $Q$
\end{itemize}

\noindent
Suppose there is a formula
$\phi$ in $\gf$ such that $Q$ is equivalent to  $\phi$ for all structures
satisfying $\Sigma$.
Then clearly
$\Sigma \wedge \phi$ implies $Q$, and hence by the second item
above, $\Sigma \wedge \phi$ implies $Q'$ and
thus for structures satisfying $\Sigma$, $Q$ implies $Q'$.

Therefore for structures satisfying $\Sigma$, $Q$ is equivalent to $Q'$ as required.
\end{proof}

Returning to the proof of Theorem~\ref{thm:acyclic}, the above claim
tells us   that $Q$ is equivalent to an acyclic $Q'$ for structures satisfying $\Sigma$
exactly when $\Sigma \wedge Q$ is equivalent to a sentence in $\gf$.
Since $\Sigma \wedge Q \in \gnf$ when $\Sigma$ is a set of $\gf$ sentences and $Q$ is a CQ,
Corollary~\ref{cor:gfdef} implies that we can decide this
in $\twoexp$ time.
This gives the desired upper bound.
As before, the lower bound follows from a standard
reduction from satisfiability.

Note that if $\Sigma$ consists
of universal Horn constraints (``TGDs''), then a CQ $Q$ is equivalent to a union of CQs $Q'$ relative to  $\Sigma$
implies  that it is equivalent to one of the disjuncts of $Q'$. Thus the result above implies
decidability of acyclicity relative to universal Horn  $\gf$ sentences,
one of the main results of~\cite{acyclicmodpods}. Also note that
the results of~\cite{bgo} imply that
 in Theorem~\ref{thm:acyclic} the quantification ``for all structures'' can
equivalently be replaced by ``for all finite structures''.


\section{Identifying \texorpdfstring{$\gnfpk$}{GFNPk} and \texorpdfstring{$\unfpk$}{UNFPk} sentences}\label{sec:gnfp}

We now turn to extending the prior results to $\gnfp$ and $\unfp$.
In order to make use of the back-and-forth approach described
in the previous section,
we need to know that it suffices to check for definability on structures of some bounded tree-width.
If we can focus on structures of bounded tree-width, we can make
use of tree automata and other results about regular tree languages in
solving the definability problem, since there is a fixed finite alphabet
for the encodings of such structures.

This was true for definability within $\gfp[\sigtarget,\sigmag]$, where the tree-width depended only on the signature $\sigtarget$.
For definability in $\gnfp$ and $\unfp$, it does not suffice to check structures of some fixed tree-width.
However, it does suffice for $\gnfpk$ and $\unfpk$.
Hence,
in this section, we will consider characterizing and deciding definability within
$\gnfpk$ and $\unfpk$.

The overall approach remains the same:
we apply the high-level algorithm of Proposition~\ref{prop:effectivedefhigh},
using the forward mapping of
Lemma~\ref{lemma:forward-gso}.
However, the unravelling and backward mapping for $\gnfpk$
is more technically challenging than the corresponding
constructions for $\gfp$.
The na\"ive backward mapping from $\Lmu$ into $\lfp$ is a straightforward structural induction, but it
fails to be in $\gnfpk$ for two reasons.
First, the inductive step for negation in the  na\"ive algorithm simply applies negation
to the recursively-produced formula. Clearly this can produce unguarded negation.
Similarly, the recursive step for fixpoints may use unguarded fixpoints.

This section focuses on these issues. Section~\ref{sec:plump} introduces a special unravelling construction called a plump unravelling that is more complicated than the block $k$-width guarded negation unravelling defined earlier, but still preserves all $\gnfpk$-sentences. Section~\ref{sec:backward} then provides the backward mapping for $\gnfpk$ by showing that problematic subformulas in the original $\Lmu$-formula can be eliminated, with the correctness of this simplification holding over tree codes of these plump unravellings.
Section~\ref{sec:gnfpdecidability} then summarizes the definability results that come from this.

As in the previous section, we will be working with signatures
$\sigmag \subseteq \sigtarget \subseteq \sigoriginal$, where $\sigma'$ is the
subsignature of $\sigma$ targeted by the backward mapping, with guards taken from $\sigmag$.

\subsection{Plump unravellings}\label{sec:plump}
We first need an appropriate notion of unravelling.
We use a variant of the block $k$-width guarded negation unravelling discussed
in Section~\ref{sec:prelims}, but we will need to assume that we have a certain copies of pieces of the structure,
over and above the usual copies present in every unravelling construction.
A property like this was defined in~\cite{lics15-gnfpi} for $\unfpk$,  called ``shrewdness'',
but we will need a more subtle property for $\gnfpk$, which we
 call  ``plumpness''.

In order to define the property that this special unravelling has,
we need to define how we can modify copies of certain parts of the structure
in a way that is not distinguishable by $\gnfpk$.
Let $\tau$ and $\tau'$ be sets of $\sigtarget$-atoms over some set $A$ of elements.
Let $I, J \subseteq A$.
We say $\tau$ and $\tau'$ agree on $J$
if for all $\sigtarget$-atoms $\alpha(a_1,\dots,a_l)$ with $\set{a_1,\dots,a_l} \subseteq J$,
$\alpha(a_1,\dots,a_l) \in \tau$ iff $\alpha(a_1,\dots,a_l) \in \tau'$.
We say $\tau'$ is an \emph{$(\sigmag,I)$-safe restriction} of $\tau$ if
\begin{enumerate}
\item $\tau' \subseteq \tau$;
\item $\tau'$ agrees with $\tau$ on $I$;
\item $\tau'$ agrees with $\tau$ on every $J \subseteq A$ that is $\sigmag$-guarded in~$\tau'$.
\end{enumerate}
We will use the same terminology and notation when
$\tau$ and $\tau'$ are encoded sets of atoms (rather than sets of atoms)
and $I, J$ are sets of indices encoding elements (rather than elements themselves).

Note that $\tau$ itself is considered a trivial $(\sigmag,I)$-safe restriction of $\tau$.
Here is another example:

\begin{exa}\label{ex:safe-restriction}
Consider signatures $\sigtarget = \set{ U, R, T}$  and $\sigmag = \set{R}$,
where $U$ is a unary relation,
$R$ is a binary relation, and $T$ is a ternary relation.
For readability in this example, we will write, e.g., $R(u,v)$ instead of $R uv$.
Consider $I = \set{1,2}$ and
\[
\tau =
\left\{ \begin{array}{c}
U(1), U(3) \\
R(1,2), R(2,3), R(3,1) \\
T(3,2,2)
\end{array} \right \}.
\]
Then the possible $(\sigmag,I)$-safe restrictions of $\tau$
are $\tau$ itself and
\begin{align*}
\tau'_1  =
\left\{ \begin{array}{c}
U(1), U(3) \\
R(1,2), R(2,3) \\
T(3,2,2)
\end{array} \right \} &
\qquad
\tau'_2  =
\left\{ \begin{array}{c}
U(1), U(3) \\
R(1,2), R(3,1) \\
T(3,2,2)
\end{array} \right \} &&
\quad
\tau'_4  =
\left\{ \begin{array}{c}
U(1), U(3) \\
R(1,2) \\
T(3,2,2)
\end{array} \right \} &
\\
&
\qquad
\tau'_3  =
\left\{ \begin{array}{c}
U(1), U(3) \\
R(1,2), R(3,1)
\end{array} \right \} &&
\quad
\tau'_5  =
\left\{ \begin{array}{c}
U(1), U(3) \\
R(1,2)
\end{array} \right \}\enspace. &
\end{align*}

Note that in some of the restrictions, we drop some
atoms that use relations from $\sigtarget$ or even $\sigmag$.
However, we cannot drop atoms over unary relations
(since these are always trivially guarded),
and we can never drop atoms using elements from the set~$I$.
Furthermore, the $\sigmag$-atoms that we keep
restrict what other atoms that we can drop, since for any $\sigmag$-guarded
set that remains we must preserve atoms over that set.
\end{exa}

We are almost ready to define the plumpness property that a plump unravelling will exhibit. The idea is that a plump tree is a special type of $\sigmag$-guarded interface tree. Recall that a $\sigmag$-guarded interface tree alternates between interface nodes that are $\sigmag$-guarded and bag nodes of some bounded width (see the description on page~\pageref{interfacenodes}).
The additional property that a plump tree must satisfy is that
that for every interface node $u$, all safe restrictions of the atoms represented at $u$
are realized by siblings of $u$.

By a \emph{$(\sigmag,I)$-safe restriction of a node $v$} in a tree code,
we mean a $(\sigmag,I)$-safe restriction of the atoms represented
by the node $v$.

A $\sigcode{\sigtarget}{k}$-tree
has the \emph{$\sigmag$-plumpness property}
if for all interface nodes $v$:
if $w$ is a $\rho_0$-child of $v$ over names $J$ with $I = \codom{\rho_0}$
and
$\tau$ is the encoded set of $\sigtarget$-atoms that hold at $w$,
then for any $(\sigmag,I)$-safe restriction $\tau'$ of $\tau$,
there is a $\rho_0$-child $w'$ of $v$ such that
\begin{enumerate}
\item $\tau'$
is the encoded set of $\sigtarget$-atoms that hold at $w'$;
\item for each $\rho$-child $u'$ of $w'$,
there is a $\rho$-child $u$ of $w$ such that the subtrees rooted at $u$ and $u'$ are bisimilar; and
\item for each $\rho$-child $u$ of $w$
such that $\dom{\rho}$ is strictly $\sigmag$-guarded in $\tau'$,
there is a $\rho$-child $u'$ of $w'$ such that
the subtrees rooted at $u'$ and $u$ are bisimilar.
\end{enumerate}
We say a tree code is \emph{$\sigmag$-plump} if it satisfies this property.

\begin{exa}
Let $\sigtarget,\sigmag$ be as in Example~\ref{ex:safe-restriction}.
Let $\tree$ be a $\sigmag$-plump tree.
Suppose there is an interface node $v$ in $\tree$ with label encoding $\tau_0 = \set{ U(1), R(1,2)}$,
and there is a $\rho_0$-child $w$ of $v$ such that the label of $w$ encodes $\tau = \set{ U(1), U(3),
R(1,2), R(2,3), R(3,1), T(3,2,2)}$,
and $\rho_0$ is the identity function with domain $\set{1,2}$.
Then by plumpness there must also be $\rho_0$-children $w_1, \dots, w_5$ of $v$
with labels encoding
$\tau'_1, \dots, \tau'_5$ from Example~\ref{ex:safe-restriction}.
\end{exa}

The following proposition shows that one can obtain unravellings that are plump:

\begin{prop}\label{prop:bisim-plump}
Let $\fB$ be a $\sigoriginal$-structure, $k \in \N$, and $\sigmag \subseteq \sigtarget \subseteq \sigoriginal$.
There is a consistent, plump, $\sigmag$-guarded-interface tree $\sunravelk{\fB}{\sigtarget,\sigmag}$
such that $\fB$ is $\sgnkinvar[\sigtarget,\sigmag]$-bisimilar to $\decode{\sunravelk{\fB}{\sigtarget,\sigmag}}$.
We call $\decode{\sunravelk{\fB}{\sigtarget,\sigmag}}$ the \emph{plump unravelling} of $\fB$.
\end{prop}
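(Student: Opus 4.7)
The plan is to construct $\sunravelk{\fB}{\sigtarget,\sigmag}$ as an enriched variant of the ordinary block unravelling $\bunravelk{\fB}{\sigtarget,\sigmag}$, where bag nodes come in multiple ``restricted'' copies realizing every $(\sigmag,I)$-safe restriction of their atomic type. Formally, I would modify the index set $\Pi_k$ used in the definition of $\bunravelk{\fB}{\sigtarget,\sigmag}$ to be a set of sequences of pairs $(Y_0,\tau_0)(Y_1,\tau_1)\dots (Y_m,\tau_m)$ where the $Y_i$'s alternate between strictly $\sigmag$-guarded sets (interface positions, $i$ even) and size-at-most-$k$ sets extending the previous interface (bag positions, $i$ odd); at an interface position $\tau_i=\atypesig{Y_i}{\fB}{\sigtarget}$, while at a bag position $\tau_i$ is \emph{any} $(\sigmag,Y_{i-1})$-safe restriction of $\atypesig{Y_i}{\fB}{\sigtarget}$. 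Node labels encode $\tau_i$ rather than the full atomic type, and edge labels $E_\rho$ record the overlap map between $Y_{i-1}$ and $Y_i$ exactly as in the ordinary block unravelling.

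First I would check \emph{consistency} of this tree code. Conditions (1)--(3) from the definition of consistency are immediate from the construction. For condition (4) (neighbouring labels agreeing on shared names) the point to verify is that any $\sigtarget$-atom kept at a bag node over indices appearing in its interface parent $Y_{i-1}$ is also kept at the interface parent. This is guaranteed by the second clause in the definition of $(\sigmag,Y_{i-1})$-safe restriction: every atom over $Y_{i-1}$ is preserved, and since $Y_{i-1}$ itself is strictly $\sigmag$-guarded the corresponding atom sits in the (full) interface type $\tau_{i-1}$.

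Next I would verify \emph{plumpness}. Given an interface node $v$ corresponding to $(Y_0,\tau_0)\dots (Y_{2i},\tau_{2i})$ and a $\rho_0$-child $w$ extending by $(Y_{2i+1},\tau_{2i+1})$, any $(\sigmag,I)$-safe restriction $\tau'$ of $\tau_{2i+1}$ (with $I=\codom{\rho_0}$) yields another legal sequence ending in $(Y_{2i+1},\tau')$, which is by construction a $\rho_0$-child $w'$ of $v$. For clause (2) of plumpness, each $\rho$-child $u'$ of $w'$ extends this sequence and, since $\tau'\subseteq\tau_{2i+1}$, the same extension appears below $w$, giving bisimilar subtrees by a copying argument. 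Clause (3) is the converse: if $\dom{\rho}$ is strictly $\sigmag$-guarded in $\tau'$, then by the third clause of safe restriction $\tau'$ and $\tau_{2i+1}$ agree on $\dom\rho$, so the same interface $Y_{2i+2}$ with the same full atomic type appears as a $\rho$-child of both, and the subtrees below are again isomorphic.

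The main obstacle, and what I would expect to spend the most effort on, is the $\sgnkinvar[\sigtarget,\sigmag]$-bisimilarity between $\fB$ and $\decode{\sunravelk{\fB}{\sigtarget,\sigmag}}$. The strategy for Duplicator maintains, at every position, a partial $\sigtarget$-homomorphism $h$ between the active structure and a matching ``live'' node in $\sunravelk{\fB}{\sigtarget,\sigmag}$, together with a projection back to $\fB$ that preserves the actual atomic type on strictly $\sigmag$-guarded sets. Extend moves are answered by extending the current sequence by a new bag position with the \emph{full} atomic type (never a proper safe restriction), which remains a legitimate node of the unravelling; collapse and switch moves in the block game are only ever issued at strictly $\sigmag$-guarded positions, so by the third safe-restriction clause every atom over the collapsed set is preserved, and the projection is still a partial $\sigtarget$-isomorphism. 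The only extra subtlety compared to the standard block unravelling argument is that Spoiler might choose to play on a restricted copy whose type $\tau'$ omits non-$\sigmag$-guarded facts of $\fB$; this is harmless precisely because such facts cannot be exposed by a legal switch in the $\sgnkinvar[\sigtarget,\sigmag]$-game, so Duplicator can always answer on the $\fB$ side by playing on the corresponding elements. Iterating this invariant produces a winning strategy, which completes the proof.
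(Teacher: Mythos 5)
Your proposal is correct and follows essentially the same route as the paper: the same safe-restriction-enriched block unravelling (bag nodes duplicated once per $(\sigmag,X_{i-1})$-safe restriction of their atomic type, interface nodes carrying the exact type), the same consistency and plumpness checks, and the same Duplicator strategy that answers extend moves with full-type bag copies and exploits clause (3) of safe restrictions to recover a partial $\sigtarget$-isomorphism at every strictly $\sigmag$-guarded collapse. The only points treated more lightly than in the paper are the downward direction of consistency condition (4) (atoms at an interface \emph{child} propagating into the restricted bag label) and the implicit use of transitivity of safe restrictions in the plumpness check, both of which are routine.
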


The proof of the proposition will take up the remainder of this section.
Before proceeding with the proof, we would recommend reviewing the notation and definitions on page~\pageref{unravellings}.

\newcommand{\underlying}[1]{}

\myparagraph{Construction of plump unravelling}

Let $\fA$ be a $\sigoriginal$-structure and let $k \in \N$.
Consider the set $\Pi'_k$ of finite sequences of the form
$X_0 (Y_1,\tau_1) X_1 \dots (Y_m,\tau_m)$ or
$X_0 (Y_1,\tau_1) X_1 \dots (Y_m,\tau_m) X_m$,
where $X_0 = \emptyset$ and for all $1 \leq i \leq m$,
\begin{itemize}
\item $X_i$ is a set of elements of $\fA$ that is strictly $\sigmag$-guarded by an atom in $\tau_i$;
\item $Y_i$ is a set of elements of $\fA$ of size at most $k$;
\item $Y_i$ contains both $X_{i-1}$ and $X_{i}$;
\item $\tau_{i}$ is a $(\sigmag,X_{i-1})$-safe restriction of $\atypesig{Y_i}{\fA}{\sigtarget}$.
\end{itemize}

\noindent
Let $\sunravelk{\fA}{\sigtarget,\sigmag}$ denote the $\sigcode{\sigtarget}{k}$-tree of sequences of $\Pi'_k$,
arranged based on prefix order.
Roughly speaking, the node labels
indicate the encoding of the atomic formulas holding at each position---this is based on $\atypesig{X_i}{\fA}{\sigtarget}$ if the node corresponds to a sequence ending in $X_i$,
and $\tau_i$ if the node corresponds to a sequence ending in $(Y_i,\tau_i)$.
The edge labels $\rho$ indicate the shared elements
between $X_i$ and $Y_{i+1}$
or $Y_{i+1}$ and $X_{i+1}$.
This is similar to $\bunravelk{\fA}{\sigtarget,\sigmag}$,
except it includes all of the variations to the labels coming from safe restrictions
of the bag nodes
that are needed to make it a plump tree.

Formally, we build up the labels inductively based on the depth of the tree.
As we go, we also define for each $v \in \Pi'_k$ ending in $Z_i$ or $(Z_i,\tau_i)$,
a bijective function $\nu_v$ from $Z_i$ to $\set{1,\dots,\size{Z_i}}$
that defines the element index assigned at that node to each element in $Z_i$
in the tree encoding.
\begin{itemize}
\item The label at the root $v_0$ consists only of $D_0$,
and we have the empty map $\nu_{v_0}$ (since $X_0 = \emptyset$).
\item Consider the node
$v = X_0 (Y_1,\tau_1) X_1 \dots X_{m-1} (Y_m,\tau_m)$
and its parent of the form $u = X_0 (Y_1,\tau_1) X_1 \dots X_{m-1}$
with inductively defined $\nu_{u}$.
Fix some bijective $\nu_v$ from $Y_m$ to $\set{1,\dots,\size{Y_m}}$
that agrees with $\nu_u$ on $X_{m-1}$.
For each $R(a_1,\dots,a_l) \in \tau_m$,
add $R_{\nu_v(a_1),\dots,\nu_v(a_l)}$ to the label of $v$.
Add $D_{\size{Y_m}}$ to the label at $v$.
The edge label $\rho$ between $u$ and $v$ is defined to be the identity map from
$\set{1,\dots,\size{X_{m-1}}}$ to $\set{1,\dots,\size{X_{m-1}}}$.
\item Consider the node
$v = X_0 (Y_1,\tau_1) X_1 \dots (Y_m,\tau_m) X_m$
and its parent of the form $u = X_0 (Y_1,\tau_1) X_1 \dots (Y_m,\tau_m)$
with inductively defined $\nu_{u}$.
Fix some bijective $\nu_v$ from $X_m$ to $\set{1,\dots,\size{X_m}}$.
Then for each $R(a_1,\dots,a_l) \in \atypesig{X_m}{\fA}{\sigtarget}$,
add $R_{\nu_v(a_1),\dots,\nu_v(a_l)}$ to the label at $v$.
Add $D_{\size{X_m}}$ to the label at $v$.
The edge label between $u$ and $v$ is the function $\rho$
with domain $\valuation_u(Y_m \cap X_m)$
such that for each $a \in Y_{m} \cap X_m$,
$\rho(\valuation_u(a)) := \valuation_v(a)$.
\end{itemize}


\noindent This completes the definition of $\sunravelk{\fA}{\sigtarget,\sigmag}$.

One worry with a definition like this is that the resulting tree is not consistent;
in particular, one could worry that information about shared elements is not propagated correctly between neighboring nodes
because the labels come from encoding safe restrictions of the atomic type (rather than just always encoding the exact atomic type).
Suppose that we have some encoded atom $R_{i_1,\dots,i_n}$ in some node.
We can show that this information is propagated to all appropriate nodes,
by induction on the length of the propagation path.
The base case (for a path of length 0) is trivial.
Now suppose that $R_{i_1,\dots,i_n}$ is in an interface node $v$ and there is some $\rho$-child $w$
such that $\indices{R} \subseteq \dom{\rho} = \codom{\rho}$.
The label at $w$ could correspond to a $(\sigmag,\codom{\rho})$-safe restriction of the atomic type.
But because such a restriction must agree on $\codom{\rho}$,
this means that $R_{\rho(i_1), \dots, \rho(i_n)}$ must appear in $w$ as required.
If $R_{i_1,\dots,i_n}$ is in a bag node $w$ and there is some $\rho$-child $v$
such that $\indices{R} \subseteq \dom{\rho} = \codom{\rho}$,
then $\dom{\rho}$ must be strictly $\sigmag$-guarded in $w$ (by the definition of the plump unravelling).
Because $w$ must agree exactly with the atomic type on any $\sigmag$-guarded set in $w$,
this means that $R_{\rho(i_1), \dots, \rho(i_n)}$ will also appear in
$v$ as required.
Similar reasoning can be used for propagation to a parent node as well.

Thus, $\sunravelk{\fA}{\sigtarget,\sigmag}$ is consistent.
It is also straightforward to check that it is a plump, $\sigmag$-guarded-interface tree.

\myparagraph{Proof of bisimilarity}

The proof that this unravelling is $\sgnkinvar[\sigtarget,\sigmag]$-bisimilar to the original structure
is fairly standard.
It suffices to show that
Duplicator has a winning strategy in the $\sgnkinvar[\sigtarget,\sigmag]$-game
between
$\fU := \decode{\sunravelk{\fA}{\sigtarget,\sigmag}}$
and
$\fA$.

We build up the strategy inductively,
ensuring that every partial play of even length in the strategy ends
in a position $f$ that is \emph{good}.
We say a position $f$ with active structure $\fA$ is good
if it is a partial $\sigtarget$-isomorphism and
there is an interface node $v$ in $\cU := \sunravelk{\fA}{\sigtarget,\sigmag}$
where $\codom{f}$ is represented,
and this node is based on $\dom{f}$ from $\fA$.
Likewise, we say a position $f$ with active structure $\fU$ is good
if it is a partial $\sigtarget$-isomorphism and
there is an interface node $v$ in $\cU$
where $\dom{f}$ is represented,
and this node is based on $\codom{f}$ from~$\fA$.

The empty play is trivially good, since
the starting position has empty domain,
and is represented at the root of $\cU$.

Assume that a partial play consistent with the strategy ends in a good position $f : A \to U$ with active structure $\fA$,
with $\codom{f}$ represented in node $v$ in $\cU$.
We show how to extend the strategy in the block $k$-width game.
\begin{itemize}
\item If Spoiler switches structures and then collapses to some strictly $\sigmag$-guarded set $U'$ of $U$,
then the resulting position $f'$ is clearly still a partial $\sigtarget$-isomorphism.
In the plump unravelling, we know that $v$ is a node based on a $\sigmag$-guarded set $A$ (i.e.~a sequence ending in $A$) and
there is a successor $w$ of $v$ such that $w$ is also based on $A$ (i.e.~a sequence ending in some $(A,\tau)$).
Moreover, both $v$ and $w$ must represent exactly the atomic $\sigtarget$-type of $\dom{f}$,
since interface nodes in $\cU$ represent the exact $\sigmag$-type of the underlying elements of $\fA$,
and this is propagated to a successor that shares the same elements.
Hence, $w$ has a successor $v'$ based on the restriction of $A$ to $\codom{f'}$, which represents the strictly $\sigmag$-guarded subset $U'$ of $U$.
Thus, we have extended the play to another good position~$f'$.

\item Now consider the case when Spoiler extends to $A' \supseteq A$ with $\size{A'} \leq k$.
Consider the successor $w$ of $v$ that is based on $A'$ and the exact $\sigmag$-type of $A'$;
this exists in $\cU$ since $v$ was based on $A$, $A' \subseteq A$ and $\size{A'} \leq k$.
Let $U'$ be the elements from $\fU$ that come from this node $w$.
Then $f' : A' \to U'$ is a partial $\sigtarget$-isomorphism based on how we selected $w$.
Spoiler must then select some strictly $\sigmag$-guarded $A''$ that is a subset of $A'$ to collapse to. The resulting position $f'' : A'' \to U''$ is also a partial $\sigtarget$-isomorphism. Moreover, by the definition of the unravelling, $\codom{f''}$ must be represented at some interface node $v'$ that is a successor of $w$ and is based on $A''$.
This means we have extended the strategy so that the partial play ends in a good position.
\end{itemize}

\noindent
Now assume that the play ends in a good position $f : U \to A$ with active structure $\fU$,
with $\dom{f}$ represented in node $v$ in $\cU$.
We show how to extend the strategy.
\begin{itemize}
\item If Spoiler switches structures and then collapses to some strictly $\sigmag$-guarded set,
then we can use similar reasoning as above to argue that the resulting position is good.
\item So assume Spoiler extends to $U' \supseteq U$ with $\size{U'} \leq k$.
Let Duplicator select the elements $A'$ which were the basis for $U'$.
The resulting $f' : U' \to A'$ is guaranteed to be a partial $\sigtarget$-homomorphism since the atomic information
about $U'$ must be a subset of the atomic type of the underlying elements $A'$ from $\fA$.

We cannot guarantee at this stage that $f'$ is a partial $\sigtarget$-isomorphism. First, multiple elements from $U'$ might be derived from a single element in $A'$. This is the case in any unravelling construction, since the unravelling creates `copies' of pieces of the structure.
Second, elements from $U'$ might come from a node that is a restriction of the atomic type of
the underlying elements $A'$ from $\fA$. This is a particular feature of the plump unravellings, since we include variations of the atomic type of the elements $A'$ based on safe restrictions of the atomic type of those elements.
We also cannot guarantee that these elements $U$ are all represented at a single node in $\cU$.

However, when Spoiler collapses to a strictly $\sigmag$-guarded set $U'' \subseteq U'$,
this strictly $\sigmag$-guarded set must be represented in at least one node $w$ in $\cU$;
suppose this is a bag node.
This node $w$ could represent a restriction of the type.
But the key property of the plump unravelling is that we only allow safe restrictions of the type,
which must agree exactly with the underlying elements from $\fA$ on any $\sigmag$-guarded set.
Hence, this restriction to a strictly $\sigmag$-guarded set must be a partial $\sigtarget$-isomorphism,
and the domain of this partial $\sigtarget$-isomorphism
is represented in a successor $v'$ of $w$.
The reasoning is easier if $w$ is an interface node, since an interface node must be based
on the exact type of the underlying elements from $\fA$.
In any case, the resulting partial plays end in good positions, as desired.
\end{itemize}

\noindent
This means that we can build up a winning strategy for Duplicator in the $\sgnkinvar[\sigtarget,\sigmag]$-game
between $\fA$ and its plump unravelling, which concludes the proof of Proposition~\ref{prop:bisim-plump}.



\subsection{Backward mapping}\label{sec:backward}
Returning to the components required for the application of Proposition~\ref{prop:effectivedefhigh},
we see that
Proposition~\ref{prop:bisim-plump} says that
the structure $\fB$ is $\sgnkinvar[\sigtarget,\sigmag]$-bisimilar to $\decode{\sunravelk{\fB}{\sigtarget,\sigmag}}$ as required
for an application of Proposition~\ref{prop:effectivedefhigh}.
Plumpness will come into play in the backward mapping, which is given by the following lemma:

\begin{lem}[$\gnfpk$-Bwd]%
\label{lemma:backwards-gnfp}
Given $\phi^{\mu} \in \Lmu[\sigcode{\sigoriginal}{m}]$,
relational signatures $\sigmag$ and $\sigtarget$ with
$\sigmag \subseteq \sigtarget \subseteq \sigoriginal$,
and $k \leq m$,
we can construct $\psi \in \gnfpk[\sigtarget,\sigmag]$
such that for all $\sigoriginal$-structures~$\fB$,
$\fB \models \psi \ \text{iff} \ \sunravelk{\fB}{\sigtarget,\sigmag} \models \phi^{\mu}$.
\end{lem}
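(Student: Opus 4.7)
The plan is to prove the lemma by induction on the structure of $\phi^\mu$, producing for each subformula $\chi$ a family of $\gnfpk[\sigtarget,\sigmag]$ formulas indexed by the ``type'' of node at which $\chi$ is being evaluated in the plump tree $\sunravelk{\fB}{\sigtarget,\sigmag}$. Because this tree is a $\sigmag$-guarded-interface tree, its nodes split into interface nodes (strictly $\sigmag$-guarded, size at most $\width{\sigmag}$) and bag nodes (size at most $k$, not necessarily guarded). I introduce translations $\tbackward{\chi}_{j,P}(y_1,\dots,y_j)$ for interface-type evaluations, where $P$ is a strict $\sigmag$-guard encoding the current node label, together with auxiliary ``bag-type'' translations $\tbackward{\chi}_{j}^{\textup{bag}}(y_1,\dots,y_j)$ that will only ever be used inside a larger CQ-shaped formula supplying a strict $\sigmag$-answer-guard. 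For each free second-order variable $Y$ of $\phi^\mu$, I use second-order variables $Y_{j,P}$ indexed only by the \emph{interface-node} arities and strict guards, exactly as in the $\gfp$ backward translation.

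The atomic, boolean, and modality cases essentially mirror Lemma~\ref{lemma:gfpbackfree}, but with a two-sorted split between interface and bag versions. A modality $\dmodality{\rho}\chi$ evaluated at an interface node with strict guard $P$ translates to a CQ-shaped formula $\exists \vec{z}.\bigl(\alpha_P(\vec{x}) \wedge \bigwedge \text{stitching atoms} \wedge \tbackward{\chi}_{j}^{\textup{bag}}(\vec{z})\bigr)$, where the strict $\sigmag$-guard $\alpha_P$ on the current interface tuple serves as the required answer-guard for the new projection. A modality evaluated at a bag node going to an interface child is translated by a strictly $\sigmag$-guarded existential. No negation or fixpoint is ever nested inside a bag-type translation: negations and fixpoints appear only inside interface-type translations, where the current interface tuple automatically supplies a strict $\sigmag$-guard, keeping the output in $\gnf$ negation-guarded and fixpoint-guarded form.

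The main obstacle is the fixpoint case, since $\gnfpk$ permits fixpoints only on strictly $\sigmag$-guarded tuples. My plan is to translate $\mu Y.\chi$ by a vectorial fixpoint whose components are exactly the interface-type variables $Y_{j,P}$, taking as right-hand side $\alpha_P(\vec{y}) \wedge \tbackward{\chi}_{j,P}(\vec{y}, \tbackward{\vec Y})$; bag-type occurrences of $Y$ inside $\chi$ are handled by \emph{statically unfolding} one $\mu$-calculus step into a CQ-shaped formula that hops across the bag via the stitching atoms to a neighboring interface tuple, where an interface-type occurrence of $Y_{j,P}$ can legally be used. This avoids ever putting a fixpoint on an unguarded tuple. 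The correctness of this translation relies essentially on plumpness: safe restrictions of any bag label are realized by sibling bag nodes, so when the formula asks for a bag with certain properties we may choose whichever sibling makes the argument go through, and, dually, any negated subformula at an interface tuple cannot be defeated by the extra facts present in richer bag labels.

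Finally, I verify the width bound using Proposition~\ref{prop:nfwidth}: by construction every negation is strictly $\sigmag$-guarded, and every subformula beginning with a block of existentials is strictly $\sigmag$-answer-guarded by the enclosing interface tuple, so the conversion to strict normal form preserves the width bound $k$. The final formula is $\psi := \tbackward{(\phi^\mu)}_{0,\top} \in \gnfpk[\sigtarget,\sigmag]$, whose equivalence with $\sunravelk{\fB}{\sigtarget,\sigmag} \models \phi^\mu$ follows by a routine induction on $\phi^\mu$ using the bisimilarity of $\fB$ and $\decode{\sunravelk{\fB}{\sigtarget,\sigmag}}$ guaranteed by Proposition~\ref{prop:bisim-plump}.
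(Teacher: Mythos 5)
There is a genuine gap. Your induction presupposes that in $\phi^{\mu}$ ``negations and fixpoints appear only inside interface-type translations,'' but $\phi^{\mu}$ is an \emph{arbitrary} formula of $\Lmu[\sigcode{\sigoriginal}{m}]$: it may assert, at a bag position, the negation of a proposition $R_{i_1\dots i_n}$ whose index set is not $\sigmag$-guarded in the current label, and it may bind fixpoint variables whose valuations live on bag nodes. A direct structural translation of such a subformula yields an unguarded negation $\neg R(x_{i_1},\dots,x_{i_n})$ (or an unguarded fixpoint), which is simply not in $\gnfpk[\sigtarget,\sigmag]$, and no choice of sibling bag node fixes the syntax of the output formula. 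The paper's proof spends most of its effort on exactly this missing step: it first converts $\phi^{\mu}$ into a ``$\gnfpk[\sigtarget,\sigmag]$-safe'' formula (Lemma~\ref{lemma:gnfp-safe}), passing through an equivalent $\mu$-automaton, separating interface states from bag states, inlining the bag-state transition formulas, and --- crucially --- replacing the exact-label assertions $\ExactLabel{\tau}$ at bag positions by the guarded approximations $\GNLabel{\tau}$. The soundness of that replacement is where plumpness is actually used (via the claims about $(\sigmag,I)$-safe restrictions being realized by siblings), and it is a statement about rewriting the \emph{tree-level} formula, not about choosing witnesses during the backward induction as you suggest. Only after this normalization does a structural induction of the kind you describe go through.

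A second, smaller issue: even for the safe fragment, the correctness statement for bag nodes cannot be an equivalence. Because a bag node of $\sunravelk{\fB}{\sigtarget,\sigmag}$ may encode a proper safe restriction of $\atypesig{\elem{v}}{\fB}{\sigtarget}$, the paper's Lemma~\ref{lemma:gnfp-backward} proves only the implication from tree to structure at arbitrary bag nodes, and the converse only at bag nodes carrying the full atomic type; the modality cases of the induction lean on precisely this asymmetry (choosing a ``full'' bag child in one direction, using the weak implication in the other). Your ``routine induction'' asserting a two-way equivalence at bag nodes would break at the atomic case already. Your overall architecture --- interface/bag-indexed translations, $\sigmag$-guarded vectorial fixpoints over interface variables, unfolding one modal step through the bag, and Proposition~\ref{prop:nfwidth} to preserve width --- does match the paper's second half, but without the safe-form conversion and the asymmetric inductive invariant the argument does not close.
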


There are two main challenges for this backward mapping,
compared to the backward mapping for $\gfp$ described earlier.
First, we must understand where negations occur in the $\mu$-calculus formula,
and ensure that these negations will translate into $\sigmag$-guarded negations.
Second, we must ensure that fixpoints reference only interface positions,
so they can be translated into $\sigmag$-guarded fixpoints.

For example, the original $\mu$-calculus formula $\phi^\mu$ can include subformulas
of the form
$\dmodality{\rho} \ExactLabel{\tau}$
where
$\tau$ is a set of unary relations from $\sigcode{\sigtarget}{k}$,
and $\ExactLabel{\tau}$ asserts $P$ for all $P \in \tau$
and $\neg P$ for all unary relations $P$ not in $\tau$.
This would be problematic in a na\"ive backward mapping,
since the backward translation of some $\neg R_{i_1,\dots,i_n}$ would be converted into an unguarded negation
$\neg R(x_{i_1},\dots,x_{i_n})$.
On the other hand the formula
$\dmodality{\rho} \GNLabel{\tau}$
where $\GNLabel{\tau}$ asserts $P$ for all $P \in \tau$
but only asserts $\neg P$ for unary relations $P$ that are not in $\tau$
but whose indices are $\sigmag$-guarded by some $P' \in \tau$
would be unproblematic,
since this could be translated to a formula with $\sigmag$-guarded negation.
The key observation is that from an interface node in a plump tree,
these two formulas are equivalent:
if $\tree, v \models \dmodality{\rho} \GNLabel{\tau}$ at any interface node $v$,
then plumpness ensures that if there is some $\rho$-child $w'$ of $v$ with label $\tau'$
satisfying $\GNLabel{\tau}$,
then there is a $\rho$-child $w$ of $v$ with label $\tau$ satisfying $\ExactLabel{\tau}$---it can be checked that $\tau$ is a $(\sigmag,\codom{\rho})$-safe restriction of~$\tau'$.
Thus the main technical work in this section is to show that problematic subformulas like
$\ExactLabel{\tau}$ can be eliminated,
with the correctness of the simplification holding at least over plump trees.

Hence, the structure of the proof is as follows.
We first define a fragment of $\Lmu[\sigcode{\sigtarget}{k}]$,
which we call ``$\gnfpk[\sigtarget,\sigmag]$-safe formulas'',
and show that these formulas can be converted into $\gnfpk[\sigtarget,\sigmag]$.
After this, we will show that any $\mu$-calculus formula over $\sigcode{\sigoriginal}{m}$-trees
can be converted to a $\gnfpk[\sigtarget,\sigmag]$-safe form
that is equivalent on plump, $\sigmag$-guarded-interface $\sigcode{\sigtarget}{k}$-trees
like $\sunravelk{\fB}{\sigtarget,\sigmag}$.

We say an $\Lmu[\sigcode{\sigtarget}{k}]$ formula is \emph{$\gnfpk[\sigtarget,\sigmag]$-safe for interface nodes}
(respectively, \emph{$\gnfpk[\sigtarget,\sigmag]$-safe for bag nodes}) if
\begin{itemize}
\item every occurrence of a fixpoint $\lambda X. \chi$ or fixpoint variable $X$ is
in the scope of an even (respectively, odd) number of modalities and has one of the following forms:
\begin{align*}
&P \wedge D_m \wedge \lambda X . \chi &\quad &P \wedge D_m \wedge \neg \lambda X . \chi \\
&P \wedge D_m \wedge X &\quad &P \wedge D_m \wedge \neg X
\end{align*}
for some $P$ encoding a relation in $\sigmag$ with $\indices{P} = \set{1, \dots, m}$;
\item every negation has one of the following forms:
\begin{align*}
&P' \wedge \neg R &\quad &P \wedge D_m \wedge \neg X \\
&P'' \wedge \neg \dmodality{\rho} \chi &\quad &P \wedge D_m \wedge \neg \lambda X. \chi
\end{align*}
for some $P,P',P''$ encoding relations in $\sigmag$ and with
$\indices{P} = \set{1,\dots,m}$,
$\indices{P'} \supseteq \indices{R}$,
$\indices{P''} = \dom{\rho}$;

\item every modality has one of the following forms:
\begin{align*}
&P \wedge \dmodality{\rho} \chi \\
&P \wedge \neg \dmodality{\rho} \chi
\end{align*}
for some $P$ encoding a relation in $\sigmag$ with $\indices{P} = \dom{\rho}$.
\end{itemize}
Note that these safe formulas impose $\sigmag$-guardedness conditions
on fixpoints, negations, and modalities.

We already defined $\indices{\chi}$ for $\chi$ a propositional variable.
We generalize this to all $\gnfpk[\sigtarget,\sigmag]$-safe formulas $\chi$ as follows:
\begin{itemize}
\item $\indices{\top} = \indices{\bot} := \emptyset$;
\item $\indices{D_m} = \set{1,\dots,m}$;
\item $\indices{\chi_1 \wedge \chi_2} = \indices{\chi_1 \vee \chi_2} := \indices{\chi_1} \cup \indices{\chi_2}$;
\item $\indices{P \wedge \neg R} := \indices{P}$;
\item $\indices{P \wedge \dmodality{\rho} \chi} = \indices{P \wedge \neg \dmodality{\rho} \chi} := \dom{\rho}$;
\item $\indices{P \wedge D_m \wedge X} =\indices{P \wedge D_m \wedge \neg X}= \set{1,\dots,m}$;
\item $\indices{P \wedge D_m \wedge \lambda X . \chi} = \indices{P \wedge D_m \wedge \neg \lambda X . \chi} := \set{1,\dots,m}$.
\end{itemize}
These names determine the free first-order variables in the backwards translation.

For each fixpoint variable $X$ we introduce
multiple second-order variables of the form $X_{j,P}$ for $0 \leq j \leq k$ and $P \in \sigcode{\sigtarget}{k}$ with $\indices{P} = \set{1,\dots,j}$,
as we did for the $\gfp$ backward mapping.
Let $\tbackward{X}$ denote the set of these new second-order variables based on $X$.
A set $V$ of nodes in $\sunravelk{\fB}{\sigtarget,\sigmag}$ is a \emph{safe valuation for a free variable $X$}
if it
\begin{enumerate}
\item contains only interface nodes,
and
\item if it contains an interface node then it contains every interface node
that is the root of a bisimilar subtree.
\end{enumerate}
We write $\tbackward{V}$ for its representation in $\fB$
(as we did for $\gfp$).

We will now show that from a $\gnfpk[\sigtarget,\sigmag]$-safe formula in $\Lmu[\sigcode{\sigtarget}{k}]$
we can produce $\gnfpk[\sigtarget,\sigmag]$ formulas described below.
As in the backward mapping for $\gfp$, we must generate a family of formulas
during the inductive translation.
This time we must not only deal with different domain sizes in the nodes of
$\sunravelk{\fB}{\sigtarget,\sigmag}$,
but we must also distinguish between bag nodes and interface nodes.
We use $\tbackward{\phi}_m$ to indicate a formula related to a bag node of size $m$,
and $\tbackward{\phi}_{m,I}$ to indicate a formula related to an interface node with size $m$.

\newcommand{\plumpunravelbshort}{\cU(\fB)}

\begin{lem}\label{lemma:gnfp-backward}
Let $\phi \in \Lmu[\sigcode{\sigtarget}{k}]$ be $\gnfpk[\sigtarget,\sigmag]$-safe for interface nodes
(respectively, bag nodes)
with free second-order variables $\vec{X}$.
For each $m \leq k$,
we can construct a $\gnfpk[\sigtarget,\sigmag]$-formula
$\tbackward{\phi}_{m,I}(x_{1},\dots,x_{m},\tbackward{\vec{X}})$
(respectively, $\tbackward{\phi}_{m}(x_{1},\dots,x_{m},\tbackward{\vec{X}})$)
such that
for all $\sigoriginal$-structures $\fB$,
for all safe valuations $\vec{V}$ of $\vec{X}$,
and for all nodes $v$ in $\plumpunravelbshort = \sunravelk{\fB}{\sigtarget,\sigmag}$ with $\size{\elem{v}} = m$
and label $\tau$:
\begin{align}
\tag{1} &\parbox{4.5cm}{if $v$ is an interface node:}
&\plumpunravelbshort,v,\vec{V} &\models \phi \ &&\Rightarrow \
&\fB,\elem{v},\tbackward{\vec{V}} &\models \tbackward{\phi}_{m,I} \\
\tag{2} &\parbox{4.5cm}{if $v$ is an interface node:}
&\fB,\elem{v},\tbackward{\vec{V}} &\models \tbackward{\phi}_{m,I} \ &&\Rightarrow \
&\plumpunravelbshort,v,\vec{V} &\models \phi \\
\tag{3} &\parbox{4.5cm}{if $v$ is a bag node:}
&\plumpunravelbshort,v,\vec{V} &\models \phi \ &&\Rightarrow \
&\fB,\elem{v},\tbackward{\vec{V}} &\models \tbackward{\phi}_{m} \\
\tag{4} &\parbox{4.5cm}{if $v$ is a bag node and $\tau$ \\ encodes $\atypesig{\elem{v}}{\fB}{\sigtarget}$:}
&\fB,\elem{v},\tbackward{\vec{V}} &\models \tbackward{\phi}_{m} \ &&\Rightarrow \
&\plumpunravelbshort,v,\vec{V} &\models \phi .
\end{align}
Moreover, if $\indices{\phi} = \set{i_1,\dots,i_n} \subseteq \set{1,\dots,m}$
then we have $\free{\tbackward{\phi}_{m,I}} = \free{\tbackward{\phi}_{m}} = \set{x_{i_1}, \dots, x_{i_n}}$,
and any subformula of $\tbackward{\phi}_{m,I}$ or $\tbackward{\phi}_m$
that begins with an existential quantifier and is not directly below another existential quantifier
is strictly $\sigmag$-answer-guarded,
and any negation is strictly $\sigmag$-guarded.
\end{lem}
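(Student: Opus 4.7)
The plan is to proceed by simultaneous structural induction on $\phi$, producing both $\tbackward{\phi}_{m,I}$ (when $\phi$ is safe for interface nodes) and $\tbackward{\phi}_m$ (when safe for bag nodes). The translation will closely mirror the $\gfp$ backward mapping of Lemma~\ref{lemma:gfpbackfree}, with the principal new wrinkle being that bag nodes in $\sunravelk{\fB}{\sigtarget,\sigmag}$ may encode \emph{safe restrictions} of the atomic type rather than the full type; plumpness is what reconciles this discrepancy, and the extra syntactic hypotheses in the definition of ``safe'' are exactly what force every negation and fixpoint in the output to be strictly $\sigmag$-guarded.

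The syntactic cases would be handled largely mechanically: $R_{\vec{i}}$ becomes $R(x_{i_1},\dots,x_{i_l})$ (or $\bot$ when an index exceeds $m$), the safe negation $P' \wedge \neg R$ becomes $\alpha(\vec{x}) \wedge \neg R(\vec{x})$ where $\alpha$ is the strict $\sigmag$-guard encoded by $P'$, and Boolean connectives commute with the translation. For a modality subformula $P \wedge \dmodality{\rho}\chi$ inside a formula safe for interface nodes, $\chi$ is safe for bag nodes, and I would set $\tbackward{(P \wedge \dmodality{\rho}\chi)}_{m,I}$ to the strictly $\sigmag$-answer-guarded disjunction
\[
\textstyle \bigvee_{m',\alpha} \exists y_1 \dots y_{m'} \, \bigl(\alpha(\vec{y}) \wedge \tbackward{\chi}_{m'}(\vec{y}) \wedge \bigwedge_{i \in \dom \rho} x_i = y_{\rho(i)}\bigr),
\]
where $\alpha$ ranges over strict $\sigmag$-guards for $\vec{y}$, exactly in the style used for $\gfp$; the bag-to-interface modality case and the dual negated-modality cases are treated symmetrically. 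Fixpoint subformulas, which by the safe condition always appear under the pattern $P \wedge D_m \wedge \lambda X.\chi$, translate into $\sigmag$-guarded vectorial fixpoints over fresh relations $X_{j,P'}$ indexed by guard atoms, again copying the $\gfp$ construction.

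The main obstacle will be verifying the four correctness conditions (1)--(4), particularly the directions (2) and (4) that lift relational satisfaction back to tree satisfaction. For (2) at an interface node and a modality $P \wedge \dmodality{\rho}\chi$, the elements $\vec{y}$ witnessing the translated existential in $\fB$ are the basis of some bag child of $v$ in $\sunravelk{\fB}{\sigtarget,\sigmag}$; but in order to invoke (4) inductively we need a bag child whose encoded label is the \emph{exact} atomic type $\atypesig{\vec{y}}{\fB}{\sigtarget}$ rather than a restriction, and such a child exists precisely because the full type is a trivial safe restriction of itself and plumpness supplies a sibling for every safe restriction. Condition (4) at a bag node then reduces back to (1) or (2) at interface successors via the symmetric argument, again using plumpness to locate, among the many restriction-siblings, the one whose atomic type matches the witnesses chosen in $\fB$. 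The most delicate subcase is the fixpoint one within (2) and (4): here I would argue by a nested transfinite induction on ordinal approximants that the $\sigmag$-guarded vectorial least fixpoint computed in $\fB$ agrees, stage-for-stage, with the original $\Lmu$ fixpoint restricted to interface nodes of the plump unravelling, exploiting the safe-valuation discipline (so that the set-level correspondence survives the closure operation) together with the fact that interface nodes always encode exact atomic types, so the label $P'$ on $X_{j,P'}$ correctly tracks which tuples enter the fixpoint at each stage. The syntactic invariants on width, $\sigmag$-answer-guardedness, and $\sigmag$-guarded negation can be read off directly from the case analysis.
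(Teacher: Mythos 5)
Your overall architecture (mutual induction producing interface and bag variants, exact-type bag children for the relational-to-tree directions, a nested transfinite induction on approximants for fixpoints) matches the paper's proof, but there is a genuine error in the one case where $\gnfpk$ differs essentially from $\gfp$: the interface-to-bag modality. You translate $P \wedge \dmodality{\rho}\chi$ at an interface node as a disjunction of formulas $\exists y_1\dots y_{m'}\,(\alpha(\vec y) \wedge \tbackward{\chi}_{m'}(\vec y) \wedge \cdots)$ where $\alpha$ is a strict $\sigmag$-guard for the \emph{whole} tuple $\vec y$, ``exactly in the style used for $\gfp$''. In the plump unravelling the $\rho$-children of an interface node are bag nodes whose element sets have size at most $k$ but are \emph{not} required to be $\sigmag$-guarded; only interface nodes are guarded. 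Consequently direction (1) fails for your translation: if the diamond is witnessed by an unguarded bag child (e.g.\ take $\chi = D_j$ for $j$ exceeding $\width{\sigmag}$), no strictly $\sigmag$-guarded $j$-tuple of $\fB$ can serve as $\vec y$, so $\fB,\elem{v}$ falsifies your formula even though the tree satisfies $\phi$. The paper instead produces a strictly $\sigmag$-\emph{answer}-guarded formula: the guard is the translation of $P$ and covers only the free variables $\{x_i : i \in \dom{\rho}\}$ (the guarded interface), while the remaining up-to-$k$ bag elements are quantified with no guard at all --- this is exactly the UCQ-with-answer-guard shape that separates $\gnf$ from $\gf$, and it is why the width bound $k$, rather than guardedness, controls this step. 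Dually, in the bag-to-interface modality no quantifier is needed, since the interface child's elements are the $\rho$-image of a subset of the bag's elements; the paper simply substitutes variables there.

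Two smaller points. Writing the safe negation $P'\wedge\neg R$ as $\alpha(\vec x)\wedge\neg R(\vec x)$ is not strictly $\sigmag$-guarded when $\indices{P'} \supsetneq \indices{R}$, since a strict guard must have exactly the free variables of the negated formula; the paper repairs this by negating $\tbackward{(P')}\wedge\tbackward{(R)}$ rather than $\tbackward{(R)}$ alone. And the witness-matching you attribute to plumpness in cases (1) and (2) really only requires the unravelling to contain, for each extension tuple, a bag child carrying the exact atomic type (the trivial safe restriction); the full plumpness property --- restriction siblings with bisimilar subtrees --- is consumed earlier, in the reduction of $\phi^\mu$ to $\gnfpk$-safe form, not in this lemma.
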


It is helpful to compare the statement of this lemma to the analogous one for $\gfp$ in Lemma~\ref{lemma:backwards-gfp}. Note that we cannot replace conditions (1)--(4) by, say, the stronger and simpler statement $\sunravelk{\fB}{\sigtarget,\sigmag},v,\vec{V} \models \phi$ iff 
$\fB,\elem{v},\tbackward{\vec{V}} \models \tbackward{\phi}_{m}$.
In particular, if $v$ is a bag node, we cannot guarantee that $\sunravelk{\fB}{\sigtarget,\sigmag},v,\vec{V} \models \phi$ implies
$\fB,\elem{v},\tbackward{\vec{V}} \models \tbackward{\phi}_{m}$,
since bag nodes in a plump unravelling are based on safe restrictions of the atomic information, rather than an exact copy. The weaker condition in (4) is sufficient for our purposes.

Although the following inductive proof is long, it does not use any complicated machinery.
We are translating $\Lmu$-formulas that talk about the tree codes of plump unravellings into
formulas talking about relational structures (e.g.~diamond modalities in the $\mu$-calculus formula become existentially quantified formulas in the translation).
The conditions for $\gnfpk$-safe $\Lmu$-formulas ensure that
this translation always stays within $\gnfpk$.

\begin{proof}[Proof of Lemma~\ref{lemma:gnfp-backward}]
We start with the inductive translation,
and then give the proof of correctness for some illustrative cases.

\paragraph*{Translation}
If $\indices{\phi} \not\subseteq \set{1,\dots,m}$
then
$\tbackward{\phi}_{m,I} = \tbackward{\phi}_{m} = \bot$.
Otherwise,
we proceed by induction on the structure of the $\gnfpk[\sigtarget,\sigmag]$-safe formula $\phi$
to define $\tbackward{\phi}_{m,I}$ and $\tbackward{\phi}_m$.
\begin{itemize}
\item If $\phi = D_j$,
then $\tbackward{\phi}_{m,I} = \tbackward{\phi}_{m}$ is  $\top$ if $m = j$ and $\bot$ otherwise.
\item If $\phi = R_{i_1 \dots i_n}$,
then \begin{align*}
\tbackward{\phi}_{m,I} = \tbackward{\phi}_m &:= R \, x_{i_1} \dots x_{i_n} .
\end{align*}
\item If $\phi = P \wedge \neg R$,
then
$\tbackward{\phi}_{m,I} := \tbackward{(P)}_{m,I} \wedge \neg (\tbackward{(P)}_{m,I} \wedge \tbackward{(R)}_{m,I})$
and
$\tbackward{\phi}_m := \tbackward{(P)}_{m} \wedge \neg (\tbackward{(P)}_{m} \wedge \tbackward{(R)}_m)$.
\item If $\phi = P \wedge D_j \wedge X$ and $m = j$,
then $\tbackward{\phi}_{m,I} :=
\tbackward{(P)}_{j,I} \wedge X_{j,P}(x_1,\dots,x_j)$;
otherwise, if $m \neq j$, then $\tbackward{\phi}_{m,I} := \bot$.
Similarly for $\phi = P \wedge D_j \wedge \neg X$.
\item The translation commutes with $\vee$ and $\wedge$
for each $m$.
\item If $\phi = P \wedge \dmodality{\rho} \chi$
where $\dom{\rho} = \set{i_1,\dots,i_n}$,
then
\begin{align*}
\tbackward{\phi}_{m,I} &:=
\tbackward{(P)}_{m,I}  \wedge \bigvee_{n \leq j \leq k} \exists y_1 \dots y_j .
	\left( \tbackward{(P)}_{m,I} \wedge \tbackward{\chi}_j(y_1,\dots,y_j) [x_i / y_{\rho(i)} : i \in \dom{\rho}] \right) \\
\tbackward{\phi}_{m} &:=
\tbackward{(P)}_{m} \wedge \tbackward{\chi}_{n,I}(y_1,\dots,y_n) [x_i / y_{\rho(i)} : i \in \dom{\rho}].
\end{align*}
Similarly for $\phi = P \wedge \neg \dmodality{\rho} \chi$.
\item If $\phi = P \wedge D_j \wedge \mu Y . \chi$,
then
$\tbackward{\phi}_{m,I} := \tbackward{(P)}_{m,I} \wedge \tbackward{(D_j)}_{m,I} \wedge [\LFPA{Y_{j,P}}{y_1,\dots,y_j} . S_{\mu Y.\chi}](x_1,\dots,x_j)$
where
$S_{\mu Y . \chi}$
is a system
consisting of equations
\[Y_{n,P'}, y_1 \dots, y_n := \tbackward{(P')}_{n,I}(y_1,\dots,y_n) \wedge \tbackward{\chi}_{n,I}(y_1,\dots,y_n) \]
for each $Y_{n,P'}$ in $\tbackward{Y}$.
Similarly for $\phi = P \wedge D_j \wedge \neg \mu Y . \chi$.
\item If $\phi = P \wedge D_j \wedge \nu Y . \chi$,
then
$\tbackward{\phi}_{m,I} := \tbackward{(P)}_{m,I} \wedge \tbackward{(D_j)}_{m,I} \wedge \neg [\LFPA{Y_j}{y_1,\dots,y_j} . S_{\nu Y . \chi}](x_1,\dots,x_j)$
where
$S_{\nu Y . \chi}$
is a system
consisting of equations
\[Y_{n,P'}, y_1 \dots, y_n := \tbackward{(P')}_{n,I}(y_1,\dots,y_n) \wedge \neg \tbackward{\chi}_{n,I}(y_1,\dots,y_n)[\neg Y_{n'',P''} / Y_{n'',P''} : Y_{n'',P''} \in \tbackward{Y}] \]
for each $Y_{n,P'}$ in $\tbackward{Y}$.
Similarly for $\phi = P \wedge D_j \wedge \neg \nu Y . \chi$.
\end{itemize}

\noindent
The formulas $\tbackward{\phi}_m$ and $\tbackward{\phi}_{m,I}$
are in $\gnfp[\sigtarget,\sigmag]$ of width $k$, but are not necessarily in \nf.
However, it can be checked that the negations are all strictly $\sigmag$-guarded;
for instance, in the case of $\phi = P \wedge \neg R$,
then $\tbackward{\phi} = \tbackward{(P)}_{m} \wedge \neg (\tbackward{(P)}_{m} \wedge \tbackward{(R)}_m)$,
which is a strictly $\sigmag$-guarded negation since
$\indices{P} \supseteq \indices{R}$ and $P$ encodes a $\sigmag$-relation
(by definition of $\gnfpk[\sigtarget,\sigmag]$-safety).
Also, every subformula
starting with an existential quantifier
and not immediately below another existential quantifier is
strictly $\sigmag$-answer-guarded.
Thus,
we can convert to \nf using Proposition~\ref{prop:nfwidth}
without increasing the width,
so the formulas are in $\gnfpk[\sigtarget,\sigmag]$ as desired.

\paragraph*{Proof of correctness}

We now give the proof of correctness for some illustrative cases.
We write IH1--IH4 to denote the application of the inductive hypothesis based on properties (1)--(4). 
We will assume that $\indices{\phi} \subseteq \set{1,\dots,m}$.

\myparagraph{Atom}

Consider the base case
when $\phi = R_{i_1 \dots i_n}$.

Recall that in the plump unravelling,
the label at a node $v$ is based on $\atypesig{\fB}{\elem{v}}{\sigtarget}$.
For an interface node, the label is \emph{exactly} the encoding of $\atypesig{\fB}{\elem{v}}{\sigtarget}$,
which is enough to ensure that properties (1) and (2) hold.
For a bag node, the label is based on a safe \emph{restriction} of $\atypesig{\fB}{\elem{v}}{\sigtarget}$,
which can result in the label at $v$ including only a subset of the encoding of $\atypesig{\fB}{\elem{v}}{\sigtarget}$.
Property (3) follows from this.
Not all bag nodes $v$ such that $\fB,\elem{v} \models \tbackward{\phi}_{m}$
would satisfy $\plumpunravelbshort,v \models \phi$.
However, for bag nodes $v$ that do represent $\atypesig{\fB}{\elem{v}}{\sigtarget}$ in its entirety,
$\fB,\elem{v} \models \tbackward{\phi}_{m}$
implies $\plumpunravelbshort,v \models \phi$.
Hence, property (4) holds.

\myparagraph{Negated atom}

Consider the case when
$\phi = P \wedge \neg R$ with $\indices{P} \supseteq \indices{R} = \set{i_1,\dots,i_n}$ and
$\tbackward{\phi}_{m,I} := \tbackward{(P)}_{m,I} \wedge \neg (\tbackward{(P)}_{m,I} \wedge \tbackward{(R)}_{m,I})$
and
$\tbackward{\phi}_m := \tbackward{(P)}_{m} \wedge \neg (\tbackward{(P)}_{m} \wedge \tbackward{(R)}_m)$.

For (1), (2), and (4) correctness essentially follows from the inductive hypothesis.
Consider~(3).
Suppose $\plumpunravelbshort,v \models \phi$
for a bag node $v$.
By the properties of the plump unravelling,
the label at a bag node $v$ is the encoding of a safe restriction of
$\atypesig{\fB}{\elem{v}}{\sigtarget}$---this means that
for any set of elements that is still $\sigmag$-guarded after the restriction,
the label must encode
exactly the atoms in $\atypesig{\fB}{\elem{v}}{\sigtarget}$ about these elements.
Because we know that $i_1,\dots,i_n$ is $\sigmag$-guarded by $P$ in the label at $v$,
this means that the label at $v$ must reflect $\atypesig{\fB}{\elem{v}}{\sigtarget}$ exactly
over the elements corresponding to $i_1,\dots,i_n$.
Hence, $\fB,\elem{v} \models \tbackward{(R)}_m$ iff
$\plumpunravelbshort,v \models R$,
so $\fB,\elem{v} \models \neg \tbackward{(R)}_m$.
By IH3 applied to $P$,
we also have
$\fB,\elem{v} \models \tbackward{(P)}_m$,
so $\fB, \elem{v} \models \tbackward{\phi}_{m}$ as desired.

\myparagraph{Modality}

Consider the case when
$\phi = P \wedge \neg \dmodality{\rho} \chi$
with $\indices{P} = \dom{\rho} = \set{i_1,\dots,i_n}$ and
\begin{align*}
\tbackward{\phi}_{m,I} &:=
\tbackward{(P)}_{m,I}  \wedge \neg \bigvee_{n \leq j \leq k} \exists y_1 \dots y_j .
	\left( \tbackward{(P)}_{m,I} \wedge \tbackward{\chi}_j(y_1,\dots,y_j) [x_i / y_{\rho(i)} : i \in \dom{\rho}] \right) \\
\tbackward{\phi}_{m} &:=
\tbackward{(P)}_{m} \wedge \neg \tbackward{\chi}_{n,I}(y_1,\dots,y_n) [x_i / y_{\rho(i)} : i \in \dom{\rho}].
\end{align*}

For (1), suppose $v$ is an interface node and $\plumpunravelbshort,v \models \phi$.
Let $\elem{v} = a_1 \dots a_m$.
Then by IH1, $\fB,\elem{v} \models \tbackward{(P)}_{m,I}$.
Suppose for the sake of contradiction that
\[\exists y_1 \dots y_j .
	\left( \tbackward{\chi}_j(y_1,\dots,y_j) [x_i / y_{\rho(i)} : i \in \dom{\rho}] \right)\]
	for some $n \leq j \leq k$.
Then there are elements $b_1,\dots,b_j$ in $\fB$ such that
$\fB,b_1,\dots,b_j \models \tbackward{\chi}_j(y_1,\dots,y_j)$,
where $a_i = b_{\rho(i)}$.
By the properties of the unravelling,
there is a $\rho$-child $w$ of $v$ such that $\elem{w} = b_1, \dots, b_j$
and such that the label at $w$ is the encoding of $\atypesig{\fB}{\elem{w}}{\sigtarget}$.
Hence, by IH4, $\plumpunravelbshort,w \models \chi$
(we can apply IH4 since we have chosen $w$ such that it represents the full atomic type of $\elem{w}$ in $\fB$).
This means $\plumpunravelbshort, v \models \dmodality{\rho} \chi$, a contradiction.
Therefore, it must be the case that $\fB, \elem{v} \models \tbackward{\phi}_{m,I}$.

For (2), suppose $v$ is an interface node and $\fB, \elem{v} \models \tbackward{\phi}_{m,I}$.
Let $\elem{v} = a_1 \dots a_m$.
By IH2, $\plumpunravelbshort, v \models P$.
Suppose for the sake of contradiction that
$\plumpunravelbshort, v \models \dmodality{\rho} \chi$.
Then there is some $\rho$-child $w$ of $v$ such that $\plumpunravelbshort, w \models \chi$.
The node $w$ must be a bag node and must have some domain predicate $D_j$.
Let $b_1 \dots b_j$ be the elements from $\fB$ represented there, with $a_i = b_{\rho(i)}$.
By IH3, this means that $\fB, \elem{w} \models \tbackward{\chi}_j(y_1,\dots,y_j)$,
and hence $\fB, \elem{v} \models \exists y_1 \dots y_j . (\tbackward{\chi}_j(y_1,\dots,y_j) [x_i / y_{\rho(i)} : i \in \dom{\rho}] )$,
a contradiction of the fact that $\fB, \elem{v} \models \tbackward{\phi}_{m,I}$.
Hence, $\plumpunravelbshort, v \models \phi$ as desired.

For (3), suppose that $v$ is a bag node and $\plumpunravelbshort, v \models \phi$,
with $\elem{v} = a_1 \dots a_m$.
By IH3, $\fB, \elem{v} \models \tbackward{(P)}_m$.
Suppose for the sake of contradiction that $\fB, \elem{v} \models \tbackward{\chi}_{n,I} [x_i / y_{\rho(i)} : i \in \dom{\rho}]$.
Let $w$ be a $\rho$-child of $v$ with domain of size $\size{\codom{\rho}} = n$,
and satisfying $a_i = \rho(i)$
(this must exist by properties of unravelling).
Hence, $\fB, \elem{w} \models \tbackward{\chi}_{n,I}(y_1,\dots,y_n)$.
By IH2, this means that $\plumpunravelbshort, w \models \chi$,
and hence $\plumpunravelbshort, v \models \dmodality{\rho} \chi$, a contradiction.
Therefore $\fB, \elem{v} \models \tbackward{\phi}_{m}$ as desired.

For (4), suppose $v$ is a bag node representing $\atypesig{\fB}{\elem{v}}{\sigtarget}$
and $\fB, \elem{v} \models \tbackward{\phi}_{m}$.
Let $\elem{v} = a_1 \dots a_m$.
By IH4, $\plumpunravelbshort, v \models P$.
Suppose for the sake of contradiction that $\plumpunravelbshort, v \models \dmodality{\rho} \chi$.
Then there is some $\rho$-child $w$ of $v$ such that $\plumpunravelbshort, w \models \chi$.
The node $w$ must be an interface node and must have domain $D_n$ (where $n = \size{\codom{\rho}}$).
Let $b_1 \dots b_n$ be the elements from $\fB$ represented at $w$, with $a_i = b_{\rho(i)}$.
By IH1, $\fB, \elem{w} \models \tbackward{\chi}_{n,I}(y_1,\dots,y_n)$,
so $\fB, \elem{v} \models \tbackward{\chi}_{n,I}(y_1, \dots, y_n)[x_i / y_{\rho(i)} : i \in \dom{\rho}]$,
a contradiction.
Thus, $\plumpunravelbshort, v \models \phi$.

\myparagraph{Fixpoint}

Consider the case $\phi = P \wedge D_j \wedge \mu Y . \chi$,
where
\[
\tbackward{\phi}_{m,I} := \tbackward{(P)}_{m,I} \wedge \tbackward{(D_j)}_{m,I} \wedge [\LFPA{Y_{j,P}}{y_1,\dots,y_j} . S_{\mu Y.\chi}](x_1,\dots,x_j)
\]
and
$S_{\mu Y . \chi}$
is a system
consisting of equations
\[Y_{n,P'}, y_1 \dots, y_n := \tbackward{(P')}_{j,I}(y_1,\dots,y_n) \wedge \tbackward{\chi}_{n,I}(y_1,\dots,y_n) \]
for each $Y_{n,P'}$ in $\tbackward{Y}$.

\newcommand{\betaapproxchij}{(\tbackward{\chi}_{j,I})^\beta}
\newcommand{\deltaapproxchii}{(\tbackward{\chi}_{i,I})^\delta}

We must prove properties (1) and (2).
For ordinals $\beta$,
we write $\chi^{\beta}$ for the $\beta$-approximant of the fixpoint $\mu Y . \chi$
and $\betaapproxchij$ for the $\beta$-approximant of
$[\LFPA{Y_j}{y_1,\dots,y_j} . S_{\mu Y.\chi}](x_1,\dots,x_j)$.
We first show the result for the fixpoint approximants.
That is, for all interface nodes $v$ with $\size{\elem{v}} = j$,
$\fB, \elem{v} \models \betaapproxchij$
iff
$\plumpunravelbshort,v \models \chi^{\beta}$.
We proceed by induction on the fixpoint approximant~$\beta$;
we will refer to this as the inner induction to distinguish it from
the outer induction on the structure of the formula.

For $\beta = 0$, the result follows by the outer inductive hypothesis IH1 and IH2
applied to the formulas that result from substituting $\bot$ for $Y$ in $\chi$,
and $\bot$ for $Y_0, \dots, Y_k$ in $\tbackward{\chi}_j$.

Now assume $\beta > 0$ is a successor ordinal $\beta = \delta+1$.

For (1), assume $\plumpunravelbshort,v \models \chi^{\beta}$
for $v$ an interface node with $\size{\elem{v}} = j$.
Then $\plumpunravelbshort,v, V_{\delta} \models \chi$
where
$
V_{\delta} := \set{ w : \text{$w$ is an interface node and $\plumpunravelbshort,w \models \chi^{\delta}$} }
$
is the valuation
for $Y$.
Note that this is a safe valuation:
it clearly contains only interface nodes, and if it contains interface node $w$,
then it contains all $w'$ such that the subtrees rooted at $w$ and $w'$ are bisimilar
because $\chi$ is in the $\mu$-calculus,
and the $\mu$-calculus is bisimulation invariant.
By the outer inductive hypothesis,
this implies that $\fB,\elem{v}, \tbackward{V}_{\delta} \models \tbackward{\chi}_j$
for $\tbackward{V}_{\delta} = (V^\delta_0, \dots, V^\delta_k)$
and $V^\delta_i = \set{\elem{w} : \text{$\elem{w} =  i$ and $w \in V$}}$.
However, by the inner inductive hypothesis,
$V^\delta_i
= \sset{\elem{w} : \text{$\elem{w} =  i$ and $\fB, \elem{w} \models \deltaapproxchii$}}$.
This means that $\fB, \elem{v} \models \betaapproxchij$ as desired.

Next, assume $\fB,\elem{v} \models \betaapproxchij$
for $v$ an interface node with $\size{\elem{v}} = j$.
Then $\fB,\elem{v},\tbackward{V}_\delta \models \tbackward{\chi}_j$
where
$V^\delta_i = \{\elem{w} : \text{$\elem{w} =  i$ and $\fB, \elem{w} \models \deltaapproxchii$}\}$.
Define $V_\delta = \set{ w : \text{$w \in V^\delta_i$ for some $i$}}$.
By the inner inductive hypothesis, $V_\delta$ is equivalent to the valuation
$\set{ w : \plumpunravelbshort, w \models \chi^\delta }$.
By the bisimulation-invariance of $\mu$-calculus, this is a safe valuation for $Y$.
Hence, the outer inductive hypothesis implies that
$\plumpunravelbshort, v, V_\delta \models \chi$,
and hence $\plumpunravelbshort, v \models \chi^\beta$ as desired.

The proof is similar when $\beta$ is a limit ordinal.

The overall result for this case follows by
appealing to the fact that the least fixpoint
corresponds to some $\beta$-approximant,
and noting that because the fixpoint references only interface nodes,
it is correct to add a $\sigmag$-guardedness requirement to each formula
in the simultaneous fixpoint.

Note that the greatest fixpoint cases rely on the fact that
a greatest fixpoint can be expressed using negation and least fixpoint,
e.g.~$\nu Y . \chi = \neg \mu Y . \neg \chi [\neg Y / Y]$.
This is reflected in the translation.
The only additional technicality in the translation is that we add an extra $\sigmag$-guard
at the beginning of each formula in the simultaneous fixpoint.
\end{proof}

Lemma~\ref{lemma:backwards-gnfp} follows by
converting $\phi^\mu$ to a $\gnfpk[\sigtarget,\sigmag]$-safe form $\psi$
(see Lemma~\ref{lemma:gnfp-safe} below)
that is equivalent over plump, $\sigmag$-guarded-interface $\sigcode{\sigtarget}{k}$-trees,
and then using Lemma~\ref{lemma:gnfp-backward}, taking $\tbackward{\psi} := \tbackward{\phi}_{0,I}$ as the
desired sentence in $\gnfpk[\sigtarget,\sigmag]$.


\myparagraph{Conversion to $\gnfpk[\sigtarget,\sigmag]$-safe formula}
It remains to show that we can actually
convert a $\mu$-calculus formula $\phi^\mu$
into a $\gnfpk$-safe formula.

\begin{lem}\label{lemma:gnfp-safe}
Given $\phi^\mu \in \Lmu[\sigcode{\sigoriginal}{m}]$,
we can construct $\phi \in \Lmu[\sigcode{\sigtarget}{k}]$
that is $\gnfpk[\sigtarget,\sigmag]$-safe for interface nodes
and such that for all plump, $\sigmag$-guarded-interface $\sigcode{\sigtarget}{k}$-trees $\cT$,
we have
$\cT \models \phi \text{ iff } \cT \models \psi$.
\end{lem}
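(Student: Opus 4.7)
The plan is to transform $\phi^\mu$ into $\gnfpk[\sigtarget,\sigmag]$-safe form through a sequence of normalization steps, each exploiting a specific property of plump $\sigmag$-guarded-interface trees.

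First, I would restrict attention to $\sigcode{\sigtarget}{k}$. Since we only care about plump $\sigmag$-guarded-interface $\sigcode{\sigtarget}{k}$-trees, any atom in $\sigcode{\sigoriginal}{m}$ that uses a relation outside $\sigtarget$ or an index outside $\set{1,\dots,k}$ is always false on such trees, so we can replace these by $\bot$ and view the formula as lying in $\Lmu[\sigcode{\sigtarget}{k}]$. Next I would normalize so that every atom, negation, modality, and fixpoint is conjoined with an appropriate $D_m$ and with a $\sigmag$-guard $P$ witnessing the required guardedness. This is possible because on $\sigmag$-guarded-interface trees the alternation between bag and interface positions matches the even/odd parity of enclosing modalities, and at each interface position the strict $\sigmag$-guard can be pinned down by a finite disjunction. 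This handles the shape constraints on modalities ($P \wedge \dmodality{\rho}\chi$ with $\indices{P} = \dom{\rho}$) since every edge in a $\sigmag$-guarded-interface tree goes through a strictly $\sigmag$-guarded set of names.

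The crucial step uses plumpness to eliminate unguarded negations of unary predicates. Via the Janin--Walukiewicz disjunctive normal form~\cite{janinw,JaninW96}, I can reduce $\phi^\mu$ to a formula whose only ``label-checking'' building blocks are cover modalities asserting, of the current node or of successors, that certain sets of unary predicates hold \emph{and no others}. Such ``exact label'' assertions require negating arbitrary unary atoms and so are not $\sigmag$-guarded. I would replace each such $\ExactLabel{\tau}$-style formula by its $\GNLabel{\tau}$-style weakening, which negates only those predicates whose indices are strictly $\sigmag$-guarded in $\tau$. Plumpness makes these equivalent at interface nodes: if an interface node $v$ has a child $w'$ whose label contains $\tau$ as a substructure, then by the plumpness property $v$ also has a child $w$ whose label is precisely the $(\sigmag,\codom{\rho})$-safe restriction of $w'$ witnessing $\ExactLabel{\tau}$.

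The main obstacle will be the fixpoints: the safe form requires every occurrence of a fixpoint operator or fixpoint variable to sit in the scope of an even number of modalities (for interface-safety) and to be prefixed by $P \wedge D_m$ for a $\sigmag$-guard $P$. To achieve this, I would split each fixpoint $\lambda Y.\chi$ into a simultaneous fixpoint indexed by pairs $(j,P)$ ranging over strict $\sigmag$-guards, where the $(j,P)$-component collects only interface nodes of domain size $j$ strictly $\sigmag$-guarded by $P$; the body of each equation is obtained by unfolding $\chi$ one level past the neighboring bag node so that recursive references land again at interface nodes. This preserves the fixpoint semantics because bag and interface nodes strictly alternate, and it yields the required even-modality parity for every occurrence of the fixpoint variable. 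Making this bookkeeping work uniformly under nesting of fixpoints, while interleaving with the plumpness-based rewriting of the previous step, is the delicate part; correctness will ultimately be verified by an induction on the structure of the transformed formula, exactly parallel to the correctness induction in Lemma~\ref{lemma:gnfp-backward}.
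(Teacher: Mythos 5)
Your plan follows the paper's proof essentially step for step: pass to the Janin--Walukiewicz disjunctive/automaton normal form, restrict to $\sigcode{\sigtarget}{k}$ and exploit the bag/interface alternation to guard atoms, modalities and fixpoints, weaken $\ExactLabel{\tau}$ to $\GNLabel{\tau}$ using plumpness, and re-index fixpoints by pairs $(j,P)$ with one level of unfolding through bag nodes so that fixpoint variables occur only at interface positions. The one point your justification of the $\ExactLabel{\tau}$-to-$\GNLabel{\tau}$ step leaves implicit is that the exact-label sibling supplied by plumpness must also satisfy the continuation of the formula under the modality (the transition requirements on \emph{its} children); this is precisely what clauses (2) and (3) of the plumpness definition --- bisimilarity of corresponding subtrees --- together with the bisimulation-safety of the fixpoint valuations provide, and without them the replacement under $\dmodality{\rho}$ and $\bmodality{\rho}$ would not be sound.
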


This requires a series of transformations, described below.
At each stage, we ensure equivalence with $\phi^\mu$,
at least over plump trees.

Before we give these transformations,
we review and introduce some additional notation.

We utilize vectorial (simultaneous) fixpoints, using the standard notation.
We also use the standard notation for the box modality and greatest fixpoint operator,
writing $\bmodality{\rho} \psi$ as an abbreviation for $\neg \dmodality{\rho} \neg \psi$,
and $\nu Y . \psi$ for $\neg \mu Y . \neg \psi[\neg Y / Y]$.

Recall that each node label is a set of propositions $\tau$, and each edge label is a mapping $\rho$ describing
the relationship between names in neighboring nodes.
As before, we write $\EdgeLabels$ to denote the set of functions $\rho$ such that the binary predicate $E_\rho$ is in $\sigcode{\sigtarget}{k}$,
and we write $\NodeLabels$ for the set of internally consistent node labels.
We write $\BagNodeLabels$ and $\IntNodeLabels$ for the subset of $\NodeLabels$
allowed in a bag node and interface node, respectively, in a $\sigmag$-guarded-interface tree
of width $k$.
We write $\bagextend{\tau_0}$ for the subset of labels from $\BagNodeLabels$
that contain the encoded atoms of $\tau_0$,
i.e.~labels that extend $\tau_0$ with additional encoded atoms
(but may differ in domain size).
Finally, for $S \subseteq \EdgeLabels \times Q$,
we write $\EdgeLabelsRestrict{S}$ to denote
the set of edge labels appearing in $S$.

Given $\tau \in \NodeLabels$ and a set of names $I$,
we let $\guardg{I}{\tau}$ denote some $P \in \tau$ that is in $\sigmag$
and satisfies $\indices{P} = I$
($\bot$ if no such $P$ exists, and $\top$ if $I$ is of size at most~1).
We let $\GuardDom{\rho}{\tau}$ denote $\guardg{\dom{\rho}}{\tau}$,
and let $\GuardRng{\rho}{\tau}$ be the result of
replacing the names appearing in $\guardg{\dom{\rho}}{\tau}$ according to $\rho$.
The idea is that these are macros that give a $\sigmag$-guard related to $\tau$.

We also define some auxiliary formulas to improve readability in the
formulas in this section.
For $\tau \in \NodeLabels$,
we define
\begin{align*}
\ExactLabel{\tau} &:=
\bigwedge_{P \in \tau} P \wedge \bigwedge_{P \in \propsneg{\tau}} \neg P
\\
\GNLabel{\tau} &:= \bigwedge_{P \in \tau} P \wedge
\bigwedge_{P \in \propsguardedneg{\tau}} \quad \bigwedge_{\substack{\indices{P} \subseteq I \subseteq \set{1,\dots,k} \\ \text{s.t.} \guardg{I}{\tau} \neq \bot }}
\left(\guardg{I}{\tau} \wedge \neg P \right) .
\end{align*}
where $\propsneg{\tau}$ consists of unary propositions from $\sigcode{\sigtarget}{k}$
that do not appear in $\tau$,
and $\propsguardedneg{\tau}$ consists of unary propositions $P$ from $\sigcode{\sigtarget}{k}$
that do not appear in $\tau$ but use names that are $\sigmag$-guarded by some $P' \in \tau$.
Both $\ExactLabel{\tau}$ and $\GNLabel{\tau}$ assert all of the positive information
about the propositions in $\tau$.
However, $\GNLabel{\tau}$
only asserts some of the negative information,
namely it only asserts that some proposition is missing from $\tau$
if the indices used by that proposition are $\sigmag$-guarded in $\tau$.
Hence, $\GNLabel{\tau}$ can be seen as an approximation of $\ExactLabel{\tau}$:
if $\cT, v \models \ExactLabel{\tau}$ then $\cT, v \models \GNLabel{\tau}$,
but the converse does not always hold.
Note that $\GNLabel{\tau}$ is $\gnfpk[\sigtarget,\sigmag]$-safe
but $\ExactLabel{\tau}$ is not.

We now proceed with the series of transformations taking $\psi$ to a $\gnfpk[\sigtarget,\sigmag]$-safe version
that is equivalent over plump trees.

\paragraph*{Step 1: Refinement to $\sigmag$-guarded-interface $\sigcode{\sigtarget}{k}$-trees}

By Theorem~\ref{thm:muaut}, there is some $\mu$-automaton $\cA'$
that is equivalent to $\phi^\mu$.
This automaton runs on $\sigcode{\sigoriginal}{m}$-trees.
However, for the purposes of the backward mapping,
we are only interested in it running on $\sigcode{\sigtarget}{k}$-trees---and
more specifically on plump $\sigmag$-guarded-interface trees that
encode plump $\sgnkinvar[\sigtarget,\sigmag]$-unravellings of
$\sigoriginal$-structures.

Therefore, in this first step, we make some straightforward modifications to this automaton
that are correct on $\sigmag$-guarded-interface $\sigcode{\sigtarget}{k}$ trees,
and then convert it back into an $\Lmu[\sigcode{\sigtarget}{k}]$-formula.
The shape of the resulting formula is described in Claim~\ref{claim:step2}, but we defer the formal statement until after we have described these modifications.
Along the way, we also introduce some notation and terminology
(e.g.\ bag states, interface states, etc.)
that we will use in the later steps.

Let $\cA'$ be the $\mu$-automaton
with state set $Q'$, transition function $\delta'$, and priority function $\Omega'$
that runs on $\sigcode{\sigoriginal}{m}$-trees and is equivalent to $\phi^\mu$ (see Theorem~\ref{thm:muaut}).

We start by refining this automaton to run on $\sigcode{\sigtarget}{k}$-trees.
This means that we can limit the alphabet for the automaton to just node labels in $\NodeLabels$
and edges in $\EdgeLabels$, and eliminate all of the transition function information
related to labels outside of this.

We then refine it to run on $\sigmag$-guarded-interface trees.
In particular, we can modify the automaton so that states in interface nodes
are disjoint from the states of the automaton in bag nodes.
We will refer to \emph{interface states} and \emph{bag states}
as appropriate.
Because a guarded-interface tree alternates between
interface nodes and bag nodes,
it is possible to enforce that
\begin{enumerate}
\item bag states are assigned priority 0,
\item the initial state is assigned the maximum priority,
\item for all $S \in \delta(s,\tau)$
and $(\rho,s') \in S$,
$\dom{\rho}$ is strictly $\sigmag$-guarded in $\tau$,
and
\item for all bag states $r$,
$S \in \delta(r,\tau)$,
and $\rho \in \EdgeLabels$
such that $\dom{\rho}$ is strictly $\sigmag$-guarded in $\tau$,
there is some $(\rho,q) \in S$.
\end{enumerate}
Making these changes to the automaton results in
only a polynomial blow-up in the size of the automaton.
Let $\cA$ be the resulting automaton with state set $Q$,
transition function $\delta$, and priority function $\Omega$.

We can then write in the usual way (see, e.g.,~\cite{Walukiewicz01})
a vectorial $\Lmu[\sigcode{\sigtarget}{k}]$-formula $\psi$
describing the operation of~$\cA$:
\begin{align*}
\lambda_{n'} X_{s_{n'}} \dots \lambda_1 X_{s_1} .
\begin{pmatrix}
\delta_{s_1} \\
\vdots \\
\delta_{s_{n'}}
\end{pmatrix}
\end{align*}
where $s_1,\dots,s_{n'}$ is an ordering of the states based on the priority
(from least to greatest priority),
$\lambda_i$ is $\mu$ (respectively, $\nu$)
if $s_i$ has an odd (respectively, even) priority,
and $\delta_s$ are transition formulas defined below.
We can assume that the ordering is chosen so that (for some $i$)
$s_1,\dots,s_i$ consists of bag states,
no bag states are present in $s_{i+1},\dots,s_{n'}$,
and $s_{n'}$ is the initial state.
We will refer to $X_{s_i}$ as an \emph{interface predicate} (respectively, \emph{bag predicate})
if $s_i$ is an interface state (respectively, bag state).

The formulas $\delta_s$ describing the transitions from state $s$ are defined as follows:
\begin{align*}
\delta_s &:=
\bigvee_{\tau \in \NodeLabels}
\left(
\ExactLabel{\tau} \wedge
\delta_{s,\tau}
\right)
\\
\delta_{s,\tau} &:=
\bigvee_{S \in \delta(s,\tau)} \left(
\bigwedge_{(\rho,s') \in S} \dmodality{\rho} X_{s'} \wedge
\bigwedge_{\rho \in \EdgeLabels} \bmodality{\rho} \bigvee_{(\rho,s') \in S} X_{s'} \right)
\end{align*}
This captures precisely the meaning of the transition function in a $\mu$-automaton.
The idea is that it picks out exactly the label $\tau$ at the current node,
and then ensures that the successors of this node satisfy the requirements
specified by the transition function when in state $s$ and at a position with label $\tau$.

In order to improve readability,
from now on
we will use $q$ to range over the interface states,
$r$ to range over the bag states,
and $s$ to range over both of these.
Because the priority of bag states is 0
and the automaton alternates between interface and bag states,
we can eliminate all bag predicates by inlining the formulas $\delta_r$
any time a bag predicate $X_r$ appears.
While we are doing this, we can further refine the transition formulas
based on whether it is a bag state or an interface state.

For all bag states $r$,
we construct $\delta^2_{r,\tau}$ from $\delta_{r,\tau}$
by performing the following operations:
\begin{itemize}
\item substitute $\bigwedge_{(\rho,q) \in S}
\dmodality{\rho} (\GuardRng{\rho}{\tau} \wedge D_{\size{\codom{\rho}}} \wedge X_q)$
for $\bigwedge_{(\rho,q) \in S} \dmodality{\rho} X_q$;
\item substitute $\bigwedge_{\rho \in \EdgeLabelsRestrict{S}}
\bmodality{\rho} (\GuardRng{\rho}{\tau} \wedge D_{\size{\codom{\rho}}} \rightarrow \bigvee_{(\rho,q) \in S} X_{q})$
for \\ $\bigwedge_{\rho \in \EdgeLabels} \bmodality{\rho} \bigvee_{(\rho,q) \in S} X_{q}$.
\end{itemize}
Guarding the range of $\rho$-successors
is correct,
since in a $\sigmag$-guarded-interface tree,
every successor of a bag node is an interface node with a strictly $\sigmag$-guarded domain
that is a subset of the bag domain.
It is correct to restrict the conjunction over $\rho \in \EdgeLabels$
to just $\rho \in \EdgeLabelsRestrict{S}$
since we have enforced that $S \in \delta(r,\tau)$ satisfies the property that
$\EdgeLabelsRestrict{S}$ includes every possible outgoing edge label when the node label is $\tau$.

Likewise, for all interface states $q$,
we construct formulas $\delta^2_{q,\tau_0}$ from $\delta_{q,\tau_0}$
using the following operations:
\begin{itemize}
\item substitute $\bigwedge_{(\rho_0,r) \in S} \GuardDom{\rho_0}{\tau_0} \wedge \dmodality{\rho_0} \delta^{2,\tau_0}_r$
for $\bigwedge_{(\rho_0,r) \in S} \dmodality{\rho_0} X_r$;
\item substitute $\bigwedge_{\rho_0 \in \EdgeLabels} \GuardDom{\rho_0}{\tau_0} \wedge
\bmodality{\rho_0} \bigvee_{(\rho_0,r) \in S} \delta^{2,\tau_0}_r$
for \\ $\bigwedge_{\rho_0 \in \EdgeLabels} \bmodality{\rho} \bigvee_{(\rho_0,r) \in S} X_r$;
\end{itemize}
where $\delta^{2,\tau_0}_r$ is obtained from $\delta_{r}$ by
\begin{itemize}
\item substituting $\bigvee_{\tau \in \bagextend{\tau_0}}
\left(
\ExactLabel{\tau} \wedge
\delta^2_{r,\tau}
\right)$
for \\ $\bigvee_{\tau \in \NodeLabels}
\left(
\ExactLabel{\tau} \wedge
\delta_{r,\tau}
\right)$.
\end{itemize}
Guarding the domain of $\rho$-successors is correct,
since in a $\sigmag$-guarded-interface tree,
every interface node has a strictly $\sigmag$-guarded domain,
even though the domain of the successor need not be guarded.
It is correct to replace the disjunction over $\tau \in \NodeLabels$
with $\tau \in \bagextend{\tau_0}$,
since in a $\sigmag$-guarded-interface tree,
any bag node must extend the label $\tau_0$ of its parent.

Finally,
for all interface states $q$,
we construct $\delta^2_{q}$ from $\delta_q$ by
\begin{itemize}
\item substituting
$\bigvee_{\tau_0 \in \IntNodeLabels}
\left(
\GNLabel{\tau_0} \wedge
\delta^2_{q,\tau_0}
\right)$
for \\
$\bigvee_{\tau_0 \in \NodeLabels}
\left(
\ExactLabel{\tau_0} \wedge
\delta_{q,\tau_0}
\right)$.
\end{itemize}
It is correct to replace the disjunction over $\NodeLabels$
with $\IntNodeLabels$ since this is an interface state formula.
Replacing $\ExactLabel{\tau_0}$ with $\GNLabel{\tau_0}$ is correct,
since $\ExactLabel{\tau_0}$ is equivalent to $\GNLabel{\tau_0}$
for all $\tau_0 \in \IntNodeLabels$
(since the domain of $\tau_0$ must be strictly $\sigmag$-guarded).

Note that after these substitutions,
there are no occurrences of bag predicates $X_r$ in the vectorial components,
so these fixpoint variables can be eliminated.
This leaves the interface predicates $X_q$ and vectorial components $\delta^2_q$.

The resulting formula now satisfies the conditions described in the following claim.

\begin{clm}\label{claim:step2}
There is a formula $\psi^2 \in \Lmu[\sigcode{\sigtarget}{k}]$ obtained effectively from $\phi^\mu \in \Lmu[\sigcode{\sigoriginal}{m}]$ of the form
\begin{align*}
\lambda_n X_{q_n} \dots \lambda_1 X_{q_1} .
\begin{pmatrix}
\delta^2_{q_1} \\
\vdots \\
\delta^2_{q_n}
\end{pmatrix}
\end{align*}
such that for all $i$, $\lambda_i \in \set{\mu,\nu}$
and for all $q \in \set{q_1,\dots,q_n}$, $\delta_q^2$ is of the form
\begin{align*}
\delta^{2}_q &:=
\bigvee_{\tau_0 \in \IntNodeLabels}
\left(
\GNLabel{\tau_0} \wedge
\delta^2_{q,\tau_0}
\right)
\\
\delta^2_{q,\tau_0} &:=
\bigvee_{S \in \delta(q,\tau_0)} \Big(
\bigwedge_{(\rho_0,r) \in S} \GuardDom{\rho_0}{\tau_0} \wedge \dmodality{\rho_0} \delta^{2,\tau_0}_r \wedge \\
&\qquad\qquad\qquad \bigwedge_{\rho_0 \in \EdgeLabels} \GuardDom{\rho_0}{\tau_0} \wedge \bmodality{\rho_0} \bigvee_{(\rho_0,r) \in S} \delta^{2,\tau_0}_r \Big) \\
\delta^{2,\tau_0}_r &:=
\bigvee_{\tau \in \bagextend{\tau_0}}
\left(
\ExactLabel{\tau} \wedge
\delta^2_{r,\tau}
\right)
\\
\delta^2_{r,\tau} &:=
\bigvee_{S \in \delta(r,\tau)} \Big(
\bigwedge_{(\rho,q) \in S} \dmodality{\rho} (\GuardRng{\rho}{\tau} \wedge D_{\size{\codom{\rho}}} \wedge X_{q}) \ \wedge \\
&\qquad\qquad\bigwedge_{\rho \in \EdgeLabelsRestrict{S}}
\bmodality{\rho} ( \GuardRng{\rho}{\tau} \wedge D_{\size{\codom{\rho}}} \rightarrow \bigvee_{(\rho,q) \in S} X_{q} ) \Big) .
\end{align*}
Moreover,
for all $\sigmag$-guarded-interface $\sigcode{\sigtarget}{k}$-trees $\tree$,
and for all interface nodes $v$, we have
$\tree,v \models \psi^2$ iff $\tree,v \models \phi^\mu$.
\end{clm}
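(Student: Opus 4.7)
The plan is to proceed via automata. By Theorem~\ref{thm:muaut} we obtain a $\mu$-automaton $\cM = \tuple{\sigcode{\sigoriginal}{m}, Q, q_0, \delta, \Omega}$ equivalent to $\phi^\mu$, which we adapt to run on $\sigcode{\sigtarget}{k}$-labels by treating propositions outside $\sigcode{\sigtarget}{k}$ as absent when evaluating $\delta$. We then translate $\cM$ back to $\Lmu[\sigcode{\sigtarget}{k}]$ via the standard vectorial-fixpoint encoding of $\mu$-automata, but structured so as to mirror the bag/interface alternation of the target trees.

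In detail, the root of a $\sigmag$-guarded-interface tree has empty domain and is therefore an interface node, and the node type alternates as one descends. I would split the states of $\cM$ accordingly into interface-states and bag-states; since a bag-state is only ever reached as a one-step intermediary before the automaton is back at an interface-state, I inline its transitions instead of allocating a separate fixpoint variable. This yields one variable $X_q$ per interface-state $q$, with priority $\Omega(q)$ dictating both the position of $X_q$ and the choice of binder $\lambda_i \in \set{\mu,\nu}$ in the outer vectorial fixpoint. The body $\delta^2_q$ then case-splits on the interface label $\tau_0 \in \IntNodeLabels$, on the chosen transition $S \in \delta(q,\tau_0)$, and, inside each diamond, on the bag label $\tau \in \bagextend{\tau_0}$ via the inlined subformula $\delta^{2,\tau_0}_r$; Eve's requirement that each $(\rho_0,r) \in S$ be witnessed becomes a $\dmodality{\rho_0}$-conjunct and her covering requirement becomes a $\bmodality{\rho_0}$-conjunct, and analogously one level down for the bag-to-interface step. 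This reproduces exactly the syntactic form in the claim.

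The only subtleties are the use of $\GNLabel$ at interface levels and the extra $\sigmag$-guards $\GuardDom{\rho_0}{\tau_0}$ and $\GuardRng{\rho}{\tau}$. For the first, observe that at an interface node the domain is strictly $\sigmag$-guarded, so every unary predicate from $\sigcode{\sigtarget}{k}$ whose indices lie in that domain is $\sigmag$-guarded in $\tau_0$; hence $\GNLabel{\tau_0}$ and $\ExactLabel{\tau_0}$ coincide on interface nodes and the substitution is semantics-preserving. For the second, every interface node in a $\sigmag$-guarded-interface tree carries its strict $\sigmag$-guard, so $\GuardDom{\rho_0}{\tau_0}$ and $\GuardRng{\rho}{\tau}$ are satisfied whenever the adjacent interface node exists; any transition of $\cM$ for which they would evaluate to $\bot$ is unreachable on such trees and may be pruned. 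The main obstacle is the bookkeeping around the inlining, in particular that bag-state priorities do not alter the parity game; this is routine because each bag-state appears exactly once between consecutive interface-state visits, so on any infinite play the acceptance condition is determined by the priorities of the interface-states alone.
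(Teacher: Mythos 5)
Your proposal follows essentially the same route as the paper: convert $\phi^\mu$ to a $\mu$-automaton, restrict it to $\sigcode{\sigtarget}{k}$-labels, split states into interface and bag states, inline the bag states (justified by the fact that their priorities cannot affect acceptance along the strict alternation), and then observe that $\GNLabel{\tau_0}$ coincides with $\ExactLabel{\tau_0}$ at interface nodes and that the inserted $\sigmag$-guards are vacuous there. The argument is correct and matches the paper's proof in all essentials.
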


\paragraph*{Step 2: Refinement to plump trees}

The subformulas
$\dmodality{\rho} \delta^2_r$ and $\bmodality{\rho} \bigvee_{(\rho,r) \in S} \delta^2_r$
both may have negations that are not allowed in $\gnfpk[\sigtarget,\sigmag]$-safe formulas
since the transition function formula $\delta^2_r$
depends on knowing the exact label $\tau$ at the current node:
all of the positive information about which propositions hold
and all negative information about which propositions do not.
In a plump tree, however, we will see that it is not necessary
to know the exact label;
instead, we will make a number of modifications to the formulas,
which will allow us to replace
$\ExactLabel{\tau}$ with $\GNLabel{\tau}$.

We first prove some auxiliary claims that will help with this.
Recall that we say $\vec{V}$ is a \emph{safe valuation}
for predicates $\vec{X}$
if it satisfies the following property:
if $w$ and $w'$ are interface nodes that are roots of bisimilar subtrees,
then
$w$ is in the valuation for $X \in \vec{X}$ iff
$w'$ is in the valuation for $X \in \vec{X}$.

\begin{clm}\label{claim:gn-to-exactlabel}
Let $v$ be an interface node with label $\tau_0$
in a plump $\sigmag$-guarded-interface $\sigcode{\sigtarget}{k}$-tree~$\cT$.

For each $\rho_0$-child $w$ of $v$ and
each $\tau \in \bagextend{\tau_0}$ such that
$\cT, w \models
\GNLabel{\tau}$,
there is some $\rho_0$-child $w'$ of $v$ with label $\tau$
such that
$\cT, w' \models
\ExactLabel{\tau}$.

Moreover,
if $\vec{V}$ is a safe valuation for the interface predicates~$\vec{X}$,
then
$\cT, w, \vec{V} \models \delta^2_{r,\tau}$ iff $\cT, w', \vec{V} \models \delta^2_{r,\tau}$.
\end{clm}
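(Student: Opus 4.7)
The plan is to invoke plumpness at $v$ to produce $w'$, and then exploit the bisimilarities guaranteed by plumpness together with the safe-valuation property to transfer $\delta^2_{r,\tau}$ between $w$ and $w'$.

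For the first part, let $\tau_w$ denote the label of $w$. The aim is to verify that $\tau$ is a $(\sigmag, I)$-safe restriction of $\tau_w$ for $I = \codom{\rho_0}$; plumpness applied at the interface node $v$ then yields the desired $\rho_0$-child $w'$ with label exactly $\tau$, so that $\cT, w' \models \ExactLabel{\tau}$. Inclusion $\tau \subseteq \tau_w$ follows from the positive conjuncts of $\GNLabel{\tau}$ together with $\tau \in \bagextend{\tau_0}$. Agreement of $\tau$ and $\tau_w$ on $\codom{\rho_0}$ follows from consistency of $\cT$: both labels extend $\tau_0$ along $\rho_0$, so they encode the same atoms on indices in $\codom{\rho_0}$. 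The crucial condition is agreement on every $J \subseteq \set{1,\dots,k}$ that is $\sigmag$-guarded in $\tau$: any hypothetical $R_{\vec i} \in \tau_w \setminus \tau$ with $\indices{R_{\vec i}} \subseteq J$ would be guarded in $\tau$ by the same $P \in \tau \cap \sigmag$ that witnesses $\sigmag$-guardedness of $J$, so $R_{\vec i}$ would lie in $\propsguardedneg{\tau}$; but then $\cT, w \models \GNLabel{\tau}$ would force $\cT, w \models \neg R_{\vec i}$, contradicting $R_{\vec i} \in \tau_w$. Hence $\tau$ is a $(\sigmag, \codom{\rho_0})$-safe restriction of $\tau_w$ and plumpness supplies $w'$.

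For the second part, observe that $\delta^2_{r,\tau}$ only references $\rho$-successors at which $\GuardRng{\rho}{\tau}$ holds, which by tree consistency means that $\dom{\rho}$ is strictly $\sigmag$-guarded in $\tau$. This is precisely the side condition under which plumpness properties (2) and (3) give a bisimilar counterpart to a $\rho$-child of $w$ among the $\rho$-children of $w'$, and vice versa. For the $(\Rightarrow)$ direction, fix a witnessing $S \in \delta(r,\tau)$: each diamond conjunct furnishes a $\rho$-child $u$ of $w$, and plumpness property (3) produces a $\rho$-child $u'$ of $w'$ rooting a bisimilar subtree; each $\rho$-child $u'$ of $w'$ constrained by a box conjunct admits, via plumpness property (2), a bisimilar $\rho$-child $u$ of $w$. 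Bisimulation preserves node labels, so $\GuardRng{\rho}{\tau} \wedge D_{\size{\codom{\rho}}}$ transfers between $u$ and $u'$, and because $\vec{V}$ is a safe valuation on interface predicates, membership of $u$ in the valuation of $X_q$ coincides with that of $u'$; hence every diamond and box conjunct is reproduced at $w'$ with the same choice of $S$. The $(\Leftarrow)$ direction is symmetric, interchanging the roles of properties (2) and (3).

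The main obstacle is verifying the third clause of safe restriction, because this is where the $\gnfpk[\sigtarget,\sigmag]$-safe approximation $\GNLabel{\tau}$ must carry the full force of $\ExactLabel{\tau}$. The argument hinges on pairing, for each candidate index set $J$, a suitable guard from $\tau \cap \sigmag$ so that the guarded-negative conjuncts of $\GNLabel{\tau}$ already rule out every unwanted atom of $\tau_w$ on $J$. Once that pairing is in place, the remaining verifications are straightforward applications of plumpness, tree consistency, and bisimulation invariance of the $\mu$-calculus together with the safe-valuation hypothesis.
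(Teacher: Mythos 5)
Your proof is correct and follows essentially the same route as the paper's: establish that $\tau$ is a $(\sigmag,\codom{\rho_0})$-safe restriction of the actual label of $w$ (with the guarded-negation conjuncts of $\GNLabel{\tau}$ doing the work for the third clause), invoke plumpness to obtain $w'$, and then transfer $\delta^2_{r,\tau}$ using plumpness properties (2) and (3) together with the safe-valuation hypothesis. The only (immaterial) divergence is that you derive agreement on $\codom{\rho_0}$ from tree consistency plus the sandwich $\tau_0 \subseteq \tau \subseteq \tau_w$, whereas the paper notes that $\codom{\rho_0}$ is itself strictly $\sigmag$-guarded in $\tau$ and reuses the third-clause argument; both are fine.
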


\begin{proof}
Let $w$ be a $\rho_0$-child of $v$ with label $\tau_1$.
It must be the case that $\tau_1 \in \bagextend{\tau_0}$ by the
properties of $\sigmag$-guarded-interface trees.
Let $\tau \in \bagextend{\tau_0}$ such that
$w \models
\GNLabel{\tau}$
(for notational simplicity, we write $w \models \dots$
rather than $\cT, w, \vec{V} \models \dots$).

We first show that $\tau$ is a $(\sigmag,\codom{\rho_0})$-safe restriction of $\tau_1$.

It is clear that $\tau_0 \subseteq \tau \subseteq \tau_1$
since both $\tau$ and $\tau_1$ are in $\bagextend{\tau_0}$,
and if $w \models \GNLabel{\tau}$ then
all $P \in \tau$ must appear in the label $\tau_1$ of $w$.

Consider some proposition $P$ such that $\indices{P}$ is $\sigmag$-guarded in $\tau$.
We must show that $P \in \tau$ iff $P \in \tau_1$.
If $P \in \tau$,
then $P \in \tau_1$ since $\tau \subseteq \tau_1$.
If $P \notin \tau$,
then $\GNLabel{\tau}$ asserts
that $\neg P$ holds (since $\indices{P}$ is $\sigmag$-guarded),
so it must be the case that $P \notin \tau_1$.

Now consider some proposition $P$ using only names in $\codom{\rho_0}$.
Note that $\dom{\rho_0}$ is strictly $\sigmag$-guarded in $\tau_0$,
and since $\tau \supseteq \tau_0$,
$\codom{\tau_0}$ is strictly $\sigmag$-guarded in $\tau$.
Hence, $\indices{P}$ is $\sigmag$-guarded in $\tau$,
and similar reasoning as above implies
that $P \in \tau$ iff $P \in \tau_1$.

This is enough to conclude that $\tau$ is a $(\sigmag,\codom{\rho_0})$-safe restriction of $\tau_1$.
Therefore, by plumpness, there is some $w'$ such that $w' \models \ExactLabel{\tau}$ as desired.

Now make the further assumption that
$w \models \delta^2_{r,\tau}$;
we must show that
$w' \models \delta^2_{r,\tau}$.
If $w \models \delta^2_{r,\tau}$
then there is some $S \in \delta(r,\tau)$
such that
\begin{itemize}
\item (existential requirement) for every $(\rho,q) \in S$, there is some child $u$ of $w$ where $X_q$ holds,
and
\item (universal requirement) for every $\rho$-child $u$ of $w$ such that $\rho \in \EdgeLabelsRestrict{S}$,
there is some $q$ such that $(\rho,q) \in S$ and $X_q$ holds at $u$.
\end{itemize}
We claim that the same property holds at $w'$, using the same $S \in \delta(r,\tau)$.
For each $\rho \in \EdgeLabelsRestrict{S}$,
$\dom{\rho}$ must be strictly $\sigmag$-guarded in $\tau$
(otherwise, the existential requirement would not be fulfilled).
Hence, by the definition of a plump tree,
for each $\rho \in \EdgeLabelsRestrict{S}$ and each $\rho$-child $u$ of $w$,
there is a corresponding $\rho$-child $u'$ in $w'$
such that the subtrees rooted at $u$ and $u'$ are bisimilar.
If $X_q$ holds at $u$, and this is used to satisfy some
existential requirement with $(\rho,q) \in S$,
then $X_q$ also holds at $u'$
so this existential requirement is also satisfied for $w'$
(this relies on the fact that the valuations for $\vec{X}$ are safe).
For the universal requirement at some $\rho$-child $u'$ of $w'$ for $\rho \in \EdgeLabelsRestrict{S}$,
consider the corresponding child $u$ of $w$
such that the subtrees rooted at $u$ and $u'$ are bisimilar,
which is guaranteed by plumpness.
Since the universal requirement is satisfied at $u$,
there is some $X_q$ holding at $u$ with $(\rho,q) \in S$.
Since the valuations for $\vec{X}$ are safe,
this means that $X_q$ also holds at $u'$,
so the universal requirement at $u'$ holds.
Using this reasoning, we can conclude that $w' \models \delta^2_{r,\tau}$.

The reasoning is similar in the other direction,
assuming $w' \models \delta^2_{r,\tau}$.

Thus, we can conclude that
$w' \models \ExactLabel{\tau}$,
and
$w' \models \delta^2_{r,\tau}$ iff $w \models \delta^2_{r,\tau}$.
\end{proof}

Using the previous claim, we will prove that
we can replace
$\ExactLabel{\tau}$ with $\GNLabel{\tau}$
when considering equivalence only over plump trees.
The exact way we do this replacement, however, will depend on whether
$\ExactLabel{\tau}$ is under a diamond modality or a box modality.
We define the following auxiliary formulas to handle these cases
\begin{align*}
\delta^{\lozenge,\tau_0}_{r} &:=
\bigvee_{\tau \in \bagextend{\tau_0}} \left( \GNLabel{\tau} \wedge
\delta^2_{r,\tau} \right)\\
\delta^{\square,\tau_0}_{S} &:=
\bigwedge_{\tau \in \bagextend{\tau_0} } \Big( \GNLabel{\tau} \rightarrow
\bigvee_{(\rho,r) \in S} \delta^2_{r,\tau} \Big)
\end{align*}
and prove the correctness of the following transformation:

\begin{clm}\label{claim:plump}
Let $v$ be an interface node with label $\tau_0$
in a plump $\sigmag$-guarded-interface $\sigcode{\sigtarget}{k}$-tree $\cT$,
and let $\vec{V}$ be a safe valuation for the interface predicates~$\vec{X}$.
Then
\begin{align}
\tag{1} \cT,v,\vec{V} \models \dmodality{\rho} \delta^2_r
&\text{\quad iff \quad}
\cT,v,\vec{V} \models \dmodality{\rho} \delta^{\lozenge,\tau_0}_r
 \,,\label{eq:plump1}
\\
\tag{2} \cT,v,\vec{V} \models \bmodality{\rho} \bigvee_{(\rho,r) \in S} \delta^2_r
&\text{\quad iff \quad}
\cT,v,\vec{V} \models \bmodality{\rho} \delta^{\square,\tau_0}_S
 \,.\label{eq:plump2}
\end{align}
\end{clm}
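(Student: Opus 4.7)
The plan is to unpack the definitions
\begin{align*}
\delta^{2,\tau_0}_r &= \bigvee_{\tau \in \bagextend{\tau_0}} \bigl(\ExactLabel{\tau} \wedge \delta^2_{r,\tau}\bigr), \\
\delta^{\lozenge,\tau_0}_r &= \bigvee_{\tau \in \bagextend{\tau_0}} \bigl(\GNLabel{\tau} \wedge \delta^2_{r,\tau}\bigr),
\end{align*}
and argue each of the four implications using just two facts: trivially $\ExactLabel{\tau}$ entails $\GNLabel{\tau}$, and Claim~\ref{claim:gn-to-exactlabel}, which lets one transport the truth of $\delta^2_{r,\tau}$ between a $\rho$-child $w$ satisfying $\GNLabel{\tau}$ and a sibling $\rho$-child $w'$ whose exact label is $\tau$.

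For (1), the $(\Rightarrow)$ direction is immediate: any witness $w \models \ExactLabel{\tau}\wedge\delta^2_{r,\tau}$ already satisfies $\GNLabel{\tau}\wedge\delta^2_{r,\tau}$, so $w$ witnesses $\dmodality{\rho}\delta^{\lozenge,\tau_0}_r$. For $(\Leftarrow)$, a witness $w$ satisfying $\GNLabel{\tau}\wedge\delta^2_{r,\tau}$ yields, via Claim~\ref{claim:gn-to-exactlabel}, a $\rho$-child $w'$ of $v$ with label exactly $\tau$ and (by the moreover clause) $w' \models \delta^2_{r,\tau}$; then $w'$ witnesses $\dmodality{\rho}\delta^{2,\tau_0}_r$.

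For (2), the $(\Leftarrow)$ direction is similarly routine: fix any $\rho$-child $u$ and let $\tau$ be its exact label; then $u \models \ExactLabel{\tau}$, hence $u \models \GNLabel{\tau}$, so the universal in $\delta^{\square,\tau_0}_S$ instantiated at $\tau$ gives some $(\rho,r)\in S$ with $u \models \delta^2_{r,\tau}$, and combined with $\ExactLabel{\tau}$ this provides a disjunct of $\delta^{2,\tau_0}_r$. The $(\Rightarrow)$ of (2) is the main obstacle. To handle it, I would fix an arbitrary $\rho$-child $u$ and an arbitrary $\tau \in \bagextend{\tau_0}$ with $u\models\GNLabel{\tau}$, and apply Claim~\ref{claim:gn-to-exactlabel} to obtain a sibling $u'$ of $u$ with exact label $\tau$ and the equivalence $u \models \delta^2_{r,\tau} \Leftrightarrow u' \models \delta^2_{r,\tau}$ for every state $r$. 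Since $u'$ is itself a $\rho$-child of $v$, the hypothesis $\bmodality{\rho}\bigvee_{(\rho,r)\in S}\delta^{2,\tau_0}_r$ forces $u' \models \delta^{2,\tau_0}_r$ for some $(\rho,r)\in S$; because $u'$'s unique exact label is $\tau$, the witnessing disjunct of $\delta^{2,\tau_0}_r$ must be the $\tau$-disjunct, yielding $u' \models \delta^2_{r,\tau}$, and then $u \models \delta^2_{r,\tau}$ follows from the moreover equivalence.

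The principal subtlety, then, is in this last step: one must select the correct disjunct at $u'$ (the one indexed by $u'$'s actual label $\tau$) and rely on the moreover clause of Claim~\ref{claim:gn-to-exactlabel} to propagate the resulting $\delta^2_{r,\tau}$ back to the original node $u$. The safeness hypothesis on $\vec{V}$ is precisely what justifies invoking that clause, so it must be used explicitly. Everything else is bookkeeping on exact versus guarded-negation labels.
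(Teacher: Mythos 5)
Your proposal is correct and follows essentially the same route as the paper's proof: the easy directions by weakening $\ExactLabel{\tau}$ to $\GNLabel{\tau}$ or by instantiating the universal at the child's exact label, and the two nontrivial directions by invoking Claim~\ref{claim:gn-to-exactlabel} to produce a sibling with exact label $\tau$ and transporting $\delta^2_{r,\tau}$ back and forth via its ``moreover'' clause. Your identification of the $(\Rightarrow)$ direction of (2) as the delicate step, and of the safe-valuation hypothesis as what licenses the transport, matches the paper exactly.
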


\begin{proof}[Proof of claim]
Fix a plump $\sigmag$-guarded-interface $\sigcode{\sigtarget}{k}$-tree $\cT$,
an interface node $v$ with label~$\tau_0$,
and a safe valuation $\vec{V}$ for the interface predicates~$\vec{X}$.
We write $v \models \psi$ for $\cT,v,\vec{V} \models \psi$.
\begin{enumerate}
\item[\eqref{eq:plump1}]
We start with the easier left-to-right direction.
Assume $v \models \dmodality{\rho} \delta^2_r$.
Then there is some $\rho$-child $w$ of $v$
such that $w \models \delta^2_r$.
Let $\tau$
be the exact label at $w$;
note that $\tau \in \bagextend{\tau_0}$ by properties of $\sigmag$-guarded-interface trees.
Then
$w \models \ExactLabel{\tau} \wedge \delta^2_{r,\tau}$.
This implies that
$w \models
\GNLabel{\tau} \wedge
\delta^2_{r,\tau}$.
This is enough to conclude that $w \models \delta^{\lozenge,\tau_0}_r$
and $v \models \dmodality{\rho} \delta^{\lozenge,\tau_0}_r$.

Next, we prove the right-to-left direction,
which makes use of Claim~\ref{claim:gn-to-exactlabel} (and hence makes use of plumpness).
Assume that $v \models \dmodality{\rho} \delta^{\lozenge,\tau_0}_r$.
Then there is some $\rho$-child $w$ and some $\tau \in \bagextend{\tau_0}$ such that
$w \models
\GNLabel{\tau} \wedge
\delta^2_{r,\tau}$.
By Claim~\ref{claim:gn-to-exactlabel}, there is some $\rho$-child $w'$ such that
$w' \models
\ExactLabel{\tau} \wedge
\delta^2_{r,\tau}$,
so $v \models \dmodality{\rho} \delta^2_r$ as desired.

\item[\eqref{eq:plump2}]
The easier direction is the right-to-left direction:
assume $v \models \bmodality{\rho} \delta^{\square,\tau_0}_S$.
Let $w$ be a $\rho$-child of $v$,
and let $\tau$ be the label at $w$.
It must be the case that $\tau \in \bagextend{\tau_0}$
by the properties of a $\sigmag$-guarded-interface tree.
Hence, by the definition of $\delta^{\square,\tau_0}_S$,
there is some $(\rho,r) \in S$,
such that $w \models \delta^2_{r,\tau}$.
This means $w \models
\GNLabel{\tau} \wedge
\delta^2_{r,\tau}$,
so $w \models \bigvee_{(\rho,r) \in S} \delta^2_r$.
Overall, this means $v \models \bmodality{\rho} \bigvee_{(\rho,r) \in S} \delta^2_r$
as desired.

Now assume $v \models \bmodality{\rho} \bigvee_{(\rho,r) \in S} \delta^2_r$
for the left-to-right direction.
Let $w$ be a $\rho$-child of $v$.
Consider $\tau \in \bagextend{\tau_0}$ such that
$w \models \GNLabel{\tau}$.
It suffices to show that $w \models \bigvee_{(\rho,r) \in S} \delta^2_{r,\tau}$.
By Claim~\ref{claim:gn-to-exactlabel},
there is a $\rho$-child $w'$ of $v$ such that
$w' \models
\ExactLabel{\tau}$.
Since $v \models \bmodality{\rho} \bigvee_{(\rho,r) \in S} \delta^2_r$
and the label at $w'$ is $\tau$,
this means that
$w' \models \delta^2_{r,\tau}$ for some $(\rho,r) \in S$.
By Claim~\ref{claim:gn-to-exactlabel},
$w \models \delta^2_{r,\tau}$ as well,
so $w \models \bigvee_{(\rho,r) \in S} \delta^2_{r,\tau}$
as required.
\qedhere
\end{enumerate}
\end{proof}

\noindent
This allows us to take $\psi^2$ from the previous step
and refine it further based on the assumption that
we are only interested in plump $\sigmag$-guarded-interface $\sigcode{\sigtarget}{k}$-trees.
The shape of the resulting formula is stated in the following claim:

\begin{clm}
There is a formula $\psi^3$ obtained effectively from $\psi^2$ such that
the vectorial component $\delta^3_q$ for each $q$ is of the form
\begin{align*}
\delta^{3}_q &:=
\bigvee_{\tau_0 \in \IntNodeLabels}
\left(
\GNLabel{\tau_0} \wedge
\delta^3_{q,\tau_0}
\right)
\\
\delta^3_{q,\tau_0} &:=
\bigvee_{S \in \delta(q,\tau_0)} \Big(
\bigwedge_{(\rho_0,r) \in S} \GuardDom{\rho_0}{\tau_0} \wedge \dmodality{\rho_0} \delta^{\lozenge,\tau_0}_r \wedge \\
&\qquad\qquad\qquad\bigwedge_{\rho_0 \in \EdgeLabels} \GuardDom{\rho_0}{\tau_0} \wedge \bmodality{\rho_0} \delta^{\square,\tau_0}_S \Big) \\
\delta^{\lozenge,\tau_0}_{r} &:=
\bigvee_{\tau \in \bagextend{\tau_0}} \left( \GNLabel{\tau} \wedge
\delta^3_{r,\tau} \right)\\
\delta^{\square,\tau_0}_{S} &:=
\bigwedge_{\tau \in \bagextend{\tau_0} } \Big( \GNLabel{\tau} \rightarrow
\bigvee_{(\rho,r) \in S} \delta^3_{r,\tau} \Big)
\\
\delta^3_{r,\tau} &:=
\bigvee_{S \in \delta(r,\tau)} \Big(
\bigwedge_{(\rho,q) \in S} \dmodality{\rho} (\GuardRng{\rho}{\tau} \wedge D_{\size{\codom{\rho}}} \wedge X_{q}) \ \wedge \\
&\qquad\qquad\bigwedge_{\rho \in \EdgeLabelsRestrict{S}}
\bmodality{\rho} ( \GuardRng{\rho}{\tau} \wedge D_{\size{\codom{\rho}}}  \rightarrow \bigvee_{(\rho,q) \in S} X_{q} ) \Big) .
\end{align*}
Moreover, for all plump $\sigcode{\sigtarget}{k}$-trees $\cT$
and for all interface nodes $v$,
we have $\cT,v \models \psi^3$ iff $\cT,v \models \psi^2$.
\end{clm}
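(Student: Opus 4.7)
The plan is to obtain $\psi^3$ from $\psi^2$ by a uniform syntactic rewrite: for each interface state $q$ and each label $\tau_0 \in \IntNodeLabels$, replace every occurrence of $\dmodality{\rho_0} \delta^{2,\tau_0}_r$ inside $\delta^2_{q,\tau_0}$ by $\dmodality{\rho_0} \delta^{\lozenge,\tau_0}_r$, and replace $\bmodality{\rho_0} \bigvee_{(\rho_0,r) \in S} \delta^{2,\tau_0}_r$ by $\bmodality{\rho_0} \delta^{\square,\tau_0}_S$, with $\delta^{\lozenge,\tau_0}_r$ and $\delta^{\square,\tau_0}_S$ defined as in the statement. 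The bag-level transition formulas $\delta^3_{r,\tau}$ retain the same shape as $\delta^2_{r,\tau}$ (they only mention the interface fixpoint variables $X_q$), and the outer vectorial fixpoint prefix $\lambda_n X_{q_n} \dots \lambda_1 X_{q_1}$ is inherited verbatim from $\psi^2$. The construction is clearly effective since $\IntNodeLabels$, $\bagextend{\tau_0}$, and $\EdgeLabels$ are finite.

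For correctness, I would fix a plump $\sigmag$-guarded-interface $\sigcode{\sigtarget}{k}$-tree $\cT$ and argue equivalence of $\psi^2$ and $\psi^3$ at every interface node $v$. The plan is to reduce to Claim~\ref{claim:plump} applied pointwise at $v$ and at the current value of the fixpoint variables $\vec{X}$. To do this I need the valuations assigned to each interface predicate $X_q$ during the semantic evaluation of the vectorial fixpoint to be \emph{safe}, i.e.\ closed under bisimilarity of the rooted subtrees at interface nodes. This will be proved by a side induction on the fixpoint approximants $\beta$: at $\beta = 0$ the valuations are $\emptyset$ or the full set (trivially safe); at a successor stage, bisimulation-invariance of the $\mu$-calculus (Proposition~\ref{prop:mu-automata-bisim}) applied to the body of the fixpoint, together with the inductive safeness of the valuations being fed in, ensures that the new approximant is again closed under bisimilarity; the limit case is the union (or intersection) of safe valuations and hence safe.

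Given safeness of the valuations, Claim~\ref{claim:plump}(1) directly justifies replacing $\dmodality{\rho_0} \delta^2_r$ by $\dmodality{\rho_0} \delta^{\lozenge,\tau_0}_r$, and Claim~\ref{claim:plump}(2) justifies replacing $\bmodality{\rho_0} \bigvee_{(\rho_0,r) \in S} \delta^2_r$ by $\bmodality{\rho_0} \delta^{\square,\tau_0}_S$. A small bookkeeping step is needed to align $\delta^{2,\tau_0}_r$ (which already restricts the inner disjunction to $\tau \in \bagextend{\tau_0}$) with the unrestricted $\delta^2_r$ used in the claim; this is harmless because every $\rho_0$-child of an interface node with label $\tau_0$ must carry a label in $\bagextend{\tau_0}$ in any $\sigmag$-guarded-interface tree, so the extra disjuncts in $\delta^2_r$ are unreachable at such children. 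Concatenating these local rewrites across the finite disjunction over $S$ and $\tau_0$ in $\delta^2_q$ yields the equivalence $\cT,v \models \psi^2 \Leftrightarrow \cT,v \models \psi^3$.

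The main obstacle is the apparent circularity between the fixpoint iteration and the safeness hypothesis of Claim~\ref{claim:plump}. This is what forces the side induction on the approximant ordinal $\beta$ sketched above; the resolution relies crucially on the fact that $\psi^2 \in \Lmu$, so the operator defined by each $\delta^2_q$ is bisimulation-invariant, and hence preserves safeness stage by stage. Once this is in place, the replacement is justified one level at a time and the overall equivalence follows.
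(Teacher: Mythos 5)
Your proposal matches the paper's proof: the same two substitutions ($\dmodality{\rho_0}\delta^{2,\tau_0}_r \mapsto \dmodality{\rho_0}\delta^{\lozenge,\tau_0}_r$ and $\bmodality{\rho_0}\bigvee_{(\rho_0,r)\in S}\delta^{2,\tau_0}_r \mapsto \bmodality{\rho_0}\delta^{\square,\tau_0}_S$), with correctness reduced to Claim~\ref{claim:plump} via a transfinite induction on fixpoint approximants, using bisimulation-invariance of $\Lmu$ to show each approximant is a safe valuation. You actually spell out the safeness induction more explicitly than the paper does, but the argument is the same.
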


\begin{proof}
For all interface states $q$ and bag states $r$:
\begin{itemize}
\item substitute $\dmodality{\rho} \delta^{\lozenge,\tau_0}_r$
for $\dmodality{\rho} \delta^2_r$ in $\delta^2_{q,\tau_0}$;
\item substitute $\bmodality{\rho} \delta^{\square,\tau_0}_{S}$
for $\bmodality{\rho} \bigvee_{(\rho,r) \in S} \delta^2_r$ in $\delta^2_{q,\tau_0}$.
\end{itemize}
Let $\psi^3$ be the resulting formula,
with vectorial components~$\delta^3_q$.

Equivalence over plump $\sigmag$-guarded-interface trees essentially follows from Claim~\ref{claim:plump}.
Technically, one would show that the result is correct by induction on the number of fixpoint operators,
and a transfinite induction on the fixpoint approximant required for each fixpoint.
Since all of the fixpoint approximations give safe valuations for the fixpoint predicates,
Claim~\ref{claim:plump} can be applied at each step in the induction.
\end{proof}

\paragraph*{Step 3: Clean up to obtain $\gnfpk[\sigtarget,\sigmag]$-safe formula}

The formula $\psi^3$ obtained in the previous step is almost $\gnfpk[\sigtarget,\sigmag]$-safe.
We now perform some clean-up operations
in order to get the required $\gnfpk[\sigtarget,\sigmag]$-safe formula.

The first clean-up operation deals with the negations
that are implicit in the box modalities.
The following claim shows the shape of the
formula after we have eliminated box modalities,
and pushed some of these negations inside.

\begin{clm}\label{claim:step4}
There is a formula $\psi^4$ obtained effectively from $\psi^3$ such that
the vectorial component $\delta^4_q$ for each $q$ is of the form:
\begin{align*}
\delta^{4}_q &:=
\bigvee_{\tau_0 \in \IntNodeLabels}
\left(
\GNLabel{\tau_0} \wedge
\delta^4_{q,\tau_0}
\right)
\\
\delta^4_{q,\tau_0} &:=
\bigvee_{S \in \delta(q,\tau_0)} \Big(
\bigwedge_{(\rho_0,r) \in S} \GuardDom{\rho_0}{\tau_0} \wedge \dmodality{\rho_0} \delta^{\lozenge,\tau_0}_r \wedge \\
&\qquad\qquad\qquad\bigwedge_{\rho_0 \in \EdgeLabels} \GuardDom{\rho_0}{\tau_0} \wedge \neg \dmodality{\rho_0} \delta^{\neg \lozenge,\tau_0}_S \Big) \\
\delta^{\lozenge,\tau_0}_{r} &:=
\bigvee_{\tau \in \bagextend{\tau_0}} \left( \GNLabel{\tau} \wedge
\delta^4_{r,\tau} \right)\\
\delta^4_{r,\tau} &:=
\bigvee_{S \in \delta(r,\tau)} \Big(
\bigwedge_{(\rho,q) \in S} \GuardDom{\rho}{\tau} \wedge \dmodality{\rho} (\GuardRng{\rho}{\tau} \wedge D_{\size{\codom{\rho}}} \wedge X_{q}) \ \wedge \\
&\qquad\qquad\bigwedge_{\rho \in \EdgeLabelsRestrict{S}} \GuardDom{\rho}{\tau} \wedge
\neg \dmodality{\rho} (\bigwedge_{(\rho,q) \in S} \GuardRng{\rho}{\tau} \wedge D_{\size{\codom{\rho}}} \wedge \neg X_{q} ) \Big) \\
\delta^{\neg \lozenge,\tau_0}_{S} &:=
\bigvee_{\tau \in \bagextend{\tau_0} } \Big( \GNLabel{\tau} \wedge
\bigwedge_{(\rho,r) \in S} \overline{\delta}^4_{r,\tau} \Big)
\\
\overline{\delta}^4_{r,\tau} &:=
\bigwedge_{S \in \delta(r,\tau)} \Big(
\bigvee_{(\rho,q) \in S} \GuardDom{\rho}{\tau} \wedge \neg \dmodality{\rho} (\GuardRng{\rho}{\tau} \wedge D_{\size{\codom{\rho}}} \wedge X_{q}) \ \vee \\
&\qquad\qquad\bigvee_{\rho \in \EdgeLabelsRestrict{S}} \GuardDom{\rho}{\tau} \wedge
\dmodality{\rho} (\bigwedge_{(\rho,q) \in S} \GuardRng{\rho}{\tau} \wedge D_{\size{\codom{\rho}}} \wedge \neg X_{q} ) \Big) .
\end{align*}
Moreover,
for all $\sigmag$-guarded-interface $\sigcode{\sigtarget}{k}$-trees $\cT$ and for all interface nodes $v$,
we have $\cT,v \models \psi^4$ iff $\cT,v \models \psi^3$.
\end{clm}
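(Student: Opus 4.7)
The plan is to transform $\psi^3$ into $\psi^4$ by a sequence of local, logically justified rewrites: each box modality is replaced by $\neg\dmodality{}{}$ using the duality $\bmodality{\rho}\phi \equiv \neg\dmodality{\rho}\neg\phi$, the resulting negation is pushed inward with De~Morgan, and a redundant $\GuardDom{}{}$ conjunct is then inserted in front of each newly exposed diamond so that the result fits the claimed shape. Concretely, at the bag-state level I rewrite $\bmodality{\rho}(\GuardRng{\rho}{\tau} \wedge D_{\size{\codom{\rho}}} \to \bigvee_{(\rho,q)\in S} X_q)$ as $\neg\dmodality{\rho}(\GuardRng{\rho}{\tau} \wedge D_{\size{\codom{\rho}}} \wedge \bigwedge_{(\rho,q)\in S} \neg X_q)$, which after the guard insertion yields $\delta^4_{r,\tau}$. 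At the interface-state level I replace $\bmodality{\rho_0}\delta^{\square,\tau_0}_S$ by $\neg\dmodality{\rho_0}\delta^{\neg \lozenge,\tau_0}_S$, where $\delta^{\neg \lozenge,\tau_0}_S$ is obtained by distributing negation over $\delta^{\square,\tau_0}_S$: the outer $\bigwedge_\tau$ turns into $\bigvee_\tau$, each implication $\GNLabel{\tau}\to\bigvee_{(\rho,r)\in S}\delta^3_{r,\tau}$ becomes $\GNLabel{\tau}\wedge\bigwedge_{(\rho,r)\in S}\overline{\delta}^4_{r,\tau}$, and $\overline{\delta}^4_{r,\tau}$ is defined as the De~Morgan dual of the already-rewritten $\delta^4_{r,\tau}$.

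The added $\GuardDom{}{}$ conjuncts are harmless on a $\sigmag$-guarded-interface tree because every subformula is evaluated at a node whose label already entails the corresponding guard. At the interface level, $\delta^3_{q,\tau_0}$ is introduced under the outer conjunct $\GNLabel{\tau_0}$, and by consistency (property~(4) of a $\sigcode{\sigtarget}{k}$-tree, together with the fact that every $\rho_0$-successor of an interface node shares a strictly $\sigmag$-guarded set of elements), $\dom{\rho_0}$ must be strictly $\sigmag$-guarded in $\tau_0$, so $\GuardDom{\rho_0}{\tau_0}$ is a positive propositional atom forced by $\GNLabel{\tau_0}$. At the bag level, each $\delta^3_{r,\tau}$ is introduced under $\GNLabel{\tau}$; the automaton-level closure property~(4) from Step~1 ensures that every $\rho \in \EdgeLabelsRestrict{S}$ has $\dom{\rho}$ strictly $\sigmag$-guarded in $\tau$, so again $\GuardDom{\rho}{\tau}$ is forced. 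Hence inserting these atoms yields formulas logically equivalent to the originals on every $\sigmag$-guarded-interface tree.

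Correctness then follows by a componentwise induction on the approximants of the outer vectorial fixpoint of $\psi^3$: each local rewrite preserves meaning, so equivalence lifts to the whole fixpoint and gives $\cT,v \models \psi^4 \Leftrightarrow \cT,v \models \psi^3$ for every interface node $v$. The main obstacle I anticipate is checking that $\overline{\delta}^4_{r,\tau}$ is genuinely the De~Morgan dual of $\delta^4_{r,\tau}$ modulo the surrounding $\GNLabel{\tau}$ context---in particular, tracking how the added $\GuardDom{}{}$ conjuncts on both sides of the negation combine so that the ``conditional'' negation taken inside a $\GNLabel{\tau}$ context is a true negation, rather than something that only agrees with it on $\sigmag$-guarded-interface trees.
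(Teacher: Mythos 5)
Your proposal is correct and matches the paper's own proof of this claim essentially step for step: rewrite each box as $\neg\dmodality{\rho}\neg(\cdot)$, push the negations through with De~Morgan (turning $\delta^{\square,\tau_0}_S$ into $\delta^{\neg\lozenge,\tau_0}_S$ with $\overline{\delta}^4_{r,\tau}$ as the dual of $\delta^4_{r,\tau}$), and then insert the redundant $\GuardDom{\rho}{\tau}$ conjuncts, justified because on a $\sigmag$-guarded-interface tree the surrounding $\GNLabel{\tau}$ context together with the Step-1 normalization of the automaton forces those guard atoms to hold. The only nitpick is that the guardedness of $\dom{\rho}$ for $\rho\in\EdgeLabelsRestrict{S}$ comes from property~(3) of the Step-1 automaton normalization (every $(\rho,s')\in S$ has $\dom{\rho}$ strictly $\sigmag$-guarded) rather than property~(4), and the interface-level guards need no new justification since they were already present in $\psi^3$; neither point affects correctness.
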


\begin{proof}
Recall that
$\bmodality{\rho} \chi$ is equivalent to $\neg \dmodality{\rho} \neg \chi$.
Simply by using this equivalence, and pushing negations inside,
we can rewrite the transition formulas to:
\begin{align*}
\delta^{4}_q &:=
\bigvee_{\tau_0 \in \IntNodeLabels}
\left(
\GNLabel{\tau_0} \wedge
\delta^4_{q,\tau_0}
\right)
\\
\delta^4_{q,\tau_0} &:=
\bigvee_{S \in \delta(q,\tau_0)} \Big(
\bigwedge_{(\rho_0,r) \in S} \GuardDom{\rho_0}{\tau_0} \wedge \dmodality{\rho_0} \delta^{\lozenge,\tau_0}_r \wedge \\
&\qquad\qquad\qquad\bigwedge_{\rho_0 \in \EdgeLabels} \GuardDom{\rho_0}{\tau_0} \wedge \neg \dmodality{\rho_0} \delta^{\neg \lozenge,\tau_0}_S \Big) \\
\delta^{\lozenge,\tau_0}_{r} &:=
\bigvee_{\tau \in \bagextend{\tau_0}} \left( \GNLabel{\tau} \wedge
\delta^4_{r,\tau} \right)\\
\delta^4_{r,\tau} &:=
\bigvee_{S \in \delta(r,\tau)} \Big(
\bigwedge_{(\rho,q) \in S} \dmodality{\rho} (\GuardRng{\rho}{\tau} \wedge D_{\size{\codom{\rho}}} \wedge X_{q}) \ \wedge \\
&\qquad\qquad\bigwedge_{\rho \in \EdgeLabelsRestrict{S}}
\neg \dmodality{\rho} ( \GuardRng{\rho}{\tau} \wedge D_{\size{\codom{\rho}}} \wedge \bigwedge_{(\rho,q) \in S} \neg X_{q} ) \Big) \\
\delta^{\neg \lozenge,\tau_0}_{S} &:=
\bigvee_{\tau \in \bagextend{\tau_0} } \Big( \GNLabel{\tau} \wedge
\bigwedge_{(\rho,r) \in S} \overline{\delta}^4_{r,\tau} \Big)
\\
\overline{\delta}^4_{r,\tau} &:=
\bigwedge_{S \in \delta(r,\tau)} \Big(
\bigvee_{(\rho,q) \in S}  \neg \dmodality{\rho} (\GuardRng{\rho}{\tau} \wedge D_{\size{\codom{\rho}}} \wedge X_{q}) \ \vee \\
&\qquad\qquad\bigvee_{\rho \in \EdgeLabelsRestrict{S}}
\dmodality{\rho} ( \GuardRng{\rho}{\tau} \wedge D_{\size{\codom{\rho}}} \wedge \bigwedge_{(\rho,q) \in S} \neg X_{q} ) \Big) .
\end{align*}

Finally, we perform the following substitutions:
\begin{itemize}
\item substitute $\GuardDom{\rho}{\tau} \wedge \dmodality{\rho} (\GuardRng{\rho}{\tau} \wedge D_{\size{\codom{\rho}}} \wedge X_{q})$ for \\
$\dmodality{\rho} (\GuardRng{\rho}{\tau} \wedge D_{\size{\codom{\rho}}} \wedge X_{q})$ in $\delta^4_{r,\tau}$;
\item substitute $\GuardDom{\rho}{\tau} \wedge \neg \dmodality{\rho} (\GuardRng{\rho}{\tau} \wedge D_{\size{\codom{\rho}}} \wedge X_{q})$ for \\
$\neg \dmodality{\rho} (\GuardRng{\rho}{\tau} \wedge D_{\size{\codom{\rho}}} \wedge X_{q})$ in $\overline{\delta}^4_{r,\tau}$;
\item substitute $\GuardDom{\rho}{\tau} \wedge \neg \dmodality{\rho} ( \bigwedge_{(\rho,q) \in S} \GuardRng{\rho}{\tau} \wedge D_{\size{\codom{\rho}}} \wedge \neg X_{q} )$ for \\ $\neg \dmodality{\rho} ( \GuardRng{\rho}{\tau} \wedge D_{\size{\codom{\rho}}} \wedge \bigwedge_{(\rho,q) \in S} \neg X_{q} )$ in $\delta^4_{r,\tau}$;
\item substitute $\GuardDom{\rho}{\tau} \wedge \dmodality{\rho} ( \bigwedge_{(\rho,q) \in S} \GuardRng{\rho}{\tau} \wedge D_{\size{\codom{\rho}}} \wedge \neg X_{q} )$ for \\ $\dmodality{\rho} ( \GuardRng{\rho}{\tau} \wedge D_{\size{\codom{\rho}}} \wedge \bigwedge_{(\rho,q) \in S} \neg X_{q} )$ in $\overline{\delta}^4_{r,\tau}$;
\end{itemize}
This is correct since the domain of any edge label exiting a bag node must be strictly $\sigmag$-guarded
in a $\sigmag$-guarded-interface tree.
This results in a formula of the desired shape.
\end{proof}

The formula resulting from Claim~\ref{claim:step4} is $\gnfpk[\sigtarget,\sigmag]$-safe
except for the fact that it uses a simultaneous fixpoint.
As a final clean-up step,
we convert the vectorial fixpoint formula $\psi^4$ to a standard $\Lmu$-formula using the \Bekic principle,
with the outermost fixpoint testing membership in the interface state $q_n$ component (recall that the outermost fixpoint operator based on $X_{q_n}$
corresponded to the initial state $q_n$ of the automaton that was equivalent to the original $\Lmu$-formula).
This yields an equivalent formula $\lambda_n X_{q_n} . \chi$, where $\chi$ uses no vectorial fixpoints.
The $\gnfpk[\sigtarget,\sigmag]$-safe formula required for Lemma~\ref{lemma:gnfp-safe}
is just $\top \wedge D_0 \wedge \lambda_n X_{q_n} . \chi$.
This is correct since we are interested in evaluating this starting at the root of $\sunravelk{\fB}{\sigtarget,\sigmag}$,
which has an empty set of names.

This concludes the proof of Lemma~\ref{lemma:gnfp-safe} and Lemma~\ref{lemma:backwards-gnfp}.

\subsection{Decidability of definability}\label{sec:gnfpdecidability}
Using the above lemma and Proposition~\ref{prop:effectivedefhigh}, we obtain the following analog of
Theorem~\ref{thm:effectivedefgfp}.

\begin{thm}%
\label{thm:effectivegnfp}
The $\gnfpk[\sigtarget,\sigmag]$ definability problem is decidable
for $\gnlinvar[\sigoriginal]$-invariant $\gso[\sigoriginal]$ and $k,l \geq \width{\sigoriginal}$.
\end{thm}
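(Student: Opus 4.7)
The plan is to instantiate the generic back-and-forth template of Proposition~\ref{prop:effectivedefhigh} with $\logictarget := \gnfpk[\sigtarget,\sigmag]$. To do this I need two ingredients: an operation sending each $\sigoriginal$-structure $\fB$ to a tree code $\unravellingcode$ whose decoding agrees with $\fB$ on all $\gnfpk[\sigtarget,\sigmag]$-sentences, and an effective translation from $\mu$-calculus formulas on tree codes to $\gnfpk[\sigtarget,\sigmag]$-formulas sound with respect to this $\unravellingcode$. Both ingredients are essentially already in place from the earlier subsections.

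For the unravelling I would take $\unravellingcode := \sunravelk{\fB}{\sigtarget,\sigmag}$, the plump unravelling provided by Proposition~\ref{prop:bisim-plump}. That proposition guarantees that $\decode{\sunravelk{\fB}{\sigtarget,\sigmag}}$ is $\sgnkinvar[\sigtarget,\sigmag]$-bisimilar to $\fB$, and since $\gnfpk[\sigtarget,\sigmag]$-sentences are $\sgnkinvar[\sigtarget,\sigmag]$-invariant (the easy forward direction of the invariance characterization, verified by a routine induction on the strict normal form of Proposition~\ref{prop:nf} that matches Duplicator's winning moves against the formula structure), $\fB$ and $\decode{\unravellingcode}$ are indistinguishable by any $\gnfpk[\sigtarget,\sigmag]$-sentence. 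For the formula-level part of the backward mapping I would appeal directly to Lemma~\ref{lemma:backwards-gnfp} ($\gnfpk$-Bwd), which takes any $\phi^\mu \in \Lmu[\sigcode{\sigoriginal}{m}]$ (with $m := \max\set{k,l,\width{\sigoriginal}}$) and returns $\psi \in \gnfpk[\sigtarget,\sigmag]$ satisfying $\fB \models \psi$ iff $\sunravelk{\fB}{\sigtarget,\sigmag} \models \phi^\mu$.

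Putting the pieces together, the decision procedure extracted from Proposition~\ref{prop:effectivedefhigh} reads: given a $\gnlinvar[\sigoriginal]$-invariant $\gso[\sigoriginal]$-sentence $\phi_0$, apply Lemma~\ref{lemma:forward-gso} (Fwd) to produce $\phi'_0 \in \Lmu[\sigcode{\sigoriginal}{m}]$, apply Lemma~\ref{lemma:backwards-gnfp} to $\phi'_0$ to produce $\phi_1 \in \gnfpk[\sigtarget,\sigmag]$, and accept iff $\phi_0 \equiv \phi_1$. The equivalence test is effective because a second application of Lemma~\ref{lemma:forward-gso} translates both sentences to $\mu$-calculus formulas on a common fixed-alphabet tree code, where equivalence reduces via Theorem~\ref{thm:muaut} to emptiness of 2-way alternating $\mu$-automata and is thus decidable by standard automata constructions.

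Correctness is exactly the calculation already carried out in the proof sketch of Proposition~\ref{prop:effectivedefhigh}: if $\phi_0 \equiv \phi_1$ then $\gnfpk[\sigtarget,\sigmag]$-definability is witnessed by $\phi_1$; conversely, if $\phi_0$ is $\gnfpk[\sigtarget,\sigmag]$-definable, then for every $\fB$ the chain $\fB \models \phi_0 \Leftrightarrow \decode{\unravellingcode} \models \phi_0 \Leftrightarrow \unravellingcode \models \phi'_0 \Leftrightarrow \fB \models \phi_1$ closes, using $\sgnkinvar$-invariance of $\gnfpk[\sigtarget,\sigmag]$, Lemma~Fwd, and Lemma~$\gnfpk$-Bwd in turn. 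The theorem is therefore an assembly step; all of the genuine difficulty has already been absorbed into the plumpness construction of Proposition~\ref{prop:bisim-plump} and the inductive backward translation of Lemmas~\ref{lemma:gnfp-safe}~and~\ref{lemma:gnfp-backward}, so no further obstacle needs to be overcome here.
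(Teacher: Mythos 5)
Your proposal is correct and follows exactly the paper's own route: the paper proves this theorem by instantiating Proposition~\ref{prop:effectivedefhigh} with the plump unravelling of Proposition~\ref{prop:bisim-plump} as $\unravellingcode$ and Lemma~\ref{lemma:backwards-gnfp} as the backward mapping, just as you do. The additional details you supply (the $\sgnkinvar[\sigtarget,\sigmag]$-invariance of $\gnfpk[\sigtarget,\sigmag]$ and the effectivity of the equivalence test via a second forward mapping and automata emptiness) are exactly the ingredients the paper leaves implicit in its one-line proof.
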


Since $\unfpk[\sigtarget]$ is just $\gnfpk[\sigtarget,\emptyset]$,
we obtain the following corollary:

\begin{cor}
The $\unfpk[\sigtarget]$ definability problem is decidable
for $\gnlinvar[\sigoriginal]$-invariant $\gso[\sigoriginal]$
and $k,l \geq \width{\sigoriginal}$.
\end{cor}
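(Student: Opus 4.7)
The plan is to observe that this corollary is almost immediate from Theorem~\ref{thm:effectivegnfp} together with the syntactic identification $\unfpk[\sigtarget] \equiv \gnfpk[\sigtarget,\emptyset]$. The excerpt already notes in the preliminaries that $\unfp[\sigtarget]$ is equivalent to $\gnfp[\sigtarget,\emptyset]$: when the only permitted guards are equalities, every guarded negation reduces to a unary negation and every guarded fixpoint reduces to a monadic fixpoint. So the core of the proof is just to promote this equivalence to the $k$-width fragment and then plug in.

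First I would verify explicitly that $\unfpk[\sigtarget] \equiv \gnfpk[\sigtarget,\emptyset]$. The inclusion from left to right is immediate since $\unfpk[\sigtarget]$ formulas are in particular $\gnfpk[\sigtarget,\emptyset]$ formulas of the same width. For the other direction, given $\phi \in \gnfpk[\sigtarget,\emptyset]$, I would apply the conversion to strict normal form given by Proposition~\ref{prop:nf} (or Proposition~\ref{prop:nfwidth} if one wants to preserve width more carefully). In strict normal form, every negation is strictly $\emptyset$-guarded, so the guard $\alpha$ is an equality; this forces the negated subformula to have at most one free variable, making the negation unary. Similarly, every fixpoint in strict normal form is strictly $\emptyset$-guarded, so again the guard is an equality and the defined predicate is monadic (or nullary). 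Thus the translation produces a formula in $\unfpk[\sigtarget]$, and no relevant parameters blow up beyond what is already permitted.

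Then I would invoke Theorem~\ref{thm:effectivegnfp} with $\sigmag := \emptyset$. Given any input $\phi \in \gso[\sigoriginal]$ that is $\gnlinvar[\sigoriginal]$-invariant with $k, l \geq \width{\sigoriginal}$, the theorem decides whether there exists $\psi \in \gnfpk[\sigtarget,\emptyset]$ equivalent to $\phi$. By the equivalence established in the previous step, this is exactly the question of whether $\phi$ is $\unfpk[\sigtarget]$-definable, so the algorithm for $\gnfpk[\sigtarget,\emptyset]$-definability serves also as an algorithm for $\unfpk[\sigtarget]$-definability.

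There is essentially no hard step here: both directions of the equivalence $\unfpk[\sigtarget] \equiv \gnfpk[\sigtarget,\emptyset]$ are routine once one has the strict normal form available, and the decidability is entirely inherited from Theorem~\ref{thm:effectivegnfp}. The only point to double-check is that the normal-form conversion does not push the formula outside the $k$-width fragment; this is handled by Proposition~\ref{prop:nfwidth} under its mild shape assumptions, which are automatically satisfied here because the starting formula's negations and quantifiers come pre-guarded by equalities.
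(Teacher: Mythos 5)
Your proposal is correct and follows essentially the same route as the paper: the paper derives this corollary in one line from Theorem~\ref{thm:effectivegnfp} via the identification $\unfpk[\sigtarget] = \gnfpk[\sigtarget,\emptyset]$, which it records in the preliminaries (equality guards force unary negation and monadic fixpoints). Your additional care about width preservation via the normal-form propositions is a reasonable elaboration of a point the paper treats as immediate, but it does not change the argument.
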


We get corollaries
for  fragments of $\fo$, analogous to Corollary~\ref{cor:gfdef}:

\begin{cor}%
\label{cor:gnfkunfdef}
The $\gnfk[\sigtarget,\sigmag]$ and  $\unfk[\sigtarget]$ definability problems are  decidable
for $\gnlinvar[\sigoriginal]$-invariant $\fo[\sigoriginal]$ and $k,l \geq \width{\sigoriginal}$.
\end{cor}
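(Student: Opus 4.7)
The plan is to mirror the structure of the proof of Corollary~\ref{cor:gfdef}: reduce each non-fixpoint definability question to the corresponding fixpoint definability question from Theorem~\ref{thm:effectivegnfp}, then use a characterization theorem to upgrade ``definable in the fixpoint fragment'' to ``definable in the FO fragment'' whenever the input is itself in FO.

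Concretely, given an input $\phi \in \fo[\sigoriginal]$ that is $\gnlinvar[\sigoriginal]$-invariant, I would first run the algorithm of Theorem~\ref{thm:effectivegnfp} to decide whether $\phi$ is $\gnfpk[\sigtarget,\sigmag]$-definable. If not, answer no (since $\gnfk[\sigtarget,\sigmag] \subseteq \gnfpk[\sigtarget,\sigmag]$). If yes, say $\phi$ is equivalent to some $\psi \in \gnfpk[\sigtarget,\sigmag]$; since every $\gnfpk[\sigtarget,\sigmag]$ sentence is $\gnkinvar[\sigtarget,\sigmag]$-invariant, $\phi$ itself is $\gnkinvar[\sigtarget,\sigmag]$-invariant. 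The characterization recalled in Section~\ref{sec:prelims}, namely that $\gnfk[\sigtarget]$ is the $\gnkinvar[\sigtarget]$-invariant fragment of $\fo[\sigtarget]$~\cite{gnf}, then forces $\phi$ to lie in $\gnfk[\sigtarget,\sigmag]$, so we answer yes. The case of $\unfk[\sigtarget]$ is handled identically, using $\unfpk[\sigtarget] = \gnfpk[\sigtarget,\emptyset]$ and the analogous characterization of $\unfk$ as the $\sgnkinvar[\sigtarget,\emptyset]$-invariant fragment of $\fo[\sigtarget]$.

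The only point that goes slightly beyond a direct citation is that the known characterization of $\gnfk[\sigtarget]$ is stated for a single signature, whereas our statement distinguishes an arbitrary-relation signature $\sigtarget$ from a guard signature $\sigmag$. As in the proof of Corollary~\ref{cor:gfdef}, I would argue that a straightforward refinement of the back-and-forth argument of~\cite{gnf}, tracking which atoms are allowed to guard negation, yields the sharper statement that any $\fo[\sigtarget]$ sentence which is $\gnkinvar[\sigtarget,\sigmag]$-invariant is equivalent to one in $\gnfk[\sigtarget,\sigmag]$. With this refinement in hand the reduction from the two corollaries to Theorem~\ref{thm:effectivegnfp} is immediate.

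The main obstacle is verifying the $(\sigtarget,\sigmag)$ refinement of the characterization theorem for $\gnfk$ (and symmetrically for $\unfk$); everything else is bookkeeping. I expect this to be genuinely routine but worth writing out, since the published characterization in \cite{gnf} is only for the single-signature setting and the reader should see that the guard/relation split in the bisimulation game is preserved by the model-theoretic construction used there.
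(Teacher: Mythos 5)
Your proposal is correct and is essentially the paper's own argument: the paper proves this corollary only by saying it is ``analogous to Corollary~\ref{cor:gfdef}'', and your reduction---test $\gnfpk[\sigtarget,\sigmag]$-definability via Theorem~\ref{thm:effectivegnfp}, then upgrade to $\gnfk[\sigtarget,\sigmag]$ using the characterization of $\gnfk$ as the bisimulation-invariant fragment of $\fo$ (refined to the two-signature setting), and likewise for $\unfk$ via $\unfpk[\sigtarget]=\gnfpk[\sigtarget,\emptyset]$---is exactly that analogue. The only nit is that the refined characterization should be stated for $\gnkinvar[\sigtarget,\sigmag]$-invariant sentences of $\fo[\sigoriginal]$ rather than of $\fo[\sigtarget]$, matching how Corollary~\ref{cor:gfdef} phrases the corresponding refinement of \cite{gforig}.
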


We can also apply the backward and forward mappings to get  a semantic characterization for $\gnfpk$,
analogous to the Janin-Walukiewicz theorem. The following  extends a result of~\cite{gnfj}
characterizing $\gnfk$ formulas as the $\sgnkinvar$-invariant fragment
of $\fo$.

\begin{thm}\label{thm:gnfpk-characterization}
$\gnfpk[\sigtarget,\sigmag]$ is the $\sgnkinvar[\sigtarget,\sigmag]$-invariant fragment of $\gso[\sigtarget]$.
\end{thm}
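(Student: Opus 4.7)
The plan is to establish the two inclusions separately. The \emph{easy} direction, $\gnfpk[\sigtarget,\sigmag] \subseteq$ $\sgnkinvar[\sigtarget,\sigmag]$-invariant $\gso[\sigtarget]$, follows from Proposition~\ref{prop:gnfp-to-gso} combined with a routine structural induction using the strict normal form (Proposition~\ref{prop:nfwidth}): each UCQ-shaped outer layer is preserved by Spoiler's size-$k$ extend moves, strictly $\sigmag$-guarded negation is preserved by the mandatory alternation with a collapse-to-$\sigmag$-guarded-set, and fixpoints are handled by transfinite induction on the ordinal stage $\phi^\beta$ of each approximant, each stage being itself a $\sigmag$-guarded $\gnfpk$-formula that does not use the bound fixpoint variable.

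For the \emph{hard} direction, I would instantiate the forward--backward template of Proposition~\ref{prop:effectivedefhigh}. Fix a $\sgnkinvar[\sigtarget,\sigmag]$-invariant $\phi \in \gso[\sigtarget]$, assuming (as is implicit in the setting) $k \geq \width{\sigtarget}$. First I would observe that $\phi$ is also $\gnkinvar[\sigtarget]$-invariant: relaxing the block alternation requirement and enlarging the guard signature from $\sigmag$ to $\sigtarget$ each give Spoiler strictly more options, so the $\gnkinvar[\sigtarget]$-bisimilarity relation is \emph{contained} in the $\sgnkinvar[\sigtarget,\sigmag]$-bisimilarity relation, whence invariance under the larger relation descends. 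This unlocks \lemmafwd (with $l = k$ and $\sigoriginal := \sigtarget$), producing $\phi^\mu \in \Lmu[\sigcode{\sigtarget}{k}]$ with $\tree \models \phi^\mu$ iff $\decode{\tree} \models \phi$ for every consistent $\sigcode{\sigtarget}{k}$-tree.

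Next I would apply Lemma~\ref{lemma:backwards-gnfp} with $m := k$ to obtain $\psi \in \gnfpk[\sigtarget,\sigmag]$ such that $\fB \models \psi$ iff $\sunravelk{\fB}{\sigtarget,\sigmag} \models \phi^\mu$ for every $\sigtarget$-structure $\fB$. The equivalence $\phi \equiv \psi$ then falls out of the chain
\begin{align*}
\fB \models \psi
&\iff \sunravelk{\fB}{\sigtarget,\sigmag} \models \phi^\mu &&\text{by Lemma~\ref{lemma:backwards-gnfp},} \\
&\iff \decode{\sunravelk{\fB}{\sigtarget,\sigmag}} \models \phi &&\text{by \lemmafwd (the plump unravelling is a consistent tree code),} \\
&\iff \fB \models \phi &&\text{by Prop.~\ref{prop:bisim-plump} and $\sgnkinvar[\sigtarget,\sigmag]$-invariance of $\phi$.}
\end{align*}
Since $\psi \in \gnfpk[\sigtarget,\sigmag]$, this witnesses $\phi$ as $\gnfpk[\sigtarget,\sigmag]$-definable.

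I expect the principal subtlety to be the invariance-downgrade step: one must carefully check that both of the ways in which the block game differs from the plain $\gnkinvar[\sigtarget]$-game (the alternation requirement and the enlarged guard signature) pull in the same direction, so that $\sgnkinvar[\sigtarget,\sigmag]$-invariance genuinely suffices as the hypothesis for \lemmafwd. Once this is dispatched, essentially all remaining work---passing to a bounded-tree-width model that agrees with $\fB$ on the relevant invariance class, and translating a $\mu$-calculus formula back into $\gnfpk$ while respecting the $\sigmag$-guardedness and $k$-width constraints---has already been absorbed into Proposition~\ref{prop:bisim-plump} (the plump unravelling) and Lemma~\ref{lemma:backwards-gnfp} (the $\gnfpk$-safe backward mapping), so the characterization theorem reduces to a clean assembly of these components along the diagram of Figure~\ref{fig:bandfeffsem}.
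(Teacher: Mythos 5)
Your proposal is correct and follows essentially the same route as the paper: the paper describes its proof of Theorem~\ref{thm:gnfpk-characterization} precisely as a variant of Proposition~\ref{prop:effectivedefhigh} in which $\sgnkinvar[\sigtarget,\sigmag]$-invariance replaces the assumed $\gnfpk[\sigtarget,\sigmag]$-definability in the final link of the equivalence chain, exactly as in your display. The invariance-downgrade observation you flag (needed to invoke \lemmafwd) and the assembly via Proposition~\ref{prop:bisim-plump} and Lemma~\ref{lemma:backwards-gnfp} match the paper's intended argument.
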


The proof is similar to the characterizations of Janin-Walukiewicz and~\cite{GradelHO02},
and can also be seen as a variant of Proposition~\ref{prop:effectivedefhigh}, where we
use $\sgnkinvar[\sigtarget,\sigmag]$-invariance rather than equivalence to a $\gnfpk[\sigtarget,\sigmag]$ sentence
in justifying that the input formula is equivalent to the result of the composition of backward and forward mappings.


\section{Interpolation}%
\label{sec:interp}

\subsection{Positive results}

The forward and backward mappings utilized for
the definability questions can also be used to prove
that $\gfp$ and $\gnfpk$ have a form of interpolation.

Let $\phi_{\L}$ and $\phi_{\R}$ be sentences
over signatures $\sigma_{\L}$ and $\sigma_{\R}$
such that $\phi_{\L} \models \phi_{\R}$ ($\phi_{\L}$ entails $\phi_{\R}$).
An \emph{interpolant} for such a validity is a formula $\theta$
for which $\phi_{\L} \models \theta$ and $\theta \models \phi_{\R}$,
and $\theta$ mentions only relations appearing in both $\phi_{\L}$ and $\phi_{\R}$ (their \emph{common signature}).
We say a logic $\cL$ has \emph{Craig interpolation} if for all $\phi_{\L},\phi_{\R} \in \cL$ with
$\phi_{\L} \models \phi_{\R}$, there is an interpolant $\theta \in \cL$ for it.
We say a logic $\cL$ has the stronger \emph{uniform interpolation} property if one can
obtain $\theta$ from $\phi_\L$ and a signature $\sigtarget$, and $\theta$ can serve
as an interpolant for any $\phi_{\R}$ entailed by $\phi_{\L}$ and
such that the common signature of $\phi_{\R}$ and
$\phi_{\L}$ is contained in $\sigtarget$. A uniform interpolant can be thought
of as the best approximation from above of $\phi_{\L}$ over $\sigtarget$.

Uniform interpolation holds for the $\mu$-calculus~\cite{interpolationmucalc},
and also for
$\unfpk$~\cite{lics15-gnfpi}.
Unfortunately, $\gfp[\sigoriginal]$ and $\gnfpk[\sigoriginal]$ both fail to have uniform interpolation
and Craig interpolation~\cite{HooglandMO99,lics15-gnfpi}.
However, if we disallow subsignature restrictions that change the guard signature,
then we can regain this interpolation property. This ``preservation of guard'' variant
was investigated first by Hoogland, Marx, and Otto in the context of Craig interpolation~\cite{HooglandMO99}. The uniform interpolation variant was introduced by D'Agostino and Lenzi~\cite{DAgostinoL15}, who called it \emph{uniform modal interpolation}.
Formally, we say a guarded logic $\cL[\sigoriginal,\sigmag]$
with guard signature $\sigmag \subseteq \sigoriginal$
has \emph{uniform modal interpolation} if
for any $\phi_{\L} \in \cL[\sigoriginal,\sigmag]$ and any subsignature $\sigtarget \subseteq \sigoriginal$ containing $\sigmag$,
there exists a formula $\theta \in \cL[\sigtarget,\sigmag]$ such that
 $\phi_{\L}$ entails $\theta$ and
for any  $\sigma''$ containing~$\sigmag$ with $\sigma'' \cap \sigoriginal \subseteq \sigtarget$
and any $\phi_{\R} \in \cL[\sigma'',\sigmag]$ entailed by $\phi_{\L}$,
$\theta$ entails $\phi_{\R}$.
It was shown in~\cite{DAgostinoL15} that $\gf$ has uniform modal interpolation.
  We strengthen
this to $\gfp$ and~$\gnfp^k$.

\begin{thm}%
\label{thm:modalinterp}
For $\sigoriginal$ a relational signature, $\sigmag \subseteq \sigoriginal$, and $k \in \N$:
$\gfp[\sigoriginal,\sigmag]$ and
$\gnfpk[\sigoriginal,\sigmag]$ sentences have uniform modal interpolation, and the interpolants
can be found effectively.
\end{thm}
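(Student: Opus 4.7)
The plan is to thread the uniform interpolation theorem for $\Lmu$~\cite{interpolationmucalc} through the forward and backward translations developed earlier. Fix $\phi_{\L} \in \gnfpk[\sigoriginal,\sigmag]$ and $\sigtarget$ with $\sigmag \subseteq \sigtarget \subseteq \sigoriginal$, and choose $n \geq \max(k,\width{\sigoriginal})$. The construction of $\theta$ has three effective steps: (i) apply \lemmafwd to obtain $\phi_\L^\mu \in \Lmu[\sigcode{\sigoriginal}{n}]$; (ii) conjoin $\phi_\L^\mu$ with a consistency formula (expressible in $\Lmu$ since the consistency conditions are local) and apply uniform interpolation for $\Lmu$ with respect to the subsignature $\sigcode{\sigtarget}{n}$ to obtain $\theta^\mu \in \Lmu[\sigcode{\sigtarget}{n}]$; (iii) apply Lemma~\ref{lemma:backwards-gnfp} to extract $\theta \in \gnfpk[\sigtarget,\sigmag]$. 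The $\gfp$ case is completely analogous, using Lemma~\ref{lemma:backwards-gfp} in step~(iii).

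For $\phi_{\L} \models \theta$: given any $\fB$ with $\fB \models \phi_\L$, Proposition~\ref{prop:bisim-plump} and \lemmafwd give $\sunravelk{\fB}{\sigoriginal,\sigmag} \models \phi_\L^\mu$, so the $\Lmu$-entailment property of the uniform interpolant yields $\sunravelk{\fB}{\sigoriginal,\sigmag} \models \theta^\mu$. Since $\sgnkinvar[\sigoriginal,\sigmag]$-bisimilarity implies $\sgnkinvar[\sigtarget,\sigmag]$-bisimilarity, transitivity through $\fB$ shows that $\sunravelk{\fB}{\sigoriginal,\sigmag}$ and $\sunravelk{\fB}{\sigtarget,\sigmag}$ are bisimilar as $\sigcode{\sigtarget}{n}$-trees, so Proposition~\ref{prop:mu-automata-bisim} transports $\theta^\mu$ to $\sunravelk{\fB}{\sigtarget,\sigmag}$, and Lemma~\ref{lemma:backwards-gnfp} yields $\fB \models \theta$.

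For the universal property, rather than attempting to glue the tree expansion produced by the $\Lmu$-interpolant with the new-signature facts of a given structure, I plan to use the abstract entailment form of uniform interpolation. Given $\phi_\R \in \gnfpk[\sigma'',\sigmag]$ entailed by $\phi_\L$ with $\sigma'' \cap \sigoriginal \subseteq \sigtarget$, \lemmafwd produces $\phi_\R^\mu \in \Lmu[\sigcode{\sigma''}{n}]$, and the hypothesis $\phi_\L \models \phi_\R$ lifts through the forward mapping to $\phi_\L^\mu \models \phi_\R^\mu$ on consistent $\sigcode{\sigoriginal \cup \sigma''}{n}$-trees. The key signature identity $\sigcode{\sigoriginal}{n} \cap \sigcode{\sigma''}{n} = \sigcode{\sigoriginal \cap \sigma''}{n} \subseteq \sigcode{\sigtarget}{n}$ is exactly the premise needed for $\Lmu$-uniform interpolation to conclude $\theta^\mu \models \phi_\R^\mu$ on consistent $\sigcode{\sigtarget \cup \sigma''}{n}$-trees. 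For any $\fA$ over $\sigtarget \cup \sigma''$ with $\fA \models \theta$, Lemma~\ref{lemma:backwards-gnfp} combined with bisimulation invariance of $\Lmu$ gives $\sunravelk{\fA}{\sigtarget \cup \sigma'',\sigmag} \models \theta^\mu$, hence $\sunravelk{\fA}{\sigtarget \cup \sigma'',\sigmag} \models \phi_\R^\mu$; then \lemmafwd gives $\decode{\sunravelk{\fA}{\sigtarget \cup \sigma'',\sigmag}} \models \phi_\R$, and $\sgnkinvar[\sigma'',\sigmag]$-invariance of $\gnfpk[\sigma'',\sigmag]$ completes the argument.

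The delicate point, where I expect to spend most of the effort, is comparing plump unravellings built over different signatures: one needs that the $\sigcode{\sigtarget}{n}$-reduct of $\sunravelk{\fA}{\sigtarget \cup \sigma'',\sigmag}$ is bisimilar as a $\sigcode{\sigtarget}{n}$-tree to $\sunravelk{\fA|_\sigtarget}{\sigtarget,\sigmag}$, even though the two plumpness conditions involve safe restrictions computed over different atomic-type signatures. The plan is to reduce this to the general principle that any two plump $\sgnkinvar[\sigtarget,\sigmag]$-unravellings of $\sgnkinvar[\sigtarget,\sigmag]$-bisimilar structures are themselves bisimilar as $\sigcode{\sigtarget}{n}$-trees, which should follow from transitivity of $\sgnkinvar[\sigtarget,\sigmag]$-bisimilarity together with the fact that tree-code bisimilarity faithfully mirrors $\sgnkinvar[\sigtarget,\sigmag]$-bisimilarity of decodings. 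Effectiveness of $\theta$ is immediate from the effectivity of each of the three construction steps.
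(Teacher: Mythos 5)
Your construction is essentially the paper's own proof: forward-map $\phiL$, conjoin the consistency formula, take the $\Lmu$ uniform interpolant with respect to $\sigcode{\sigtarget}{k}$, and map backward via Lemma~\ref{lemma:backwards-gnfp} (resp.\ Lemma~\ref{lemma:backwards-gfp}); both verification steps also proceed as in the paper. One step as you state it does not quite go through: uniform interpolation for $\Lmu$ is an unrelativized entailment property, so you cannot feed it an entailment that holds only ``on consistent $\sigcode{\sigoriginal \cup \sigma''}{n}$-trees'' and read off a conclusion relativized to consistent trees. The paper repairs exactly this point by establishing the unconditional entailment $\cform{\sigoriginal}{k} \wedge \phiL^\mu \models \cform{\sigR}{k} \rightarrow \phiR^\mu$ --- guarding the consequent by the consistency formula of the right-hand signature so that trees inconsistent on that side satisfy the implication vacuously --- first over all trees and then over all structures via the tree-model property of $\Lmu$, and only then invoking the interpolation property of the $\Lmu$-interpolant. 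Your ``delicate point'' about comparing unravellings over different signatures is genuine but is handled more lightly in the paper: one restricts $\sunravelk{\fB}{\sigoriginal,\sigmag}$ to the subsignature $\sigcode{\sigtarget}{k}$ and observes that this reduct is bisimilar, as a $\sigcode{\sigtarget}{k}$-tree, to $\sunravelk{\fB}{\sigtarget,\sigmag}$ (using that the guard signature $\sigmag$ is unchanged); your proposed reduction to ``plump unravellings of bisimilar structures are bisimilar trees'' would also work, but it needs its own game argument, since bisimilarity of decodings does not in general transfer to bisimilarity of tree codes.
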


Theorem~\ref{thm:modalinterp} also implies that $\unfpk$ has the traditional uniform interpolation property:
since the guard signature is empty for $\unfpk$, uniform modal interpolation and uniform interpolation coincide.
Another corollary is that $\unfp$ (not just $\unfpk$) has Craig interpolation.
Consider sentences $\phi_{\L}$ and $\phi_{\R}$ in $\unfp$ such that $\phi_{\L} \models \phi_{\R}$ and with $k$ the maximum width of $\phi_{\L}$ and $\phi_{\R}$.
Then the $\unfpk$ uniform interpolant for $\phi_{\L}$ with respect to $\sigma_{\L} \cap \sigma_{\R}$
can serve as a Craig interpolant for $\phi_{\L} \models \phi_{\R}$.
Note that uniform interpolation for $\unfpk$ and Craig interpolation for $\unfp$ were known already from~\cite{lics15-gnfpi}.

\begin{cor}\label{cor:unfpinterp}
For $\sigoriginal$ a relational signature and $k \in \N$: $\unfpk[\sigoriginal]$
has uniform interpolation and $\unfp[\sigma]$ has Craig interpolation. In both cases, the interpolants can be found effectively.
\end{cor}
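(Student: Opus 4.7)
The plan is to derive both halves of the corollary directly from Theorem~\ref{thm:modalinterp}, exploiting the special role of the empty guard signature.

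For uniform interpolation of $\unfpk$, I would first recall the earlier observation that $\unfpk[\sigoriginal]$ coincides with $\gnfpk[\sigoriginal, \emptyset]$: when the only guards are equality guards, every negation can be rewritten using unary negation and every fixpoint can be taken to be monadic. Then I would instantiate Theorem~\ref{thm:modalinterp} with $\sigmag = \emptyset$. Since the empty set is contained in every subsignature, the side condition ``$\sigtarget$ contains $\sigmag$'' in the definition of uniform modal interpolation becomes vacuous, and similarly ``$\sigma'' \cap \sigoriginal \subseteq \sigtarget$'' reduces to the standard subsignature condition on $\phi_{\R}$. Thus uniform modal interpolation collapses to ordinary uniform interpolation in this case, and effectiveness is inherited from Theorem~\ref{thm:modalinterp}.

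For Craig interpolation of $\unfp$, I would begin with arbitrary $\phi_{\L} \in \unfp[\sigma_{\L}]$ and $\phi_{\R} \in \unfp[\sigma_{\R}]$ such that $\phi_{\L} \models \phi_{\R}$, and let $k = \max\{\width{\phi_{\L}}, \width{\phi_{\R}}\}$, so that both sentences live in $\unfp^k$. Taking $\sigoriginal := \sigma_{\L} \cup \sigma_{\R}$ and $\sigtarget := \sigma_{\L} \cap \sigma_{\R}$, I would apply the uniform interpolation property established in the previous step to produce $\theta \in \unfpk[\sigtarget]$ with $\phi_{\L} \models \theta$ such that $\theta$ entails every $\unfpk[\sigma_{\R}]$-consequence of $\phi_{\L}$. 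Since $\phi_{\R}$ is itself such a consequence and its signature sits inside $\sigtarget \cup \sigma_{\R}$, this $\theta$ serves as a Craig interpolant in $\unfpk \subseteq \unfp$. Effectiveness again follows from the effectiveness of the uniform interpolant construction.

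I do not expect a genuine obstacle here: the work has all been discharged in Theorem~\ref{thm:modalinterp}, and the role of this corollary is really to explicate how the guard-preserving uniform modal interpolation simplifies once $\sigmag = \emptyset$. The only point requiring minor care is confirming that the translation from $\unfp$ to $\gnfp[\sigoriginal,\emptyset]$ preserves width, so that an $\unfp$ sentence of width $k$ yields a $\gnfpk[\sigoriginal,\emptyset]$ sentence to which Theorem~\ref{thm:modalinterp} can be applied directly; this is immediate from the rewriting that eliminates non-unary negations and non-monadic fixpoints, which does not introduce new free variables.
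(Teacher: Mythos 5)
Your proposal is correct and follows essentially the same route as the paper, which likewise notes that for $\unfpk = \gnfpk[\sigoriginal,\emptyset]$ uniform modal interpolation coincides with ordinary uniform interpolation (the guard-signature side conditions becoming vacuous), and derives Craig interpolation for $\unfp$ by setting $k$ to the maximum width of the two sentences and taking the $\unfpk$ uniform interpolant over the common signature. The only minor slip is your instantiation $\sigoriginal := \sigma_{\L} \cup \sigma_{\R}$: to satisfy the side condition $\sigma_{\R} \cap \sigoriginal \subseteq \sigtarget$ in the definition of uniform modal interpolation you should take $\sigoriginal := \sigma_{\L}$ (the signature of the antecedent), after which everything goes through exactly as you describe.
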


The idea for the proof of Theorem~\ref{thm:modalinterp}
is to use the back-and-forth method from before, together with the uniform interpolation property of the $\mu$-calculus.
To illustrate this, we sketch the argument for $\gfp[\sigoriginal,\sigmag]$, before giving the formal proof for $\gnfpk$ below.

Consider  $\phi_{\L} \in \gfp[\sigoriginal, \sigmag]$ of width $k$
and subsignature $\sigtarget \subseteq \sigoriginal$ containing
$\sigmag$. We apply \lemmafwd to get a formula
$\phi_{\L}^{\mu} \in \Lmu[\sigcode{\sigoriginal}{k}]$ that captures codes of tree-like models of $\phi_{\L}$.

We want to go backward now, to get a formula over the subsignature $\sigtarget$.
We saw that the backward mapping from earlier can do this:
it can start with a $\mu$-calculus formula
over $\sigcode{\sigoriginal}{k}$,
and produce a formula in $\gfp[\sigtarget,\sigmag]$.
The formula produced by this backward mapping
has a nice property related to definability: it is equivalent to $\phi_{\L}$ exactly when
$\phi_{\L}$ is definable in $\gfp[\sigtarget,\sigmag]$.

In general, however, we do not expect $\phi_{\L}$ to be equivalent to a formula
over the subsignature---for uniform interpolation we just want to \emph{approximate} the formula over this subsignature.
The backward mapping of $\phi_{\L}^\mu$ (e.g.~using \lemmagfpbwd{Lemma $\gfp[\sigtarget,\sigmag]$-Bwd} from Section~\ref{sec:gfp} or \lemmagnfpbwd{Lemma~$\gnfpk[\sigtarget,\sigmag]$-Bwd} from Section~\ref{sec:gnfp}), does not always do this.
Hence, it is necessary to add one additional step before taking the backward mapping:
we apply uniform interpolation for the $\mu$-calculus~\cite{interpolationmucalc},
obtaining $\theta^\mu \in \sigcode{\sigtarget}{k}$ which is entailed by $\phi_{\L}^{\mu}$
and entails each $\Lmu[\sigcode{\sigtarget}{k}]$-formula implied by $\phi_{\L}^{\mu}$.
Finally, we apply \lemmagfpbwd{Lemma $\gfp[\sigtarget,\sigmag]$-Bwd} to $\theta^\mu$
to get $\theta \in \gfp[\sigtarget, \sigmag]$. We can check
that $\theta \in \gnfpk[\sigtarget,\sigmag]$ is the required uniform modal interpolant for $\phi_{\L}$ over subsignature $\sigtarget$.

We emphasize that although our uniform interpolation results and definability decision procedures
both use this back-and-forth approach,
the definability questions are easier in the sense that
they do not require interpolation for the $\mu$-calculus.

We now give the proof of Theorem~\ref{thm:modalinterp}.

\begin{proof}[Proof of Theorem~\ref{thm:modalinterp}]
We prove this for $\gnfpk[\sigoriginal,\sigmag]$,
but the proof is similar for $\gfp[\sigoriginal,\sigmag]$
(using the guarded unravelling $\gunravel{\fB}{\sigoriginal,\sigmag}$
instead of the plump unravelling $\sunravelk{\fB}{\sigoriginal,\sigmag}$,
and \lemmagfpbwd{Lemma~$\gfp[\sigtarget,\sigmag]$-Bwd} instead of \lemmagnfpbwd{Lemma~$\gnfpk[\sigtarget,\sigmag]$-Bwd}).

We construct the interpolant for $\phiL$ as follows:
\begin{enumerate}
\item apply \lemmafwd to get $\phiL^\mu \in \Lmu[\sigcode{\sigoriginal}{k}]$;
\item let $\cform{\sigoriginal}{k}$ be the $\Lmu[\sigcode{\sigoriginal}{k}]$-formula that expresses that a tree is consistent with respect to $\sigcode{\sigoriginal}{k}$;
\item get the uniform interpolant $\chi \in \Lmu[\sigcode{\sigtarget}{k}]$ for $\phiL^\mu \wedge \cform{\sigoriginal}{k}$
and subsignature $\sigcode{\sigtarget}{k}$ (using~\cite{interpolationmucalc});
\item apply Lemma~$\gnfpk[\sigtarget,\sigmag]$-Bwd to $\chi$ to get $\theta \in \gnfpk[\sigtarget,\sigmag]$.
\end{enumerate}

\noindent
We can see that $\theta$ is a formula over the subsignature $\sigtarget$ by the properties of the backward mapping.
We must show that $\theta$ satisfies the other properties required of a uniform interpolant.

\myparagraph{Original sentence entails interpolant}
First, we prove that
$\phiL \models \theta$.
Let $\fB$ be a $\sigoriginal$-structure and assume $\fB \models \phiL$.
Then $\decode{\sunravelk{\fB}{\sigoriginal,\sigmag}} \models \phiL$
since $\phiL$ is $\sgnkinvar[\sigoriginal,\sigmag]$-invariant.
Hence, by \lemmafwd,
we have $\sunravelk{\fB}{\sigoriginal,\sigmag} \models \phiL^\mu$.
Since $\sunravelk{\fB}{\sigoriginal,\sigmag}$ is a consistent $\sigcode{\sigoriginal}{k}$-tree,
this means that
$\sunravelk{\fB}{\sigoriginal,\sigmag} \models \phiL^\mu \wedge \cform{\sigoriginal}{k}$.
We can now use the fact that $\chi$ is a uniform interpolant for $\phiL^\mu \wedge \cform{\sigoriginal}{k}$,
to conclude that
$\sunravelk{\fB}{\sigoriginal,\sigmag} \models \chi$.
But $\chi$ is in $\Lmu[\sigcode{\sigtarget}{k}]$,
so the restriction of $\sunravelk{\fB}{\sigoriginal,\sigmag}$ to the subsignature $\sigcode{\sigtarget}{k}$
also satisfies $\chi$.
Moreover, the restriction of $\sunravelk{\fB}{\sigoriginal,\sigmag}$ to the subsignature is
$\sigcode{\sigtarget}{k}$-bisimilar to the unravelling with respect to this subsignature $\sigcode{\sigtarget}{k}$
(this relies on the fact that the guard signature $\sigmag$ is the same in both cases).
Hence, $\sunravelk{\fB}{\sigtarget,\sigmag} \models \chi$,
which by the backward mapping means
that $\fB \models \theta$.

\myparagraph{Interpolant entails appropriate sentences in subsignature}
Next, assume that $\phiL \models \phiR$
for some $\phiR \in \gnfpk[\sigR,\sigmag]$ with $\sigR \cap \sigoriginal \subseteq \sigtarget$. Let $\sigma'' = \sigoriginal \cup \sigR$.

We need to show that $\theta \models \phiR$.
In order to prove this, our reasoning will go back and forth between relational and tree structures.

We start by applying \lemmafwd to $\phiR$ to get $\phiR^\mu$.
Let $\cform{\sigR}{k}$ be the $\Lmu[\sigcode{\sigR}{k}]$-formula
that expresses that a tree is consistent with respect to $\sigcode{\sigR}{k}$.
We now want to show that $\cform{\sigoriginal}{k} \wedge \phiL^\mu$ entails $\cform{\sigR}{k} \rightarrow \phiR^\mu$.

\begin{clm}
$\cform{\sigoriginal}{k} \wedge \phiL^\mu \models \cform{\sigR}{k} \rightarrow \phiR^\mu$
over all $\sigcode{\sigma''}{k}$-structures.
\end{clm}
\begin{proof}[Proof of claim]
We first show that $\cform{\sigoriginal}{k} \wedge \phiL^\mu \models \cform{\sigR}{k} \rightarrow \phiR^\mu$
over all $\sigcode{\sigma''}{k}$-trees.
Suppose that $\tree$ is a $\sigcode{\sigma''}{k}$-tree and
$\tree \models \cform{\sigoriginal}{k} \wedge \phiL^\mu$.
Then $\tree$ must be consistent with respect to the subsignature $\sigcode{\sigoriginal}{k}$.
If $\tree$ is not consistent with respect to $\sigcode{\sigR}{k}$,
then $\tree$ trivially satisfies $\cform{\sigR}{k} \rightarrow \phiR^\mu$ and we are done.
Otherwise, $\tree$ is consistent with respect to both $\sigcode{\sigoriginal}{k}$
and $\sigcode{\sigR}{k}$, which is enough to conclude that it is a consistent $\sigcode{\sigma''}{k}$-tree.
Hence, by \lemmafwd, we have $\decode{\tree} \models \phiL$.
Since $\phiL \models \phiR$, this means that $\decode{\tree} \models \phiR$.
Another application of \lemmafwd allows us to conclude that $\tree \models \phiR^\mu$,
and hence by weakening, $\tree \models \cform{\sigR}{k} \rightarrow \phiR^\mu$ as desired.

We can use this to prove that $\cform{\sigoriginal}{k} \wedge \phiL^\mu \models \cform{\sigR}{k} \rightarrow \phiR^\mu$
over all $\sigcode{\sigma''}{k}$-structures.
Assume not. Then there is some $\sigcode{\sigma''}{k}$-structure $\fG$ such that
$\fG \models \cform{\sigoriginal}{k} \wedge \phiL^\mu \wedge \neg( \cform{\sigR}{k} \rightarrow \phiR^\mu)$.
By the tree-model property of $\Lmu$~\cite{BradfieldS07}, this means
there is some $\sigcode{\sigma''}{k}$-tree $\tree$ that witnesses this,
which contradicts the previous paragraph.
Therefore $\cform{\sigoriginal}{k} \wedge \phiL^\mu \models \cform{\sigR}{k} \rightarrow \phiR^\mu$
over all $\sigcode{\sigma''}{k}$-structures as required.
\end{proof}

Since $\cform{\sigoriginal}{k} \wedge \phiL^\mu$ entails $\cform{\sigR}{k} \rightarrow \phiR^\mu$ by the previous claim
and $\chi$ is a uniform interpolant for $\cform{\sigoriginal}{k} \wedge \phiL^\mu$ over the subsignature $\sigcode{\sigtarget}{k}$,
we know that $\chi \models \cform{\sigR}{k} \rightarrow \phiR^\mu$ over all $\sigcode{\sigma''}{k}$-structures.
We can use this to show that $\theta \models \phiR$.

Let $\fB$ be a $\sigma''$-structure such that $\fB \models \theta$.
Then $\decode{\sunravelk{\fB}{\sigtarget,\sigmag}} \models \theta$ since $\theta$ is $\sgnkinvar[\sigtarget,\sigmag]$-invariant
(since it is in $\gnfpk[\sigtarget,\sigmag]$).
Hence, $\sunravelk{\fB}{\sigtarget,\sigmag} \models \chi$ by properties of the backward mapping.
But $\sunravelk{\fB}{\sigma'',\sigmag} \models \chi$ as well,
since $\sunravelk{\fB}{\sigma'',\sigmag}$ is $\sigcode{\sigtarget}{k}$-bisimilar to $\sunravelk{\fB}{\sigtarget,\sigmag}$.
Hence, by the previous paragraph, we must have $\sunravelk{\fB}{\sigma'',\sigmag} \models \cform{\sigR}{k} \rightarrow \phiR^\mu$.
Since $\sunravelk{\fB}{\sigma'',\sigmag}$ is a consistent $\sigcode{\sigma''}{k}$-tree,
it is also $\sigcode{\sigR}{k}$-consistent.
Hence, $\sunravelk{\fB}{\sigma'',\sigmag} \models \phiR^\mu$ and
$\decode{\sunravelk{\fB}{\sigma'',\sigmag}} \models \phiR$.
Since $\fB$ and $\decode{\sunravelk{\fB}{\sigma'',\sigmag}}$ are $\sgnkinvar[\sigma'',\sigmag]$-bisimilar,
and $\phiR \in \gnfpk[\sigR,\sigmag]$ with $\sigR \subseteq \sigma''$,
this means that $\fB \models \phiR$ as desired.

This completes the proof that $\theta$ entails $\phiR$,
and hence completes the proof that $\theta$ is a uniform modal interpolant.
\end{proof}

\subsection{Failure of uniform interpolation}\label{sec:failure}

In this section we will see that some natural extensions
and variants of our main interpolation theorems fail.

Although we have shown that \unfp has Craig interpolation,
it fails to have uniform interpolation.

\begin{prop}
Uniform interpolation fails for {\unfp}.
In particular,
there is a \unf antecedent with no uniform interpolant
in \lfpl, even when the consequents are restricted to sentences
in {\unf}.
The variant of uniform interpolation
where entailment is considered  only over finite structures also fails for $\unfp$.
\end{prop}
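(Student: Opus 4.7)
The plan is to construct an explicit $\unf$ antecedent $\phiL$ over a signature $\sigoriginal$ together with a subsignature $\sigtarget \subsetneq \sigoriginal$ such that no $\lfp[\sigtarget]$ sentence can serve as a uniform interpolant, even when consequents are restricted to $\unf[\sigtarget]$. Any such interpolant $\theta$ would have to satisfy $\phiL \models \theta$ and $\theta \models \phiR$ for every $\phiR \in \unf[\sigtarget]$ entailed by $\phiL$. Denoting by $C$ the class of $\sigtarget$-structures satisfying every $\unf[\sigtarget]$-consequence of $\phiL$, a uniform interpolant would be a $\theta \in \lfp[\sigtarget]$ whose class of models sits between $\set{\fB \restriction \sigtarget : \fB \models \phiL}$ and $C$. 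The strategy is to refute this by exhibiting two $\sigtarget$-structures $\fA$ and $\fB$ that are $\lfp[\sigtarget]$-equivalent, where $\fA$ is a $\sigtarget$-reduct of some $\sigoriginal$-model of $\phiL$ while $\fB$ violates at least one $\unf[\sigtarget]$-consequence of $\phiL$.

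For the construction, I would take $\sigoriginal$ to consist of a few unary predicates together with a single binary relation $R$, and set $\sigtarget = \sigoriginal \setminus \set{R}$. The antecedent $\phiL$ is crafted so that $R$ encodes, inside any $\sigoriginal$-model, a combinatorial pattern pinning down the distribution of unary atomic types across the universe. This yields an infinite family of $\unf[\sigtarget]$-consequences $\chi_n$, each imposing a cardinality or distribution constraint on specific atomic types, and collectively the $\chi_n$ carve out a class $C$ whose membership depends on global counting information that $\lfp[\sigtarget]$ cannot capture.

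The main obstacle is the $\lfp$ lower bound showing that no $\lfp[\sigtarget]$ sentence separates $\fA$ from $\fB$. Since $\sigtarget$ is essentially monadic, $\lfp[\sigtarget]$-equivalence reduces to agreement on the multiset of unary atomic types up to a threshold depending on the syntactic complexity of the sentence, a fact one verifies via the standard game-theoretic characterization of $\lfp$-equivalence restricted to monadic signatures. I would then choose $\fA$ and $\fB$ with atomic-type counts tuned so that they are $\lfp[\sigtarget]$-equivalent yet $\fA \in C$ while $\fB \notin C$; the unbounded family $\chi_n$ ensures that any fixed $\theta$ can be defeated by picking the cardinality gap above the quantifier-count threshold of~$\theta$. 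This contradicts the existence of $\theta$.

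For the finite-structure variant, I would verify that $\fA$ and $\fB$ can be realized as finite structures and that $\fA$ admits a finite $\sigoriginal$-expansion satisfying $\phiL$. Since the game-theoretic characterization of $\lfp$-equivalence on monadic signatures behaves uniformly in cardinality above the threshold, the same family of witness pairs obstructs uniform interpolation restricted to finite structures; no additional machinery is needed beyond ensuring finiteness of the $\sigoriginal$-expansion, which the combinatorial pattern encoded by $R$ can be arranged to permit.
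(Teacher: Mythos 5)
There is a fatal gap in your choice of target signature. If $\sigtarget$ consists only of unary predicates, then $\unf[\sigtarget]$ cannot express any cardinality constraint at all: in \unf, negation may be applied only to formulas with at most one free variable, so $x \neq y$ is not available, and ``there exist at least $n$ elements of atomic type $\tau$'' is not expressible. Over a finite monadic signature, every $\unf[\sigtarget]$ sentence is equivalent to a boolean combination of statements of the form ``some element realizes an atomic type in $T$''; since there are only finitely many such statements up to equivalence, the set of \emph{all} $\unf[\sigtarget]$-consequences of your antecedent is axiomatized by a single first-order sentence, and that sentence is itself a uniform interpolant lying in \lfpl. So no antecedent with a purely monadic target signature can witness failure of uniform interpolation against \unf consequents, and your intended family $\chi_n$ of ``distribution constraints'' does not exist in the target logic. (Your appeal to the threshold characterization of $\mathcal{L}^{\omega}_{\infty\omega}$-equivalence on monadic signatures is fine as far as it goes, but it is applied to a class $C$ that is already first-order definable.)

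The fix requires keeping a relation of arity at least two in the target signature and sourcing the hardness from somewhere other than counting. The paper takes the \unf sentence $\varphi$ asserting that unary $R,G,B$ form a 3-coloring of the graph given by a binary relation $E$, and targets the signature $\set{E}$. For every finite non-3-colorable graph $G$, the canonical conjunctive query $\psi_G$ is positive existential, hence in \unf, and $\varphi \models \neg\psi_G$; consequently any uniform interpolant $\theta$ must hold on a finite graph exactly when it is 3-colorable. Dawar's theorem that 3-colorability is not expressible in $\mathcal{L}^{\omega}_{\infty\omega}$ over finite structures (into which \lfpl translates) then gives the contradiction, and since the whole argument lives in the finite, the finite-structure variant follows simultaneously. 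The essential ingredients you are missing are (i) that conjunctive queries are \unf sentences, which is what makes the infinitely many consequences $\neg\psi_G$ available, and (ii) an inexpressibility result for \lfpl on structures with a binary relation, rather than a counting threshold on monadic structures.
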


\begin{proof}
There is a $\unf$ sentence $\varphi$ that expresses that
unary relations $R$,$G$,$B$ form a 3-coloring of a graph with edge relation $E$.
This is because we only need unary negation to say:
\begin{itemize}
\item every node has a color: $\neg \exists x . (\neg R x \wedge \neg G x \wedge \neg B x )$,
\item neighbouring nodes do not share a color:
$\neg \exists x y . \left( E xy \wedge \big( (R x \wedge R y)  \vee \ldots \big) \right)$.
\end{itemize}
For readability, we have omitted the trivial guards $x=x$ in the unary negations above.

Consider a  uniform interpolant $\theta$ (in any logic) for the \unf sentence $\varphi$
with respect to its $\unf$-consequences in the signature $\set{E}$.
We claim that there cannot be an $\lfpl$ formula equivalent to $\theta$.

For all finite graphs $G$ that are not 3-colorable, let
$\psi_G$ be the \unf sentence corresponding to the
canonical conjunctive query  of $G$ over relation $E$---that  is, if
$G$ consists of edges $E$ mentioning vertices $v_1 \ldots v_n$, $\psi_G$
is $\exists v_1 \ldots v_n . ( \bigwedge_{e \in G, e=(v_i,v_j) } E v_i v_j )$.
Then $\varphi$ must entail $\neg \psi_G$, since the 3-coloring of a graph $G'$ satisfying
$\psi_G$
is easily seen to induce a 3-coloring on~$G$.

Now consider a finite graph $G$.
If $G$ is 3-colorable, then $G \models \varphi$,
and hence $G \models \theta$.
On the other hand, if $G$ is not 3-colorable,
then $G \models \psi_G$,
so $G \models \neg \theta$
because $\varphi$ entails $\neg \psi_G$
and thus, by the  assumption on $\theta$, $\theta$ entails $\neg \psi_G$.
Therefore, $\theta$ holds in $G$ iff
$G$ is 3-colorable.

Dawar~\cite{Dawar98} showed that
3-colorability is not expressible in the infinitary logic $\mathcal{L}^{\omega}_{\infty\omega}$ over
finite structures.
Since \lfpl can be translated into $\mathcal{L}^{\omega}_{\infty\omega}$ over finite structures,
this implies that $\theta$ cannot be in \lfpl.

The above argument only makes use of the properties of $\theta$ over finite structures, and
thus demonstrates the failure of the variant of uniform interpolation in the finite.
\end{proof}

We have
trivial uniform interpolants in existential second-order logic,
i.e.\ in NP\@.
The previous arguments shows that interpolants for $\unfp$ express
NP-hard problems, and thus cannot be in any  \ptime language
if \ptime is not equal
to NP\@.
We remark that one could still hope to find uniform interpolants for $\unfp$
by allowing the interpolants to live in a larger
fragment that is still ``tame'',
but we leave this as an open question.

Uniform interpolation also fails for \gso.

\begin{prop}\label{prop:failure-gso}
Uniform interpolation fails for \gso.
In particular, there is a \gf antecedent with
no uniform interpolant in \gso,
even when the consequents are restricted to sentences of \gf (or \unf)
of width 2.
\end{prop}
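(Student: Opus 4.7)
The plan is to refine the strategy of the preceding failure proof, replacing the appeal to non-definability of 3-colorability in $L^{\omega}_{\infty\omega}$ with an appeal to non-$\gso$-definability of a suitable property, while working within the stricter width-2 restriction on consequents.

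I will fix a signature $\sigma$ and a target subsignature $\sigtarget \subsetneq \sigma$, and construct a $\gf[\sigma]$ sentence $\varphi$ that existentially asserts, on top of a $\sigtarget$-structure, the existence of auxiliary relations in $\sigma\setminus\sigtarget$ forming a certificate for some property $P$ of $\sigtarget$-structures. The choice of $P$ is the key issue. Unlike 3-colorability (which lies in $\gso$, since $\gso$ contains $\mso$), a suitable $P$ must leverage the fact that $\gso[\sigtarget]$ can only quantify over relations guarded by $\sigtarget$-atoms; for instance, if $\sigtarget=\{E\}$ with $E$ binary, then $\gso[\sigtarget]$ coincides essentially with $\mso_2$ on graphs, and one may take $P$ to be an NP-hard graph property known to escape $\mso_2$.

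With this choice in hand, I would show that the class of $\sigtarget$-structures satisfying all $\gf$-width-2 (equivalently $\unf$-width-2) consequences of $\varphi$ coincides with the class of $\sigtarget$-structures having property $P$. One direction is direct: any $\sigtarget$-structure expandable to a $\sigma$-model of $\varphi$ must satisfy those consequences. The reverse direction is the main obstacle: for each $\sigtarget$-structure failing $P$, one must exhibit a width-2 consequent of $\varphi$ that is violated by that structure. The canonical-CQ device of the preceding $\lfpl$ failure proof is unavailable here, since canonical conjunctive queries have unbounded width. Overcoming this requires $\varphi$ to be crafted so that failures of $P$ manifest as local obstructions, detectable by a $\gf$/$\unf$ formula using at most two free variables at every quantifier level.

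If these two conditions can be arranged, the proof concludes as before: a putative uniform interpolant $\theta \in \gso[\sigtarget]$ entailed by $\varphi$ and entailing all $\gf$-width-2 consequences of $\varphi$ would necessarily define $P$ on $\sigtarget$-structures, contradicting the non-$\gso[\sigtarget]$-definability of $P$. The hardest part will be producing $\varphi$ and $P$ satisfying the delicate balance of two competing requirements: $P$ must be rich enough to elude $\gso[\sigtarget]$, yet its failures must be captured by purely width-2 consequences of $\varphi$ over $\sigtarget$.
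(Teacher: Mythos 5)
Your proposal is a plan rather than a proof: it reduces the proposition to finding a property $P$, a $\gf$ certificate sentence $\varphi$, and a family of width-2 consequences detecting every failure of $P$, and then explicitly defers all three (``if these two conditions can be arranged\dots''). That missing construction is the entire mathematical content of the statement, and the route you gesture at is not obviously viable: you would need an \emph{unconditional} example of a property that escapes $\gso[\sigtarget]$, admits a certificate expressible in $\gf$ (a logic with the finite model property), \emph{and} whose failures are all witnessed by width-2 guarded consequences --- exactly the ``delicate balance'' you name, for which you offer no candidate. NP-hardness is also a red herring: what is needed is only non-$\gso[\sigtarget]$-definability, and obtaining that from hardness via Courcelle-style arguments would make the conclusion conditional on complexity assumptions.

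The paper resolves the tension quite differently. It takes $\varphi$ to assert, from every $Q$-node, an infinite ladder whose two rails ($R_1$ and $R_2$) agree on $P$ at every rung; the consequences $\psi_n$ are width-2 path formulas saying that if every $R_1$-path of length $n$ from a $Q$-node ends in $P$, then some $R_2$-path of length $n$ from a $Q$-node does too. Non-existence of a $\gso$ uniform interpolant $\theta$ is then shown directly rather than by citing an inexpressibility result: over trees $\gso$ collapses to $\mso$, so $\theta$ yields a nondeterministic parity tree automaton with, say, $m$ states; this automaton accepts the ladder $\fA_m$ carrying $P$ at depth $m+1$ on both rails, and pumping the $R_2$-rail yields a model of $\theta$ that falsifies $\psi_m$. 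Note that the witnessing structures are infinite and the whole argument lives on trees --- neither feature appears in your plan, which is pitched at finite graphs.
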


\begin{proof}
Let $\varphi \in \gf[\sigma]$ for $\sigma = \set{G,P,Q,R_1,R_2,S}$ be
\begin{align*}
&\forall z . [Q z \rightarrow \exists x y . (G z z x y \wedge S x y \wedge R_1 \, z x \wedge R_2 \, z y)]
\ \wedge
\\
&\forall x y . \Big[S x y \rightarrow
\exists x' y' . \Big( Gxyx'y' \wedge S x' y' \wedge R_1 \, x x' \wedge R_2 \, y y' \ \wedge \\
&\phantom{\forall x y . \; S x y \rightarrow \exists x' y' . \ } \big( ( P x' \wedge P y') \vee (\neg P x' \wedge \neg P y') \big)\Big) \Big]
\end{align*}
which implies that there is a ``ladder'' starting at every $Q$-node
(where $S$ connects pairs of elements on the same rung,
and $R_i$ connects corresponding elements on different rungs)
and the pair of elements on each rung agree on $P$.
The relation $G$ is used as a dummy guard to ensure that the formula is in $\gf$.

Then for each $n$, we can define over $\sigma' = \set{P,Q,R_1,R_2}$ a formula
$\psi_n$
\begin{align*}
&\bigg(\exists x . \Big(Q x \wedge \forall x_1 \dots x_n .\big( (\textstyle \bigwedge_i R_1 \, x_i x_{i+1} \wedge x_1 = x) \rightarrow P x_n \big)  \Big) \bigg)
\rightarrow \\
&\Big(\exists y . \big(Q y \wedge \exists y_1 \dots y_n .(\textstyle \bigwedge_i R_2 \, y_i y_{i+1} \wedge y_1 = y \wedge P y_n)\big)\Big)
\end{align*}
which expresses that if there is some $Q$-position $x$ such that
every $R_1$-path of length $n$ from $x$ ends in a position satisfying $P$,
then there is an $R_2$-path of length $n$ from some $Q$-position $y$
that ends in a position satisfying $P$.
Note that for all $n$,
$\psi_n$ can be written
in either \gf or \unf of width 2,
and $\varphi \models \psi_n$.

Assume for the sake of contradiction that there
is some uniform interpolant $\theta$
in \gso.

Over trees,
\gso coincides with \mso
(\cite{Courcelle97}, as cited in~\cite{GradelHO02}).
Hence, there is an equivalent $\theta'$ in \mso
over tree structures.
This means we can construct from $\theta'$
a nondeterministic parity tree automaton $\cA$
that recognizes precisely the language of trees
(with branching degree at most 2, say)
where $\theta'$ holds.

Let $m$ be the number of states in $\cA$.
Consider the ladder structure $\fA_m$
consisting of
a single element $a$ from which there is
an infinite $R_1$-chain of distinct elements
and an infinite $R_2$-chain of distinct elements,
where the $i$-th elements on each chain are connected by $S$,
elements on level $i$ and $i+1$ are guarded by $G$,
$P$ holds only at the $(m+1)$-st element in each chain,
and $Q$ holds only at $a$.

Because $\fA_m \models \varphi$,
we have $\fA_m \models \theta$.
But over $\sigma'$,
$\fA_m$ is a tree with branching degree at most 2, so
$\fA_m \models \theta'$.
Hence, there is an accepting run of $\cA$ on $\fA_m$.
Using a pumping argument,
we can pump a section of the $R_2$ branch before the $P$-labelled element
in order to generate an accepting run of $\cA$ on a new tree $\fA'_m$
where $P$ holds at the $(m+1)$-st element in the $R_1$-chain and
$P$ does not hold at that position in the $R_2$ chain.
Hence, this new tree $\fA'_m$
is a model for both $\theta'$ and $\theta$.
But $\fA'_m \not\models \psi_m$,
contradicting the fact that $\theta$ is a uniform interpolant.
\end{proof}

It has been known for some time that \gf fails to have even
ordinary Craig interpolation~\cite{HooglandMO99},
and hence fails to have uniform interpolation.
The previous proposition shows that we cannot get uniform interpolants for \gf
even when we allow the uniform interpolants
to come from~\gso.

\subsection{Failure of Craig interpolation for \gnfp}

It is natural to try to extend our results about $\gnfpk$ to the logic $\gnfp$.
Unfortunately, Craig interpolation fails for \gnfp.

\begin{prop}%
\label{prop:craigfailgnfp}
Craig interpolation fails for \gnfp.
In particular, there is an entailment of \gfp sentences
with no \gnfp interpolant,
even over finite structures.
\end{prop}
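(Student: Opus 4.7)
The plan is to exhibit explicit $\gfp$ sentences $\phi_L$ and $\phi_R$, over signatures sharing a common sub-signature $\sigma$, such that $\phi_L \models \phi_R$ yet no $\gnfp[\sigma]$ sentence interpolates. The argument will exploit the fact, implicit in Theorem~\ref{thm:gnfpk-characterization}, that every $\gnfp$ sentence lies in $\gnfp^k$ for some finite $k$ and is therefore $\sgnkinvar$-invariant at that width. Hence it suffices to produce, for every $k \in \N$, a pair $(\fA_k,\fB_k)$ of finite $\sigma$-structures that are $\sgnkinvar[\sigma]$-bisimilar while $\fA_k$ extends to a model of $\phi_L$ (over its larger signature) and $\fB_k$ extends to a model of $\neg\phi_R$. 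Any candidate interpolant $\theta \in \gnfp[\sigma]$ would have some finite width $k$, and such a pair would force $\theta$ to agree on $\fA_k$ and $\fB_k$, yielding the desired contradiction.

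For the concrete counter-example I would follow the template for the known $\gf$-failure of interpolation in~\cite{HooglandMO99} and adapt it to the fixpoint setting. One side, $\phi_L$, uses an auxiliary relation $P$ outside $\sigma$ together with guarded fixpoints to enforce a global combinatorial condition on the $\sigma$-reduct --- for instance the existence of a well-structured enrichment such as a bijection between two $\sigma$-definable sets, a consistent coloring, or a linear-order-like arrangement. The other side, $\phi_R$, uses a different auxiliary relation $Q$ outside $\sigma$ to express a complementary condition whose failure would contradict the enrichment guaranteed by $\phi_L$; the implication $\phi_L \models \phi_R$ is then a direct consequence of combining the two enrichments in any common model. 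The witness pairs $\fA_k,\fB_k$ should be chosen as grid-like or matching-like $\sigma$-structures of width exceeding $k$, whose small local neighbourhoods admit the enrichment while the global structure does not.

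The technical heart of the proof is verifying, for each $k$, that $\fA_k$ and $\fB_k$ are $\sgnkinvar[\sigma]$-bisimilar. This requires specifying a winning strategy for Duplicator in the block $k$-width guarded negation bisimulation game of Section~\ref{sec:bisim}: after each extend move by Spoiler, Duplicator replies using a back-and-forth symmetry between $\fA_k$ and $\fB_k$ that is preserved because Spoiler can inspect at most $k$ elements at a time and may only switch active structure from a strictly $\sigmag$-guarded position. The strategy should follow the classical pattern of constructing $\fB_k$ from $\fA_k$ by a local modification that is invisible to any $\gn^k$-observer but alters the existence of the enrichment globally.

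The main obstacle is the simultaneous design of $\phi_L,\phi_R$ and of the witness family. On the formula side, the auxiliary relations must be used so that both $\phi_L$ and $\phi_R$ remain strictly within $\gfp$ (rather than merely $\gnfp$ or $\gso$), meaning every negation must be guarded by a genuine atom of the full signature. On the model side, the $\sigma$-reducts must be $\sgnkinvar$-bisimilar for every $k$ simultaneously, forcing an infinite family of increasingly complex witnesses --- a compactness-style reduction is not available since the statement is required to hold over finite structures. Threading these two constraints, while still arranging the one-way entailment $\phi_L \models \phi_R$, is the delicate part of the construction.
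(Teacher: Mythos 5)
Your high-level reduction is exactly the one the paper uses: any candidate interpolant $\theta \in \gnfp$ has some finite width $k$ once normalized, hence is $\sgnkinvar$-invariant, so it suffices to exhibit $\gfp$ sentences $\phi_{\L} \models \phi_{\R}$ together with, for each $k$, a pair of finite structures over the common signature that are $\sgnkinvar$-bisimilar yet separate $\phi_{\L}$ from $\phi_{\R}$. That framing is correct. However, the proposal stops exactly where the actual mathematical content begins: you never write down $\phi_{\L}$, $\phi_{\R}$, or the witness structures, and you yourself flag the "simultaneous design of $\phi_L,\phi_R$ and of the witness family" as the unresolved obstacle. Since the entire force of the proposition lies in producing such a triple and verifying the bisimilarity, what you have is a correct proof schema with its instantiation missing, not a proof.

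For comparison, the paper's instantiation is considerably simpler than the grid/matching/coloring candidates you gesture at. It takes $\varphi$ over $\set{G,Q,R}$ asserting (via a guarded least fixpoint, with a ternary dummy guard $G$ to keep everything in $\gfp$) that every $Q$-element lies on an $R$-cycle, and $\psi$ over $\set{P,Q,R}$ asserting that every element satisfying $Q \wedge P$ has an $R$-path to a $P$-element; the entailment $\varphi \models \psi$ is immediate since a cycle through a $P$-element returns to it. The witnesses for width $k$ are a single $R$-loop of length $k+1$ (all elements $Q$, one element $P$) versus a "two lassos" structure whose middle $P$-element lies on no cycle; Duplicator wins the $\sgnkinvar$-game because Spoiler can never hold an entire lasso of length $k+1$ in a position of size $k$, and guarded positions contain at most two $R$-connected elements. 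If you want to complete your proof, you should either reconstruct an example of this kind or carry out in full one of the constructions you sketch, including the explicit Duplicator strategy; as written, the argument cannot be checked.
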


\begin{proof}
Define the $\gfp[\sigma]$ sentence $\varphi$ over signature $\sigma = \set{G,Q,R}$ to be
$\forall x . (Q x \rightarrow \varphi'(x))$ where $\varphi'(x)$ is
\begin{align*}
[\LFPA{X}{xy} . Gxyy \wedge (R xy \vee \exists y' . (G x y' y \wedge R xy' \wedge X y' y)) ](xx)
\end{align*}
Note that $\varphi'(x)$ implies that  $x$ has an $R$-path to itself.
Thus $\varphi$ implies
that every element where $Q$ holds has an $R$-path to itself.

Define the $\gfp[\sigma']$ sentence $\psi$ over signature $\sigma' = \set{P,Q,R}$ to be
\[
\forall x . \Big( (Q x \wedge P x) \rightarrow [\LFPA{X}{x} . \exists y . \big(R x y \wedge (P y \vee X y)\big) ] (x) \Big) .
\]
The sentence $\psi$ expresses that for all $Q$ and $P$ labelled elements $x$,
there is an $R$-path from $x$ leading to some node $y$ with $P y$.

We first argue that $\varphi \models \psi$.
Assume $\varphi$ holds in some $(\sigma \cup \sigma')$-structure, and consider some element $x_0$; we must show that $x_0$ satisfies $(Q x_0 \wedge P x_0) \rightarrow [\LFPA{X}{x} . \exists y . \big(R x y \wedge (P y \vee X y)\big) ](x_0)$.
If $Q$ and $P$ do not hold at $x_0$, then the condition is trivially satisfied here.
Otherwise, if $Q$ and $P$ do hold at $x_0$, then $\varphi$ ensures that there is an $R$-path from $x_0$ to itself, and hence there is an $R$-path from $x_0$ to a node where $P$ holds as required by $\psi$.

Now suppose for the sake of contradiction that there is a $\gnfp[\sigma \cap \sigma']$-interpolant~$\chi$
for $\varphi \models \psi$.
Note that $\chi$ only uses relations $Q$ and $R$.
Let $k$ be the width of $\chi$ in \nf.

\begin{figure}
\begin{tikzpicture}[]
\tikzset{vertex/.style = {shape=circle,draw,minimum size=2em}}
\node[align=center,font=\bf,text width=3cm,text badly centered] (aa) at (-1,2) {Structure $\fA$};
\node[align=center,font=\bf,text width=3cm,text badly centered] (bb) at (7,2) {Structure $\fB$};
\node[vertex] (a) at (-1,0) {$a_1$};
\draw (a) edge [->,dashed,out=0, in=90,looseness=10] node[font=\small,xshift=-2cm] {$k+1$ edges} (a);
\node[vertex] (b1) at (5,0) {$c_{1}$};
\node[vertex] (b2) at (7,0) {$b_0$};
\node[vertex] (b3) at (9,0) {$b_{1}$};
\draw (b1) edge [->,dashed,out=180, in=90,looseness=10] node[font=\small,xshift=2cm] {$k+1$ edges} (b1);
\draw (b3) edge [->,dashed,out=0, in=90,looseness=10] node[font=\small,xshift=-2cm] {$k+1$ edges} (b3);
\draw (b1) edge [->] (b2);
\draw (b2) edge [->] (b3);
\end{tikzpicture}
\caption{Structures used in the proof of Proposition~\ref{prop:craigfailgnfp}.}%
\label{fig:interp}
\end{figure}


We now define two $(\sigma \cup \sigma')$-structures, $\fA$ and $\fB$, that we will use to obtain a contradiction.
The graph structures for $\fA$ and $\fB$ are pictured in Figure~\ref{fig:interp}
(i.e.~this shows the structure in terms of relation $R$ only).

Let $\fA$ be the structure consisting of elements $\set{a_1, \dots, a_{k+1}}$ arranged in an $R$-cycle (i.e.~$Ra_1a_2, Ra_2a_3, \ldots, Ra_{k+1}a_1$). $Q$ holds of all elements, and $G$ holds of all triples of elements (in particular, $Ga_1a_2a_2$, $Ga_2a_3a_3$, etc.). The only element satisfying $P$ is $a_1$.

Let $\fB$ be the structure with elements $\set{b_0,b_1,\ldots,b_{k},b_{k+1}} \cup \set{c_1,\dots,c_{k+1}}$ where the elements $c_1,\ldots,c_{k+1}$ are arranged in an $R$-cycle,
$b_1, \ldots, b_{k+1}$ are arranged in an $R$-cycle, and $R c_1 b_0$ and $R b_0 b_1$. As in $\fA$, $Q$ holds of all elements and $G$ holds of all triples of elements. $P$ holds only at $b_0$.

Notice that $\fA$ satisfies $\varphi$,  and hence satisfies $\psi$.
But $\fB$ does not satisfy $\psi$,
because $b_0$ does not have an $R$-path to a node labelled with
$P$.

We claim that $\fA$ and $\fB$
are indistinguishable by $\gnfp[\sigma \cap \sigma']$ sentences of width $k$.
We must define a winning strategy for Duplicator in the $\sgnkinvar[\sigma \cap \sigma']$-bisimulation game between $\fA$ and~$\fB$. Because the structures
agree on $Q$ (since $Q$ holds of every element in both structures),
it suffices to show that they are indistinguishable with respect to relation~$R$.

Consider a position that is strictly guarded by $R$ or $Q$. Such a position consists of at most two elements (and if there are two elements, $u$ and $v$ must satisfy $Ruv$).
The initial position in the game (consisting of the empty partial homomorphism) is like this, so we must show that Duplicator has a strategy to ensure that she is always gets back to a position like this.

Suppose the active structure is $\fA$. Without loss of generality, we can assume
the  partial isomorphism $f$ in the position
has domain $a_1, a_2$. It must be the case that $R f(a_1) f(a_2)$ holds in $\fB$.
Because Spoiler can only extend his selection to at most $k$ elements, it is not possible for him to select all of the elements in the $R$-cycle in $\fA$. We
can assume, again without loss of generality, that he extends to a partial homomorphism that includes
all but $1$ element, $a_{i+1}$, in the $R$-cycle of $\fA$.
That is, we suppose
that Spoiler extends his selection to elements
$a_1,a_2,\dots,a_i$ and $a_{i+2},\dots,a_{k+1}$.
The sequence
$a_1, a_2,\dots,a_i$ forms an $R$-successor chain, so Duplicator
responds by mapping $a_3 \ldots a_i$ so that the images $f(a_2), \ldots, f(a_i)$
form  a chain of $R$-successors starting at
 $f(a_2)$. The chain is unique unless $f(a_2)$ is $c_1$; if $f(a_2)=c_1$,
she can choose to obtain either the successor chain
$c_1, c_2, c_3, \ldots$ or the chain $c_1, b_0, b_1, \dots$.
Likewise $a_{i+2},\dots,a_{k+1} a_1$ forms a successor chain
leading to $a_{1}$. Duplicator maps $a_{i+2},\ldots,a_{k+1}$ so
that $f(a_{i+2}),\ldots,f(a_{k+1}),  f(a_1)$
forms an $R$-successor chain leading to $f(a_1)$. The
chain is unique unless  $f(a_1)$ is $b_1$, and in this case
 she can choose either of the two candidate chains.
This is a new partial homomorphism with respect to $R$, and when Spoiler collapses to a single element or pair of elements satisfying $Ruv$, we
have a partial isomorphism as required.

Now consider the case where
the active structure is $\fB$, with  elements $u'$ and $v'$ with
$R u' v'$, and Spoiler extends to a set of elements $E$.
 Note that the  subgraph on $E$ induced by $R$ in $\fB$ is acyclic, due
to the size of $E$.
Let $V^+$ be the maximal subset of $E$ that contains $v'$ and is closed
under $R$.
The set $V^+$ could be the union of two chains of $R$-successors, or a single chain of $R$-successors.
Similarly let $U^-$ be the maximal subset of $E$ that contains $u'$ and is closed under $R$-predecessors.
$U^-$ can consist of two chains, or it can be a single
$R$-chain.
Note that an element of $V^+$ is the $i$-th successor of $u'$ for some $i$,
and thus Duplicator has no choice but to play the unique
$i$-th $R$-successor element  of $f(v')$ in her response.
Similarly  on $U^-$ Duplicator must play
the corresponding $R$-predecessor of  $f(u')$.
On the remaining elements $O$ of $E$, Duplicator can choose any homomorphism into
$\fA$. Such a homomorphism can be found by breaking the subgraph induced
on $O$ into connected components: for
each component $C$  choose an element $e_0 \in C$ and
map it to an $f(e_0)$ arbitrarily; each other element in $C$ is the $i$-th predecessor
or $i$-th successor of $e_0$, so we can map it to the unique $i$-th predecessor
or successor of $f(e_0)$.  The acylicity of $E$ guarantees that this mapping
is a homomorphism.
Although it is not injective, two elements $e_1$ and $e_2$ of $E$ map to the same element
in $\fA$ only if there is some $i$ such that either the $i$-th successor of
$e_1$ and $e_2$ are equal or the
$i$-th predecessor of $e_1$ and $e_2$
are equal. Thus acyclicity of $E$ guarantees that we cannot
have $R e_1 e_2$ for such an $e_1$ and $e_2$.
Hence, when Spoiler collapses to a strictly guarded position in his next move,
the resulting position is
a partial isomorphism as required.

Playing like this, Duplicator can continue to play indefinitely, so we she wins the bisimulation game. This shows that $\fA$ and $\fB$ are indistinguishable by \nf $\gnfpk[\sigma \cap \sigma']$-sentences, so they must agree on $\chi$.

Since $\fA \models \varphi$, we have $\fA \models \chi$.
Hence, $\fB \models \chi$.
But this implies that $\fB \models \psi$,
which is a contradiction.
\end{proof}



\section{Conclusions}%
\label{sec:conc}
In this paper we explored effective characterizations
of definability in expressive fixpoint logics.
In the process, we have extended and refined the approach of going back and forth between relational structures and trees.
Boot-strapping from results about trees also allowed us to obtain results about interpolation for these logics.
We did not allow constants in the formulas in this paper,
but we believe that similar effective characterization and interpolation results
hold for guarded fixpoint logics with constants.

There are a number of open questions related to this work.
For $\gnfp^k$-definability, we proved only decidability results in this paper.
It would be interesting to determine the exact complexity of this problem,
perhaps using a direct automaton construction in the spirit of the construction given for $\gfp$.
We also leave open the question of deciding definability in $\gnfp$ and $\unfp$, without any width restriction.
For this question, one natural way to proceed is to try to bound the width of a defining sentence in terms
of some parameter of the input (e.g., its length).
For example, if we could show that a sentence of length $n$ in some larger logic $\cL$ is definable in $\gnfp$ iff
it is definable in $\gnfp^{f(n)}$ for some fixed function $f$, then we could test for membership in $\gnfp$ using the results of this paper.
We also note that our results on fixpoint logics
hold only when equivalence is considered over all structures, leaving open the corresponding questions
over finite structures.

In Corollary~7 of the conference version of this paper (\cite{icalp17}), we claimed to have proven that it was possible to decide membership in alternation-free $\gfp$,
a restriction of $\gfp$ to formulas with no nesting of both least and greatest fixpoints.
However, the proof of this claim was incorrect, and hence this question is open.
It is desirable to know if a sentence is in this alternation-free fragment of $\gfp$
since it has better computational properties: for instance, model checking for this alternation-free fragment can be done in linear time~\cite{dataloglite}.
This alternation-free fragment also corresponds to another previously studied logic called $\dloglite$~\cite{dataloglite}.
Hence, deciding membership in alternation-free~$\gfp$ (equivalently, $\dloglite$) remains an interesting open problem.

Finally, we showed that $\gnfp$ fails to have Craig interpolation.
This leaves open the question of whether there is
a decidable fixpoint logic that contains $\gnfp$ and has interpolation.
One candidate for this larger logic is called $\gnfpup$~\cite{gnfpup},
but it is not clear whether the methods in this paper could be adapted to prove such a result.

\section*{Acknowledgment}
Benedikt and Vanden Boom were funded by the EPSRC grants
PDQ (EP/M005852/1), ED$^3$ (EP/N014359/1), and DBOnto (EP/L012138/1).


\bibliographystyle{alpha}
\bibliography{guarded-def}

\newcommand{\etalchar}[1]{$^{#1}$}
\begin{thebibliography}{BtCCV15}

\bibitem[AN01]{ArnoldN01}
Andre Arnold and Damien Niwi{\'n}ski.
\newblock {\em Rudiments of mu-calculus}.
\newblock Elsevier, 2001.

\bibitem[AvBN98]{gforig}
Hajnal Andr\'eka, Johan van Benthem, and Istv\'an N\'emeti.
\newblock Modal languages and bounded fragments of predicate logic.
\newblock {\em J. Phil. Logic}, 27:217--274, 1998.

\bibitem[BBtC13]{BaranyBC13}
Vince B{\'{a}}r{\'{a}}ny, Michael Benedikt, and Balder ten Cate.
\newblock Rewriting guarded negation queries.
\newblock In {\em {MFCS}}, 2013.

\bibitem[BBV16]{gnfpup}
Michael Benedikt, Pierre Bourhis, and Michael {Vanden Boom}.
\newblock A step up in expressiveness of decidable fixpoint logics.
\newblock In {\em LICS}, 2016.

\bibitem[BBV17]{icalp17}
Michael Benedikt, Pierre Bourhis, and Michael {Vanden Boom}.
\newblock Characterizing definability in decidable fixpoint logics.
\newblock In {\em ICALP}, 2017.

\bibitem[BCM{\etalchar{+}}03]{dlhandbook}
Franz Baader, Diego Calvanese, Deborah McGuinness, Peter Patel-Schneider, and
  Daniele Nardi.
\newblock {\em {The Description Logic Handbook: Theory, Implementation and
  Applications}}.
\newblock Cambridge university press, 2003.

\bibitem[BGO14]{bgo}
Vince B\'ar\'any, Georg Gottlob, and Martin Otto.
\newblock Querying the guarded fragment.
\newblock In {\em LMCS}, volume~10, 2014.

\bibitem[BGP16]{acyclicmodpods}
Pablo Barcel{\'{o}}, Georg Gottlob, and Andreas Pieris.
\newblock Semantic acyclicity under constraints.
\newblock In {\em PODS}, 2016.

\bibitem[BOW14]{bowboundedness}
Achim Blumensath, Martin Otto, and Mark Weyer.
\newblock Decidability results for the boundedness problem.
\newblock {\em LMCS}, 10(3), 2014.

\bibitem[BP12]{bplaceinftree}
Miko{\l}aj Boja{\'{n}}czyk and Thomas Place.
\newblock Regular languages of infinite trees that are boolean combinations of
  open sets.
\newblock In {\em ICALP}, 2012.

\bibitem[BS07]{BradfieldS07}
Julian Bradfield and Colin Stirling.
\newblock Modal mu-calculi.
\newblock In {\em Handbook of Modal Logic}, pages 721--756. Elsevier, 2007.

\bibitem[BtCCV15]{boundedness}
Michael Benedikt, Balder ten Cate, Thomas Colcombet, and Michael {Vanden Boom}.
\newblock The complexity of boundedness for guarded logics.
\newblock In {\em LICS}, 2015.

\bibitem[BtCO12]{bco}
Vince B{\'a}r{\'a}ny, Balder ten Cate, and Martin Otto.
\newblock Queries with guarded negation.
\newblock In {\em VLDB}, 2012.

\bibitem[BtCS11]{gnf}
Vince B{\'a}r{\'a}ny, Balder ten Cate, and Luc Segoufin.
\newblock Guarded negation.
\newblock In {\em ICALP}, 2011.

\bibitem[BtCS15]{gnfj}
Vince B\'{a}r\'{a}ny, Balder ten Cate, and Luc Segoufin.
\newblock Guarded negation.
\newblock {\em J. ACM}, 62(3), 2015.

\bibitem[BtCV15]{lics15-gnfpi}
Michael Benedikt, Balder ten Cate, and Michael {Vanden Boom}.
\newblock Interpolation with decidable fixpoint logics.
\newblock In {\em LICS}, 2015.

\bibitem[BtCV16]{effectiveinterp}
Michael Benedikt, Balder ten Cate, and Michael {Vanden Boom}.
\newblock Effective interpolation and preservation in guarded logics.
\newblock {\em {ACM} TOCL}, 17(2):8:1--8:46, 2016.

\bibitem[Cou97]{Courcelle97}
Bruno Courcelle.
\newblock The expression of graph properties and graph transformations in
  monadic second-order logic.
\newblock In {\em Handbook of Graph Grammars and Computing by Graph
  Transformations}, volume~1, pages 313--400, 1997.

\bibitem[Daw98]{Dawar98}
Anuj Dawar.
\newblock A restricted second order logic for finite structures.
\newblock {\em Inf. Comput.}, 143(2):154--174, 1998.

\bibitem[DH00]{interpolationmucalc}
Giovanna D'Agostino and Marco Hollenberg.
\newblock {Logical Questions Concerning the Mu-Calculus: Interpolation, Lyndon
  and Los-Tarski}.
\newblock {\em JSL}, 65(1):310--332, 2000.

\bibitem[DL15]{DAgostinoL15}
Giovanna D'Agostino and Giacomo Lenzi.
\newblock Bisimulation quantifiers and uniform interpolation for guarded first
  order logic.
\newblock {\em Theor. Comput. Sci.}, 563:75--85, 2015.

\bibitem[Fig16]{acyclicmodlics}
Diego Figueira.
\newblock Semantically acyclic conjunctive queries under functional
  dependencies.
\newblock In {\em LICS}, 2016.

\bibitem[GGV02]{dataloglite}
Georg Gottlob, Erich Gr{\"{a}}del, and Helmut Veith.
\newblock Datalog {LITE:} a deductive query language with linear time model
  checking.
\newblock {\em {ACM} TOCL}, 3(1):42--79, 2002.

\bibitem[GHO02]{GradelHO02}
Erich Gr{\"a}del, Colin Hirsch, and Martin Otto.
\newblock Back and forth between guarded and modal logics.
\newblock {\em ACM TOCL}, 3(3):418--463, 2002.

\bibitem[GLS03]{robbers}
Georg Gottlob, Nicola Leone, and Francesco Scarcello.
\newblock Robbers, marshals, and guards: game theoretic and logical
  characterizations of hypertree width.
\newblock {\em J. Comput. Syst. Sci.}, 66(4):775--808, 2003.

\bibitem[GO14]{freedom}
Erich Gr{\"{a}}del and Martin Otto.
\newblock The freedoms of (guarded) bisimulation.
\newblock In {\em Johan van Benthem on Logic and Information Dynamics}, pages
  3--31. Springer, 2014.

\bibitem[GW99]{gfp}
Erich Gr\"adel and Igor Walukiewicz.
\newblock Guarded fixed point logic.
\newblock In {\em LICS}, 1999.

\bibitem[HMO99]{HooglandMO99}
Eva Hoogland, Maarten Marx, and Martin Otto.
\newblock Beth definability for the guarded fragment.
\newblock In {\em LPAR}, 1999.

\bibitem[JW95]{janinw}
David Janin and Igor Walukiewicz.
\newblock Automata for the modal mu-calculus and related results.
\newblock In {\em MFCS}, 1995.

\bibitem[JW96]{JaninW96}
David Janin and Igor Walukiewicz.
\newblock On the expressive completeness of the propositional mu-calculus with
  respect to monadic second order logic.
\newblock In {\em CONCUR}, 1996.

\bibitem[Ott99]{ottoeliminating}
Martin Otto.
\newblock Eliminating recursion in the $\mu$-calculus.
\newblock In {\em STACS}, 1999.

\bibitem[Pla08]{place}
Thomas Place.
\newblock Characterization of logics over ranked tree languages.
\newblock In {\em CSL}, 2008.

\bibitem[PS15]{placesegoufinfo2}
Thomas Place and Luc Segoufin.
\newblock Deciding definability in {FO2}($<$h, $<$v) on trees.
\newblock {\em LMCS}, 11(3), 2015.

\bibitem[SY80]{sagivyan}
Yehoshua Sagiv and Mihalis Yannakakis.
\newblock Equivalences among relational expressions with the union and
  difference operators.
\newblock {\em J. ACM}, 27(4):633--655, 1980.

\bibitem[tCS11]{unf}
Balder ten Cate and Luc Segoufin.
\newblock Unary negation.
\newblock In {\em STACS}, 2011.

\bibitem[Tho97]{Thomas97}
Wolfgang Thomas.
\newblock {Languages, Automata, and Logic}.
\newblock In G.~Rozenberg and A.~Salomaa, editors, {\em Handbook of Formal
  Languages}. Springer-Verlag, 1997.

\bibitem[Var97]{robustly}
Moshe~Y. Vardi.
\newblock {``Why is Modal Logic so Robustly Decidable''}.
\newblock In {\em Descriptive Complexity and Finite Models}, pages 149--184,
  1997.

\bibitem[Var98]{Vardi98}
Moshe~Y. Vardi.
\newblock Reasoning about the past with two-way automata.
\newblock In {\em ICALP}, 1998.

\bibitem[vB83]{vbbook}
Johan van Benthem.
\newblock {\em Modal Logic and Classical Logic}.
\newblock Humanities Pr, 1983.

\bibitem[Wal01]{Walukiewicz01}
Igor Walukiewicz.
\newblock Automata and logic, 2001.
\newblock Available at
  \url{http://www.labri.fr/perso/igw/Papers/igw-eefss01.pdf}.

\bibitem[Yan81]{acycliclin}
Mihalis Yannakakis.
\newblock Algorithms for acyclic database schemes.
\newblock In {\em VLDB}, 1981.

\end{thebibliography}

\end{document}